\numberwithin{equation}{section}
\theoremstyle{plain}
\newtheorem{thm}{\protect\theoremname}
\theoremstyle{plain}
\newtheorem{prop}[thm]{\protect\propositionname}
\theoremstyle{plain}
\theoremstyle{plain}
\newtheorem{rem}[thm]{\protect\remarkname}
\theoremstyle{plain}
\newtheorem{lem}[thm]{\protect\lemmaname}
\theoremstyle{plain}
\newtheorem{cor}[thm]{\protect\corollaryname}
\newcommand{\sech}{\text{sech}}
\newcommand{\dd}{\mathrm{d}}
\newcommand{\lucas}[1]{{\color{red}#1}}
\providecommand{\conjecturename}{Conjecture}
\providecommand{\lemmaname}{Lemma}
\providecommand{\propositionname}{Proposition}
\providecommand{\remarkname}{Remark}
\providecommand{\theoremname}{Theorem}
\providecommand{\corollaryname}{Corollary}
\begin{document}


\title{Plate theory for metric-constrained actuation of liquid crystal elastomer sheets}

\author{Lucas Bouck$^{1}$}
\address{$^1$ Department of Mathematical Sciences,  Carnegie Mellon University, Pittsburgh, PA, 15213, USA }
\author{David Padilla-Garza$^{2}$}
\address{$^2$ Einstein Institute of Mathematics, The Hebrew University of Jerusalem, Jerusalem, 9190401, Israel }
\author{Paul Plucinsky$^{3,\ast}$}
\address{$^3$ Aerospace and Mechanical Engineering, University of Southern California, Los Angeles,  CA, 90089, USA}
\address{$\ast$ Corresponding Author Email: plucinsk@usc.edu}

\date{\today}

\maketitle 

\vspace*{-1cm}

\begin{abstract} 
Liquid crystal elastomers (LCEs) marry the large deformation response of a cross-linked polymer network with the nematic order of liquid crystals pendent to the network. Of particular interest is the actuation of LCE sheets where the nematic order, modeled by a unit vector called the director, is specified heterogeneously in the plane of the sheet. Heating such a sheet leads to a large spontaneous deformation, coupled to the director design through a metric constraint that is now well-established by the literature. Here we go beyond the metric constraint and identify the full plate theory that underlies this phenomenon. Starting from a widely used bulk model for LCEs, we derive a plate theory for the pure bending deformations of patterned LCE sheets in the limit that the sheet thickness tends to zero using the framework of $\Gamma$-convergence. Specifically, after dividing the bulk energy by the cube of the thickness to set a bending scale, we show that all limiting midplane deformations with bounded energy at this scale satisfy the aforementioned metric constraint. We then identify the energy of our plate theory as an ansatz-free lower bound of the limit of the scaled bulk energy, and construct a recovery sequence that achieves this plate energy for all smooth enough midplane deformations. We conclude by applying our plate theory to a variety of examples.
\end{abstract}

\section{Introduction}

Liquid crystal elastomers (LCEs) are rubbery solids composed of a lightly cross-linked polymer network with rod-like mesogen molecules ("liquid crystals") pendent to the polymer backbone. At low temperatures, the liquid crystals have a tendancy towards alignment that gives rise to an orientational order --- a nematic phase described macroscopically by a director (unit vector) and order parameter at each point in the solid. This orientational order strongly couples to the entropic elasticity of the polymer network, leading to a rich range of mechanical behaviors \cite{warner2007liquid}. On increasing the temperature, for instance, thermal fluctuations suppress the LCE's nematic order, resulting in a phase transition that renders the material effectively isotropic at high temperatures. Large spontaneous distortion accompanies the transition --- the solid contracts along its initial director (by roughly $20-80\%$ strain depending on the cross-linking density) and expands transversely in a manner that is nearly incompressible. 

\begin{figure}[h!]
\hspace*{-0.5cm}
\centering
\includegraphics[width=.9\textwidth]{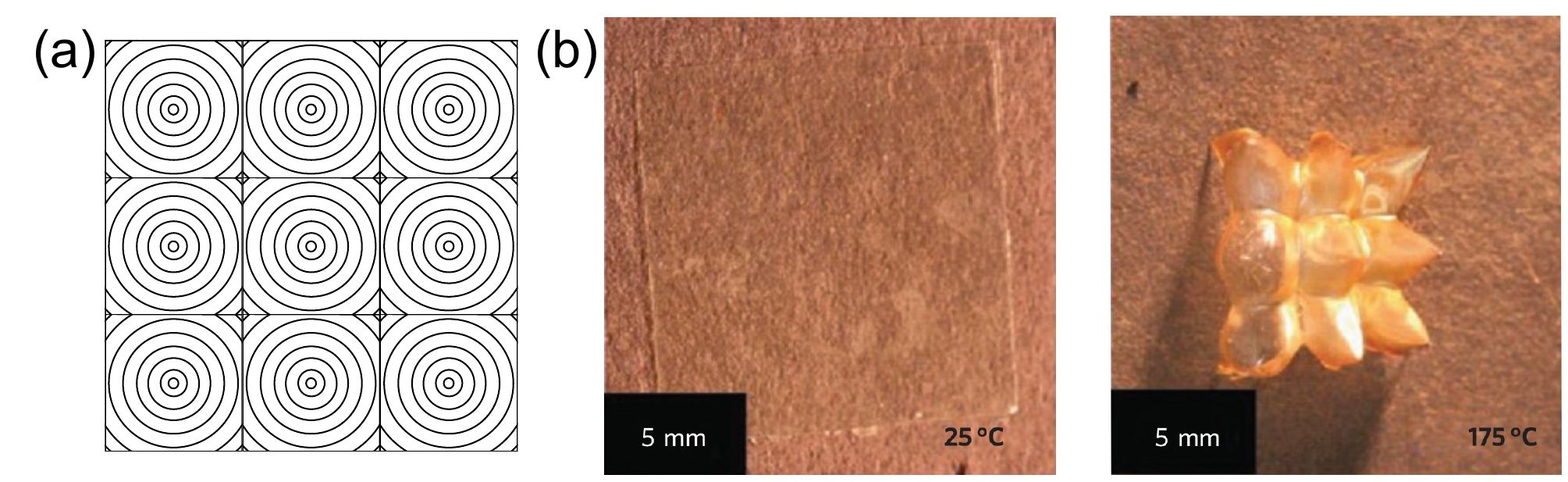} 
\caption{Example of hetergeneous director programming and actuation in LCE sheets; the experimental images in (b) are reproduced from \cite{ware2015voxelated,white2015programmable}.}
\label{fig:IntroFig}
\end{figure}

About a decade ago, materials scientists pioneered methods for synthesizing LCE sheets with heterogeneous control over the director profile in the plane of the sheet \cite{de2012engineering,ware2015voxelated}. This innovation enabled dramatic displays of shape-actuation and solidified LCEs as an important materials template for emerging applications in soft robotics and artificial muscles \cite{white2015programmable}. Fig.\;\ref{fig:IntroFig} highlights an exemplar from the literature --- the conical actuation of a voxelated LCE sheet in \cite{ware2015voxelated}. This LCE is formed in a thin layer between two glass plates whose surfaces are photo-patterned to trigger spatially varying alignment of the mesogens molecules during the polymerization process; the plates are then removed to produce a free-standing and heterogeneously patterned sheet. The resulting director profile, sketched in Fig.\;\ref{fig:IntroFig}(a), is oriented azimuthally (tangent to the lines shown) around nine +1-defects. Upon heating, the sheet undergoes a nematic-to-isotropic phase transition that causes azimuthal contraction and radial expansion about each defect, resulting in the global conical actuation shown Fig.\;\ref{fig:IntroFig}(b). Other experimental modalities have emerged in recent years that enrich the programming capability. LCEs can be 3D printed --- the mesogens align with the printing direction to produce heterogeneously patterned plates and shells \cite{ambulo2017four,kotikian20183d} with similar actuation principles as those on display in Fig.\;\ref{fig:IntroFig}. Refinements in chemistry have also enabled actuation by light, as well as heat, a more practical stimuli-response for engineering applications \cite{ gelebart2017making,yu2003directed}. All told, LCEs make for an exciting engineering system to explore the interplay between design and deformation in a sheet. 

 The first theoretical examination of director programming actually precedes the experimental work and came by way of Modes \textit{et al.} \cite{modes2011gaussian,modes2011blueprinting}. Starting from the classical trace formula for the entropic energy of an LCE \cite{bladon1993transitions}, they predicted that conical and saddle-like shapes should emerge on heating a plate programmed using "+1" and "-1"-defects, respectively. A flurry of theoretical activity followed. Building on universal and elegant concepts in the development of non-Euclidean plate theory \cite{efrati2009,klein2007shaping,sharon2010mechanics}, Aharoni \textit{et al.}\;\cite{aharoni2014geometry} proposed a metric constraint to govern the relationship between the director program and actuation in patterned LCE sheets. They also described a class of surfaces of revolution that solved the constraint exactly.
 Mostajeran and Warner continued this theme by showing how to blueprint a variety of Gaussian curvatures \cite{mostajeran2015curvature} and by developing a family of log-spiral designs and actuations \cite{warner2018nematic}. They also worked closely with experimentalists to validate these design motifs \cite{kowalski2018curvature,mostajeran2016encoding}. Plucinsky \textit{et al.}\;\cite{plucinsky2016programming} generalized the metric constraint to include jumps in the director field, enabling complex and easily realizable shape change through nonisometric origami \cite{plucinsky2018patterning,warner2020topographic}. They further complimented this literature by showing that the smooth metric constraint arises as a necessary condition for bending deformations in LCE sheets \cite{plucinsky2018actuation}. Finally, Aharoni \textit{et al.}\;\cite{aharoni2018universal} proposed and exemplified an inverse design strategy based on the metric constraint to program an LCE sheet to take a fairly arbitrary 3D shape. The theme of inverse design continues to be an active area of theoretical research \cite{duffy2024programming,griniasty2019curved,griniasty2021multivalued}. Other intriguing avenues of recent research concern curved creases and how they interact in a predictable and functional way with the metric constraint \cite{feng2020evolving,feng2024geometry,feng2022interfacial}. All of this literature highlghts the richness and importance of the metric constraint as a concept for designable actuation in LCE sheets.

These successes notwithstanding, there has been a thrust in recent years to go beyond a purely geometric metric constraint and develop plate theories for LCE sheets. A variety of researchers \cite{bouck2023convergent,cirak2014computational,duffy2020defective} have proposed and numerically implemented Koiter-type theories --- the metric constraint is relaxed and incorporated into a stretching energy term proportional to the plate thickness, while the bending term at thickness cubed is taken as an \textit{ad hoc} quadratic energy density of a bending strain. These theories are successful at simulating shape actuation in LCE sheets, including under loads and when defects introduce sources of frustration \cite{bouck:2023thesis, bouck2024reduced, duffy2023lifting}. However, their bending terms are not systematically derived and thus leave something to be desired from a theoretical perspective. Many have taken up the challenge of dimension reduction for LCE sheets, though not so much in asymptotic regimes where the aformentioned metric constraint plays a leading role. Mihai and Goriely \cite{mihai2020plate} derived a F\"{o}ppl-von K\'arm\'an-like set of plate equations for LCE sheets and used it to study wrinkling of a nematic-isotropic bilayer \cite{goriely2021liquid}. Bartels \textit{et al.}\;\cite{bartels2023nonlinear} also studied a bilayers problem in which the isotropic layer mutes the shape actuation of the nematic layer in a way that enabled them to derive a Kirchhoff-like nonlinear plate theory whose bending deformations are euclidean isometries. Liu \text{et al.}\;\cite{liu2020consistent} took a non-asymptotic approach to derive a plate theory for LCEs by expanding each field in a power series in the thickness coordinate and by constraining the expansion to solve equilibrium-type equations that arise at successive powers of the thickness. Finally, Virga and colleagues \cite{ozenda2020blend} derived a blended plate theory for LCE sheets that accounts for both stretching and bending and used it to study the elastic behavior of ridges that naturally arise when the programmed director only possesses non-smooth isometric immersions \cite{pedrini2021ridge}. 

We seek instead a dimensionally reduced plate theory for LCE sheets where the metric constraint --- the key guiding principle for large shape actuation in these sheets --- is front and center. Our starting point is a 3D energy for LCEs that is function of the reference (or programmed) director, current (or deformed) director, and the deformation. It consists of four terms with well-established physical origins: (1) an entropic energy that follows the neo-classical theory of Bladon \textit{et al.}\;\cite{bladon1993transitions}; (2) a standard incompressible energy penalty expressing the basic fact LCEs are polymers and so nearly incompressible; (3) a non-ideal energy \cite{biggins2012elasticity} that promotes the director to convect with the deformation and is present because shape actuating LCE sheets are typically cross-linked in the nematic phase; (4) a Frank elastic energy \cite{frank1958liquid} that recognizes that the mesogens prefer to be uniformly aligned and thus penalizes spatial variations in the director field. We perform an asymptotic analysis of this energy as the thickness tends to zero by choosing the moduli of these four terms to scale with the thickness as 
\begin{align*}
\underbrace{\text{incompressiblity penalty}}_{\text{and }\rightarrow \;\infty} \gg \underbrace{\text{entropic energy}}_{\text{and is } \sim \;1} \gg \underbrace{\text{non-ideal energy} \gg \text{Frank elasticity}}_{\text{and } \rightarrow \;0},
\end{align*}
which is well-motivated by the physics (see Sections \ref{ssec:Entropic}-\ref{ssec:Frank}). We follow the framework of $\Gamma$-convergence \cite{braides2002gamma} for the dimension reduction, building on the seminal work by Friesecke, James and M\"{u}ller on classical plates \cite{friesecke2002theorem} along with its generalization to non-Euclidean elasticity \cite{lewicka2023calculus} and incompressibility \cite{conti2009gamma,li2013kirchhoff}. A key difference in our work, as compared to these others, is that the current director is a field to be minimized just like deformation. This additional freedom introduces some nuance in the derivation, especially concerning the optimal thru-thickness behavior of the director field and its influence on the overall plate energy density. Another difference, as compared with the above-mentioned references, is that the limiting metric constraint in our setting need not be constant or smooth. 

In the end, we derive a plate theory (Eq.\;(\ref{eq:plateTheory}) below) that consists of the metric constraint of Aharoni \textit{et al.}\;\cite{aharoni2014geometry} along with a plate energy that depends on the programmed director field, the second fundamental form of the deformation, and the gradient of the deformed director field. We prove that this theory is the $\Gamma$-limit of our 3D energy for a wide class of physically relevant smooth director programs. We believe it is also the $\Gamma$-limit for all sufficiently smooth director programs, but our analysis falls just short of this type of result. 
 The gap is due to well-known mathematical hurdles (outlined in Section \ref{ssec:IncompressTech}) for constructing nearly incompressible 3D deformations that limit to metric-constrained midplane deformations with low regularity. Leaving aside this point, our work uses geometric rigidity \cite{friesecke2002theorem} and a 3D elastic model with well-established physical origins to derive a plate theory for the large actuation response of patterned LCE sheets.

The rest of the paper is organized as follows. Section \ref{sec:ModelMain} introduces the 3D model and states the main dimension reduction results, namely, Theorem \ref{MainTheorem} and \ref{RecoveryTheorem} and Corollary \ref{keyCor}. The corresponding proofs follow in Section \ref{sec:Compactness}-\ref{sec:Recovery}. We then conclude with Section \ref{sec:Examples} by highlighting examples of the plate theory. 

\section{Model and main results}\label{sec:ModelMain}

Let $\Omega := \omega \times (-1/2,1,2)$ for $\omega \subset \mathbb{R}^2$ a bounded and Lipschitz domain. We consider the variational problem of minimizing a suitable description of the elastic energy of a patterned LCE plate at the bending scale. For a given planar director field $\mathbf{n}_0 \in H^1(\omega, \mathbb{S}^1)$, we assume that the elastic energy to deform the body and director field by $\mathbf{y}_h \colon \Omega \rightarrow \mathbb{R}^3$ and $\mathbf{n}_h \colon \Omega \rightarrow \mathbb{S}^2$, respectively, is of the form 
\begin{equation}
\begin{aligned}\label{eq:overallEnergy}
E^h_{\mathbf{n}_0}(\mathbf{y}_{h}, \mathbf{n}_h) := \frac{1}{h^2} \int_{\Omega}\Big\{ W\big( (\boldsymbol{\ell}^{\text{f}}_{\mathbf{n}_h})^{-1/2}& \nabla_h \mathbf{y}_h (\boldsymbol{\ell}^{0}_{\mathbf{n}_0})^{1/2}\big) + \kappa_h \big( \det \nabla_h \mathbf{y}_h -1)^2 \\
&+ \mu_h \big| \mathbf{P}_{\mathbf{n}_0}( \nabla_h \mathbf{y}_h)^T \mathbf{n}_h\big|^2 + \gamma_h |\nabla_h \mathbf{n}_h|^2 \Big\} \dd V.
\end{aligned}
\end{equation}
In this formula, the energy has already been rescaled from the physical plate domain $\omega \times (-h/2,h/2)$ to the thickness independent domain $\Omega$, as is standard for deriving a plate theory using $\Gamma$-convergence. This rescaling replaces gradients of the deformation and director field on the physical domain with $\nabla_h \mathbf{y}_h := (\nabla \mathbf{y}_h , h^{-1}\partial_3 \mathbf{y}_h)$ and $\nabla_h \mathbf{n}_h := (\nabla \mathbf{n}_h, h^{-1} \partial_3 \mathbf{n}_h)$ on $\Omega$, respectively. Here and throughout, we consistently refer to $\nabla$ as the planar gradient, i.e., the gradient in $\mathbf{x} = (x_1,x_2) \in \omega$, in anticipation that the limiting energy will be defined on the midsurface $\omega$ (see Fig.\;\ref{fig:setup}). 
There are four distinct terms in this energy, detailed below.

\begin{figure}[h!]
\tdplotsetmaincoords{70}{110}
\hspace*{-.6cm}
\begin{tikzpicture}[tdplot_main_coords, scale=1.5]
    \begin{scope}[xshift=-3cm]
        \def\l{4} 
        \def\w{4} 
        \def\h{0.4} 
        \draw (0,0,0) -- (\l,0,0) -- (\l,\w,0) -- (0,\w,0) -- cycle;
        \draw (0,0,0) -- (0,\w,0) -- (0,\w,\h) -- (0,0,\h) -- cycle;
        \draw[solid, fill=gray!50] (0,0,\h/2) -- (\l,0,\h/2) -- (\l,\w,\h/2) -- (0,\w,\h/2) -- cycle;
        \node[anchor=east] (omega) at (\l/2,-.5,\h/2) {$\omega$};
        \draw[-stealth] (omega) -- (\l/2,-.03,\h/2);
        
        \draw (0,0,\h) -- (\l,0,\h) -- (\l,\w,\h) -- (0,\w,\h) -- cycle;
        \draw (\l,0,0) -- (\l,0,\h) -- (\l,\w,\h) -- (\l,\w,0);
       
        \draw[decorate,decoration={brace,mirror,amplitude=5pt,raise=2pt}]
        (\l,-0.005,\h) -- (\l,-0.005,0) node [midway,left=6pt] {$h$};
        
        \foreach \x in {0.25,.75,...,4.5} {
            \foreach \y in {0.25,.75,...,5} {
                \pgfmathsetmacro{\vx}{-.45*\y/sqrt(\x*\x+\y*\y)}
                \pgfmathsetmacro{\vy}{.45*\x/sqrt(\x*\x+\y*\y)}
                \draw[black,thick,scale=.75] (\x,\y,\h/2) -- ++(\vx,\vy,0);
            }
        }
    \end{scope}
    
    \draw[-stealth, thick] (3,1.6,.6) to[out=40,in=160] (2,3.7,.65);
    \node[anchor=south] at (2,2.5,.8) {$\mathbf{y}$};
    \begin{scope}[xshift=3cm]
        \def\l{6} 
        \def\w{6} 
        \def\h{0.5} 
        
        \pgfmathdeclarefunction{deformX}{3}{%
            \pgfmathparse{#1}%
        }
        \pgfmathdeclarefunction{deformY}{3}{%
            \pgfmathparse{#2}%
        }
        \pgfmathdeclarefunction{deformZ}{3}{%
            \pgfmathparse{ 0.05 * (#1*#1) + .05*(#2*#2) - 1.75*(#3)}%
        }
        
        \draw[solid, fill=gray!50] 
            plot[domain=0:4,samples=20,smooth] (\x,0,{deformZ(\x,0,\h/2)}) --
            plot[domain=0:4,samples=20,smooth] ({deformX(4,\x,\h/2)},{deformY(4,\x,\h/2)},{deformZ(4,\x,\h/2)}) --
            plot[domain=4:0,samples=20,smooth] (\x,4,{deformZ(\x,4,\h/2)}) --
            plot[domain=4:0,samples=20,smooth] ({deformX(0,\x,\h/2)},{deformY(0,\x,\h/2)},{deformZ(0,\x,\h/2)}) -- cycle;
        \node[anchor=south west] at (0,0,{deformZ(0,0,\h/2)}) {$\mathbf{y}(\omega)$};
        \def\nx{-0.2}  
        \def\ny{-0.2}  
        \def\nz{1}     
        \pgfmathsetmacro{\nvlen}{sqrt(\nx*\nx + \ny*\ny + \nz*\nz)}
        \pgfmathsetmacro{\nnx}{\nx/\nvlen}
        \pgfmathsetmacro{\nny}{\ny/\nvlen}
        \pgfmathsetmacro{\nnz}{\nz/\nvlen}
        
        \fill (1,2,{deformZ(1,2,\h/2)}) circle (1pt);
        \draw[-stealth,black,thick] (1,2,{deformZ(1,2,\h/2)}) -- ++(\nnx,\nny,\nnz);
        \node[anchor=south east] at (1+\nnx,2+\nny,{deformZ(1,2,\h/2)+\nnz}) {$\boldsymbol{\nu}_\mathbf{y}$};
        
        \def\mx{0}  
        \def\my{0.2}  
        \def\mz{.08}     
        \pgfmathsetmacro{\mvlen}{sqrt(\mx*\mx + \my*\my + \mz*\mz)}
        \pgfmathsetmacro{\mmx}{\mx/\mvlen}
        \pgfmathsetmacro{\mmy}{\my/\mvlen}
        \pgfmathsetmacro{\mmz}{\mz/\mvlen}
        
        \draw[-stealth,black,thick] (1,2,{deformZ(1,2,\h/2)}) -- ++(\mmx,\mmy,\mmz);
        \node[anchor=south west] at (1+\mmx,2+\mmy-.25,{deformZ(1,2,\h/2)+\mmz+.1}) {$\frac{\nabla\mathbf{y} \boldsymbol{n}_0}{|\nabla\mathbf{y} \boldsymbol{n}_0|}$};
    \end{scope}
\end{tikzpicture}
\caption{(Left): LCE plate with thickness $h$ with the midplane $\omega$ highlighted in grey. Director field in black lines is the blueprinted director $\mathbf{n}_0$. (Right): Deformation of midplane $\mathbf{y}(\omega)$ with normal vector to surface $\boldsymbol{\nu}_{\mathbf{y}}$. In the asymptotics, the director will be shown to convect with the deformation as indicated by the formula $\mathbf{n} = \frac{\nabla \mathbf{y} \mathbf{n}_0}{|\nabla \mathbf{y} \mathbf{n}_0|}$ shown above.}
\label{fig:setup}
\end{figure}
\subsection{Entropic elasticity}\label{ssec:Entropic} The first term is the purely entropic Warner-Terentjev \cite{bladon1994deformation} type energy, which is modeled by a smooth hyperelastic energy density $W(\mathbf{F})$ that satisfies 
\begin{equation}
\begin{aligned}\label{eq:hyperelastic}
(\text{frame indifference:})& \quad W(\mathbf{Q}\mathbf{F}) = W(\mathbf{F}) \quad \text{ for all } \mathbf{F} \in \mathbb{R}^{3\times3}, \mathbf{Q} \in SO(3), \\
(\text{isotropy:})& \quad W(\mathbf{F}\mathbf{Q}) = W(\mathbf{F}) \quad \text{ for all } \mathbf{F} \in \mathbb{R}^{3\times3}, \mathbf{Q} \in SO(3), \\
(\text{quadratic near $SO(3)$:})& \quad W(\mathbf{F}) \geq c_{\text{lb}} \text{dist}^2(\mathbf{F}, SO(3)) \quad \text{ for all } \mathbf{F} \in \mathbb{R}^{3\times3},
\end{aligned}
\end{equation}
for $\text{dist}(\mathbf{F}, SO(3)) := \inf_{\mathbf{R} \in SO(3)} |\mathbf{F}- \mathbf{R}|$. Being isotropic and smooth, the energy density expands in the classical way as 
\begin{equation}
\begin{aligned}\label{eq:TaylorW}
W(\mathbf{I} + \mathbf{A}) = \Big( \mu \big| \text{sym} \mathbf{A} \big|^2 + \frac{\lambda}{2}\big( \text{Tr}( \text{sym} \mathbf{A}) \big)^2\Big) + o(|\mathbf{A}|^2)
\end{aligned}
\end{equation}
for $\text{sym}\mathbf{A} = \frac{1}{2}( \mathbf{A} + \mathbf{A}^T)$ and for some Lam\'{e} parameters $\mu, \lambda >0$. ($\mu$ is also the shear modulus of the network.) The leading order term is, of course, the quadratic form of 3D linear elasticity, labeled herein as 
\begin{equation}
\begin{aligned}\label{eq:Q3A}
Q_3(\mathbf{A}) := 2\mu \big| \text{sym} \mathbf{A} \big|^2 + \lambda \big( \text{Tr}( \text{sym} \mathbf{A}) \big)^2, \quad \mathbf{A} \in \mathbb{R}^{3\times3}. 
\end{aligned}
\end{equation}

Nematic orientation is encoded into the argument of $W$ through the so-called step length tensors 
\begin{equation}
\begin{aligned}\label{eq:stepLengthsDef}
&\boldsymbol{\ell}_{\mathbf{v}}^{\text{f}} := \lambda_\text{f} \mathbf{v} \otimes \mathbf{v} + \lambda_\text{f}^{-1/2} \big( \mathbf{I} - \mathbf{v} \otimes \mathbf{v} \big) \in \mathbb{R}^{3\times3}_{\text{sym}}, \quad \mathbf{v} \in \mathbb{S}^2, \\
&\boldsymbol{\ell}_{\mathbf{v}_0}^{\text{0}} := \lambda_0 \begin{pmatrix} \mathbf{v}_0 \\ \mathbf{0} \end{pmatrix} \otimes \begin{pmatrix} \mathbf{v}_0 \\ \mathbf{0} \end{pmatrix} + \lambda_\text{0}^{-1/2} \big( \mathbf{I} - \begin{pmatrix} \mathbf{v}_0 \\ \mathbf{0} \end{pmatrix} \otimes \begin{pmatrix} \mathbf{v}_0 \\ \mathbf{0} \end{pmatrix} \big) \in \mathbb{R}^{3\times3}_{\text{sym}} , \quad \mathbf{v}_0 \in \mathbb{S}^1,
\end{aligned}
\end{equation}
for eigenvalues $\lambda_{\text{0}}, \lambda_{\text{f}} \geq 1$ that quantify the degree of nematic anisotropy in the reference and current (deformed) states of the LCE. Typically, these eigenvalues depend on the the temperature via $\lambda_{\text{f}} := \lambda(T_{\text{f}})$ and $\lambda_{\text{0}} := \lambda(T_0)$ for some function $\lambda(T) \geq 1$ that monotonically decreases in temperature $T$ and limits to $1$ as $T \rightarrow \infty$. They can also depend on other modes of actuation, like illumination or a magnetic field. Note that the eigenvectors associated to $\lambda_{\text{0}}, \lambda_{\text{f}}$ in these tensors form a line spanned by the director $\mathbf{v}$, while the transverse plane has eigenvalues $\lambda_0^{-1/2}, \lambda_{\text{f}}^{-1/2}$, respectively. Thus, $\det (\boldsymbol{\ell}_{\mathbf{v}}^{\text{f}}) = \det (\boldsymbol{\ell}_{\mathbf{v}}^{\text{0}}) = 1$. This completes our description of the the purely entropic part of the energy in (\ref{eq:overallEnergy}); we refer the interested reader to \cite{bladon1994deformation, warner2007liquid} for more on the physical origins of this term and some of its consequences.

\subsection{Approximate incompressibility} The second term in the energy in (\ref{eq:overallEnergy}) is an elastic penalty to deformations of the physical plate domain $\omega \times (-h/2,h/2)$ that are not incompressible, which becomes a penalty on $\det \nabla_h \mathbf{y}_h$ after rescaling (see the discussion below (\ref{eq:overallEnergy})). Since nematic elastomers are a soft polymer network, they are nearly incompressible. Thus, the the moduli of this energy term should be $\gg 1$ on physical grounds. We therefore consider the asymptotic regime for $\kappa_h$ in our analysis
\begin{equation}
\begin{aligned}\label{eq:kappaH}
\kappa_h > 0 \quad \text{ such that } \quad \kappa_h \rightarrow \infty, \quad h^{2s} \kappa_h \rightarrow 0 \quad \text{ as } h \rightarrow 0
\end{aligned}
\end{equation} 
for a suitably chosen $s \in (0,1]$. The first limit $\kappa_h \rightarrow \infty$ enforces an approximate incompressibility constraint. The second $h^{2s} \kappa_h \rightarrow 0$ is a scaling assumption that will lead to this energy term rigorously vanishing as $h \rightarrow 0$. 

That $\kappa_h$ should blow up slower than $h^{-2s}$, $s \in (0,1]$, is quite subtle, so we briefly offer some heuristics on these scalings. Recall from the characteristic polynomial on $\mathbb{R}^{3\times3}$ the useful identity
\begin{equation}
\begin{aligned} \label{eq:detIdent}
(\det ( \mathbf{I} + \mathbf{A}) -1 )^2 &= \left( \text{Tr}(\mathbf{A}) + ( \text{Tr}(\mathbf{A})^2 - \text{Tr}(\mathbf{A}^2)) + \det \mathbf{A}\right)^2 \\ 
&= \text{Tr}(\mathbf{A})^2 + o(|\mathbf{A}|^2) 
\end{aligned}
\end{equation}
Note also that $\text{Tr}(\mathbf{A})$, $\text{Tr}(\mathbf{A})^2 - \text{Tr}(\mathbf{A}^2)$, and $\det \mathbf{A}$ are homogeneous polynomials in the components of $\mathbf{A}$ of degree $p = 1,2$ and $3$, respectively, which means they can be bounded from above by $|\mathbf{A}|^p$ up to an constant. For instance, 
\begin{align}\label{eq:detEsts}
|\text{Tr}( \mathbf{A} )| \leq \sqrt{3} |\mathbf{A}|, \quad | \text{Tr}(\mathbf{A})^2 - \text{Tr}(\mathbf{A}^2)| \leq | \mathbf{A}|^2, \quad |\det \mathbf{A}| \leq \frac{1}{\sqrt{6}} |\mathbf{A}|^3
\end{align}
for any $\mathbf{A} \in \mathbb{R}^{3\times3}$. In constructing bending ansatz of the sheet, we will find that $\det \nabla_h \mathbf{y}_h = \det (\mathbf{I} + h \mathbf{A} + \boldsymbol{\delta} \mathbf{A}_h)$ for tensor fields $\mathbf{A}, \boldsymbol{\delta} \mathbf{A}_h \colon \Omega \rightarrow \mathbb{R}^{3\times3}$ that generically satisfy $\| h \mathbf{A}\|_{L^{\infty}(\Omega)} \sim \|\boldsymbol{\delta}\mathbf{A}_h\|_{L^{\infty}(\Omega)} \sim h$ and $ \int_{\Omega}|\boldsymbol{\delta} \mathbf{A}_h|^2 \dd V \ll \int_{\Omega}|h \mathbf{A}|^2 \dd V$. This bending ansatz also contains some DOFs that couple $\mathbf{A}$ and $\boldsymbol{\delta} \mathbf{A}_h$ in a delicate way, which motivate the scaling in (\ref{eq:kappaH}).
At leading order, the identity in (\ref{eq:detIdent}) furnishes $\int_{\Omega} h^{-2} \kappa_h (\det \nabla_h \mathbf{y}_h - 1)^2 \dd V = \kappa_h \int_{\Omega} \text{Tr}( \mathbf{A})^2 \dd V + \text{H.O.T.}$ Thus, our first assumption $\kappa_h \rightarrow \infty$ in (\ref{eq:kappaH}) leads to the constraint
$\text{Tr}(\mathbf{A})= 0$ for a finite energy limit. This constraint can be satisfied using the aforementioned DOFs. However, doing so leads to a remainder term of the form $\int_{\Omega} |\boldsymbol{\delta} \mathbf{A}_h|^2 \dd V = O(h^{2(s+1)})$ for some $s \in (0,1]$. The estimates in (\ref{eq:detEsts}) then allow us to conclude that $\int_{\Omega} h^{-2} \kappa_h (\det \nabla_h \mathbf{y}_h - 1)^2 \dd V = O(\kappa_h h^{-2} \int_{\Omega} |\boldsymbol{\delta} \mathbf{A}_h|^2 \dd V) = O( h^{2s}\kappa_h)$, which vanishes per the second assumption in (\ref{eq:kappaH}). The $s$-dependence in the scaling of the remainder is related to the smoothness or lackthereoff of the space of bending deformations of a plate; see Section \ref{ssec:FunctionSpaces}-\ref{ssec:IncompressTech} for further discussion. The detailed proof of this result found in Section \ref{ssec:SobolevSetting}.

\subsection{Director anchoring} The third term in (\ref{eq:overallEnergy}) is the so-called non-ideal energy \cite{biggins2012elasticity} that restricts the LCE from freely forming microstructure. The energy density employs the tensor 
\begin{align*}
\mathbf{P}_{\mathbf{v}_0} := \mathbf{I} - \begin{pmatrix} \mathbf{v}_0 \\ 0 \end{pmatrix} \otimes \begin{pmatrix} \mathbf{v}_0 \\ 0 \end{pmatrix} \in \mathbb{R}^{3\times3}_{\text{sym}} , \quad \mathbf{v}_0 \in \mathbb{S}^1,
\end{align*}
which projects vectors onto the tangent plane normal to $\mathbf{v}_0$. This energy seeks to anchor the director field according to the conventional transformation rule of reference to deformed normals in continuum mechanics. 
 
 In the physics literature, the terminology "nematic elastomer" versus "nematic glass" is often associated to the strength of the non-ideal energy relative to the purely entropic energy density (the first term in (\ref{eq:overallEnergy})). Nematic elastomers, being only lightly cross-linked, are capable of forming microstructure under a wide class of deformation (see \cite{cesana2015effective,desimone2002macroscopic} for effective theories on this microstructure and \cite{conti2002soft,plucinsky2017microstructure,verwey1996elastic} for a illustrative examples in the context of stretched sheets). In other words, the nematic director faces little obstruction to rotating through the polymer network. Nematic glasses however are more heavily cross-linked, which suppresses the free rotation of the director field and leads to a strong bias towards it being convected with the deformation. We are primarily interested in the nematic elastomers, due to their large actuation strains as compared to the much stiffer nematic glasses. We therefore assume that the moduli $\mu_h$ of this non-ideal term is much smaller than the shear modulus $\mu$ in (\ref{eq:Q3A}) through the scaling
 \begin{equation}
 \begin{aligned}\label{eq:muH}
\mu_h > 0 \quad \text{ such that } \quad \mu_h \rightarrow 0, \quad h^{-2} \mu_h \rightarrow \infty \quad \text{ as } h \rightarrow 0. 
 \end{aligned}
 \end{equation}
 The first limit above encodes that the purely entropic term dominates the non-ideal term. The second is needed in our analysis to ensure that the material cannot form microstructure at the bending scale. It will allow us to derive a metric constraint, relating the programmed director field $\mathbf{n}_0$ to the limiting midplane deformation. 
 
\subsection{Frank Elasticity}\label{ssec:Frank} The final term in (\ref{eq:overallEnergy}) is a source of elasticity penalizing deviations from a uniform director field. The most well-known such energy source is called Frank elasticity \cite{frank1958liquid}, a quadratic energy density penalizing splay, twist, bend and saddle-splay of the director field. The term employed here is called the "one constant" approximation of Frank elasticity, obtained by setting all the moduli of the four terms of Frank elasticity to be equal (and given $\gamma_h$ here). Frank elasticity will serve as a a regularizing term for our purposes in deriving a bending theory. Specifically, we assume that $\gamma_h$ is small in the sense that
 \begin{equation}
 \begin{aligned}\label{eq:gammaH}
h^{-2} \gamma_h\geq \gamma_{\text{lb}} > 0 \quad \text{ such that } \quad h^{-2} \gamma_h \rightarrow \gamma \quad \text{ as } h \rightarrow 0.
 \end{aligned}
 \end{equation}
While this $\gamma_h \sim h^2$ scaling may seem restrictive, it appears to be necessary to obtain compactness of sequences of deformations and directors that lead to a bending type theory at the bending scale. If Frank elasticity is too weak ($\gamma_h \ll h^2$), then the director can form microstructure and the standard tools of geometric rigidity fail to supply us with the needed compactness. If, alternatively, Frank elasticity is too strong ($\gamma_h \gg h^2$), then this term dominates the energy at the bending scale, forcing the director to be essentially uniform in the deformed configuration, even when the prescribed director $\mathbf{n}_0$ is heterogeneous. As the latter is inconsistent with what is typically observed experimentally, we can safely assume that strong Frank elasticity is not a realistic asymptotic regime. In summary, $\gamma_h \sim h^2$ is an interesting mathematical regime and physically reasonable.

\subsection{Function spaces}\label{ssec:FunctionSpaces} Having outlined the different terms in the energy and the $h$-dependent scalings of their moduli, we turn to address the appropriate function spaces for studying minimizers of this energy. Recall that the fields being minimized over are the deformation $\mathbf{y}_h$ and the current director $\mathbf{n}_h$; the reference director $\mathbf{n}_0$ is prescribed and assumed to be in $H^1(\omega, \mathbb{S}^1)$. We claim that
\begin{equation}
\begin{aligned}\label{eq:lbClaim}
\int_{\Omega} \Big\{ h^{-2} c_{\text{lb}}\big( c_1(\lambda_\text{f}, \lambda_0) \big|(\nabla \mathbf{y}_h, \partial_3 \mathbf{y}_h)\big|^2 -c_2\big) + \gamma_{\text{lb}} \big|(\nabla \mathbf{n}_h, \partial_3 \mathbf{n}_h)\big|^2 \Big\} \dd V \leq E^h_{\mathbf{n}_0}(\mathbf{y}_{h}, \mathbf{n}_h) \quad \text{ for all } h \in (0,1).
\end{aligned}
\end{equation}
It follows immediately that the energy is well defined as $h \rightarrow 0$ only if 
\begin{align*}
(\mathbf{y}_h, \mathbf{n}_h) \in H^1(\Omega, \mathbb{R}^3) \times H^1(\Omega,\mathbb{S}^2). 
\end{align*}
This observation makes $H^1(\Omega, \mathbb{R}^3) \times H^1(\Omega, \mathbb{S}^2)$ is the natural space of functions to investigate the limiting behavior of the energy $E_{\mathbf{n}_0}^h(\mathbf{y}_h, \mathbf{n}_h) $. 

To briefly address the claim in (\ref{eq:lbClaim}), observe that $\int_{\Omega} \gamma_{\text{lb}} \big|(\nabla \mathbf{n}_h, \partial_3 \mathbf{n}_h)\big|^2 \dd V\leq \int_{\Omega} h^{-2} \gamma_h |\nabla_h \mathbf{n}_h|^2 \dd V$ for all $h \in (0,1)$ due to the definition of $\nabla_h$ and the inequality in (\ref{eq:gammaH}). Also, the estimates in Lemma \ref{AppendLemma22} give that 
\begin{equation}
\begin{aligned}\label{eq:lbLemma} 
 \text{dist}^2\big((\boldsymbol{\ell}_{\mathbf{v}}^{\text{f}})^{-1/2} \mathbf{F} (\boldsymbol{\ell}_{\mathbf{v}_0}^0)^{1/2} , SO(3) \big) \geq c_1(\lambda_{\text{f}},\lambda_0) |\mathbf{F}|^2 - c_2 \quad \text{ for all } \mathbf{F} \in \mathbb{R}^{3\times 3}, \mathbf{v} \in \mathbb{S}^2, \mathbf{v}_0 \in \mathbb{S}^1
\end{aligned}
\end{equation}
for some $c_1(\lambda_{\text{f}},\lambda_0)> 0$ that depends only on $\lambda_{\text{f}}$ and $\lambda_0$. Thus, $\int_{\Omega} c_{\text{lb}} c(\lambda_\text{f}, \lambda_0) \big( \big|(\nabla \mathbf{y}_h, \partial_3 \mathbf{y}_h)\big|^2 -1\big) \dd V \leq \int_{\Omega} W\big( (\boldsymbol{\ell}^{\text{f}}_{\mathbf{n}_h})^{-1/2} \nabla_h \mathbf{y}_h (\boldsymbol{\ell}^{0}_{\mathbf{n}_0})^{1/2}\big) \dd V$ for all $h \in (0,1)$ using the inequalities in (\ref{eq:hyperelastic}) and (\ref{eq:lbLemma}). The claim in (\ref{eq:lbClaim}) follows since $h^{-2} \int_{\Omega} \big\{ W\big( (\boldsymbol{\ell}^{\text{f}}_{\mathbf{n}_h})^{-1/2} \nabla_h \mathbf{y}_h (\boldsymbol{\ell}^{0}_{\mathbf{n}_0})^{1/2}\big) + \gamma_h|\nabla_h \mathbf{n}_h|^2 \big\} \dd V \leq E_{\mathbf{n}_0}^h(\mathbf{y}_h, \mathbf{n}_h)$.

The class of deformations and director fields that have finite energy $E^h_{\mathbf{n}_0}(\mathbf{y}_{h}, \mathbf{n}_h)$ as $h \rightarrow 0$ is highly restrictive. In a prior work \cite{plucinsky2018actuation}, the last author and colleagues proved a compactness result that identified this class as
\begin{equation}
\begin{aligned}\label{eq:admissibleSet}
\mathcal{A}_{\mathbf{n}_0} := \Big\{ &(\mathbf{y}, \mathbf{n}) \in H^1(\Omega,\mathbb{R}^3) \times H^1(\Omega, \mathbb{S}^2) \text{ subject to } \\
&\qquad \mathbf{y} \text{ is independent of $x_3$, belongs to $ H^2(\omega, \mathbb{R}^3)$, and satisfies $(\nabla \mathbf{y})^T \nabla \mathbf{y} = \mathbf{g}_{\mathbf{n}_0}$ a.e.} , \\
&\qquad \text{$\mathbf{n}$ is constrained as } \mathbf{n} = \sigma \frac{\nabla \mathbf{y}\mathbf{n}_0}{|\nabla \mathbf{y}\mathbf{n}_0|} \text{ a.e. for a fixed constant $\sigma$ in $\{ -1, 1\}$} \Big\}. 
\end{aligned}
\end{equation}
for all $\mathbf{n}_0 \in H^1(\omega, \mathbb{S}^1)$. The term $(\nabla \mathbf{y})^T \nabla \mathbf{y} = \mathbf{g}_{\mathbf{n}_0}$ in this set defines a metric constraint concretely linking the programmed director $\mathbf{n}_0$ to the (midplane) deformation $\mathbf{y}$ through the tensor
\begin{equation}
\begin{aligned}\label{eq:metricDef}
\mathbf{g}_{\mathbf{v}_0} := \lambda_0^{-1} \lambda_{\text{f}} \mathbf{v}_0 \otimes \mathbf{v}_0 + \lambda_{\text{f}}^{-1/2}\lambda_0^{1/2} \mathbf{v}_0^{\perp} \otimes \mathbf{v}_0^{\perp} \in \mathbb{R}^{2\times 2}_{\text{sym}}, \quad \mathbf{v}_0 \in \mathbb{S}^1
\end{aligned}
\end{equation}
for $\mathbf{v}_0^{\perp} := \mathbf{R}(\pi/2) \mathbf{v}_0$, where $\mathbf{R}(\pi/2) \in SO(2)$ denotes a right-hand rotation by $\pi/2$. In addition, the constraint $\mathbf{n} = \sigma \frac{\nabla \mathbf{y}\mathbf{n}_0}{|\nabla \mathbf{y}\mathbf{n}_0|}$ in this set directly implies that that $\mathbf{n}$ belongs to $H^1(\Omega, \mathbb{S}^2)$ and is independent of $x_3$ (because $\mathbf{y}$ and $\mathbf{n}_0$ are both independent of $x_3$ with $\mathbf{y}$ in $H^2$ and $\mathbf{n}_0$ in $H^1$). 

As a final point on function spaces, we will have need to introduce fractional Sobolev spaces in the course of constructing recovery sequences of the energy $E_{\mathbf{n}_0}^{h}$ for large classes of limiting fields that belong to $\mathcal{A}_{\mathbf{n}_0}$. Let $V$ denote a normed vector space. 
For $s\in (0,1)$,  the fractional Sobolev space $H^s(\omega,V)$ is defined by  
\begin{align}\label{eq:HsDef}
H^s(\omega, V) := \{ f \in L^2(\omega, V) \colon \| f \|_{H^{s}(\omega,V)}  < \infty \},
\end{align}
where 
\begin{align*}
    \left\| f \right\|_{ H^{s}(\omega,V)} := \sqrt{\iint_{\omega \times \omega} \frac{\left| f(\mathbf{x}) - f(\tilde{\mathbf{x}}) \right|^{2}}{\left| \mathbf{x} - \tilde{\mathbf{x}}  \right|^{2+2s}}  \dd  A \dd \tilde{A} }.
\end{align*}
We will also use $H^{s}(\Omega, V)$. This space is defined by replacing $\omega$ with $\Omega$ in (\ref{eq:HsDef}) and modifying the norm to reflect that $\Omega$ is in $\mathbb{R}^3$ rather than $\mathbb{R}^2$, i.e., via $\| f\|_{H^{s}(\Omega,V)} := \sqrt{\iint_{\Omega \times \Omega} \frac{\left| f(\mathbf{x},x_3) - f(\tilde{\mathbf{x}},\tilde{x}_3) \right|^{2}}{\left| (\mathbf{x},x_3) - (\tilde{\mathbf{x}},\tilde{x}_3)  \right|^{3+2s}}  \dd  V \dd \tilde{V} }$. For completeness, in the case $s = 1$, we define $H^s(\omega, V)$ to be the standard Hilbert space of square-integrable functions with a square-integrable gradient.

\subsection{Main result} In this paper, we establish the following plate theory, 
\begin{align}\label{eq:plateTheory}
E_{\mathbf{n}_0}(\mathbf{y}, \mathbf{n}) := \begin{cases}
\int_{\omega}\big\{ \mathbf{II}_{\mathbf{y}} \colon \mathbb{B}(\mathbf{n}_0) \colon \mathbf{II}_{\mathbf{y}} + \gamma \lambda_{\text{f}}^{-1} \lambda_0 \big|\nabla\big( \nabla \mathbf{y} \mathbf{n}_0\big) \big|^2 \big\} \dd A& \text{ if } (\mathbf{y}, \mathbf{n}) \in \mathcal{A}_{\mathbf{n}_0} \\ 
+ \infty & \text{ otherwise},
\end{cases} 
\end{align} 
as the limit of the energy $E_{\mathbf{n}_0}^h$ as $h \rightarrow 0$
under any $h$-dependent asymptotic scaling of the moduli in (\ref{eq:kappaH}), (\ref{eq:muH}) and (\ref{eq:gammaH}) and assuming sufficient smoothness of the limiting fields. Much like Kirchhoff's plate theory, the theory in (\ref{eq:plateTheory}) consists of a metric constraint --- the one stated above --- and an energy density that is quadratic in the second fundamental form of the deformed midplane, a nonlinear strain measure on midplane deformations $\mathbf{y} \colon \omega \rightarrow \mathbb{R}^3$ defined by 
\begin{equation}
\begin{aligned}\label{eq:secFund}
\mathbf{II}_{\mathbf{y}} := (\nabla \mathbf{y})^T \nabla \boldsymbol{\nu}_{\mathbf{y}} \qquad \text{ for } \quad \boldsymbol{\nu}_{\mathbf{y}} := \frac{\partial_1 \mathbf{y} \times \partial_2 \mathbf{y}}{|\partial_1 \mathbf{y} \times \partial_2 \mathbf{y}|}.
\end{aligned}
\end{equation}
 Unlike Kirchhoff's theory, the moduli of the quadratic form is highly anisotropic; it depends explicitly on the programmed director field via the fourth-order tensor $\mathbb{B}(\mathbf{v}_0) \in \mathbb{R}^{2\times2 \times 2 \times 2}$ defined by 
\begin{equation}
\begin{aligned}\label{eq:Bv0Moduli}
\mathbb{B}(\mathbf{v}_0) := & \mu_1 \big( \mathbf{v}_0 \otimes \mathbf{v}_0 \otimes \mathbf{v}_0 \otimes \mathbf{v}_0 \big)+ \mu_2 \big( \mathbf{v}^{\perp}_0 \otimes \mathbf{v}^{\perp} _0 \otimes \mathbf{v}^{\perp}_0 \otimes \mathbf{v}^{\perp}_0 \big) \\
&\quad + \big( \sqrt{\mu}_1 \mathbf{v}_0 \otimes \mathbf{v}_0 + \sqrt{\mu_2} \mathbf{v}_0^{\perp} \otimes \mathbf{v}_0^{\perp} \big) \otimes \big( \sqrt{\mu}_1 \mathbf{v}_0 \otimes \mathbf{v}_0 + \sqrt{\mu_2} \mathbf{v}_0^{\perp} \otimes \mathbf{v}_0^{\perp} \big) \\
&\qquad + \mu_3 \Big( \text{sym}( \mathbf{v}_0 \otimes \mathbf{v}_0^{\perp}) \otimes \text{sym}( \mathbf{v}_0 \otimes \mathbf{v}_0^{\perp}) \Big) 
\end{aligned}
\end{equation}
for any $\mathbf{v}_0 \in \mathbb{S}^1$ and for the moduli
\begin{align*}
&\mu_1 := \frac{\mu}{12} \lambda_{\text{f}}^{-5/2} \lambda_0^{5/2} , \quad \mu_2 := \frac{\mu}{12} \lambda_{\text{f}}^{1/2} \lambda_{0}^{-1/2}, \quad \mu_3 := \frac{\mu}{2} \big( \alpha^{-2} - 2\alpha^{-3} \tanh(\tfrac{\alpha}{2})\big) \lambda_0 (\lambda_{\text{f}}^{-5/4} + \lambda_{\text{f}}^{1/4} )^2.
\end{align*}
The $\alpha$ dependence in the last moduli supplies a nontrivial coupling of Frank and entropic elasticity in the limit via 
\begin{align*}
\alpha = \Big( \frac{\mu}{2\gamma}\Big)^{1/2} (\lambda_{\text{f}}^{3/4} - \lambda_{\text{f}}^{-3/4}).
\end{align*}

We justify the limiting plate theory in (\ref{eq:plateTheory}) by two theorems, which together amount to a $\Gamma$-convergence result up to a well-known technical issue in the mathematical literature concerning incompressible or nearly incompressible plates (see Section \ref{ssec:IncompressTech}). The first theorem follows the $\Gamma$-convergence formalism by establishing a compactness result for sequences of deformations and director fields with bounded energy $E_{\mathbf{n}_0}^h$ as $h \rightarrow 0$ and by showing that $E_{\mathbf{n}_0}$ is an ansatz free lowerbound of the energy $E_{\mathbf{n}_0}^h$.
\begin{thm}\label{MainTheorem}
Let $\mathbf{n}_0 \in H^1(\omega, \mathbb{S}^1)$ and assume the least restrictive hypothesis on the scaling of $\kappa_h$ in (\ref{eq:kappaH}) of $s =1$. The energies $E_{\mathbf{n}_0}^h(\mathbf{y}_h,\mathbf{n}_h)$ and $E_{\mathbf{n}_0}(\mathbf{y},\mathbf{n})$ have the following properties:
\begin{itemize}[leftmargin=*]
\item \textbf{\emph{Compactness}.}\;For every sequence $\{ (\mathbf{y}_h, \mathbf{n}_h)\} \subset H^1(\Omega, \mathbb{R}^3) \times H^1(\Omega, \mathbb{S}^2)$ that satisfies $\liminf_{h \rightarrow 0} E_{\mathbf{n}_0}^h(\mathbf{y}_h, \mathbf{n}_h) < \infty$, there is a subsequence (not relabeled) such that 
\begin{align*}
\mathbf{y}_h - \frac{1}{|\Omega|} \int_{\Omega} \mathbf{y}_h \dd V \rightarrow \mathbf{y} \quad \text{ in } H^1(\Omega, \mathbb{R}^3) \quad \text{ and } \quad \mathbf{n}_h \rightharpoonup \mathbf{n} \quad \text{ in } H^1(\Omega, \mathbb{S}^2)
\end{align*}
for limiting fields that satisfy $(\mathbf{y}, \mathbf{n}) \in \mathcal{A}_{\mathbf{n}_0}$. 
\item \textbf{\emph{Lowerbound}}.\;For every sequence $\{ (\mathbf{y}_h, \mathbf{n}_h)\} \subset H^1(\Omega, \mathbb{R}^3) \times H^1(\Omega, \mathbb{S}^2)$ such that $(\mathbf{y}_h , \mathbf{n}_h) \rightharpoonup (\mathbf{y}, \mathbf{n})$ in $H^1(\Omega, \mathbb{R}^3) \times H^1(\Omega, \mathbb{S}^2)$, 
\begin{align*}
\liminf_{h \rightarrow 0} E_{\mathbf{n}_0}^h(\mathbf{y}_h, \mathbf{n}_h) \geq E_{\mathbf{n}_0}(\mathbf{y}, \mathbf{n}) .
\end{align*}
\end{itemize}
\end{thm}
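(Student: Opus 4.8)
The plan is to follow the standard $\Gamma$-convergence architecture for dimension reduction, using geometric rigidity as the central tool, and then to handle the two nonstandard features of the present setting---the director as an independent minimization variable and the possibly non-smooth metric constraint. I would begin with the coercivity estimate \eqref{eq:lbClaim}, which immediately gives uniform bounds $\|\nabla_h \mathbf{y}_h\|_{L^2(\Omega)} \leq C$ and $\|\nabla_h \mathbf{n}_h\|_{L^2(\Omega)} \leq C$. The first step of the compactness argument is to produce, via the Friesecke--James--M\"uller geometric rigidity estimate applied on a grid of cubes of size $h$, a rotation field $\mathbf{R}_h \colon \omega \to SO(3)$ (a piecewise-constant approximation, then mollified to an $H^1$ field) such that $\|\nabla_h \mathbf{y}_h - \mathbf{R}_h \boldsymbol{\ell}^{\text{aux}}_h\|_{L^2}$ is controlled by $h$ times the energy, where $\boldsymbol{\ell}^{\text{aux}}_h$ packages the step-length tensors; here one must use the lower bound $W(\mathbf{F}) \geq c_{\text{lb}} \text{dist}^2(\mathbf{F}, SO(3))$ together with Lemma \ref{AppendLemma22} to convert the $W$-energy into a distance-to-$SO(3)$ control on the \emph{conjugated} gradient. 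This yields $\partial_3 \mathbf{y}_h \to 0$, hence $x_3$-independence of the limit, and weak $H^1$ (in fact strong, after subtracting the mean and using the compactness of $\mathbf{R}_h$ in $L^2$) convergence $\mathbf{y}_h \to \mathbf{y}$ with $\mathbf{y} \in H^2(\omega,\mathbb{R}^3)$; passing to the limit in the rigidity estimate pointwise forces $(\nabla \mathbf{y})^T \nabla \mathbf{y} = \mathbf{g}_{\mathbf{n}_0}$ a.e. For the director, the $H^1$ bound gives $\mathbf{n}_h \rightharpoonup \mathbf{n}$ in $H^1(\Omega,\mathbb{S}^2)$ along a subsequence; to pin down $\mathbf{n} = \sigma \nabla\mathbf{y}\,\mathbf{n}_0/|\nabla\mathbf{y}\,\mathbf{n}_0|$ one uses the non-ideal term together with the scaling $h^{-2}\mu_h \to \infty$ from \eqref{eq:muH}: since $\mu_h \int_\Omega |\mathbf{P}_{\mathbf{n}_0}(\nabla_h\mathbf{y}_h)^T\mathbf{n}_h|^2 \leq C h^2$, dividing by $\mu_h$ and using $h^2/\mu_h \to 0$ shows $\mathbf{P}_{\mathbf{n}_0}(\nabla_h\mathbf{y}_h)^T\mathbf{n}_h \to 0$ in $L^2$; combined with the strong convergence of $\nabla_h\mathbf{y}_h \to (\nabla\mathbf{y}\,|\,\mathbf{b})$ for the appropriate Cosserat vector $\mathbf{b}$ and with $|\mathbf{n}|=1$, this identifies $\mathbf{n}$ up to the sign $\sigma$, which is constant by a connectedness argument on $\omega$. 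This establishes $(\mathbf{y},\mathbf{n}) \in \mathcal{A}_{\mathbf{n}_0}$.

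For the lower bound, I would expand $W$ using \eqref{eq:TaylorW}: writing $\nabla_h\mathbf{y}_h = \mathbf{R}_h(\boldsymbol{\ell}^{\text{f}}_{\mathbf{n}_h})^{1/2}(\ldots)(\boldsymbol{\ell}^0_{\mathbf{n}_0})^{-1/2}$ and rescaling the linearized strain by $h^{-1}$, one obtains a limiting scaled-strain field $\mathbf{G} = \lim h^{-1}(\text{conjugated gradient} - \mathbf{I})$ weakly in $L^2$, whose $x_3$-dependence is affine with slope given by $\mathbf{II}_{\mathbf{y}}$ conjugated through the step-length tensors. Weak lower semicontinuity of the convex quadratic $\int Q_3$ then gives a lower bound; the $x_3$-integration and the optimization over the (weakly attained, not prescribed) transverse components of $\mathbf{G}$ and over the through-thickness profile of $\mathbf{n}_h$ produces a reduced quadratic form in $\mathbf{II}_{\mathbf{y}}$ alone---this is precisely where the tensor $\mathbb{B}(\mathbf{n}_0)$ and the moduli $\mu_1,\mu_2,\mu_3$ come from, the $\mu_3$ term and its $\alpha$-dependence arising from the \emph{coupling} minimization between the off-diagonal strain component and the first-order-in-$x_3$ variation of the director penalized by Frank elasticity (the $\tanh$ being the solution of the resulting one-dimensional Euler--Lagrange problem). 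The Frank term contributes $\gamma|\nabla(\nabla\mathbf{y}\,\mathbf{n}_0)|^2$ (with the prefactor $\lambda_{\text{f}}^{-1}\lambda_0$ coming from $|\nabla\mathbf{y}\,\mathbf{n}_0|^2 = \lambda_{\text{f}}\lambda_0^{-1}$ on the constraint manifold) by weak lower semicontinuity of $\int\gamma_h h^{-2}|\nabla_h\mathbf{n}_h|^2$ and the identification of $\mathbf{n}$; the incompressibility term is nonnegative and simply dropped (using $s=1$ only on the recovery side, so it plays no role in the lower bound here).

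The main obstacle I anticipate is the \emph{coupled} optimization that produces $\mathbb{B}(\mathbf{n}_0)$ and, in particular, the rigorous lower bound for the $\mu_3$ term. Unlike the classical Kirchhoff case, where one minimizes $Q_3$ over the transverse strain components slice-by-slice in $x_3$ to get a fixed $Q_2$, here the through-thickness director variation $\partial_3\mathbf{n}_h$ is a genuine competitor whose $L^2$-norm is penalized by $\gamma_h h^{-2}|\nabla_h\mathbf{n}_h|^2 \sim \gamma|\partial_3\mathbf{n}_h|^2$ at the bending scale, and it couples linearly to the shear component of the scaled strain through the linearization of the non-ideal constraint (the director is forced to convect, so a first-order tilt of $\mathbf{n}_h$ in $x_3$ is slaved to the shear of $\mathbf{y}_h$). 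The lower bound must therefore be obtained by a simultaneous lower semicontinuity argument on the pair (scaled strain, $\partial_3\mathbf{n}_h$), followed by an explicit pointwise-in-$(x_1,x_2)$ minimization of $\int_{-1/2}^{1/2}\big[Q_3(\cdots) + \gamma|\partial_3\mathbf{n}|^2\big]\,\dd x_3$ over admissible affine-in-$x_3$ strains and $H^1$-in-$x_3$ director profiles subject to the linearized anchoring; carrying out this one-dimensional variational problem and checking that its value equals $\mathbf{II}_{\mathbf{y}}\colon\mathbb{B}(\mathbf{n}_0)\colon\mathbf{II}_{\mathbf{y}}$ with exactly the stated $\mu_3$ (including the $\alpha^{-2} - 2\alpha^{-3}\tanh(\alpha/2)$ factor) is the delicate computational heart of the proof. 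A secondary technical point is ensuring that all these manipulations---especially the mollification of $\mathbf{R}_h$ and the identification of the limiting strain---go through when $\mathbf{n}_0 \in H^1(\omega,\mathbb{S}^1)$ only, so that $\mathbf{g}_{\mathbf{n}_0}$ is merely $H^1$ and possibly degenerate in smoothness; this is where one must be careful, but for the lower bound (as opposed to the recovery sequence) the low regularity of $\mathbf{n}_0$ is not a serious obstruction since no construction is involved.
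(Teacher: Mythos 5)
Your architecture matches the paper's — geometric rigidity, linearization of $W$, finite-difference identification of the limiting scaled strain, and an explicit one-dimensional minimization in $x_3$ that produces the $\mu_3$ modulus with its $\alpha^{-2}-2\alpha^{-3}\tanh(\alpha/2)$ factor — so you have correctly located the computational heart of the argument. However, there is a genuine gap in the role you assign to the incompressibility penalty. You assert that it is "nonnegative and simply dropped" and "plays no role in the lower bound here," on the grounds that the $s=1$ scaling is used only for the recovery sequence. It is true that the penalty term is discarded when passing to the $\liminf$ of the total energy, but the penalty is used \emph{earlier}, at the compactness stage: $\kappa_h\to\infty$ (the first half of hypothesis (\ref{eq:kappaH}), which is independent of $s$) together with the energy bound forces the weak $L^2$-limit $\mathbf{S}$ of the scaled strains $\mathbf{S}_h := h^{-1}\bigl(\mathbf{R}_h^T(\boldsymbol{\ell}^{\text{f}}_{\mathbf{n}_h})^{-1/2}\nabla_h\mathbf{y}_h(\boldsymbol{\ell}^0_{\mathbf{n}_0})^{1/2}-\mathbf{I}\bigr)$ to satisfy $\text{Tr}(\mathbf{S})=0$ a.e. (truncate to $\{|\mathbf{S}_h|\le h^{-1/2}\}$, Taylor-expand $\det(\mathbf{I}+h\mathbf{S}_h)$, and use convexity of $\text{Tr}(\cdot)^2$). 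That trace constraint is essential to the next step $Q_3(\mathbf{S})\ge Q_2([\mathbf{S}]_{3\times2})$: without it, minimizing $Q_3$ over the third column is unconstrained and yields the plane-stress reduction $2\mu|\text{sym}\,\mathbf{A}|^2 + \tfrac{2\mu\lambda}{2\mu+\lambda}(\text{tr}\,\mathbf{A})^2$, strictly smaller than the incompressible form $2\mu\bigl(|\text{sym}\,\mathbf{A}|^2+(\text{tr}\,\mathbf{A})^2\bigr)$ that enters $\mathbb{B}(\mathbf{n}_0)$, so the moduli $\mu_1,\mu_2$ in your lower bound would come out wrong.

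A secondary caution concerns your phrase "minimization $\ldots$ over admissible affine-in-$x_3$ strains." For a lower bound you may not restrict the competitors to an affine ansatz; instead one must show that the $x_3$-dependence of the unknown weak limit $\mathbf{S}$ is controlled, which the paper does via a difference-quotient argument identifying $\partial_3\mathbf{S}$ in terms of $\nabla\mathbf{b}$ and $\boldsymbol{\tau}$. The general $\mathbf{S}$ then also carries an arbitrary $L^2$ offset $\mathbf{S}(\cdot,0)$ and a non-affine contribution from $\int_0^{x_3}\boldsymbol{\tau}\,\dd t$; these are disposed of by subtracting the $x_3$-average, splitting $\boldsymbol{\tau}$ into even and odd parts, and noting that cross-terms vanish on integration over the symmetric interval. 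Only then is Lemma \ref{infimumLemma} applicable. Presupposing the affine form at the outset would amount to a recovery-sequence construction presented as a lower bound.
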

\noindent The second theorem provides a recovery sequence that limits to the plate energy $E_{\mathbf{n}_0}$ under two assumptions: 1) the second fundamental form the limiting deformation $\mathbf{II}_{\mathbf{y}}$ enjoys more regularity than that required by the compactness; 2) the incompressibility penalty $\kappa_h$ tends to infinity at a slow enough rate, tied in a precise way to the regularity of $\mathbf{II}_{\mathbf{y}}$. 
\begin{thm}[{\textbf{Recovery Sequence}}]\label{RecoveryTheorem}
Let $\mathbf{n}_0 \in H^1(\omega, \mathbb{S}^1)$ and assume for some $s \in (0,1]$ that $(\mathbf{y}, \mathbf{n}) \in \mathcal{A}_{\mathbf{n}_0}$ is such that $ \mathbf{II}_{\mathbf{y}} \in L^{\infty}(\Omega, \mathbb{R}_{\emph{sym}}^{2 \times 2}) \cap H^{s}(\Omega, \mathbb{R}_{\emph{sym}}^{2 \times 2})$ and $\kappa_h$ satisfies (\ref{eq:kappaH}) for this value of $s$. Then there exists a sequence $\{ (\mathbf{y}_h, \mathbf{n}_h)\} \subset H^1(\Omega, \mathbb{R}^3) \times H^1(\Omega, \mathbb{S}^2)$ such that $(\mathbf{y}_h , \mathbf{n}_h) \rightarrow (\mathbf{y}, \mathbf{n})$ in $H^1(\Omega, \mathbb{R}^3) \times H^1(\Omega, \mathbb{S}^2)$ and 
\begin{align*}
\lim_{h \rightarrow 0} E_{\mathbf{n}_0}^h(\mathbf{y}_h, \mathbf{n}_h) = E_{\mathbf{n}_0}(\mathbf{y}, \mathbf{n}). 
\end{align*}
\end{thm}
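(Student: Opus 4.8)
The plan is to build $(\mathbf{y}_h, \mathbf{n}_h)$ via an explicit ansatz that is quadratic in the thickness variable $x_3$, following the classical Kirchhoff-plate construction of Friesecke--James--M\"uller adapted to the nematic setting. Write $b(\mathbf{x},x_3) := \mathbf{y}(\mathbf{x}) + h x_3 \boldsymbol{\nu}_{\mathbf{y}}(\mathbf{x}) + h^2 \tfrac{x_3^2}{2}\, \mathbf{d}(\mathbf{x})$ for a correction field $\mathbf{d}$ to be chosen, and similarly take $\mathbf{n}_h(\mathbf{x},x_3) := \mathbf{n}(\mathbf{x}) + h x_3\, \mathbf{m}(\mathbf{x}) + \ldots$ with $\mathbf{n} = \sigma \nabla\mathbf{y}\mathbf{n}_0/|\nabla\mathbf{y}\mathbf{n}_0|$ as forced by $\mathcal{A}_{\mathbf{n}_0}$ and $\mathbf{m}$ a through-thickness director correction. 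Because the limiting midplane deformation satisfies the metric constraint $(\nabla\mathbf{y})^T\nabla\mathbf{y} = \mathbf{g}_{\mathbf{n}_0}$, the frame $\{\partial_1\mathbf{y},\partial_2\mathbf{y},\boldsymbol{\nu}_{\mathbf{y}}\}$ together with the step-length tensors gives $(\boldsymbol{\ell}^{\mathrm{f}}_{\mathbf{n}})^{-1/2}\nabla_h b\, (\boldsymbol{\ell}^0_{\mathbf{n}_0})^{1/2} = \mathbf{R}_0 + h x_3 \mathbf{R}_0 \mathbf{K} + o(h)$ for a rotation field $\mathbf{R}_0(\mathbf{x})\in SO(3)$ and a matrix $\mathbf{K}$ linear in $\mathbf{II}_{\mathbf{y}}$, the Frank-gradient $\nabla(\nabla\mathbf{y}\mathbf{n}_0)$, and the free fields $\mathbf{d}, \mathbf{m}$; then by frame indifference and the Taylor expansion (\ref{eq:TaylorW}), the entropic density contributes $\tfrac{x_3^2}{2} Q_3(\mathbf{K}) + o(1)$ pointwise. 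The first main computation is to carry out, pointwise in $\mathbf{x}$, the minimization of the resulting quadratic form over the admissible $\mathbf{d}$ and $\mathbf{m}$ (and over the through-thickness profile of $\mathbf{m}$, since the Frank term $\gamma_h|\nabla_h\mathbf{n}_h|^2 \sim h^2|\partial_3\mathbf{n}_h|^2/h^2$ couples $x_3$-derivatives of $\mathbf{m}$), integrate over $x_3\in(-1/2,1/2)$, and verify that the minimum equals $\mathbf{II}_{\mathbf{y}}\colon\mathbb{B}(\mathbf{n}_0)\colon\mathbf{II}_{\mathbf{y}} + \gamma\lambda_{\mathrm{f}}^{-1}\lambda_0|\nabla(\nabla\mathbf{y}\mathbf{n}_0)|^2$; the $\tanh(\alpha/2)$ in $\mu_3$ should emerge precisely from solving the Euler--Lagrange ODE in $x_3$ for the optimal director correction, i.e. from a one-dimensional elliptic problem whose Green's function carries the hyperbolic functions.

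The second, and I expect hardest, step is enforcing the incompressibility penalty: one must perturb $\mathbf{y}_h$ to $\widetilde{\mathbf{y}}_h = \mathbf{y}_h + \boldsymbol\delta\mathbf{y}_h$ so that $\det\nabla_h\widetilde{\mathbf{y}}_h \equiv 1$ exactly (or with an error whose contribution $\kappa_h h^{-2}\int(\det-1)^2$ vanishes), while keeping $\|\boldsymbol\delta\mathbf{y}_h\|_{H^1}\to 0$ and not spoiling the energy expansion. As the heuristics below (\ref{eq:kappaH}) indicate, one writes $\det\nabla_h\mathbf{y}_h = \det(\mathbf{I} + h\mathbf{A} + \boldsymbol\delta\mathbf{A}_h)$; the leading trace $\mathrm{Tr}(\mathbf{A})=0$ must be arranged in the choice of $\mathbf{d}$ (this is where the $\lambda_{\mathrm{f}},\lambda_0$-dependent constants appear), and the residual $\boldsymbol\delta\mathbf{A}_h$ is produced by a Dacorogna--Moser-type correction whose $L^2$-size is controlled by the $H^s$-norm of $\mathbf{II}_{\mathbf{y}}$ — here the regularity hypothesis $\mathbf{II}_{\mathbf{y}}\in L^\infty\cap H^s$ is exactly what guarantees $\int|\boldsymbol\delta\mathbf{A}_h|^2 = O(h^{2(s+1)})$, hence $\kappa_h h^{-2}\int(\det\nabla_h\widetilde{\mathbf{y}}_h-1)^2 = O(h^{2s}\kappa_h)\to 0$ by (\ref{eq:kappaH}). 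I would handle the smooth case first (dense smooth approximations of $\mathbf{y}$ compatible with the metric, for which $s=1$ works and the correction is routine), then pass to general $\mathbf{II}_{\mathbf{y}}\in L^\infty\cap H^s$ by a mollification-and-diagonalization argument, using lower semicontinuity already established in Theorem \ref{MainTheorem} to get the matching $\liminf$ and hence the $\lim$.

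The remaining steps are bookkeeping: check $\mathbf{n}_h$ lands in $\mathbb{S}^2$ after normalization (the correction $\mathbf{m}$ can be taken orthogonal to $\mathbf{n}$ at leading order, so renormalization is an $O(h^2)$ perturbation), verify $(\mathbf{y}_h,\mathbf{n}_h)\to(\mathbf{y},\mathbf{n})$ strongly in $H^1$ directly from the ansatz, confirm the non-ideal term $\mu_h\int|\mathbf{P}_{\mathbf{n}_0}(\nabla_h\mathbf{y}_h)^T\mathbf{n}_h|^2$ is $o(1)$ — it vanishes to leading order because $\mathbf{n}=\sigma\nabla\mathbf{y}\mathbf{n}_0/|\nabla\mathbf{y}\mathbf{n}_0|$ makes $\mathbf{P}_{\mathbf{n}_0}(\nabla\mathbf{y})^T\mathbf{n}=0$ on the midplane, and the $O(h)$ correction combined with $\mu_h\to 0$ kills the rest — and finally apply dominated convergence (justified by the $L^\infty$ bound on $\mathbf{II}_{\mathbf{y}}$ and smoothness of $W$ near $SO(3)$) to pass the pointwise $x_3$-integrated expansion through the limit $h\to 0$. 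The one genuinely delicate point, beyond the incompressibility correction, is making sure the through-thickness director optimization in step one is consistent with the $x_3$-quadratic deformation ansatz — i.e. that the same $\mathbf{d}$ simultaneously achieves $\mathrm{Tr}(\mathbf{A})=0$ and the entropic-energy minimum — which I expect forces the specific split between $\mathbb{B}(\mathbf{n}_0)$ and the Frank coefficient $\gamma\lambda_{\mathrm{f}}^{-1}\lambda_0$ recorded in (\ref{eq:plateTheory}).
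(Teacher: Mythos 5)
Your overall scaffold --- a Kirchhoff-type quadratic ansatz in $x_3$, a through-thickness director correction whose optimization yields the $\tanh(\alpha/2)$ factor, and a free field $\mathbf{d}$ chosen to kill the leading trace so the incompressibility penalty is subcritical --- matches the paper's Section \ref{ssec:smoothSetting}. But two of your structural choices would not go through, and the second is the real content of the theorem.

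First, a smaller point: the linear term in your ansatz must be $\mathbf{b} = \lambda_{\text{f}}^{-1/4}\lambda_0^{1/4}\boldsymbol{\nu}_{\mathbf{y}}$, not $\boldsymbol{\nu}_{\mathbf{y}}$. After multiplying by $(\boldsymbol{\ell}^{\text{f}}_{\mathbf{n}})^{-1/2}$ on the left and $(\boldsymbol{\ell}^0_{\mathbf{n}_0})^{1/2}$ on the right, the third column carries a factor $\lambda_0^{-1/4}\lambda_{\text{f}}^{1/4}$, so without the compensating prefactor the leading-order frame you call $\mathbf{R}_0$ has a third column of the wrong length and is not in $SO(3)$; then the frame-indifference step $W(\mathbf{R}_0 + \cdot) = W(\mathbf{I} + \mathbf{R}_0^T(\cdot))$ breaks. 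This is forced by the compactness result, and Lemma \ref{linAlgLemma} is exactly the linear algebra that pins it down.

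The genuine gap is in how you propose to handle low regularity. You suggest (i) a Dacorogna--Moser / Conti--Dolzmann correction to make $\det\nabla_h\widetilde{\mathbf{y}}_h\equiv 1$ exactly, and/or (ii) first treating smooth $\mathbf{y}$ and then passing to general $\mathbf{II}_{\mathbf{y}}\in L^\infty\cap H^s$ by approximating $\mathbf{y}$ with smooth metric-compatible deformations and diagonalizing. Both routes require a density-of-smooth-isometries result for $H^2_{\mathbf{g}_{\mathbf{n}_0}}(\omega,\mathbb{R}^3)$, which --- as the paper spells out in Section \ref{ssec:IncompressTech} --- is not available for general LCE metrics $\mathbf{g}_{\mathbf{n}_0}$; such density is known for flat or smooth positively-curved metrics but not for the zero, negative, and mixed curvature cases the theory is designed to cover. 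That is precisely the obstruction the fractional Sobolev hypothesis is built to circumvent. The paper's actual mechanism in Section \ref{ssec:SobolevSetting} is to keep $\mathbf{y}$ fixed and unregularized, compute $\mathbf{d}$ and $\tau_\perp$ from $\mathbf{y}$ as nonsmooth $L^\infty\cap H^s$ fields, and mollify those auxiliary fields with an $h$-dependent kernel $m_h$. The $L^\infty$ bound gives the pointwise control (\ref{eq:pointwiseEstMoll}) needed to Taylor-expand $W$ and the determinant, while the $H^s$ bound gives the $L^2$ mollification estimates (\ref{eq:HSEsts}), which produce $\|\boldsymbol{\delta}\mathbf{S}_h\|_{L^2}\lesssim h^{1+s}$ and hence $\kappa_h h^{-2}\|\boldsymbol{\delta}\mathbf{S}_h\|_{L^2}^2 = O(\kappa_h h^{2s})\to 0$ under (\ref{eq:kappaH}). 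No exact incompressibility, no diagonal sequence, and no appeal to the $\liminf$ bound of Theorem \ref{MainTheorem} is made --- the limit is evaluated directly on the single explicit sequence.
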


 We close this section by stating and discussing a corollary of these two theorems that precisely illustrates the gap between our result and a full $\Gamma$-convergence. Let
\begin{align*}
\mathcal{A}_{\mathbf{n}_0}^s := \{ (\mathbf{y}, \mathbf{n}) \in \mathcal{A}_{\mathbf{n}_0} \colon \mathbf{II}_{\mathbf{y}} \in L^{\infty}(\Omega, \mathbb{R}_{\text{sym}}^{2 \times 2}) \cap H^{s}(\Omega, \mathbb{R}_{\text{sym}}^{2 \times 2}) \}, \quad s \in (0, 1]. 
\end{align*}
denote the space of admissible bending deformations of the LCE sheet that posses $L^{\infty} \cap H^s$ regularity in their second fundamental form. 
\begin{cor}\label{keyCor}
 Suppose $\mathbf{n}_0 \in H^1(\omega, \mathbb{S}^2)$ is such that $\mathcal{A}_{\mathbf{n}_0}^s = \mathcal{A}_{\mathbf{n}_0}$ for some $s \in (0,1]$. Assume also 
 that $\kappa_h$ satisfies (\ref{eq:kappaH}) for this value of $s$. Then, in the weak-$H^1(\Omega, \mathbb{R}^3) \times H^1(\Omega, \mathbb{S}^2)$ topology, $E_{\mathbf{n}_0}^h$ is equicoercive and $\Gamma$-converges as $h \rightarrow 0$ to $E_{\mathbf{n}_0}$. 
\end{cor}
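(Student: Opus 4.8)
The plan is to deduce Corollary~\ref{keyCor} from Theorem~\ref{MainTheorem} and Theorem~\ref{RecoveryTheorem} by a routine but careful assembly of the pieces, taking care of the three ingredients that a $\Gamma$-convergence statement requires: equicoercivity, the $\liminf$ inequality, and the $\limsup$ (recovery) inequality. Throughout, the topology is the weak $H^1(\Omega,\mathbb{R}^3)\times H^1(\Omega,\mathbb{S}^2)$ topology, and the hypothesis $\mathcal{A}_{\mathbf{n}_0}^s = \mathcal{A}_{\mathbf{n}_0}$ is what lets the recovery theorem reach \emph{every} finite-energy limiting configuration, not just the smooth ones.

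\medskip
\noindent\textbf{Equicoercivity.} First I would observe that the lower bound~(\ref{eq:lbClaim}), valid for every $h\in(0,1)$, shows that any sequence $\{(\mathbf{y}_h,\mathbf{n}_h)\}$ with $\sup_h E_{\mathbf{n}_0}^h(\mathbf{y}_h,\mathbf{n}_h)<\infty$ is bounded in $H^1(\Omega,\mathbb{R}^3)\times H^1(\Omega,\mathbb{S}^2)$ \emph{up to the additive normalization} $\mathbf{y}_h \mapsto \mathbf{y}_h - |\Omega|^{-1}\int_\Omega \mathbf{y}_h\,\dd V$ (needed because the energy is translation-invariant in $\mathbf{y}_h$); indeed $\|(\nabla\mathbf{y}_h,\partial_3\mathbf{y}_h)\|_{L^2}$ and $\|(\nabla\mathbf{n}_h,\partial_3\mathbf{n}_h)\|_{L^2}$ are controlled, and a Poincaré inequality on the bounded Lipschitz domain $\Omega$ closes the loop for the normalized $\mathbf{y}_h$. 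Hence, after this normalization, bounded-energy sequences are precompact in the weak $H^1$ topology. Since $E_{\mathbf{n}_0}^h$ is invariant under the normalization, this is exactly equicoercivity for the functionals in the stated topology; the Compactness part of Theorem~\ref{MainTheorem} in fact gives the stronger conclusion that the weak limit lies in $\mathcal{A}_{\mathbf{n}_0}$ and that $\mathbf{y}_h$ converges \emph{strongly} in $H^1$.

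\medskip
\noindent\textbf{The $\liminf$ inequality.} This is immediate from the Lowerbound part of Theorem~\ref{MainTheorem}: for any weakly convergent sequence $(\mathbf{y}_h,\mathbf{n}_h)\rightharpoonup(\mathbf{y},\mathbf{n})$ in $H^1(\Omega,\mathbb{R}^3)\times H^1(\Omega,\mathbb{S}^2)$ we have $\liminf_{h\to0}E_{\mathbf{n}_0}^h(\mathbf{y}_h,\mathbf{n}_h)\ge E_{\mathbf{n}_0}(\mathbf{y},\mathbf{n})$, with no smoothness hypothesis required. (If $\liminf E_{\mathbf{n}_0}^h=+\infty$ there is nothing to prove; otherwise pass to a subsequence realizing the $\liminf$ as a limit and apply the theorem.)

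\medskip
\noindent\textbf{The $\limsup$ inequality.} Here is where the hypothesis $\mathcal{A}_{\mathbf{n}_0}^s=\mathcal{A}_{\mathbf{n}_0}$ enters decisively. Given a target $(\mathbf{y},\mathbf{n})$ in the weak-$H^1$ closure of the domain, I would split into two cases. If $(\mathbf{y},\mathbf{n})\notin\mathcal{A}_{\mathbf{n}_0}$, then $E_{\mathbf{n}_0}(\mathbf{y},\mathbf{n})=+\infty$ and the constant recovery sequence $(\mathbf{y}_h,\mathbf{n}_h)=(\mathbf{y},\mathbf{n})$ trivially satisfies $\limsup E_{\mathbf{n}_0}^h\le+\infty$. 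If $(\mathbf{y},\mathbf{n})\in\mathcal{A}_{\mathbf{n}_0}$, then by hypothesis $(\mathbf{y},\mathbf{n})\in\mathcal{A}_{\mathbf{n}_0}^s$, so $\mathbf{II}_{\mathbf{y}}\in L^\infty(\Omega,\mathbb{R}_{\text{sym}}^{2\times2})\cap H^s(\Omega,\mathbb{R}_{\text{sym}}^{2\times2})$; since $\kappa_h$ satisfies~(\ref{eq:kappaH}) for this same $s$, Theorem~\ref{RecoveryTheorem} furnishes $\{(\mathbf{y}_h,\mathbf{n}_h)\}$ with $(\mathbf{y}_h,\mathbf{n}_h)\to(\mathbf{y},\mathbf{n})$ \emph{strongly}, hence weakly, in $H^1(\Omega,\mathbb{R}^3)\times H^1(\Omega,\mathbb{S}^2)$ and $\lim_{h\to0}E_{\mathbf{n}_0}^h(\mathbf{y}_h,\mathbf{n}_h)=E_{\mathbf{n}_0}(\mathbf{y},\mathbf{n})$, which in particular gives $\limsup_{h\to0}E_{\mathbf{n}_0}^h(\mathbf{y}_h,\mathbf{n}_h)\le E_{\mathbf{n}_0}(\mathbf{y},\mathbf{n})$. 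Combining the $\liminf$ and $\limsup$ inequalities with equicoercivity yields the claimed equicoercivity and $\Gamma$-convergence of $E_{\mathbf{n}_0}^h$ to $E_{\mathbf{n}_0}$ in the weak-$H^1$ topology, completing the proof.

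\medskip
\noindent\emph{Expected main obstacle.} Strictly speaking there is no hard step left at the level of the corollary itself, since all the analytic work is packaged into the two theorems and the \emph{a priori} bound~(\ref{eq:lbClaim}). The only point demanding genuine care is a bookkeeping one: the weak $H^1$ topology on a \emph{sequence} of functionals indexed by the continuous parameter $h$ is metrizable on bounded sets, so the sequential definition of $\Gamma$-convergence used above is equivalent to the topological one only once one has restricted attention to bounded (equivalently, bounded-energy, after normalization) sequences — and one must remember that the $\mathbf{y}$-translation invariance means "equicoercivity" should be understood modulo constants, exactly as the normalization in the Compactness statement already makes explicit. Being precise that the recovery sequence from Theorem~\ref{RecoveryTheorem}, which converges strongly, also converges in the weak topology, and that the domain of $E_{\mathbf{n}_0}$ is weakly closed (again via~(\ref{eq:lbClaim}) and Theorem~\ref{MainTheorem}), is all that is needed to make the argument airtight.
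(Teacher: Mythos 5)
The paper states Corollary~\ref{keyCor} without an explicit proof, and your argument supplies exactly the intended deduction: equicoercivity (modulo translations) from the \emph{a priori} bound~(\ref{eq:lbClaim}), the $\liminf$ inequality from the Lowerbound part of Theorem~\ref{MainTheorem}, and the $\limsup$ inequality from Theorem~\ref{RecoveryTheorem} using the hypothesis $\mathcal{A}_{\mathbf{n}_0}^s=\mathcal{A}_{\mathbf{n}_0}$ to reach every finite-energy target (with the trivial constant sequence handling $(\mathbf{y},\mathbf{n})\notin\mathcal{A}_{\mathbf{n}_0}$). Your closing remarks on translation invariance and on restricting to weakly compact sets so that the sequential and topological notions of $\Gamma$-convergence agree are correct and are precisely the points worth stating explicitly.
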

\noindent The assumption $\mathcal{A}_{\mathbf{n}_0}^s = \mathcal{A}_{\mathbf{n}_0}$ may appear restrictive, but metric constraints are often quite rigid. Hornung and Vil$\check{\text{c}}$i\'c \cite{hornung2018regularity}, for instance, proved a regularity result on the space $H^2_{\mathbf{g}}(\omega, \mathbb{R}^3) := \{ \mathbf{y} \in H^2(\omega, \mathbb{R}^3) \colon (\nabla \mathbf{y})^T \nabla \mathbf{y} = \mathbf{g} \text{ a.e.\;on } \omega \}$ that says the following: if $\mathbf{g}$ belongs to $ C^{\infty}(\omega, \mathbb{R}^{2\times2}_{\text{sym}})$ and has positive Gauss curvature everywhere, then any $\mathbf{y} \in H^2_{\mathbf{g}}(\omega, \mathbb{R}^3)$ is actually smooth. In Section \ref{sec:Examples}, we illustrate a canonical example from the literature \cite{aharoni2014geometry,mostajeran2015curvature} of a smooth director profile $\mathbf{n}_0$ for which $\mathbf{g}_{\mathbf{n}_0}$ has constant positive Gauss curvature and yields a spherical cap on actuation. The corollary evidently applies for this particular director profile and actuation thanks to the results of Hornung and Vil$\check{\text{c}}$i\'c. More generally: 
\begin{rem}
 If $\mathbf{\mathbf{n}}_0 \in C^{\infty}(\overline{\omega}, \mathbb{S}^2)$ is such that $\mathbf{g}_{\mathbf{n}_0}$ has positive Gauss curvature everywhere on $\omega$, then $\mathcal{A}_{\mathbf{n}_0} = \mathcal{A}_{\mathbf{n}_0}^1$. Consequently, $E_{\mathbf{n}_0}$ is the $\Gamma$-limit of $E_{\mathbf{n}_0}^h$ under the least restrictive hypothesis on $\kappa_h$ in (\ref{eq:kappaH}) of $s =1$.
\end{rem}
\noindent This remark is broadly applicable; indeed, several families of director profiles that encode positive Gauss curvature beyond the spherical cap have been illustrated in the LCE literature \cite{aharoni2014geometry,aharoni2018universal,modes2011gaussian, modes2011blueprinting, mostajeran2015curvature, mostajeran2016encoding,warner2018nematic}. However, that same literature also highlights examples of metric-constrained actuation that involve negative, zero, and/or mixed Gauss curvature, none of which is amenable to any regularity result that we are aware of. Thus more is needed to complete picture of the plate theories for LCE sheets than the results we can prove to this point.

\subsection{Technical issues of incompressibility.}\label{ssec:IncompressTech} The crux of the matter is incompressiblity. Here we explain the issue in the simple setting of a classical isoptropic and nearly incompressible plate of energy $E_{\text{class}}^h(\mathbf{y}_h) := \frac{1}{h^2}\int_{\Omega}\{ W(\nabla_h \mathbf{y}_h) + \kappa_h ( \det \nabla_h \mathbf{y}_h - 1)^2 \}\dd V$. (There are additional extraneous details to track in the case of LCE sheets, but we focus on the main technical hurdle here.) The basic idea for a recovery sequence in this setting is to set 
\begin{align}\label{eq:classicalAnsatz}
 \mathbf{y}_h(\mathbf{x}, x_3) = \mathbf{y}(\mathbf{x}) + \big(h x_3 + h^2 \frac{x_3^2}{2} \alpha_h(\mathbf{x}) \big) \boldsymbol{\nu}_{\mathbf{y}}(\mathbf{x})
\end{align}
where $\mathbf{y}$ is an isometry (satisfies $(\nabla \mathbf{y})^T \nabla \mathbf{y} = \mathbf{I}$ a.e.), $\boldsymbol{\nu}_{\mathbf{y}}$ is the surface normal in (\ref{eq:secFund}), and $\alpha_h$ is an auxiliary field carefully chosen to optimize the limit of $ \frac{1}{h^2} \int_{\Omega} W(\nabla_h \mathbf{y}_h) \dd V$ while ensuring that the incompressibility penalty vanishes. (Equation (\ref{eq:classicalAnsatz}) is a variant on the modified Kirchhoff-Love ansatz discussed in detail in \cite{ozenda2021kirchhoff}.) Note that $(\nabla \mathbf{y}, \boldsymbol{\nu}_{\mathbf{y}})$ is a rotation field, so this ansatz satisfies 
\begin{align*}
(\nabla \mathbf{y}, \boldsymbol{\nu}_{\mathbf{y}})^T \nabla_h \mathbf{y}_h = \mathbf{I} + h x_3 \begin{pmatrix} (1+ \tfrac{1}{2} hx_3 \alpha_h )\mathbf{II}_{\mathbf{y}} & \mathbf{0} \\ \tfrac{1}{2} h x_3 (\nabla \alpha_h)^T & \alpha_h \end{pmatrix}.
\end{align*}

We now use frame-indifference and the approximations $W(\mathbf{I} + \mathbf{A}) \approx \tfrac{1}{2} Q_3(\mathbf{A})$ from (\ref{eq:TaylorW}) and $(\det (\mathbf{I} + \mathbf{A}) - 1)^2 \approx \text{Tr}(\mathbf{A})^2$ from (\ref{eq:detIdent}) to formally expand the total energy. Specifically,
\begin{equation}
\begin{aligned}\label{eq:simpleExpansion}
E^h_{\text{class}}(\mathbf{y}_h) &= \frac{1}{h^2} \int_{\Omega} \big\{ W\big(( \nabla \mathbf{y}, \boldsymbol{\nu}_{\mathbf{y}})^T \nabla_h \mathbf{y}_h\big) + \kappa_h ( \det\big( (\nabla \mathbf{y}, \boldsymbol{\nu}_{\mathbf{y}})^T \nabla_h \mathbf{y}_h \big) - 1)^2 \big\} \dd V \\
 & \approx \int_{\Omega} x_3^2 \Big\{\frac{1}{2} Q_3 \Big( \begin{pmatrix} (1+ \tfrac{1}{2} hx_3 \alpha_h )\mathbf{II}_{\mathbf{y}} & \mathbf{0} \\ \tfrac{1}{2} h x_3 (\nabla \alpha_h)^T & \alpha_h \end{pmatrix} \Big) + \kappa_h \text{Tr}\Big(\begin{pmatrix} (1+ \tfrac{1}{2} hx_3 \alpha_h )\mathbf{II}_{\mathbf{y}} & \mathbf{0} \\ \mathbf{0}^T & \alpha_h \end{pmatrix} \Big)^2 \Big\} \dd V \\
 &= \int_{\omega}\frac{1}{12} \Big\{\frac{1}{2} Q_3 \Big(\begin{pmatrix} \mathbf{II}_{\mathbf{y}} & \mathbf{0} \\ \mathbf{0}^T & \alpha_h \end{pmatrix} \Big) + \kappa_h \big(\alpha_h + \text{Tr}(\mathbf{II}_{\mathbf{y}}) \big)^2 \Big\} \dd V \\
 &\qquad \qquad + O(h^2 \| \nabla \alpha_h \|_{L^2(\omega)}^2) + O(h^2(1 + \kappa_h) \| \alpha_h \mathbf{II}_{\mathbf{y}} \|^2_{L^2(\omega)}) .
 \end{aligned}
 \end{equation}
 The main point is that the incompressibility penalty demands that $\alpha_h$ approximate $-\text{Tr}(\mathbf{II}_{\mathbf{y}})$, a term that need not be very regular. At the same time, $\nabla \alpha_h$ cannot blow up too fast. These two features appear to be difficult to ensure in general without employing some non-generic regularity result about metric-constrained deformations. We discuss this in more detail. 
 
First note that the lack of regularity in $\mathbf{II}_{\mathbf{y}}$ is by itself not really the problem provided there is no incompressibiltiy penalty. Indeed set $\kappa_h =0$ above and suppose $\mathbf{II}_{\mathbf{y}}$ is only in $L^2(\omega)$. The idea for a recovery sequence then, as applied in Friesecke, James and M\"{u}ller \cite{friesecke2002theorem}, is to freeze $\alpha_h = \alpha$ to be a smooth and compactly supported function, pass to the limit in $h$, and then argue by density that $\alpha = \alpha_j$ can chosen such that the limit energy is $1/j$ close to the optimal one. A suitable diagonal sequence $\{(h_j, \alpha_{h_j})\}$, $j \rightarrow \infty$, produces the desired recovery sequence. 

However, the presence of a $\kappa_h >0$ that $\rightarrow \infty$ means that we can only freeze $\alpha_h = \alpha$ and pass to the limit if $\alpha = -\text{Tr}(\mathbf{II}_{\mathbf{y}})$, which necessitates some regularity in the deformation. This idea works, for instance, when $\mathbf{y} \in C^{3}(\overline{\omega}, \mathbb{R}^3)$. In this case, the choice $\alpha_h = -\text{Tr}(\mathbf{II}_{\mathbf{y}})$ for all $h$ allows us to pass from $Q_3$ to the optimal energy density for a nearly incompressible plate, while eliminating the leading order term in $\kappa_h$. The remainder terms then scale as $O(\kappa_h h^2)$ and vanish as $h \rightarrow 0$ by (\ref{eq:kappaH}) since $\nabla \alpha_h = - \nabla \text{Tr}(\mathbf{II}_{\mathbf{y}})$ and $\alpha_h \mathbf{II}_{\mathbf{y}} = - \text{Tr}(\mathbf{II}_{\mathbf{y}}) \mathbf{II}_{\mathbf{y}}$ are square integrable. Interestingly, in this $\mathbf{y} \in C^3(\overline{\omega})$ setting, it is even possible to solve the incompressibility constraint exactly. The idea, which comes from Conti and Dolzmann \cite{conti2006derivation,conti2009gamma}, is to replace the expression $hx_3 + \tfrac{1}{2}h^2 \alpha_h(\mathbf{x})$ in (\ref{eq:classicalAnsatz}) with a general function $\varphi_h(\mathbf{x}, x_3)$. Then, the constraint $\det \nabla_h \mathbf{y}_h = 1$ is equivalent to an ODE for $\varphi_h$ in the $x_3$-coordinate. By the smoothness of $\mathbf{y}$, a solution to the ODE is guaranteed to exist satisfy the plate ansatz $\varphi_h(\mathbf{x},x_3) \approx h x_3$ for all sufficiently small $h$. While this construction is far from generic on account of its smoothness, it turns out (see \cite{hornung2011approximation,pakzad2004sobolev}) that any deformation in the space of classical bending isometries $H^2_{\text{iso}}(\omega, \mathbb{R}^3) := \{ \mathbf{y} \in H^2(\omega, \mathbb{R}^3) \colon (\nabla \mathbf{y})^T \nabla \mathbf{y} = \mathbf{I} \text{ a.e.}\}$ is suitably approximated by 
smooth isometries.
Thus, Conti and Dolzmann \cite{conti2009gamma} used it to prove a full $\Gamma$-convergence result for classical incompressible plates. It can also be adapted to incompressible plate theories in a more general non-Euclidean setting, provided an analogous density result exists for that setting \cite{li2013kirchhoff}.

Here, unfortunately, we think it unlikely that such density results hold for all physically relevant metrics of LCE sheets --- there is simply too much freedom in the design of a the director profile, leading to a rich variety of actuated surfaces. Instead, we must confront energy expansions, like in (\ref{eq:simpleExpansion}), without invoking any specialized regularity results on the metric-constrained actuation. Our workaround is to assume some mild additional regularity on the second fundamental form, namely, that $\mathbf{II}_{\mathbf{y}} \in L^{\infty}(\omega, \mathbb{R}^{3\times3}_{\text{sym}}) \times H^s(\omega, \mathbb{R}^{3\times3}_{\text{sym}})$ for some $s \in (0,1].$ In this setting, it is possible to pass to the desired limit in (\ref{eq:simpleExpansion}) by choosing $\alpha_h = -\text{Tr}(\mathbf{II}_{\mathbf{y}}) \ast m_h$ for a standard mollifier $m_h(\mathbf{x}) = h^{-2} m(h^{-1} \mathbf{x})$. This choice leads to estimates of the form (see Exercise 6.64 \cite{leoni2023first}) 
\begin{align*}
\| \alpha_h + \text{Tr}(\mathbf{II}_{\mathbf{y}})\|_{L^2(\omega)}^2 = O(h^{2s} \| \mathbf{II}_{\mathbf{y}}\|^2_{H^s(\omega)}), \quad \| \nabla \alpha_h\|_{L^2(\omega)}^2 = O(h^{2(s-1)} \| \mathbf{II}_{\mathbf{y}}\|^2_{H^s(\omega)}).
\end{align*}
One can then check that the leading order error term in (\ref{eq:simpleExpansion}) scales as $O(\kappa_h h^{2s})$ and thus vanishes as $h \rightarrow 0$ under our assumptions on $\kappa_h$ in (\ref{eq:kappaH}). Essentially the same ideas apply in the LCE setting, which is how we are lead to the assumptions and the proof of Theorem \ref{RecoveryTheorem}. Notice that the remainder term $O(\kappa_h h^{2s})$ is sharp, which suggests that the results cannot easily be improved upon, except perhaps on a case-by-case basis where one can establish some specific regularity properties on the metric-constrained deformations of interest.

\section{Compactness}\label{sec:Compactness} 

This section supplies a proof of the compactness statement in Theorem \ref{MainTheorem}, which is based on geometric rigidity \cite{friesecke2002theorem} and its generalizations to non-Euclidean plate theory \cite{lewicka2011scaling}. 

\subsection{Geometric rigidity for LCE sheets.} We begin with a key inequality that allows us to extract a limiting rotation field $\mathbf{R} \in H^1(\omega, SO(3))$ for sequences with bounded energy at the bending scale. 
\begin{lem}\label{keyCompactLemma}
There is a constant $C = C(\omega, \lambda_\text{f}, \lambda_0) >0$ with the following properties: For every $h >0$, $\mathbf{y}_h \in H^1(\Omega, \mathbb{R}^3)$, $\mathbf{n}_h \in H^1(\Omega, \mathbb{S}^2)$, and $\mathbf{n}_0 \in H^1(\omega, \mathbb{S}^1)$, there is a matrix field $\mathbf{G}_h \colon \omega \rightarrow \mathbb{R}^{3\times3}$ satisfying the estimates
\begin{align*}
&\int_{\Omega} \big| \mathbf{G}_h - (\boldsymbol{\ell}^{\text{f}}_{\mathbf{n}_h})^{-1/2} \nabla_h \mathbf{y}_h (\boldsymbol{\ell}^{0}_{\mathbf{n}_0})^{1/2}\big|^2 \dd V + h^2 \int_{\omega} \big|\nabla \mathbf{G}_h \big|^2 \dd A \\
&\qquad \leq C \int_{\Omega}\big\{ \emph{dist}^2\big( (\boldsymbol{\ell}^{\text{f}}_{\mathbf{n}_h})^{-1/2} \nabla_h \mathbf{y}_h (\boldsymbol{\ell}^{0}_{\mathbf{n}_0})^{1/2} ,SO(3)\big) + h^2( \big| \nabla_h \mathbf{n}_h\big|^2 + |\nabla \mathbf{n}_0|^2 \big) \big\} \dd V.
\end{align*} 
\end{lem}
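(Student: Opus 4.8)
The plan is to run the two-scale geometric rigidity argument of Friesecke, James and M\"uller \cite{friesecke2002theorem}, adapted to the conjugated gradient $\mathbf{F}_h := (\boldsymbol{\ell}^{\text{f}}_{\mathbf{n}_h})^{-1/2}\nabla_h \mathbf{y}_h (\boldsymbol{\ell}^{0}_{\mathbf{n}_0})^{1/2}$, with the spatial variation of the step-length tensors treated as a perturbation controlled by the unit-length constraints on $\mathbf{n}_h,\mathbf{n}_0$ and the coercivity estimate (\ref{eq:lbLemma}). The claimed inequality is vacuous when its right-hand side, denoted $\mathcal{E}_h$, is infinite, so assume $\mathcal{E}_h<\infty$. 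First I would tile $\omega$ by essentially disjoint squares $\{\omega_a\}_{a\in\mathcal{I}_h}$ of side comparable to $h$, set $S_a:=\omega_a\times(-1/2,1/2)$ with associated physical cube $C_a:=\omega_a\times(-h/2,h/2)$ of side $\sim h$; squares meeting $\partial\omega$ are folded into enlarged neighbouring regions that remain uniformly bilipschitz to cubes, which is the only place the Lipschitz character of $\omega$ (hence the $\omega$-dependence of $C$) enters. On each cube, freeze the directors to the averages $\bar{\mathbf{n}}_h^a$ of $\mathbf{n}_h$ over $C_a$ and $\bar{\mathbf{n}}_0^a$ of $\mathbf{n}_0$ over $\omega_a$. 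Passing to $C_a$ -- where $\nabla_h\mathbf{y}_h$ is the ordinary gradient of the physically rescaled deformation -- then changing variables by the block-diagonal matrix $(\boldsymbol{\ell}^{0}_{\bar{\mathbf{n}}_0^a})^{1/2}$ and left-multiplying by the constant matrix $(\boldsymbol{\ell}^{\text{f}}_{\bar{\mathbf{n}}_h^a})^{-1/2}$, the frozen conjugated gradient becomes the gradient of an $H^1$ deformation of a parallelepiped uniformly bilipschitz to a cube (condition numbers controlled by $\lambda_0,\lambda_{\text{f}}$). The scale-invariant rigidity estimate of \cite{friesecke2002theorem} then furnishes $\mathbf{R}_a\in SO(3)$ with
\begin{align*}
\int_{S_a}\big|(\boldsymbol{\ell}^{\text{f}}_{\bar{\mathbf{n}}_h^a})^{-1/2}\nabla_h\mathbf{y}_h(\boldsymbol{\ell}^{0}_{\bar{\mathbf{n}}_0^a})^{1/2}-\mathbf{R}_a\big|^2\,\dd V\leq C(\lambda_{\text{f}},\lambda_0)\int_{S_a}\mathrm{dist}^2\big((\boldsymbol{\ell}^{\text{f}}_{\bar{\mathbf{n}}_h^a})^{-1/2}\nabla_h\mathbf{y}_h(\boldsymbol{\ell}^{0}_{\bar{\mathbf{n}}_0^a})^{1/2},SO(3)\big)\,\dd V.
\end{align*}

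Next I would remove the freezing. Since $\mathbf{v}\mapsto(\boldsymbol{\ell}^{\text{f}}_{\mathbf{v}})^{-1/2}$ and $\mathbf{v}_0\mapsto(\boldsymbol{\ell}^{0}_{\mathbf{v}_0})^{1/2}$ are smooth, the error $\boldsymbol{\delta}_h^a:=(\boldsymbol{\ell}^{\text{f}}_{\bar{\mathbf{n}}_h^a})^{-1/2}\nabla_h\mathbf{y}_h(\boldsymbol{\ell}^{0}_{\bar{\mathbf{n}}_0^a})^{1/2}-\mathbf{F}_h$ obeys $|\boldsymbol{\delta}_h^a|\leq C(\lambda_{\text{f}},\lambda_0)\big(|\mathbf{n}_h-\bar{\mathbf{n}}_h^a|+|\mathbf{n}_0-\bar{\mathbf{n}}_0^a|\big)|\nabla_h\mathbf{y}_h|$. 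One cannot pull $|\nabla_h\mathbf{y}_h|$ out in $L^\infty$; instead I would use $|\mathbf{n}_h-\bar{\mathbf{n}}_h^a|,|\mathbf{n}_0-\bar{\mathbf{n}}_0^a|\leq 2$ (both directors are unit vectors) together with $|\nabla_h\mathbf{y}_h|^2\leq c_1^{-1}\big(\mathrm{dist}^2(\mathbf{F}_h,SO(3))+c_2\big)$ from (\ref{eq:lbLemma}) to get the pointwise bound $|\boldsymbol{\delta}_h^a|^2\leq C\,\mathrm{dist}^2(\mathbf{F}_h,SO(3))+C\big(|\mathbf{n}_h-\bar{\mathbf{n}}_h^a|^2+|\mathbf{n}_0-\bar{\mathbf{n}}_0^a|^2\big)$. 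Integrating over $S_a$, applying Poincar\'e on the $h$-cubes, namely $\int_{S_a}|\mathbf{n}_h-\bar{\mathbf{n}}_h^a|^2\,\dd V\leq Ch^2\int_{S_a}|\nabla_h\mathbf{n}_h|^2\,\dd V$ and $\int_{S_a}|\mathbf{n}_0-\bar{\mathbf{n}}_0^a|^2\,\dd V\leq Ch^2\int_{S_a}|\nabla\mathbf{n}_0|^2\,\dd V$, and combining with the previous display via the triangle inequality and $\mathrm{dist}^2(\mathbf{F}_h+\boldsymbol{\delta}_h^a,SO(3))\leq 2\,\mathrm{dist}^2(\mathbf{F}_h,SO(3))+2|\boldsymbol{\delta}_h^a|^2$, I would obtain, after summing over $a\in\mathcal{I}_h$,
\begin{align*}
\sum_{a\in\mathcal{I}_h}\int_{S_a}|\mathbf{F}_h-\mathbf{R}_a|^2\,\dd V\leq C(\omega,\lambda_{\text{f}},\lambda_0)\,\mathcal{E}_h.
\end{align*}

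Then I would glue the local rotations. For neighbouring squares $\omega_a\sim\omega_b$, repeat the previous step on the doubled region $S_a\cup S_b$ (still comparable to a cube) to produce $\mathbf{R}_{ab}\in SO(3)$ close to $\mathbf{F}_h$ on all of $S_a\cup S_b$; a triangle inequality through $\mathbf{R}_{ab}$ with the per-cube bounds gives $h^2|\mathbf{R}_a-\mathbf{R}_b|^2\leq C\int_{S_a\cup S_b}\{\mathrm{dist}^2(\mathbf{F}_h,SO(3))+h^2(|\nabla_h\mathbf{n}_h|^2+|\nabla\mathbf{n}_0|^2)\}\,\dd V$, and summing over adjacent pairs (each cube lies in boundedly many) gives $\sum_{a\sim b}h^2|\mathbf{R}_a-\mathbf{R}_b|^2\leq C\mathcal{E}_h$. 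I would then mollify the $x_3$-independent piecewise-constant field $\mathbf{R}_h:=\sum_a\mathbf{R}_a\indicator{\omega_a}$ at scale $h$ (as in \cite{friesecke2002theorem,lewicka2011scaling}) to get $\mathbf{G}_h\in H^1(\omega,\mathbb{R}^{3\times3})$ obeying the standard estimates $\int_\omega|\mathbf{G}_h-\mathbf{R}_h|^2\,\dd A+h^2\int_\omega|\nabla\mathbf{G}_h|^2\,\dd A\leq Ch^2\sum_{a\sim b}|\mathbf{R}_a-\mathbf{R}_b|^2\leq C\mathcal{E}_h$. Since $\mathbf{G}_h$ and $\mathbf{R}_h$ are independent of $x_3$, the bound $\int_\Omega|\mathbf{G}_h-\mathbf{F}_h|^2\,\dd V\leq 2\int_\omega|\mathbf{G}_h-\mathbf{R}_h|^2\,\dd A+2\sum_a\int_{S_a}|\mathbf{R}_a-\mathbf{F}_h|^2\,\dd V\leq C\mathcal{E}_h$ combines with the gradient estimate to give the lemma.

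The one genuinely delicate point is the freezing removal. Because $\nabla_h\mathbf{y}_h$ is only $L^2$, a crude estimate of $\boldsymbol{\delta}_h^a$ loses a power of $h$; the argument works only because the director oscillation is split -- its pointwise bound by $2$ absorbs the factor multiplying $\mathrm{dist}^2(\mathbf{F}_h,SO(3))$, while the coercivity (\ref{eq:lbLemma}) trades the leftover $|\nabla_h\mathbf{y}_h|^2$ for $\mathrm{dist}^2(\mathbf{F}_h,SO(3))$ plus a constant that Poincar\'e then converts into precisely the $h^2|\nabla_h\mathbf{n}_h|^2$ and $h^2|\nabla\mathbf{n}_0|^2$ terms on the right-hand side. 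The remaining ingredients -- uniformity of the rigidity constant under the step-length change of variables, the doubled-cube comparison, the mollification estimates, and the bookkeeping near the Lipschitz boundary $\partial\omega$ -- are routine adaptations of \cite{friesecke2002theorem}.
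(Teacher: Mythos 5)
Your proof is correct and follows exactly the approach the paper has in mind: the paper delegates this lemma to \cite[Proposition C.1]{plucinsky2018actuation}, which runs the same two-scale Friesecke--James--M\"uller rigidity argument (tile at scale $h$, freeze the director to its cube average, change variables by the frozen step-length tensors, apply geometric rigidity, remove the freezing via the unit-length and coercivity estimates plus Poincar\'e, glue rotations on doubled cubes, mollify). Your explanation of why the freezing removal does not lose a power of $h$ --- splitting the director-oscillation factor, absorbing its bounded part against $\mathrm{dist}^2(\mathbf{F}_h,SO(3))$ and trading the residual $|\nabla_h\mathbf{y}_h|^2$ for $\mathrm{dist}^2(\mathbf{F}_h,SO(3))+c_2$ via the coercivity estimate, then letting Poincar\'e convert the constant into the $h^2(|\nabla_h\mathbf{n}_h|^2+|\nabla\mathbf{n}_0|^2)$ terms --- identifies the genuinely delicate point correctly and resolves it correctly.
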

\noindent An estimate of this type is now more-or-less standard fare in the literature on plate theories derived by $\Gamma$-convergence \cite{friesecke2002theorem}. The one stated here is essentially a repackaging of the statement in \cite[Proposition C.1.]{plucinsky2018actuation} and can be proved using the same argument (in Appendix C there). We do not repeat the proof.

\subsection{Proof of compactness} 
Let $\{ (\mathbf{y}_h, \mathbf{n}_h)\} \subset H^1(\Omega, \mathbb{R}^3) \times H^1(\Omega, \mathbb{S}^2)$ be a sequence with bounded energy in the sense that $\liminf_{h \rightarrow 0} E_{\mathbf{n}_0}^h(\mathbf{y}_h, \mathbf{n}_h) <\infty$. Then there is a subsequence (not relabeled) such that 
\begin{equation}
\begin{aligned}\label{eq:boundedEnergy}
E_{\mathbf{n}_0}^h(\mathbf{y}_h, \mathbf{n}_h) \leq M 
\end{aligned}
\end{equation}
for some $M > 0$ independent of $h$. We now establish a series of convergence properties based on this boundedness, all of which culminate in a proof of the compactness statements in Theorem \ref{MainTheorem}. All convergences stated in the lemmas below are for a suitably chosen subsequence as $h \rightarrow 0$. 

\subsubsection{Weak convergence of $(\mathbf{y}_h,\mathbf{n}_h)$}
Our first result concerns convergence properties solely of the sequence of director fields in this energy.

\begin{lem}\label{weakConLemmaN} 
$\mathbf{n}_h \rightharpoonup \mathbf{n}$ in $H^1(\Omega, \mathbb{S}^2)$ for some midplane director $\mathbf{n}$ independent of $x_3$; $h^{-1} \partial_3 \mathbf{n}_h \rightharpoonup \boldsymbol{\tau}$ in $L^2(\Omega, \mathbb{R}^3)$ for some vector field $\boldsymbol{\tau}$ that satisfies $\boldsymbol{\tau} \cdot \mathbf{n} = 0$ a.e.; $(\boldsymbol{\ell}_{\mathbf{n}_h}^{\emph{f}})^{\pm1/2} \rightarrow (\boldsymbol{\ell}_{\mathbf{n}}^{\emph{f}})^{\pm1/2}$ in $L^2(\Omega, \mathbb{R}^{3\times3})$.
\end{lem}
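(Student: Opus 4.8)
The plan is to extract all the stated convergences from the single energy bound \eqref{eq:boundedEnergy} together with the coercivity estimate \eqref{eq:lbClaim}. First, I would observe that \eqref{eq:lbClaim} together with \eqref{eq:boundedEnergy} gives a uniform bound $\int_\Omega |(\nabla \mathbf{n}_h, \partial_3 \mathbf{n}_h)|^2 \,\dd V \leq C$, and since $\mathbf{n}_h$ takes values in $\mathbb{S}^2$ it is automatically bounded in $L^2(\Omega)$. Hence $\{\mathbf{n}_h\}$ is bounded in $H^1(\Omega, \mathbb{R}^3)$, and after passing to a subsequence $\mathbf{n}_h \rightharpoonup \mathbf{n}$ weakly in $H^1$ and (by Rellich) strongly in $L^2$ and pointwise a.e.; the a.e.\ convergence preserves the constraint $|\mathbf{n}|=1$, so $\mathbf{n} \in H^1(\Omega, \mathbb{S}^2)$.

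Next I would handle the $x_3$-independence and the vector field $\boldsymbol{\tau}$. The key is that $\gamma_h|\nabla_h \mathbf{n}_h|^2$ contains the term $\gamma_h h^{-2}|\partial_3 \mathbf{n}_h|^2 = (h^{-2}\gamma_h)\,|\partial_3\mathbf{n}_h|^2$, and since $h^{-2}\gamma_h \geq \gamma_{\text{lb}} > 0$ by \eqref{eq:gammaH} while the energy is bounded, we get $\int_\Omega |\partial_3 \mathbf{n}_h|^2 \,\dd V \leq C h^2 \to 0$. Therefore $\partial_3 \mathbf{n}_h \to 0$ strongly in $L^2$, which forces $\partial_3 \mathbf{n} = 0$, i.e.\ $\mathbf{n}$ is independent of $x_3$. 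Moreover $h^{-1}\partial_3 \mathbf{n}_h$ is bounded in $L^2(\Omega,\mathbb{R}^3)$ (again because $h^{-2}\gamma_h$ is bounded below and the energy is bounded above), so along a further subsequence $h^{-1}\partial_3 \mathbf{n}_h \rightharpoonup \boldsymbol{\tau}$ weakly in $L^2$. To see $\boldsymbol{\tau}\cdot\mathbf{n} = 0$ a.e., differentiate the pointwise identity $|\mathbf{n}_h|^2 = 1$ in $x_3$ to get $\mathbf{n}_h \cdot \partial_3 \mathbf{n}_h = 0$ a.e., hence $\mathbf{n}_h \cdot (h^{-1}\partial_3 \mathbf{n}_h) = 0$; passing to the limit using the strong $L^2$ (and a.e.) convergence $\mathbf{n}_h \to \mathbf{n}$ against the weak $L^2$ convergence $h^{-1}\partial_3\mathbf{n}_h \rightharpoonup \boldsymbol\tau$ yields $\mathbf{n}\cdot\boldsymbol\tau = 0$.

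Finally, for the step-length tensors, I would use that $\mathbf{v} \mapsto (\boldsymbol{\ell}_{\mathbf{v}}^{\text{f}})^{\pm 1/2}$ is a smooth (in fact polynomial-in-$\mathbf{v}$, via \eqref{eq:stepLengthsDef}) and hence Lipschitz map on the compact set $\mathbb{S}^2$. Therefore $|(\boldsymbol{\ell}_{\mathbf{n}_h}^{\text{f}})^{\pm 1/2} - (\boldsymbol{\ell}_{\mathbf{n}}^{\text{f}})^{\pm 1/2}| \leq C|\mathbf{n}_h - \mathbf{n}|$ pointwise, and since $\mathbf{n}_h \to \mathbf{n}$ strongly in $L^2(\Omega)$ we get $(\boldsymbol{\ell}_{\mathbf{n}_h}^{\text{f}})^{\pm 1/2} \to (\boldsymbol{\ell}_{\mathbf{n}}^{\text{f}})^{\pm 1/2}$ strongly in $L^2(\Omega, \mathbb{R}^{3\times 3})$. (Dominated convergence also works directly, since the $\mathbb{S}^2$-valued argument keeps everything uniformly bounded.)

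None of the steps is a serious obstacle; the only mild subtlety is bookkeeping the nested subsequences and making sure the weak-times-strong passage to the limit in the orthogonality identity $\mathbf{n}_h \cdot (h^{-1}\partial_3\mathbf{n}_h) = 0$ is justified, which it is because one factor converges strongly in $L^2$ and the other weakly in $L^2$, so their $L^1$ pairing against any fixed test function passes to the limit.
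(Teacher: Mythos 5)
Your proof is correct and follows essentially the same route as the paper: bound $\nabla_h\mathbf{n}_h$ via the Frank term and the lower bound $h^{-2}\gamma_h\geq\gamma_{\text{lb}}$, extract weak $H^1$ convergence, use Rellich plus the constraint $|\mathbf{n}_h|=1$ to get $\mathbf{n}\in\mathbb{S}^2$, deduce $\partial_3\mathbf{n}=0$ from $\int|\partial_3\mathbf{n}_h|^2\leq Ch^2$, pass the orthogonality $\mathbf{n}_h\cdot\partial_3\mathbf{n}_h=0$ to the limit by a strong-times-weak pairing, and finish with the Lipschitz dependence of $(\boldsymbol{\ell}^{\text{f}}_{\mathbf{v}})^{\pm1/2}$ on $\mathbf{v}$. (The only cosmetic difference is that you argue $|\mathbf{n}|=1$ via pointwise a.e.\ convergence while the paper computes $\int(|\mathbf{n}_h|-1)^2\to\int(|\mathbf{n}|-1)^2$; both are standard and equivalent.)
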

\begin{proof}
Due to the Frank elastic term in the energy, we have for all $h \in (0,1)$ that 
\begin{align*}
\int_{\Omega}\big\{ |\nabla \mathbf{n}_h|^2 + |\partial_3 \mathbf{n}_h|^2 \big\} \dd V \leq \int_{\Omega}\big\{ |\nabla \mathbf{n}_h|^2 + |h^{-1} \partial_3 \mathbf{n}_h|^2 \big\} \dd V \leq \gamma_{\text{lb}}^{-1} E_{\mathbf{n}_0}^h(\mathbf{y}_h, \mathbf{n}_h) \leq \gamma_{\text{lb}}^{-1} M.
\end{align*}
Since $|\mathbf{n}_h| =1$ a.e.\;and is thus bounded in $L^2$, we conclude that $\mathbf{n}_h \rightharpoonup \mathbf{n}$ in $H^1$. As the estimate $\int_{\Omega} | \partial_3 \mathbf{n}_h|^2 \leq \gamma_{\text{lb}}^{-1} M h^2$ also holds, we conclude that $\partial_3 \mathbf{n}_h \rightarrow \partial_3 \mathbf{n} = 0$ in $L^2$ and thus $\mathbf{n}$ is independent of $x_3$. Lastly concerning the first statement, $\mathbf{n}_h \rightarrow \mathbf{n}$ in $L^2$ by Rellich's theorem. So $0 = \int_{\Omega} (|\mathbf{n}_h| -1)^2 \dd A\rightarrow \int_{\Omega} ( |\mathbf{n}| - 1)^2 \dd x$ and thus $|\mathbf{n}| =1$ a.e. For second statement, it's clear that $h^{-1} \partial_3 \mathbf{n}_h \rightharpoonup \boldsymbol{\tau}$ in $L^2$ since $\int_{\Omega} |h^{-1} \partial_3 \mathbf{n}_h|^2 \dd V \leq \gamma_{\text{lb}}^{-1} M$. Next observe that $ \mathbf{n}_h \cdot h^{-1} \partial_3 \mathbf{n}_h \rightharpoonup \mathbf{n} \cdot \boldsymbol{\tau}$ in $L^2$ using the strong and weak $L^2$ convergence of $\mathbf{n}_h$ and $h^{-1} \partial_3 \mathbf{n}_h$, respectively. Since $\mathbf{n}_h$ is also unit vector, $\mathbf{n}_h \cdot \partial_3 \mathbf{n}_h = 0$ a.e. and thus $\mathbf{n} \cdot \boldsymbol{\tau} = 0$ a.e. The final convergences follows from the Lipschitz continuity of the step length tensors in (\ref{eq:stepLengthsDef}), i.e., the fact that $|(\boldsymbol{\ell}^{\text{f}}_{\mathbf{v}_1})^{\pm1/2} - (\boldsymbol{\ell}^{\text{f}}_{\mathbf{v}_2})^{\pm1/2}| \leq C(\lambda_\text{f}) |\mathbf{v}_1 - \mathbf{v}_2|$ for all $\mathbf{v}_{1,2} \in \mathbb{S}^2$, since $\mathbf{n}_h$ strongly converges to $\mathbf{n}$ in $L^2$. 
\end{proof}

Our second result concerns convergence properties solely of the sequence of deformations in this energy. 
\begin{lem}\label{weakConLemmaY}
$\mathbf{y}_h - \frac{1}{|\Omega|} \int_{\Omega} \mathbf{y}_h \dd V \rightharpoonup \mathbf{y}$ in $H^1(\Omega, \mathbb{R}^3)$ for some midplane deformation $\mathbf{y}$ independent of $x_3$ and $h^{-1} \partial_3 \mathbf{y}_h \rightharpoonup \mathbf{b}$ in $L^2(\Omega, \mathbb{R}^3)$ for some vector field $\mathbf{b}$.
\end{lem}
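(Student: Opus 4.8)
The plan is to read off from the energy bound $E^h_{\mathbf{n}_0}(\mathbf{y}_h,\mathbf{n}_h)\le M$ a uniform $L^2(\Omega)$ control on the \emph{scaled} gradient $\nabla_h\mathbf{y}_h=(\nabla\mathbf{y}_h,h^{-1}\partial_3\mathbf{y}_h)$, and then invoke routine weak compactness. All four integrands in $E^h_{\mathbf{n}_0}$ are nonnegative, so $h^{-2}\int_\Omega W\big((\boldsymbol{\ell}^{\text{f}}_{\mathbf{n}_h})^{-1/2}\nabla_h\mathbf{y}_h(\boldsymbol{\ell}^{0}_{\mathbf{n}_0})^{1/2}\big)\,\dd V\le M$. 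Combining $W(\mathbf{F})\ge c_{\text{lb}}\,\text{dist}^2(\mathbf{F},SO(3))$ from (\ref{eq:hyperelastic}) with the pointwise lower bound (\ref{eq:lbLemma}) applied to $\mathbf{F}=\nabla_h\mathbf{y}_h$, $\mathbf{v}=\mathbf{n}_h$, $\mathbf{v}_0=\mathbf{n}_0$ gives, pointwise, $W\big((\boldsymbol{\ell}^{\text{f}}_{\mathbf{n}_h})^{-1/2}\nabla_h\mathbf{y}_h(\boldsymbol{\ell}^{0}_{\mathbf{n}_0})^{1/2}\big)\ge c_{\text{lb}}\big(c_1(\lambda_{\text{f}},\lambda_0)|\nabla_h\mathbf{y}_h|^2-c_2\big)$, and integrating yields $\int_\Omega|\nabla_h\mathbf{y}_h|^2\,\dd V\le \tfrac{h^2M+c_{\text{lb}}c_2|\Omega|}{c_{\text{lb}}c_1}\le C$ for all $h\in(0,1)$. (If only the un-scaled estimate (\ref{eq:lbClaim}) were at hand, the same conclusion follows from Lemma \ref{keyCompactLemma}, whose right-hand side is $O(h^2)$ here once one notes that $h^2\int_\Omega|\nabla_h\mathbf{n}_h|^2\,\dd V=O(h^2)$ thanks to $\gamma_h\ge\gamma_{\text{lb}}h^2$ from (\ref{eq:gammaH}).) Reading off the blocks of $\nabla_h\mathbf{y}_h$, this single bound gives simultaneously $\|\nabla\mathbf{y}_h\|_{L^2(\Omega)}\le C$, $\|h^{-1}\partial_3\mathbf{y}_h\|_{L^2(\Omega)}\le C$, and hence $\|\partial_3\mathbf{y}_h\|_{L^2(\Omega)}\le Ch$.

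The conclusions are now immediate. By the Poincar\'e--Wirtinger inequality on $\Omega$, $\big\|\mathbf{y}_h-\tfrac1{|\Omega|}\int_\Omega\mathbf{y}_h\,\dd V\big\|_{H^1(\Omega)}\le C\big(\|\nabla\mathbf{y}_h\|_{L^2(\Omega)}+\|\partial_3\mathbf{y}_h\|_{L^2(\Omega)}\big)\le C$, so a subsequence of $\mathbf{y}_h-\tfrac1{|\Omega|}\int_\Omega\mathbf{y}_h\,\dd V$ converges weakly in $H^1(\Omega,\mathbb{R}^3)$ to some $\mathbf{y}$. Since $\partial_3\mathbf{y}_h\to0$ strongly in $L^2(\Omega)$ while $\partial_3\mathbf{y}_h\rightharpoonup\partial_3\mathbf{y}$ weakly in $L^2(\Omega)$ (weak $H^1$ convergence implies weak $L^2$ convergence of each first partial derivative), uniqueness of weak limits forces $\partial_3\mathbf{y}=0$, i.e.\ $\mathbf{y}$ is independent of $x_3$. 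Finally, $\|h^{-1}\partial_3\mathbf{y}_h\|_{L^2(\Omega)}\le C$ furnishes, along a further subsequence, $h^{-1}\partial_3\mathbf{y}_h\rightharpoonup\mathbf{b}$ in $L^2(\Omega,\mathbb{R}^3)$ for some $\mathbf{b}$. This is the exact analogue for the deformation of Lemma \ref{weakConLemmaN} for the director.

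I do not expect any genuine obstacle here: the statement is a standard dimension-reduction compactness fact and the argument is pure functional analysis. The only point requiring a little care is ensuring that the coercivity one invokes controls $\nabla_h\mathbf{y}_h$ rather than merely $(\nabla\mathbf{y}_h,\partial_3\mathbf{y}_h)$ — it is the factor $h^{-1}$ in the transverse slot that produces both the $O(h)$ decay of $\partial_3\mathbf{y}_h$ (hence $x_3$-independence of $\mathbf{y}$) and the bounded sequence $h^{-1}\partial_3\mathbf{y}_h$ whose weak limit defines $\mathbf{b}$. Everything else — nonnegativity of the four energy terms and the $O(h^2)$ bookkeeping on the right-hand side of the rigidity estimate using the scaling in (\ref{eq:gammaH}) — proceeds just as in the proof of Lemma \ref{weakConLemmaN}.
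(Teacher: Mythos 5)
Your argument is correct and is essentially the same as the paper's: both extract a uniform $L^2(\Omega)$ bound on $\nabla_h\mathbf{y}_h$ from the bounded-energy hypothesis via the coercivity estimate $W(\mathbf{F})\ge c_{\text{lb}}\,\text{dist}^2(\mathbf{F},SO(3))$ combined with (\ref{eq:lbLemma}), then apply Poincar\'e to get weak $H^1$ convergence of the centered deformations, deduce $x_3$-independence of the limit from $\|\partial_3\mathbf{y}_h\|_{L^2}=O(h)$, and extract a weak $L^2$ limit of the bounded sequence $h^{-1}\partial_3\mathbf{y}_h$. Your bookkeeping of the $h^2$ factor is in fact slightly more precise than the display in the paper's proof (which silently uses $h^2\le 1$ for $h\in(0,1)$), but the content is identical.
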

\begin{proof}
Using the inequalities in (\ref{eq:hyperelastic}) and (\ref{eq:lbLemma}), we have for all $h \in (0,1)$ that 
\begin{align*}
\int_{\Omega}\big\{ |\nabla \mathbf{y}_h|^2 + |\partial_3 \mathbf{y}_h|^2 \big\} \dd V& \leq \int_{\Omega}\big\{ |\nabla \mathbf{y}_h|^2 + |h^{-1}\partial_3 \mathbf{y}_h|^2 \big\} \dd V \\
&\leq c_1(\lambda_{\text{f}}, \lambda_0)^{-1}\Big( \int_{\Omega} \text{dist}^2\big( (\boldsymbol{\ell}^{\text{f}}_{\mathbf{n}_h})^{-1/2} \nabla_h \mathbf{y}_h (\boldsymbol{\ell}^{0}_{\mathbf{n}_0})^{1/2}\big) \dd V + c_2|\Omega| \Big) \\
&\leq c_1(\lambda_{\text{f}}, \lambda_0)^{-1}\Big( c_{\text{lb}}^{-1} E_{\mathbf{n}_0}^h(\mathbf{y}_h, \mathbf{n}_h) + c_2|\Omega| \Big) \leq c_1(\lambda_{\text{f}}, \lambda_0)^{-1}\Big( c_{\text{lb}}^{-1} M + c_2|\Omega| \Big) .
\end{align*}
Combining this estimate with the Poincar\'{e} inequality yields $\mathbf{y}_h - \frac{1}{|\Omega|} \int_{\Omega} \mathbf{y}_h \dd V \rightharpoonup \mathbf{y}$ in $H^1(\Omega, \mathbb{R}^3)$. $\mathbf{y}_h$ is independent of $x_3$ and $h^{-1} \partial_3\mathbf{y}_h$ weakly converges in $L^2$ by a similar argument to that of the director field. 
\end{proof}

\subsubsection{Geometric rigidity and strong convergence of $(\nabla \mathbf{y}_h, \mathbf{b}_h)$}
We now  use the general estimate in Lemma \ref{keyCompactLemma} to establish compactness properties for two sequences of tensor fields built from $\{ (\mathbf{y}_h, \mathbf{n}_h) \}$ and $\mathbf{n}_0$. 
\begin{lem}\label{tensorEstsLemma}
There is a constant $C = C(\omega, \lambda_{\emph{f}}, \lambda_0, c_{\emph{lb}}, \gamma_{\emph{lb}}, \|\nabla \mathbf{n}_0\|_{L^2(\omega)}, M, \epsilon) >0$, a sequence of tensor fields $\{ \mathbf{G}_h \colon \omega \rightarrow \mathbb{R}^{3\times3}\}$, and one of rotation fields $\{ \mathbf{R}_h \colon \omega \rightarrow SO(3) \}$ such that 
\begin{equation}
\begin{aligned}\label{eq:tenRotInequlities}
\int_{\Omega} | \mathbf{G}_h - (\boldsymbol{\ell}^{\emph{f}}_{\mathbf{n}_h})^{-1/2} \nabla_h \mathbf{y}_h (\boldsymbol{\ell}^{0}_{\mathbf{n}_0})^{1/2}\big|^2\dd V + \int_{\omega} | \mathbf{R}_h - \mathbf{G}_h|^2 \dd A \leq C h^2 \quad \text{ and } \quad 
\int_{\Omega} |\nabla \mathbf{G}_h|^2 \dd A \leq C.
\end{aligned}
\end{equation}
\end{lem}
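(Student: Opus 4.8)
The plan is to deduce Lemma~\ref{tensorEstsLemma} directly from the geometric rigidity estimate of Lemma~\ref{keyCompactLemma} together with the energy bound~(\ref{eq:boundedEnergy}), the whole point being that the $h$-dependent scaling in~(\ref{eq:overallEnergy}) is precisely the one that turns the right-hand side of Lemma~\ref{keyCompactLemma} into a quantity of order $h^2$. First I would apply Lemma~\ref{keyCompactLemma} with the data $(\mathbf{y}_h, \mathbf{n}_h, \mathbf{n}_0)$ coming from our bounded-energy sequence to obtain, for each $h$, a matrix field $\mathbf{G}_h \colon \omega \to \mathbb{R}^{3\times3}$ satisfying the stated inequality. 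It then remains to bound its right-hand side by $C h^2$, which rests on three elementary observations: (i) by the coercivity $W(\mathbf{F}) \geq c_{\text{lb}}\,\text{dist}^2(\mathbf{F}, SO(3))$ in~(\ref{eq:hyperelastic}) and the nonnegativity of the other three terms of the energy, $\int_{\Omega}\text{dist}^2\big((\boldsymbol{\ell}^{\text{f}}_{\mathbf{n}_h})^{-1/2}\nabla_h\mathbf{y}_h(\boldsymbol{\ell}^{0}_{\mathbf{n}_0})^{1/2}, SO(3)\big)\,\dd V \leq c_{\text{lb}}^{-1}h^2 E^h_{\mathbf{n}_0}(\mathbf{y}_h,\mathbf{n}_h) \leq c_{\text{lb}}^{-1} M h^2$; (ii) by the Frank term in the energy and the bound $h^{-2}\gamma_h \geq \gamma_{\text{lb}}$ in~(\ref{eq:gammaH}), $\int_\Omega |\nabla_h \mathbf{n}_h|^2\,\dd V \leq \gamma_{\text{lb}}^{-1}M$, so that $h^2\int_\Omega|\nabla_h\mathbf{n}_h|^2\,\dd V = O(h^2)$; and (iii) since $\mathbf{n}_0$ is independent of $x_3$ and $|\Omega| = |\omega|$, $h^2\int_\Omega|\nabla\mathbf{n}_0|^2\,\dd V = h^2\|\nabla\mathbf{n}_0\|_{L^2(\omega)}^2$. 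Substituting (i)--(iii) into Lemma~\ref{keyCompactLemma} yields simultaneously $\int_{\Omega}|\mathbf{G}_h - (\boldsymbol{\ell}^{\text{f}}_{\mathbf{n}_h})^{-1/2}\nabla_h\mathbf{y}_h(\boldsymbol{\ell}^{0}_{\mathbf{n}_0})^{1/2}|^2\,\dd V \leq Ch^2$ and, upon dividing by $h^2$, $\int_\omega|\nabla\mathbf{G}_h|^2\,\dd A \leq C$.

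Next I would build the rotation fields $\mathbf{R}_h$. Writing $\mathbf{M}_h := (\boldsymbol{\ell}^{\text{f}}_{\mathbf{n}_h})^{-1/2}\nabla_h\mathbf{y}_h(\boldsymbol{\ell}^{0}_{\mathbf{n}_0})^{1/2}$, the triangle inequality $\text{dist}(\mathbf{G}_h, SO(3)) \leq |\mathbf{G}_h - \mathbf{M}_h| + \text{dist}(\mathbf{M}_h, SO(3))$ together with the elementary inequality $(a+b)^2 \leq 2a^2 + 2b^2$, observation (i), and the first bound just obtained gives $\int_\omega \text{dist}^2(\mathbf{G}_h, SO(3))\,\dd A = \int_\Omega \text{dist}^2(\mathbf{G}_h, SO(3))\,\dd V \leq Ch^2$, where I used that $\mathbf{G}_h$ is independent of $x_3$. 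I would then define $\mathbf{R}_h(\mathbf{x}) \in SO(3)$ to be a minimizer of $\mathbf{R}\mapsto |\mathbf{G}_h(\mathbf{x}) - \mathbf{R}|$ over $SO(3)$, selected measurably in $\mathbf{x}$ by composing $\mathbf{G}_h$ with a Borel-measurable selection of the (closed-valued, upper semicontinuous) metric-projection multifunction $\mathbf{F} \mapsto \{\mathbf{R}\in SO(3) : |\mathbf{F}-\mathbf{R}| = \text{dist}(\mathbf{F}, SO(3))\}$; such a selection exists by the Kuratowski--Ryll-Nardzewski theorem. By construction $|\mathbf{R}_h(\mathbf{x}) - \mathbf{G}_h(\mathbf{x})| = \text{dist}(\mathbf{G}_h(\mathbf{x}), SO(3))$ pointwise, so $\int_\omega|\mathbf{R}_h - \mathbf{G}_h|^2\,\dd A \leq Ch^2$; this is the last of the three estimates in~(\ref{eq:tenRotInequlities}), and $\mathbf{R}_h$ is bounded, hence in $L^2(\omega, SO(3))$, which is all the lemma asks of it.

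I do not anticipate a serious obstacle: the substance of the compactness argument is carried entirely by Lemma~\ref{keyCompactLemma}, and Lemma~\ref{tensorEstsLemma} is its quantitative repackaging at the bending scale. The one point deserving a little care is the selection defining $\mathbf{R}_h$ --- one should not treat the metric projection onto $SO(3)$ as globally single-valued or continuous, since it fails to be either outside a tubular neighborhood of $SO(3)$ (in particular on matrices with nonpositive determinant) --- but a measurable-selection argument is entirely standard and the only property needed downstream in~(\ref{eq:tenRotInequlities}) is the $L^2$ closeness. The remaining work, namely tracking that the constant in~(\ref{eq:tenRotInequlities}) depends only on $\omega$, $\lambda_{\text{f}}$, $\lambda_0$, $c_{\text{lb}}$, $\gamma_{\text{lb}}$, $\|\nabla\mathbf{n}_0\|_{L^2(\omega)}$ and $M$, is bookkeeping through the constants appearing in Lemma~\ref{keyCompactLemma} and in~(\ref{eq:hyperelastic})--(\ref{eq:gammaH}).
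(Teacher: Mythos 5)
Your proof is correct and follows the same overall strategy as the paper: invoke Lemma \ref{keyCompactLemma} to produce $\mathbf{G}_h$, bound its right-hand side by $Ch^2$ using the energy bound together with the coercivity of $W$ in (\ref{eq:hyperelastic}) and the scaling hypothesis (\ref{eq:gammaH}) on $\gamma_h$, and then project $\mathbf{G}_h$ pointwise onto $SO(3)$ to obtain $\mathbf{R}_h$. The one place you genuinely diverge from the paper is the projection step: the paper sets $\mathbf{R}_h := \boldsymbol{\pi}_{\text{ext}}(\mathbf{G}_h)$, where $\boldsymbol{\pi}_{\text{ext}}$ is the smooth nearest-point projection on a tubular neighborhood $N_\epsilon(SO(3))$ extended by $\mathbf{I}$ outside, and then uses Lemma \ref{projectionLemma} to get the estimate $|\boldsymbol{\pi}_{\text{ext}}(\mathbf{F}) - \mathbf{F}|^2 \leq C(\epsilon)\,\mathrm{dist}^2(\mathbf{F}, SO(3))$; this is why the constant $C$ in the lemma carries an $\epsilon$-dependence. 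You instead take a measurable selection of the exact metric projection (Kuratowski--Ryll-Nardzewski), which gives $|\mathbf{R}_h - \mathbf{G}_h| = \mathrm{dist}(\mathbf{G}_h, SO(3))$ pointwise and removes the $\epsilon$-dependence from $C$; you also make explicit the triangle-inequality step $\mathrm{dist}(\mathbf{G}_h, SO(3)) \leq |\mathbf{G}_h - \mathbf{M}_h| + \mathrm{dist}(\mathbf{M}_h, SO(3))$ that (\ref{eq:rotInequality}) in the paper's proof compresses. Both constructions are fine for the downstream argument, since only $L^2$-closeness of $\mathbf{R}_h$ to $\mathbf{G}_h$ is ever used; the paper's $\boldsymbol{\pi}_{\text{ext}}$ route is slightly more self-contained (no appeal to a measurable-selection theorem) at the cost of an additional parameter $\epsilon$ in the constants.
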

\begin{rem}
The parameter $\epsilon > 0$ above is introduced in Appendix \ref{sec:linAlgebra}. It is used to define the sequence of rotation fields in the lemma.
\end{rem}
\begin{proof}
Lemma \ref{keyCompactLemma} and the boundedness assumptions (\ref{eq:boundedEnergy}) furnishes a $C= C(\omega, \lambda_{\text{f}}, \lambda_0) > 0$ and a sequence of tensor fields such $\{ \mathbf{G}_h \colon \omega \rightarrow \mathbb{R}^{3\times3}\}$ such that 
\begin{equation}
\begin{aligned}\label{eq:tenInequality}
& \int_{\Omega} \frac{1}{h^2} \big| \mathbf{G}_h - (\boldsymbol{\ell}^{\text{f}}_{\mathbf{n}_h})^{-1/2} \nabla_h \mathbf{y}_h (\boldsymbol{\ell}^{0}_{\mathbf{n}_0})^{1/2}\big|^2 \dd V + \int_{\omega} \big|\nabla \mathbf{G}_h \big|^2 \dd A \\
&\qquad \leq C \int_{\Omega}\big\{ c_{\text{lb}}^{-1} h^{-2} W \big( (\boldsymbol{\ell}^{\text{f}}_{\mathbf{n}_h})^{-1/2} \nabla_h \mathbf{y}_h (\boldsymbol{\ell}^{0}_{\mathbf{n}_0})^{1/2} \big) + ( \big| \nabla_h \mathbf{n}_h\big|^2 + |\nabla \mathbf{n}_0|^2 \big) \big\} \dd V \\
&\qquad \leq C\Big( \frac{\max \{ c_{\text{lb}}^{-1} , \gamma_{\text{lb}}^{-1} \}}{h^2} \int_{\Omega} \big\{ W \big( (\boldsymbol{\ell}^{\text{f}}_{\mathbf{n}_h})^{-1/2} \nabla_h \mathbf{y}_h (\boldsymbol{\ell}^{0}_{\mathbf{n}_0})^{1/2} \big) + \gamma_h \big| \nabla_h \mathbf{n}_h\big|^2 \big\} \dd V + \| \nabla \mathbf{n}_0 \|^2_{L^{2}(\omega)} \Big) \\
&\qquad \leq C \big( \max \{ c_{\text{lb}}^{-1} , \gamma_{\text{lb}}^{-1} \} E_{\mathbf{n}_0}^h(\mathbf{y}_h, \mathbf{n}_h) + \| \nabla \mathbf{n}_0 \|^2_{L^{2}(\omega)}\big) \leq C \big( \max \{ c_{\text{lb}}^{-1} , \gamma_{\text{lb}}^{-1} \} M + \| \nabla \mathbf{n}_0 \|^2_{L^{2}(\omega)} \big)
\end{aligned}
\end{equation}
using the inequalities in (\ref{eq:hyperelastic}) and (\ref{eq:gammaH}). Now define a sequence of rotation fields $\{ \mathbf{R}_h \colon \omega \rightarrow SO(3)\}$ by projecting $\mathbf{G}_h$ pointwise to the space of rotations via $\mathbf{R}_h := \boldsymbol{\pi}_{\text{ext}} (\mathbf{G}_h)$ for $\boldsymbol{\pi}_{\text{ext}}$ as defined in Appendix \ref{sec:linAlgebra}. Lemma \ref{projectionLemma} in that appendix gives
\begin{equation}
\begin{aligned}\label{eq:rotInequality}
\int_{\omega} \frac{1}{h^2} |\mathbf{G}_h - \mathbf{R}_h|^2 \dd A\leq C(\epsilon) h^{-2} \int_{\omega} \text{dist}^2(\mathbf{G}_h, SO(3)) \dd A\leq C(\epsilon) c_{\text{lb}}^{-1} E_{\mathbf{n}_0}^h(\mathbf{y}_h, \mathbf{n}_h) \leq C(\epsilon) c_{\text{lb}}^{-1}M
\end{aligned}
\end{equation}
for the $\epsilon > 0$ used to define $\boldsymbol{\pi}_{\text{ext}}$. The estimate in (\ref{eq:tenRotInequlities}) follows from that of (\ref{eq:tenInequality}) and (\ref{eq:rotInequality}). 
\end{proof}
Next we show that the two sequences of tensors fields obtained above converge to the same rotation field.
\begin{lem}\label{GhRhLemma}
The sequences $\{ \mathbf{G}_h \colon \omega \rightarrow \mathbb{R}^{3\times3}\}$ and $\{ \mathbf{R}_h \colon \omega \rightarrow SO(3)\}$ from Lemma \ref{tensorEstsLemma} satisfy $\mathbf{G}_h \rightharpoonup \mathbf{R}$ in $H^1(\omega)$ and $\mathbf{R}_h \rightarrow \mathbf{R}$ in $L^2(\omega)$ for the same rotation field $\mathbf{R} \in H^1(\omega, SO(3))$. 
\end{lem}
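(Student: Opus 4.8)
The plan is to show that $\{\mathbf{G}_h\}$ is bounded in $H^1(\omega, \mathbb{R}^{3\times 3})$, extract a weakly convergent subsequence, and then transfer the limit to $\{\mathbf{R}_h\}$ using the $O(h)$ closeness of the two sequences in $L^2(\omega)$ recorded in Lemma \ref{tensorEstsLemma}.

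First I would establish the uniform $H^1(\omega)$ bound on $\mathbf{G}_h$. The second inequality in (\ref{eq:tenRotInequlities}) already controls $\|\nabla \mathbf{G}_h\|_{L^2(\omega)}$, so only $\|\mathbf{G}_h\|_{L^2(\omega)}$ remains. For this, recall from Lemma \ref{weakConLemmaY} that $\nabla_h \mathbf{y}_h = (\nabla \mathbf{y}_h, h^{-1}\partial_3 \mathbf{y}_h)$ is bounded in $L^2(\Omega)$, while the step length tensors $(\boldsymbol{\ell}^{\text{f}}_{\mathbf{n}_h})^{-1/2}$ and $(\boldsymbol{\ell}^{0}_{\mathbf{n}_0})^{1/2}$ are bounded in $L^{\infty}$ (they are built from unit vectors and the fixed eigenvalues $\lambda_{\text{f}}, \lambda_0$). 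Hence $(\boldsymbol{\ell}^{\text{f}}_{\mathbf{n}_h})^{-1/2}\nabla_h \mathbf{y}_h (\boldsymbol{\ell}^{0}_{\mathbf{n}_0})^{1/2}$ is bounded in $L^2(\Omega)$, and combining this with the first inequality in (\ref{eq:tenRotInequlities}) and the triangle inequality bounds $\mathbf{G}_h$ in $L^2(\Omega)$; since $\mathbf{G}_h$ is independent of $x_3$ and $\Omega$ has unit thickness, this is the same as an $L^2(\omega)$ bound. Thus $\{\mathbf{G}_h\}$ is bounded in $H^1(\omega, \mathbb{R}^{3\times3})$.

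By weak compactness in $H^1$ and Rellich's theorem, a subsequence (not relabeled) satisfies $\mathbf{G}_h \rightharpoonup \mathbf{R}$ in $H^1(\omega, \mathbb{R}^{3\times3})$ and $\mathbf{G}_h \rightarrow \mathbf{R}$ in $L^2(\omega, \mathbb{R}^{3\times3})$ for some $\mathbf{R} \in H^1(\omega, \mathbb{R}^{3\times3})$. Since the first inequality in (\ref{eq:tenRotInequlities}) gives $\|\mathbf{R}_h - \mathbf{G}_h\|_{L^2(\omega)}^2 \leq C h^2 \to 0$, we also get $\mathbf{R}_h \rightarrow \mathbf{R}$ in $L^2(\omega, \mathbb{R}^{3\times3})$. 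Finally, to see that $\mathbf{R}$ is rotation-valued, pass to a further subsequence along which $\mathbf{R}_h \to \mathbf{R}$ pointwise a.e.\ on $\omega$; since each $\mathbf{R}_h(\mathbf{x})$ lies in the closed set $SO(3)$, so does $\mathbf{R}(\mathbf{x})$ for a.e.\ $\mathbf{x}$, i.e., $\mathbf{R} \in H^1(\omega, SO(3))$ as claimed.

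The argument is essentially routine; I do not expect a serious obstacle. The only point needing a little care is the $L^2(\omega)$ bound on $\mathbf{G}_h$ itself — Lemma \ref{keyCompactLemma} directly supplies only its gradient, so one must borrow the $L^2$ control of $\nabla_h \mathbf{y}_h$ from Lemma \ref{weakConLemmaY} — together with the bookkeeping needed to ensure that the subsequence extracted here can be taken to be a common refinement of the subsequences already fixed in Lemmas \ref{weakConLemmaN} and \ref{weakConLemmaY}, so that all the stated weak and strong convergences hold simultaneously.
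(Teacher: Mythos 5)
Your proof is correct and follows the paper's strategy for the main part: the same $L^2(\omega)$ bound on $\mathbf{G}_h$ (triangle inequality, $L^\infty$ boundedness of the step-length tensors, and the $L^2$ bound on $\nabla_h\mathbf{y}_h$), combined with the $L^2$ bound on $\nabla \mathbf{G}_h$ from Lemma \ref{tensorEstsLemma}, gives weak $H^1$ compactness, and Rellich upgrades this to strong $L^2$ convergence. Where you diverge is in the order and technique for showing $\mathbf{R}$ is rotation-valued: the paper first establishes $\mathbf{R}\in SO(3)$ a.e.\ by a quantitative estimate
\[
\int_\omega \text{dist}^2(\mathbf{R},SO(3))\,\dd A \leq 4(c_{\text{lb}}^{-1}M + C)h^2 + 2\int_\omega |\mathbf{G}_h - \mathbf{R}|^2\,\dd A \rightarrow 0,
\]
driven by the energy bound and the pointwise dist estimate, and only afterwards deduces $\mathbf{R}_h\to\mathbf{R}$ in $L^2$. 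You instead first transfer the $L^2$ limit from $\mathbf{G}_h$ to $\mathbf{R}_h$ via $\|\mathbf{R}_h-\mathbf{G}_h\|_{L^2}^2\le Ch^2$, then pass to a pointwise a.e.\ convergent subsequence and invoke closedness of $SO(3)$. Both are valid; your route is a bit more elementary since it avoids the energy-driven dist estimate entirely, while the paper's version has the minor advantage of not requiring an extra subsequence extraction for a.e.\ convergence (though that subsequence is harmless here, since one can always pass to a common refinement throughout the compactness section). Your final remark about bookkeeping of subsequences is appropriate and matches the implicit convention used throughout Section \ref{sec:Compactness}.
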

\begin{proof}
Per Lemma \ref{tensorEstsLemma} and standard estimates, we have the the sequence $\{\mathbf{G}_h\}$
\begin{align*}
\int_{\omega} |\mathbf{G}_h|^2 \dd A&\leq 2 \big( \int_{\Omega} \big\{ |\mathbf{G}_h - (\boldsymbol{\ell}^{\text{f}}_{\mathbf{n}_h})^{-1/2} \nabla_h \mathbf{y}_h (\boldsymbol{\ell}^{0}_{\mathbf{n}_0})^{1/2}\big|^2 + \big|(\boldsymbol{\ell}^{\text{f}}_{\mathbf{n}_h})^{-1/2} \nabla_h \mathbf{y}_h (\boldsymbol{\ell}^{0}_{\mathbf{n}_0})^{1/2}\big|^2 \big\} \dd V \\ 
&\leq 2 \Big( C h^2 + \tilde{c}(\lambda_{\text{f}}, \lambda_0) \int_{\Omega} |\nabla_h \mathbf{y}_h|^2 \dd V \Big)
\end{align*}
for $C$ as in the previous lemma and a $\tilde{c}(\lambda_{\text{f}}, \lambda_0) >0$ arising from the step-length tensors. Since $\nabla_h \mathbf{y}_h$ weakly converges from Lemma \ref{weakConLemmaY}, we conclude that $\mathbf{G}_h$ is bounded uniformly in $L^2(\omega)$. Lemma \ref{tensorEstsLemma} also established a uniform bound on $\nabla \mathbf{G}_h$ in $L^2(\omega)$. So we conclude that $\mathbf{G}_h \rightharpoonup \mathbf{R}$ in $H^1(\omega, \mathbb{R}^{3\times3})$. 

To prove that $\mathbf{R} \in SO(3)$ a.e., observe that 
\begin{align*}
&\int_{\omega} \text{dist}^2(\mathbf{R}, SO(3)) \dd A\leq 2 \int_{\omega}\big\{ \text{dist}^2(\mathbf{G}_h ,SO(3)) + |\mathbf{G}_h - \mathbf{R}|^2 \big\} \dd A \\
&\qquad \qquad \leq 4 c_{\text{lb}}^{-1}h^2 E_{\mathbf{n}_0}^h(\mathbf{y}_h, \mathbf{n}_h)+4 \int_{\Omega} \big\{ \big| \mathbf{G}_h - (\boldsymbol{\ell}^{\text{f}}_{\mathbf{n}_h})^{-1/2} \nabla_h \mathbf{y}_h (\boldsymbol{\ell}^{0}_{\mathbf{n}_0})^{1/2}|^2 + |\mathbf{G}_h - \mathbf{R}|^2 \big\} \dd V.
\end{align*}
From here, Lemma \ref{tensorEstsLemma} and the energy bound in (\ref{eq:boundedEnergy}) furnishes the inequality 
\begin{equation}
\begin{aligned}\label{eq:tensorEstsToRot}
\int_{\omega} \text{dist}^2(\mathbf{R}, SO(3)) \dd A\leq 4\big( c_{\text{lb}}^{-1} M + C\big) h^2 + 2\int_{\omega} |\mathbf{G}_h - \mathbf{R}|^2 \dd x.
\end{aligned}
\end{equation}
By Rellich's theorem, $\mathbf{G}_h \rightarrow \mathbf{R}$ in $L^2(\omega)$ since it weakly converges in $H^1(\omega)$. Thus, $\mathbf{R} \in SO(3)$ a.e. since (\ref{eq:tensorEstsToRot}) holds as $h \rightarrow 0$. 

For the sequence of rotations fields $\{ \mathbf{R}_h\}$, we have 
\begin{align*}
\int_{\omega} |\mathbf{R}_h - \mathbf{R}|^2 \dd A\leq 2 \Big( \int_{\omega}\big\{ | \mathbf{R}_h - \mathbf{G}_h|^2 + |\mathbf{G}_h - \mathbf{R}|^2 \big\} \dd A\Big) \leq 2Ch^2 + 2\int_{\omega} |\mathbf{G}_h - \mathbf{R}|^2 \dd A, 
\end{align*}
where the second inequality follows by Lemma \ref{tensorEstsLemma}. So $\mathbf{R}_h \rightarrow \mathbf{R}$ in $L^2(\omega)$ because $\mathbf{G}_h$ does. 
\end{proof}

We now link the rotation field $\mathbf{R}$ above to the sequences $\{ (\mathbf{y}_h, \mathbf{n}_h)\}$.
\begin{lem}\label{wArgLimLemma}
$ (\boldsymbol{\ell}^{\emph{f}}_{\mathbf{n}_h})^{-1/2} \nabla_h \mathbf{y}_h (\boldsymbol{\ell}^{0}_{\mathbf{n}_0})^{1/2} \rightarrow \mathbf{R} = (\boldsymbol{\ell}^{\emph{f}}_{\mathbf{n}})^{-1/2} (\nabla \mathbf{y}, \mathbf{b}) (\boldsymbol{\ell}^{0}_{\mathbf{n}_0})^{1/2}$ in $L^2(\Omega, \mathbb{R}^{3\times3})$ for the rotation field $\mathbf{R}$ from Lemma \ref{GhRhLemma} and the vector fields $\mathbf{n}, \mathbf{y}, \mathbf{b}$ from Lemma \ref{weakConLemmaN} and \ref{weakConLemmaY}. In particular, $(\boldsymbol{\ell}^{\emph{f}}_{\mathbf{n}})^{-1/2} (\nabla \mathbf{y}, \mathbf{b}) (\boldsymbol{\ell}^{0}_{\mathbf{n}_0})^{1/2}$ is a rotation field that is independent of $x_3$ and belongs to $H^1(\omega)$.
\end{lem}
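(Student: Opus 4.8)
The plan is to pass to the limit in the composed tensor $\mathbf{M}_h := (\boldsymbol{\ell}^{\text{f}}_{\mathbf{n}_h})^{-1/2} \nabla_h \mathbf{y}_h (\boldsymbol{\ell}^{0}_{\mathbf{n}_0})^{1/2}$ along two different routes and then invoke uniqueness of limits. First I would observe that $\mathbf{M}_h$ converges \emph{strongly} in $L^2(\Omega,\mathbb{R}^{3\times3})$ to the rotation field $\mathbf{R}$ from Lemma \ref{GhRhLemma}: Lemma \ref{tensorEstsLemma} gives $\|\mathbf{M}_h - \mathbf{G}_h\|_{L^2(\Omega)}^2 \le Ch^2 \to 0$, while $\mathbf{G}_h \rightharpoonup \mathbf{R}$ in $H^1(\omega)$ (Lemma \ref{GhRhLemma}) upgrades by Rellich's theorem to $\mathbf{G}_h \to \mathbf{R}$ in $L^2(\omega)$, hence --- since $\mathbf{G}_h$ and $\mathbf{R}$ depend only on $\mathbf{x}\in\omega$ and the thickness interval has unit length --- to $\mathbf{G}_h \to \mathbf{R}$ in $L^2(\Omega)$. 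A triangle inequality closes this step.

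Next I would re-express $\nabla_h \mathbf{y}_h = (\boldsymbol{\ell}^{\text{f}}_{\mathbf{n}_h})^{1/2}\,\mathbf{M}_h\,(\boldsymbol{\ell}^{0}_{\mathbf{n}_0})^{-1/2}$ and pass to the limit on the right. By Lemma \ref{weakConLemmaN}, $(\boldsymbol{\ell}^{\text{f}}_{\mathbf{n}_h})^{1/2} \to (\boldsymbol{\ell}^{\text{f}}_{\mathbf{n}})^{1/2}$ in $L^2(\Omega)$, and this sequence is uniformly bounded in $L^\infty(\Omega)$ because $\mathbf{n}_h\in\mathbb{S}^2$ pins the eigenvalues of $\boldsymbol{\ell}^{\text{f}}_{\mathbf{n}_h}$ to $\{\lambda_{\text{f}},\lambda_{\text{f}}^{-1/2}\}$. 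Multiplying this uniformly bounded, $L^2$-convergent factor against the $L^2$-convergent $\mathbf{M}_h$, and then against the fixed bounded matrix $(\boldsymbol{\ell}^{0}_{\mathbf{n}_0})^{-1/2}$, gives $\nabla_h\mathbf{y}_h \to (\boldsymbol{\ell}^{\text{f}}_{\mathbf{n}})^{1/2}\mathbf{R}(\boldsymbol{\ell}^{0}_{\mathbf{n}_0})^{-1/2}$ in $L^1(\Omega,\mathbb{R}^{3\times3})$. On the other hand, Lemma \ref{weakConLemmaY} gives $\nabla_h\mathbf{y}_h = (\nabla\mathbf{y}_h, h^{-1}\partial_3\mathbf{y}_h)\rightharpoonup (\nabla\mathbf{y},\mathbf{b})$ in $L^2(\Omega)$, hence weakly in $L^1(\Omega)$. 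Uniqueness of limits then forces $(\nabla\mathbf{y},\mathbf{b}) = (\boldsymbol{\ell}^{\text{f}}_{\mathbf{n}})^{1/2}\mathbf{R}(\boldsymbol{\ell}^{0}_{\mathbf{n}_0})^{-1/2}$, i.e.\ $\mathbf{R} = (\boldsymbol{\ell}^{\text{f}}_{\mathbf{n}})^{-1/2}(\nabla\mathbf{y},\mathbf{b})(\boldsymbol{\ell}^{0}_{\mathbf{n}_0})^{1/2}$, which together with the first step is the stated convergence.

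For the concluding assertions: $\mathbf{R}\in H^1(\omega,SO(3))$ by Lemma \ref{GhRhLemma}, and as a field on $\omega$ it is automatically $x_3$-independent and $H^1$, so the identity above transfers these properties to $(\boldsymbol{\ell}^{\text{f}}_{\mathbf{n}})^{-1/2}(\nabla\mathbf{y},\mathbf{b})(\boldsymbol{\ell}^{0}_{\mathbf{n}_0})^{1/2}$. Moreover $\mathbf{n}$, $\mathbf{y}$, $\mathbf{n}_0$ are all $x_3$-independent (Lemmas \ref{weakConLemmaN}, \ref{weakConLemmaY}), so the invertible factors $(\boldsymbol{\ell}^{\text{f}}_{\mathbf{n}})^{1/2}$, $(\boldsymbol{\ell}^{0}_{\mathbf{n}_0})^{-1/2}$ and the first two columns $\nabla\mathbf{y}$ of $(\nabla\mathbf{y},\mathbf{b})$ are $x_3$-independent; solving $(\nabla\mathbf{y},\mathbf{b}) = (\boldsymbol{\ell}^{\text{f}}_{\mathbf{n}})^{1/2}\mathbf{R}(\boldsymbol{\ell}^{0}_{\mathbf{n}_0})^{-1/2}$ for its last column shows $\mathbf{b}$ is $x_3$-independent too, a posteriori sharpening Lemma \ref{weakConLemmaY}.

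I expect the one subtle point to be the limit of the matrix product $(\boldsymbol{\ell}^{\text{f}}_{\mathbf{n}_h})^{1/2}\mathbf{M}_h$: the step-length factors converge only in $L^2$, not $L^\infty$, and $\nabla_h\mathbf{y}_h$ converges only weakly, so one cannot naively multiply limits. The resolution hinges on two facts already available --- the uniform $L^\infty$ bound on $(\boldsymbol{\ell}^{\text{f}}_{\mathbf{n}_h})^{\pm1/2}$, and the \emph{strong} $L^2$ convergence of $\mathbf{M}_h$, which is the real payoff of the geometric-rigidity estimates in Lemmas \ref{tensorEstsLemma}--\ref{GhRhLemma} --- so the product converges strongly in $L^1$ and can be matched to the weak $L^2$ limit by uniqueness.
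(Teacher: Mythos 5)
Your argument is correct and, modulo a cosmetic algebraic rearrangement, coincides with the paper's: both proofs establish $\mathbf{M}_h \to \mathbf{R}$ in $L^2(\Omega)$ by combining Lemma \ref{tensorEstsLemma} with the Rellich compactness from Lemma \ref{GhRhLemma}, and both then identify $\mathbf{R}$ by matching a strong--weak product limit against $\nabla_h\mathbf{y}_h\rightharpoonup(\nabla\mathbf{y},\mathbf{b})$ and invoking uniqueness of the weak $L^1$ limit. The only difference is which side of the identity you massage: the paper multiplies the weakly convergent $\nabla_h\mathbf{y}_h$ by the strongly convergent, uniformly bounded $(\boldsymbol{\ell}^{\text{f}}_{\mathbf{n}_h})^{-1/2}$ to pass to the limit in the product $\mathbf{M}_h$, whereas you solve for $\nabla_h\mathbf{y}_h$ in terms of $\mathbf{M}_h$ and multiply two strongly convergent sequences; both exploit exactly the same three convergences and the same $L^\infty$ bound on the step-length tensors, so nothing is gained or lost. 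Your aside that $x_3$-independence of $\mathbf{b}$ follows \emph{a posteriori} from the identity is also right, and is in fact what the paper extracts in its Lemma \ref{strengthConLem}.
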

\begin{proof}
First observe that 
\begin{align*}
\int_{\Omega} \big| (\boldsymbol{\ell}^{\text{f}}_{\mathbf{n}_h})^{-1/2} \nabla_h \mathbf{y}_h (\boldsymbol{\ell}^{0}_{\mathbf{n}_0})^{1/2} - \mathbf{R} \big|^2 \dd A\leq 2 \int_{\Omega} \big| (\boldsymbol{\ell}^{\text{f}}_{\mathbf{n}_h})^{-1/2} \nabla_h \mathbf{y}_h (\boldsymbol{\ell}^{0}_{\mathbf{n}_0})^{1/2} - \mathbf{G}_h \big|^2 \dd V + 2\int_{\omega} |\mathbf{G}_h -\mathbf{R}|^2 \dd A. 
\end{align*}
So $(\boldsymbol{\ell}^{\text{f}}_{\mathbf{n}_h})^{-1/2} \nabla_h \mathbf{y}_h (\boldsymbol{\ell}^{0}_{\mathbf{n}_0})^{1/2} \rightarrow \mathbf{R}$ in $L^2(\Omega, \mathbb{R}^{3\times3})$ given estimates in Lemma \ref{tensorEstsLemma} and the fact that $\mathbf{G}_h \rightarrow \mathbf{R}$ in $L^2(\omega)$ from Lemma \ref{GhRhLemma}. On the other hand, we have $\nabla_h \mathbf{y}_h \rightharpoonup (\nabla \mathbf{y} ,\mathbf{b})$ in $L^2(\Omega)$ from Lemma \ref{weakConLemmaY} and $(\boldsymbol{\ell}^{\text{f}}_{\mathbf{n}_h})^{-1/2} \rightarrow (\boldsymbol{\ell}^{\text{f}}_{\mathbf{n}})^{-1/2}$ in $L^2(\Omega)$ from Lemma \ref{weakConLemmaN}. Thus, since we are dealing with a sequence composed of products of sequences which weakly and strongly converge in $L^2$, we conclude that $(\boldsymbol{\ell}^{\text{f}}_{\mathbf{n}_h})^{-1/2} \nabla_h \mathbf{y}_h (\boldsymbol{\ell}^{0}_{\mathbf{n}_0})^{1/2} \rightharpoonup (\boldsymbol{\ell}^{\text{f}}_{\mathbf{n}})^{-1/2} (\nabla \mathbf{y}, \mathbf{b}) (\boldsymbol{\ell}^{0}_{\mathbf{n}_0})^{1/2}$ in $L^1(\Omega)$. Evidently then $\mathbf{R} = ( \boldsymbol{\ell}^{\text{f}}_{\mathbf{n}})^{-1/2} (\nabla \mathbf{y}, \mathbf{b}) (\boldsymbol{\ell}^{0}_{\mathbf{n}_0})^{1/2}$ a.e. since the weak-$L^1$ limit is unique. It follows from Lemma \ref{GhRhLemma} that $( \boldsymbol{\ell}^{\text{f}}_{\mathbf{n}})^{-1/2} (\nabla \mathbf{y}, \mathbf{b}) (\boldsymbol{\ell}^{0}_{\mathbf{n}_0})^{1/2}$ is a rotation field that is independent of $x_3$ and belongs to $H^1(\omega)$. 
\end{proof}

Next we build on the previous result to strengthen the convergence properties of $\{ \nabla_h \mathbf{y}_h\}$ and deduce additional regularity on the limit $(\nabla \mathbf{y}, \mathbf{b})$. 
\begin{lem}\label{strengthConLem}
The sequences in Lemma \ref{weakConLemmaY} actually strongly converge in their respective spaces. We also have improved regularity of the limits: $\nabla \mathbf{y}$ belongs to $H^1(\omega, \mathbb{R}^3)$ and $\mathbf{b}$ is independent of $x_3$ and in $H^1(\omega, \mathbb{R}^3)$. 
\end{lem}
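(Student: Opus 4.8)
The plan is to recover strong convergence of $\nabla_h\mathbf{y}_h$ by \emph{inverting} the quantity that Lemma~\ref{wArgLimLemma} already controls. Writing $\mathbf{S}_h := (\boldsymbol{\ell}^{\text{f}}_{\mathbf{n}_h})^{-1/2}\nabla_h\mathbf{y}_h(\boldsymbol{\ell}^{0}_{\mathbf{n}_0})^{1/2}$, we have $\nabla_h\mathbf{y}_h = (\boldsymbol{\ell}^{\text{f}}_{\mathbf{n}_h})^{1/2}\mathbf{S}_h(\boldsymbol{\ell}^{0}_{\mathbf{n}_0})^{-1/2}$, and by Lemma~\ref{wArgLimLemma} we have $\mathbf{S}_h\to\mathbf{R}$ strongly in $L^2(\Omega)$ with $\mathbf{R} = (\boldsymbol{\ell}^{\text{f}}_{\mathbf{n}})^{-1/2}(\nabla\mathbf{y},\mathbf{b})(\boldsymbol{\ell}^{0}_{\mathbf{n}_0})^{1/2}$. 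First I would record that both conjugating factors are harmless: $(\boldsymbol{\ell}^{0}_{\mathbf{n}_0})^{-1/2}$ is $h$-independent and bounded in $L^\infty(\omega)$ (the eigenvalues of $\boldsymbol{\ell}^{0}_{\mathbf{v}_0}$ lie in $[\lambda_0^{-1/2},\lambda_0]$ for every $\mathbf{v}_0\in\mathbb{S}^1$ since $\lambda_0\ge1$), while $(\boldsymbol{\ell}^{\text{f}}_{\mathbf{n}_h})^{1/2}$ is uniformly bounded in $L^\infty(\Omega)$ for the same reason and, after passing to a further subsequence with $\mathbf{n}_h\to\mathbf{n}$ a.e.\ (Rellich), converges to $(\boldsymbol{\ell}^{\text{f}}_{\mathbf{n}})^{1/2}$ in every $L^p(\Omega)$, $p<\infty$, by continuity of $\mathbf{v}\mapsto(\boldsymbol{\ell}^{\text{f}}_{\mathbf{v}})^{1/2}$ together with dominated convergence.

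I would then pass to the limit in the product using the decomposition
\[
\nabla_h\mathbf{y}_h - (\boldsymbol{\ell}^{\text{f}}_{\mathbf{n}})^{1/2}\mathbf{R}(\boldsymbol{\ell}^{0}_{\mathbf{n}_0})^{-1/2} = (\boldsymbol{\ell}^{\text{f}}_{\mathbf{n}_h})^{1/2}(\mathbf{S}_h - \mathbf{R})(\boldsymbol{\ell}^{0}_{\mathbf{n}_0})^{-1/2} + \big((\boldsymbol{\ell}^{\text{f}}_{\mathbf{n}_h})^{1/2} - (\boldsymbol{\ell}^{\text{f}}_{\mathbf{n}})^{1/2}\big)\mathbf{R}(\boldsymbol{\ell}^{0}_{\mathbf{n}_0})^{-1/2}.
\]
The first term is $O(\|\mathbf{S}_h-\mathbf{R}\|_{L^2(\Omega)})$ in $L^2$ by the $L^\infty$ bounds above, and the second tends to $0$ in $L^2(\Omega)$ by dominated convergence, its prefactor being uniformly bounded and converging to $0$ a.e.\ while $\mathbf{R}(\boldsymbol{\ell}^{0}_{\mathbf{n}_0})^{-1/2}\in L^2(\Omega)$. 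Since $(\boldsymbol{\ell}^{\text{f}}_{\mathbf{n}})^{1/2}\mathbf{R}(\boldsymbol{\ell}^{0}_{\mathbf{n}_0})^{-1/2} = (\nabla\mathbf{y},\mathbf{b})$ by Lemma~\ref{wArgLimLemma}, this gives $\nabla_h\mathbf{y}_h\to(\nabla\mathbf{y},\mathbf{b})$ strongly in $L^2(\Omega)$, hence $\nabla\mathbf{y}_h\to\nabla\mathbf{y}$ and $h^{-1}\partial_3\mathbf{y}_h\to\mathbf{b}$ strongly in $L^2(\Omega)$. As $\partial_3\mathbf{y}_h = h\,(h^{-1}\partial_3\mathbf{y}_h)\to0$ strongly in $L^2(\Omega)$ and $\mathbf{y}_h - \frac{1}{|\Omega|}\int_{\Omega}\mathbf{y}_h\,\dd V\to\mathbf{y}$ strongly in $L^2(\Omega)$ by Rellich, the convergences of Lemma~\ref{weakConLemmaY} are in fact strong in $H^1(\Omega,\mathbb{R}^3)$.

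The regularity statements I would read off the identity $(\nabla\mathbf{y},\mathbf{b}) = (\boldsymbol{\ell}^{\text{f}}_{\mathbf{n}})^{1/2}\mathbf{R}(\boldsymbol{\ell}^{0}_{\mathbf{n}_0})^{-1/2}$: its right-hand side carries no $x_3$-dependence, so $\mathbf{b}$ is independent of $x_3$; and since $\mathbf{R}\in H^1(\omega,SO(3))\subset H^1(\omega)\cap L^\infty(\omega)$ by Lemma~\ref{wArgLimLemma}, while composing the smooth maps $\mathbf{v}\mapsto(\boldsymbol{\ell}^{\text{f}}_{\mathbf{v}})^{1/2}$ and $\mathbf{v}_0\mapsto(\boldsymbol{\ell}^{0}_{\mathbf{v}_0})^{-1/2}$ with $\mathbf{n}\in H^1(\omega,\mathbb{S}^2)$ and $\mathbf{n}_0\in H^1(\omega,\mathbb{S}^1)$ produces factors in $H^1(\omega)\cap L^\infty(\omega)$, and since $H^1(\omega)\cap L^\infty(\omega)$ is closed under products, the product $(\nabla\mathbf{y},\mathbf{b})$ lies in $H^1(\omega)$; thus $\nabla\mathbf{y}\in H^1(\omega)$ and $\mathbf{b}\in H^1(\omega,\mathbb{R}^3)$. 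The one step that needs care (though it is mild) is this product passage: strong $L^2(\Omega)$ convergence of $\mathbf{S}_h$ and of $(\boldsymbol{\ell}^{\text{f}}_{\mathbf{n}_h})^{1/2}$ separately would only yield $L^1(\Omega)$ convergence of $\nabla_h\mathbf{y}_h$, so one genuinely needs the uniform $L^\infty$ bound on the step-length tensors (a consequence of $\lambda_0,\lambda_{\text{f}}\ge1$) and an a.e.-convergent subsequence to run the dominated-convergence argument; everything else is bookkeeping on top of Lemmas~\ref{weakConLemmaN}, \ref{weakConLemmaY} and~\ref{wArgLimLemma}.
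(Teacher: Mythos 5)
Your proof is correct and takes essentially the same route as the paper: it uses the same decomposition of $\nabla_h\mathbf{y}_h - (\nabla\mathbf{y},\mathbf{b})$ through the intermediate term $(\boldsymbol{\ell}^{\text{f}}_{\mathbf{n}_h})^{1/2}\mathbf{R}(\boldsymbol{\ell}^{0}_{\mathbf{n}_0})^{-1/2}$, controls the first piece via the uniform $L^\infty$ bound on the step-length tensors together with the strong $L^2$ convergence from Lemma~\ref{wArgLimLemma}, and derives the regularity of $(\nabla\mathbf{y},\mathbf{b})$ from the product structure of $(\boldsymbol{\ell}^{\text{f}}_{\mathbf{n}})^{1/2}\mathbf{R}(\boldsymbol{\ell}^{0}_{\mathbf{n}_0})^{-1/2}$ with all factors in $H^1\cap L^\infty$. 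The only cosmetic difference is that you run the second piece through a.e.\ convergence and dominated convergence, whereas the paper can invoke the $L^2$ convergence of $(\boldsymbol{\ell}^{\text{f}}_{\mathbf{n}_h})^{\pm 1/2}$ already secured in Lemma~\ref{weakConLemmaN} directly against the $L^\infty$ factor $\mathbf{R}(\boldsymbol{\ell}^{0}_{\mathbf{n}_0})^{-1/2}$.
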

\begin{proof}
For the strong convergence, observe that 
\begin{align*}
&\int_{\Omega} | \nabla_h \mathbf{y}_h - (\nabla \mathbf{y}, \mathbf{b})|^2 \dd V \\
&\qquad \leq 2\int_{\Omega} \big\{ | \nabla_h \mathbf{y}_h + ( \boldsymbol{\ell}^{\text{f}}_{\mathbf{n}_h})^{1/2} \mathbf{R} (\boldsymbol{\ell}^{0}_{\mathbf{n}_0})^{-1/2} |^2 +| (\nabla \mathbf{y}, \mathbf{b}) - ( \boldsymbol{\ell}^{\text{f}}_{\mathbf{n}_h})^{1/2} \mathbf{R} (\boldsymbol{\ell}^{0}_{\mathbf{n}_0})^{-1/2} |^2 \big\} \dd V \\
&\qquad \leq 2c(\lambda_{\text{f}}, \lambda_0) \int_{\Omega} \big| ( \boldsymbol{\ell}^{\text{f}}_{\mathbf{n}_h})^{-1/2} \nabla_h \mathbf{y}_h (\boldsymbol{\ell}^{0}_{\mathbf{n}_0})^{1/2} - \mathbf{R}|^2 + | ( \boldsymbol{\ell}^{\text{f}}_{\mathbf{n}_h})^{-1/2} - ( \boldsymbol{\ell}^{\text{f}}_{\mathbf{n}})^{-1/2} |^2 \big\} \dd V,
\end{align*}
where the equality $(\nabla \mathbf{y}, \mathbf{b}) = ( \boldsymbol{\ell}^{\text{f}}_{\mathbf{n}})^{1/2} \mathbf{R} ( \boldsymbol{\ell}^{\text{0}}_{\mathbf{n}_0})^{-1/2}$ a.e. from the previous lemma allows us to go from the first to second estimate above. Lemma \ref{weakConLemmaN} and \ref{wArgLimLemma} then give, respectively, the convergences $ ( \boldsymbol{\ell}^{\text{f}}_{\mathbf{n}_h})^{-1/2} \nabla_h \mathbf{y}_h (\boldsymbol{\ell}^{0}_{\mathbf{n}_0})^{1/2} \rightarrow \mathbf{R}$ and $( \boldsymbol{\ell}^{\text{f}}_{\mathbf{n}_h})^{-1/2} \rightarrow ( \boldsymbol{\ell}^{\text{f}}_{\mathbf{n}})^{-1/2}$ in $L^2(\Omega)$. The strong convergence $ \nabla_h \mathbf{y}_h \rightarrow (\nabla \mathbf{y}, \mathbf{b})$ in $L^2(\Omega)$ follows.

For the added regularity, note again from the previous lemma that $(\nabla \mathbf{y}, \mathbf{b}) = ( \boldsymbol{\ell}^{\text{f}}_{\mathbf{n}})^{1/2} \mathbf{R} ( \boldsymbol{\ell}^{\text{0}}_{\mathbf{n}_0})^{-1/2}$ a.e. Note also that $\mathbf{n}_0 \in H^1(\omega)$ by hypothesis, $\mathbf{n} \in H^1(\omega)$ by Lemma \ref{weakConLemmaN}, $\mathbf{R} \in H^1(\omega)$ by Lemma \ref{GhRhLemma}, and $( \boldsymbol{\ell}^{\text{f}}_{\mathbf{n}})^{1/2}$, $\mathbf{R}$ and $( \boldsymbol{\ell}^{\text{0}}_{\mathbf{n}_0})^{-1/2}$ are bounded in $L^{\infty}(\omega)$. So the product rule gives that $\nabla \mathbf{y}$ belongs to $H^1(\omega)$ and $\mathbf{b}$ is independent of $x_3$ and in $H^1(\omega)$.
\end{proof}

\subsubsection{Metric constraint and the nonideal energy}
At this stage, we have only used the ideal entropic elastic energy term and the Frank elastic term to achieve the stated compactness properties. We now analyze the non-ideal director anchoring term in the context of compactness.
\begin{lem}
The limits $\mathbf{n}$, $\mathbf{b}$ and $\mathbf{\mathbf{y}}$ satisfy 
\begin{equation}
\begin{aligned}\label{eq:anchoringPointwise}
\mathbf{n} = \sigma \frac{\nabla \mathbf{y}\mathbf{n}_0}{|\nabla \mathbf{y} \mathbf{n}_0|}, \quad \mathbf{b} = \lambda_0^{-1/8} \lambda_{\emph{f}}^{1/8} \boldsymbol{\nu}_{\mathbf{y}} , \quad (\nabla \mathbf{y})^T \nabla \mathbf{y} = \mathbf{g}_{\mathbf{n}_0} \quad \text{ a.e. on $\omega$}.
\end{aligned}
\end{equation}
for a fixed $\sigma \in \{ -1,1\}$ and for $\mathbf{g}_{\mathbf{v}_0}$ defined in (\ref{eq:metricDef}). 
\end{lem}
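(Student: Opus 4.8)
The plan is to push the rigidity already established in Lemmas~\ref{wArgLimLemma}--\ref{strengthConLem} through the one energy contribution not yet exploited --- the non-ideal anchoring term --- and then decode the resulting pointwise matrix identities into the three claimed relations. First I would use the anchoring term. Write $\mathbf{F}:=(\nabla\mathbf{y},\mathbf{b})$ for the a.e.-defined limiting matrix field, with columns $\partial_1\mathbf{y},\partial_2\mathbf{y},\mathbf{b}$. The energy bound (\ref{eq:boundedEnergy}) gives $\int_\Omega |\mathbf{P}_{\mathbf{n}_0}(\nabla_h\mathbf{y}_h)^T\mathbf{n}_h|^2\,\dd V \le M h^2/\mu_h$, which tends to $0$ because $h^{-2}\mu_h\to\infty$ in (\ref{eq:muH}). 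At the same time $\nabla_h\mathbf{y}_h\to\mathbf{F}$ strongly in $L^2(\Omega)$ by Lemma~\ref{strengthConLem} and $\mathbf{n}_h\to\mathbf{n}$ strongly in $L^2(\Omega)$ with $|\mathbf{n}_h|\equiv 1$ by Lemma~\ref{weakConLemmaN}, so --- $\mathbf{P}_{\mathbf{n}_0}$ being a fixed $L^\infty$ field --- $\mathbf{P}_{\mathbf{n}_0}(\nabla_h\mathbf{y}_h)^T\mathbf{n}_h\to\mathbf{P}_{\mathbf{n}_0}\mathbf{F}^T\mathbf{n}$ in $L^1(\Omega)$. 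Comparing the two limits forces $\mathbf{P}_{\mathbf{n}_0}\mathbf{F}^T\mathbf{n}=0$ a.e., i.e. $\mathbf{F}^T\mathbf{n}=c\begin{pmatrix}\mathbf{n}_0\\0\end{pmatrix}$ with $c:=\mathbf{n}\cdot\nabla\mathbf{y}\,\mathbf{n}_0$, whose components are $(\nabla\mathbf{y})^T\mathbf{n}=c\,\mathbf{n}_0$ and $\mathbf{b}\cdot\mathbf{n}=0$.

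Next I would decode the rotation constraint. By Lemma~\ref{wArgLimLemma}, $\mathbf{F}=(\boldsymbol{\ell}^{\text{f}}_{\mathbf{n}})^{1/2}\mathbf{R}(\boldsymbol{\ell}^0_{\mathbf{n}_0})^{-1/2}$ with $\mathbf{R}\in SO(3)$ a.e.; since $\det\boldsymbol{\ell}^{\text{f}}_{\mathbf{n}}=\det\boldsymbol{\ell}^0_{\mathbf{n}_0}=1$, this is equivalent to $\det\mathbf{F}=1$ together with $\mathbf{F}^T(\boldsymbol{\ell}^{\text{f}}_{\mathbf{n}})^{-1}\mathbf{F}=(\boldsymbol{\ell}^0_{\mathbf{n}_0})^{-1}$. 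Expanding $(\boldsymbol{\ell}^{\text{f}}_{\mathbf{n}})^{-1}=\lambda_{\text{f}}^{1/2}\mathbf{I}+(\lambda_{\text{f}}^{-1}-\lambda_{\text{f}}^{1/2})\mathbf{n}\otimes\mathbf{n}$ and inserting $\mathbf{F}^T\mathbf{n}=c\begin{pmatrix}\mathbf{n}_0\\0\end{pmatrix}$ from the first step collapses the cross term $(\lambda_{\text{f}}^{-1}-\lambda_{\text{f}}^{1/2})(\mathbf{F}^T\mathbf{n})\otimes(\mathbf{F}^T\mathbf{n})$ into a multiple of $\begin{pmatrix}\mathbf{n}_0\\0\end{pmatrix}\otimes\begin{pmatrix}\mathbf{n}_0\\0\end{pmatrix}$, leaving an identity of the form $\mathbf{F}^T\mathbf{F}=\lambda_{\text{f}}^{-1/2}\lambda_0^{1/2}\,\mathbf{I}+\beta\begin{pmatrix}\mathbf{n}_0\\0\end{pmatrix}\otimes\begin{pmatrix}\mathbf{n}_0\\0\end{pmatrix}$ with the scalar $\beta$ affine in $c^2$. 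Taking determinants and using $\det\mathbf{F}^T\mathbf{F}=1$ then solves for $\beta$ uniquely (and, as a by-product, $c^2=\lambda_{\text{f}}^{-1/2}\lambda_0^{1/2}+\beta=\lambda_0^{-1}\lambda_{\text{f}}$, consistent with $|\mathbf{n}|=1$).

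The three claims then fall out of the block structure of $\mathbf{F}^T\mathbf{F}$: the $(3,3)$ entry gives $|\mathbf{b}|^2=\lambda_{\text{f}}^{-1/2}\lambda_0^{1/2}$; the mixed block gives $(\nabla\mathbf{y})^T\mathbf{b}=0$; and the upper-left $2\times2$ block, after substituting the value of $\beta$, rearranges into $(\nabla\mathbf{y})^T\nabla\mathbf{y}=\lambda_0^{-1}\lambda_{\text{f}}\,\mathbf{n}_0\otimes\mathbf{n}_0+\lambda_{\text{f}}^{-1/2}\lambda_0^{1/2}\,\mathbf{n}_0^\perp\otimes\mathbf{n}_0^\perp=\mathbf{g}_{\mathbf{n}_0}$, the metric constraint. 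Since this forces $\det\big((\nabla\mathbf{y})^T\nabla\mathbf{y}\big)>0$, the midplane deformation $\mathbf{y}$ is an immersion, $\boldsymbol{\nu}_{\mathbf{y}}$ in (\ref{eq:secFund}) is well defined, and $(\nabla\mathbf{y})^T\mathbf{b}=0$ with $\det(\partial_1\mathbf{y},\partial_2\mathbf{y},\mathbf{b})=1>0$ identifies $\mathbf{b}$ as the positive multiple of $\boldsymbol{\nu}_{\mathbf{y}}$ of the length determined above --- the stated formula for $\mathbf{b}$. Finally, $\mathbf{b}\parallel\boldsymbol{\nu}_{\mathbf{y}}$ and $\mathbf{b}\cdot\mathbf{n}=0$ make $\mathbf{n}$ tangent to $\mathbf{y}(\omega)$, so $\mathbf{n}=\nabla\mathbf{y}\,\mathbf{a}$ for some $\mathbf{a}\in\mathbb{R}^2$; then $\mathbf{g}_{\mathbf{n}_0}\mathbf{a}=(\nabla\mathbf{y})^T\mathbf{n}=c\,\mathbf{n}_0$ together with $\mathbf{g}_{\mathbf{n}_0}^{-1}\mathbf{n}_0=\lambda_0\lambda_{\text{f}}^{-1}\mathbf{n}_0$ yields $\mathbf{n}=c\lambda_0\lambda_{\text{f}}^{-1}\nabla\mathbf{y}\,\mathbf{n}_0$, i.e. $\mathbf{n}=\sigma\,\nabla\mathbf{y}\,\mathbf{n}_0/|\nabla\mathbf{y}\,\mathbf{n}_0|$ with $\sigma$ the sign of $c$ (note $c\neq0$, as $|\mathbf{n}|=1$).

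The step I expect to require the most care is upgrading $\sigma$ from an a.e.\ $\{-1,1\}$-valued function to a genuine constant, as the definition of $\mathcal{A}_{\mathbf{n}_0}$ in (\ref{eq:admissibleSet}) demands. For this I would use that $\mathbf{g}_{\mathbf{n}_0}$ has the two \emph{constant} eigenvalues $\lambda_0^{-1}\lambda_{\text{f}}$ and $\lambda_{\text{f}}^{-1/2}\lambda_0^{1/2}$, so that $|\nabla\mathbf{y}|^2=\mathrm{tr}\,\mathbf{g}_{\mathbf{n}_0}$ and $|\nabla\mathbf{y}\,\mathbf{n}_0|^2=\mathbf{n}_0\cdot\mathbf{g}_{\mathbf{n}_0}\mathbf{n}_0=\lambda_0^{-1}\lambda_{\text{f}}$ are positive constants; hence $\nabla\mathbf{y}\in L^\infty(\omega)$, the product $\nabla\mathbf{y}\,\mathbf{n}_0$ lies in $H^1(\omega)\cap L^\infty(\omega)$ (using $\nabla\mathbf{y}\in H^1(\omega)$ from Lemma~\ref{strengthConLem} and $\mathbf{n}_0\in H^1(\omega,\mathbb{S}^1)$), and so does $\mathbf{n}$ by Lemma~\ref{weakConLemmaN}. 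Then $\sigma=\mathbf{n}\cdot\nabla\mathbf{y}\,\mathbf{n}_0/|\nabla\mathbf{y}\,\mathbf{n}_0|$ belongs to $H^1(\omega)$, and since $\sigma^2\equiv1$ the chain rule gives $\sigma\nabla\sigma=0$, hence $\nabla\sigma=0$ a.e., so $\sigma$ is constant on the connected domain $\omega$. Everything else is pointwise linear algebra or a direct reuse of the preceding lemmas; the genuinely new ingredient is the limit passage in the first step, where the scaling $h^{-2}\mu_h\to\infty$ must be married to the strong $L^2$ compactness of $\nabla_h\mathbf{y}_h$ and $\mathbf{n}_h$ to pass to the limit inside a product.
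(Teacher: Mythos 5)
Your argument is correct and follows essentially the same route as the paper: extract the pointwise constraint $\mathbf{P}_{\mathbf{n}_0}(\nabla\mathbf{y},\mathbf{b})^T\mathbf{n}=\mathbf{0}$ from the anchoring term and the strong $L^2$ compactness of Lemmas~\ref{weakConLemmaN} and~\ref{strengthConLem} under the scaling $h^{-2}\mu_h\to\infty$, then combine it with the rotation identity of Lemma~\ref{wArgLimLemma} and decode. The difference is purely organizational: the paper first reduces the constraint to $\mathbf{P}_{\mathbf{n}_0}\mathbf{R}^T\mathbf{n}=\mathbf{0}$ (hence $\mathbf{n}=\sigma\mathbf{R}\begin{pmatrix}\mathbf{n}_0\\0\end{pmatrix}$) and then hands everything off to the packaged equivalence in Lemma~\ref{linAlgLemma}, whereas you rewrite the rotation identity as $\mathbf{F}^T(\boldsymbol{\ell}^{\text{f}}_{\mathbf{n}})^{-1}\mathbf{F}=(\boldsymbol{\ell}^0_{\mathbf{n}_0})^{-1}$ together with $\det\mathbf{F}=1$, insert the anchoring relation $\mathbf{F}^T\mathbf{n}=c\begin{pmatrix}\mathbf{n}_0\\0\end{pmatrix}$, and read off $\mathbf{F}^T\mathbf{F}$ from a determinant calculation --- the same linear algebra done inline rather than modularized. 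Your explicit chain-rule argument that the $H^1(\omega)$, $\{\pm1\}$-valued field $\sigma$ must be constant is a welcome elaboration of a step the paper compresses into a single clause. One small caution: the exponents in the lemma's displayed formula for $\mathbf{b}$ are a typo; your derivation $|\mathbf{b}|^2=\lambda_{\text{f}}^{-1/2}\lambda_0^{1/2}$ agrees with the correct $\mathbf{b}=\lambda_{\text{f}}^{-1/4}\lambda_0^{1/4}\boldsymbol{\nu}_{\mathbf{y}}$ appearing in Proposition~\ref{compactnessProp} and in the statement of Lemma~\ref{linAlgLemma}, so you have in fact proved the corrected statement.
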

\begin{proof}
Given the hypothesis on $\mu_h$ in (\ref{eq:muH}) and the energy bound in (\ref{eq:boundedEnergy}), we have 
\begin{align*}
\int_{\Omega} |\mathbf{P}_{\mathbf{n}_0} (\nabla_h \mathbf{y}_h)^T\mathbf{n}_h|^2 \dd V \leq h^2 \mu_h^{-1} E_{\mathbf{n}_0}^h(\mathbf{y}_h, \mathbf{n}_h) \leq M h^2 \mu_h^{-1} \rightarrow 0 \quad \text{ as } h \rightarrow 0.
\end{align*}
So $\mathbf{P}_{\mathbf{n}_0} (\nabla_h \mathbf{y}_h)^T\mathbf{n}_h \rightarrow \mathbf{0}$ in $L^2(\Omega)$. We also have $(\nabla_h \mathbf{y}_h)^T \rightarrow (\nabla \mathbf{y}, \mathbf{b})^T$ and $\mathbf{n}_h \rightarrow \mathbf{n}$ in $L^2(\Omega)$ by Lemma \ref{strengthConLem}. Thus $\mathbf{P}_{\mathbf{n}_0} (\nabla_h \mathbf{y}_h)^T\mathbf{n}_h \rightarrow \mathbf{P}_{\mathbf{n}_0} (\nabla \mathbf{y}, \mathbf{b})^T \mathbf{n}$ in $L^1(\Omega)$ and the uniqueness of the $L^1$-limit implies $ \mathbf{P}_{\mathbf{n}_0} (\nabla \mathbf{y}, \mathbf{b})^T \mathbf{n} = \mathbf{0}$ a.e. We now combine this equality with that of $\mathbf{R} = ( \boldsymbol{\ell}^{\text{f}}_{\mathbf{n}})^{-1/2} (\nabla \mathbf{y}, \mathbf{b}) ( \boldsymbol{\ell}^{\text{0}}_{\mathbf{n}_0})^{1/2}$ a.e.\;from Lemma \ref{wArgLimLemma} to conclude that $\mathbf{P}_{\mathbf{n}_0} \mathbf{R}^T \mathbf{n} = \mathbf{0}$ a.e. Since $\mathbf{R}$ is the $H^1$ rotation field from Lemma \ref{GhRhLemma} and $\mathbf{n}$ is the $H^1$ unit vector field from Lemma \ref{weakConLemmaN}, this constraint implies that $\mathbf{n} = \sigma \mathbf{R} \begin{pmatrix} \mathbf{n}_0 \\ 0 \end{pmatrix}$ a.e.\;for some fixed $\sigma \in \{ -1,1\}$. To get to the identities in (\ref{eq:anchoringPointwise}) from here is a matter of linear algebra. Lemma \ref{linAlgLemma} in the appendix supplies the result. 
\end{proof}

\subsubsection{Strain and the incompressibility penalty}

As a final point, we establish convergence properties for the sequence of strain measures $\{ \mathbf{S}_h \colon \Omega \rightarrow \mathbb{R}^{3\times3}$\} defined by 
\begin{align*}
\mathbf{S}_h := \frac{1}{h}\Big( \mathbf{R}_h^T (\boldsymbol{\ell}^{\emph{f}}_{\mathbf{n}_h})^{-1/2} \nabla_h \mathbf{y}_h (\boldsymbol{\ell}^{0}_{\mathbf{n}_0})^{1/2} - \mathbf{I}\Big)
\end{align*}
which are key to the lowerbound argument. This step is the first in our compactness analysis to make use of the incompressibility penalty. 
\begin{lem}\label{ShLemma}
$\mathbf{S}_h \rightharpoonup \mathbf{S}$ in $L^2(\Omega, \mathbb{R}^{3\times3})$ for some tensor field $\mathbf{S}$ that satisfies $\emph{Tr}(\mathbf{S}) = 0$ a.e.
\end{lem}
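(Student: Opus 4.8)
\emph{Proof plan.} The plan is to separate the statement into two parts: a uniform $L^2(\Omega)$ bound on $\mathbf{S}_h$, which produces a weakly convergent subsequence $\mathbf{S}_h \rightharpoonup \mathbf{S}$; and the trace identity $\text{Tr}(\mathbf{S}) = 0$, which is the point at which the incompressibility penalty is finally brought to bear.

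For the bound, observe that by the definition of $\mathbf{S}_h$ and the orthogonality of $\mathbf{R}_h$ one has $h\mathbf{S}_h = \mathbf{R}_h^T\big( (\boldsymbol{\ell}^{\text{f}}_{\mathbf{n}_h})^{-1/2}\nabla_h\mathbf{y}_h(\boldsymbol{\ell}^{0}_{\mathbf{n}_0})^{1/2} - \mathbf{R}_h\big)$, whence $|h\mathbf{S}_h| = |(\boldsymbol{\ell}^{\text{f}}_{\mathbf{n}_h})^{-1/2}\nabla_h\mathbf{y}_h(\boldsymbol{\ell}^{0}_{\mathbf{n}_0})^{1/2} - \mathbf{R}_h|$ pointwise. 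Using the triangle inequality, the fact that $\mathbf{R}_h$ and $\mathbf{G}_h$ do not depend on $x_3$, and the first inequality in (\ref{eq:tenRotInequlities}), one gets $h^2\int_\Omega|\mathbf{S}_h|^2\,\dd V \leq 2\int_\Omega|(\boldsymbol{\ell}^{\text{f}}_{\mathbf{n}_h})^{-1/2}\nabla_h\mathbf{y}_h(\boldsymbol{\ell}^{0}_{\mathbf{n}_0})^{1/2} - \mathbf{G}_h|^2\,\dd V + 2\int_\omega|\mathbf{G}_h - \mathbf{R}_h|^2\,\dd A \leq Ch^2$, so that $\{\mathbf{S}_h\}$ is bounded in $L^2(\Omega,\mathbb{R}^{3\times3})$ and a subsequence (not relabeled) satisfies $\mathbf{S}_h \rightharpoonup \mathbf{S}$ there.

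The key algebraic point for the trace is that $\det\nabla_h\mathbf{y}_h = \det(\mathbf{I} + h\mathbf{S}_h)$: indeed $(\boldsymbol{\ell}^{\text{f}}_{\mathbf{n}_h})^{-1/2}\nabla_h\mathbf{y}_h(\boldsymbol{\ell}^{0}_{\mathbf{n}_0})^{1/2} = \mathbf{R}_h(\mathbf{I}+h\mathbf{S}_h)$ by definition, while $\det(\boldsymbol{\ell}^{\text{f}}_{\mathbf{n}_h}) = \det(\boldsymbol{\ell}^{0}_{\mathbf{n}_0}) = \det\mathbf{R}_h = 1$. The energy bound (\ref{eq:boundedEnergy}) then yields $\|\det(\mathbf{I}+h\mathbf{S}_h) - 1\|_{L^2(\Omega)}^2 \leq M h^2/\kappa_h$, so, since $\kappa_h \to \infty$ by (\ref{eq:kappaH}), $\tfrac{1}{h}\big(\det(\mathbf{I}+h\mathbf{S}_h)-1\big) \to 0$ strongly in $L^2(\Omega)$. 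Writing $g(\mathbf{A}) := \det(\mathbf{I}+\mathbf{A}) - 1 - \text{Tr}(\mathbf{A})$, the identity (\ref{eq:detIdent}) together with the estimates (\ref{eq:detEsts}) gives $|g(\mathbf{A})| \leq C(|\mathbf{A}|^2 + |\mathbf{A}|^3)$; in particular $|g(\mathbf{A})| \leq C|\mathbf{A}|^2$ whenever $|\mathbf{A}| \leq 1$. Since $\text{Tr}(\mathbf{S}_h) = \tfrac{1}{h}\big(\det(\mathbf{I}+h\mathbf{S}_h)-1\big) - \tfrac{1}{h}g(h\mathbf{S}_h)$, it suffices to prove that $\tfrac{1}{h}g(h\mathbf{S}_h) \to 0$ in $L^1(\Omega)$.

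This last step is the main obstacle, because $\mathbf{S}_h$ is controlled only in $L^2$, so the cubic term in $g$ is not obviously integrable. I would resolve it by splitting $\Omega$ into the small-strain set $B_h := \{\, h|\mathbf{S}_h| \leq 1 \,\}$ and its complement $B_h^c$. On $B_h$ one has $\tfrac{1}{h}|g(h\mathbf{S}_h)| \leq C h|\mathbf{S}_h|^2$, whose $L^1(\Omega)$ norm is $O\big(h\|\mathbf{S}_h\|_{L^2(\Omega)}^2\big) = O(h)$. On $B_h^c$ I would not expand $g$, but rather estimate $\big\|\tfrac{1}{h}g(h\mathbf{S}_h)\indicator{B_h^c}\big\|_{L^1(\Omega)} \leq \big\|\tfrac{1}{h}(\det(\mathbf{I}+h\mathbf{S}_h)-1)\big\|_{L^1(\Omega)} + \big\|\text{Tr}(\mathbf{S}_h)\indicator{B_h^c}\big\|_{L^1(\Omega)}$; the first term vanishes by the $L^2$ bound already obtained, and the second is $O\big(|B_h^c|^{1/2}\|\mathbf{S}_h\|_{L^2(\Omega)}\big)$ by Cauchy--Schwarz, hence $O(h)$ because Chebyshev's inequality gives $|B_h^c| = |\{\,|\mathbf{S}_h|>1/h\,\}| \leq h^2\|\mathbf{S}_h\|_{L^2(\Omega)}^2 = O(h^2)$. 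Combining the two sets, $\text{Tr}(\mathbf{S}_h) \to 0$ in $L^1(\Omega)$, hence in $\mathcal{D}'(\Omega)$; since also $\text{Tr}(\mathbf{S}_h) \rightharpoonup \text{Tr}(\mathbf{S})$ in $L^2(\Omega)$, uniqueness of the distributional limit forces $\text{Tr}(\mathbf{S}) = 0$ a.e.
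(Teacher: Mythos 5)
Your proof is correct and establishes the result via a genuinely different route than the paper. You share the opening moves with the paper: the uniform $L^2(\Omega)$ bound on $\mathbf{S}_h$ via Lemma~\ref{tensorEstsLemma}, the identity $\det\nabla_h\mathbf{y}_h = \det(\mathbf{I}+h\mathbf{S}_h)$ from $\det(\boldsymbol{\ell}^{\text{f}}_{\mathbf{n}_h}) = \det(\boldsymbol{\ell}^{0}_{\mathbf{n}_0}) = 1$, and the deduction from (\ref{eq:boundedEnergy}) that $\|\det(\mathbf{I}+h\mathbf{S}_h)-1\|_{L^2(\Omega)}^2 \leq Mh^2/\kappa_h$. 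Where you diverge is in how you pass from this to $\text{Tr}(\mathbf{S}) = 0$. The paper truncates at the smaller scale $|\mathbf{S}_h| \leq h^{-1/2}$ so that $h\mathbf{S}_h \to 0$ uniformly on the good set, invokes the \emph{qualitative} Taylor expansion $(\det(\mathbf{I}+\mathbf{A})-1)^2 = \text{Tr}(\mathbf{A})^2 + o(|\mathbf{A}|^2)$ there, and then relies on the weak convergence $\chi_h\mathbf{S}_h \rightharpoonup \mathbf{S}$ together with convexity of $\mathbf{A}\mapsto\text{Tr}(\mathbf{A})^2$ to pass to the $\liminf$ and conclude $\int_{\Omega}\text{Tr}(\mathbf{S})^2\,\dd V = 0$. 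You instead truncate at the natural scale $h|\mathbf{S}_h|\leq 1$ where the \emph{quantitative} remainder estimate $|g(h\mathbf{S}_h)| \leq C|h\mathbf{S}_h|^2$ (from (\ref{eq:detEsts})) applies, control the bad set directly by Cauchy--Schwarz and Chebyshev, and show the strictly stronger statement that $\text{Tr}(\mathbf{S}_h) \to 0$ \emph{strongly} in $L^1(\Omega)$, after which the conclusion follows by matching distributional limits. Your argument is somewhat more elementary in that it bypasses the weak lower semicontinuity/convexity machinery, and the strong $L^1$ convergence of the trace is a cleaner intermediate fact; the cost is that you must carry the explicit polynomial bounds on $g$, which the paper's soft $o(\cdot)$ bookkeeping avoids. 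Both are valid; yours buys a sharper conclusion for about the same effort.
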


\begin{proof}
As $\mathbf{R}_h$ is a rotation field, first observe that 
\begin{align*}
\int_{\Omega} |\mathbf{S}_h|^2 \dd V &= \int_{\Omega} h^{-2} | \mathbf{R}_h - (\boldsymbol{\ell}^{\emph{f}}_{\mathbf{n}_h})^{-1/2} \nabla_h \mathbf{y}_h (\boldsymbol{\ell}^{0}_{\mathbf{n}_0})^{1/2} |^2 \dd V \\
&\leq \frac{2}{h^2} \Big( \int_{\omega} | \mathbf{R}_h - \mathbf{G}_h|^2 \dd A+ \int_{\Omega} |\mathbf{G}_h - (\boldsymbol{\ell}^{\emph{f}}_{\mathbf{n}_h})^{-1/2} \nabla_h \mathbf{y}_h (\boldsymbol{\ell}^{0}_{\mathbf{n}_0})^{1/2}|^2 \dd V \Big) 
\end{align*}
It follows from Lemma \ref{tensorEstsLemma} that $\int_{\Omega} |\mathbf{S}_h|^2 \dd V \leq C$ and thus $\mathbf{S}_h \rightharpoonup \mathbf{S}$ in $L^2(\Omega)$. 

That $\mathbf{S}$ has zero trace follows from a careful Taylor expansion of the approximate incompressibility energy. 
First observe that
\begin{equation}
\begin{aligned}\label{eq:detIdent2}
 \det \nabla_h \mathbf{y}_h & = \det \big[ (\boldsymbol{\ell}^{\emph{f}}_{\mathbf{n}_h})^{-1/2} \nabla_h \mathbf{y}_h (\boldsymbol{\ell}^{0}_{\mathbf{n}_0})^{1/2} \big]= \det( \mathbf{I} + h \mathbf{S}_h) \quad \text{ a.e. on $\Omega$}
\end{aligned}
\end{equation}
 since the step length tensors satisfy $\det( \boldsymbol{\ell}_{\mathbf{v}}^{\text{f}}) = \det( \boldsymbol{\ell}_{\mathbf{v}_0}^{\text{0}}) = 1$. 
We aim to apply the Taylor expansion identity in (\ref{eq:detIdent}) on $\det( \mathbf{I} + h \mathbf{S}_h)$, but can only do so on a set for which $\mathbf{S}_h$ is bounded. Consider the characteristic function $\chi_h \colon \Omega \rightarrow \{ 0,1\}$ defined by 
 \begin{equation}
 \begin{aligned}\label{eq:chiHDef}
 \chi_h := \begin{cases}
 1 & \text{ if } |\mathbf{S}_h| \leq h^{-1/2} \\
 0 & \text{ otherwise},
 \end{cases}
 \end{aligned}
 \end{equation}
It follows from Chebychev's inequality that $|\Omega| - |\chi_h(\Omega)| \leq h \int_{\Omega} |\mathbf{S}_h|^2 dV \leq Ch \rightarrow 0$ and thus $\chi_h \rightarrow 1$ boundedly in measure. This fact, combined with weak convergence of $\mathbf{S}_h$, allows us to conclude that 
\begin{align}\label{eq:TaylorWeak}
 \chi_h \mathbf{S}_h \rightharpoonup \mathbf{S} \quad \text{ in $L^2(\Omega)$}.
\end{align}
The energy bound in (\ref{eq:boundedEnergy}) and the pointwise identity in (\ref{eq:detIdent2}) furnish
 \begin{align*}
 M \kappa_h^{-1} & \geq E_{\mathbf{n}_0}^h(\mathbf{y}_h, \mathbf{n}_h) \geq \int_{\Omega} \frac{\chi_h}{h^2} ( \det \nabla_h \mathbf{y}_h - 1)^2 \dd V \\
 &= \int_{\Omega} \frac{\chi_h}{h^2} ( \det [\mathbf{I} + h \mathbf{S}_h] - 1)^2 \dd V =\int_{\Omega} \chi_h \big\{ \text{Tr}(\mathbf{S}_h)^2 + o(|\mathbf{S}_h|^2) \big\} \dd V
 \end{align*}
where the last equality holds since $\mathbf{S}_h$ is bounded when $\chi_h =1$.
As $\chi_h \text{Tr}(\mathbf{S}_h)^2 = \text{Tr}(\chi_h \mathbf{S}_h)^2$ and $\int_{\Omega} \chi_h o(|\mathbf{S}_h|^2) dV = o\big( \int_{\Omega} \chi_h |\mathbf{S}_h|^2 dV\big) \rightarrow 0$, we take the limit of the last set of inequalities to obtain
\begin{align*}
0 = \liminf_{h\rightarrow 0} M \kappa_h^{-1} \geq \liminf_{h\rightarrow 0} \Big\{ \int_{\Omega} \text{Tr}(\chi_h \mathbf{S}_h)^2 \dd V + o\Big( \int_{\Omega} \chi_h |\mathbf{S}_h|^2 \dd V\Big) \Big\} \geq \int_{\Omega} \text{Tr}( \mathbf{S})^2 \dd V
\end{align*}
since $\kappa_h \rightarrow \infty$ from (\ref{eq:kappaH}), and by the convexity of $\text{Tr}(\mathbf{A})^2$ and convergence in (\ref{eq:TaylorWeak}). We conclude that $\text{Tr}(\mathbf{S}) = 0$ a.e. 
\end{proof}

\subsubsection{Summary on compactness.} For convenience, we now summarize all the compactness results established by Lemmas \ref{weakConLemmaN}-\ref{ShLemma} above. 

\begin{prop}\label{compactnessProp}
For any $\mathbf{n}_0 \in H^1(\omega, \mathbb{S}^1)$ and any sequence $\{ (\mathbf{y}_h, \mathbf{n}_h)\} \subset H^1(\Omega, \mathbb{R}^3) \times H^1(\Omega, \mathbb{S}^2)$ such that $\liminf_{h \rightarrow 0} E_{\mathbf{n}_0}^h(\mathbf{y}_h, \mathbf{n}_h) <\infty$, there is a subsequence (not relabeled) such that 
\begin{align*}
&\mathbf{y}_h - \frac{1}{|\Omega|}\int_{\Omega} \mathbf{y}_h \dd V \rightarrow \mathbf{y} \text{ in } H^1(\Omega) && \text{ for $\mathbf{y}$ a vector field on $\mathbb{R}^3$ independent of $x_3$ and in $H^2(\omega)$} \\
&h^{-1} \partial_3 \mathbf{y}_h \rightarrow \mathbf{b} \text{ in } L^2(\Omega) && \text{ for $\mathbf{b}$ a vector field on $\mathbb{R}^3$ independent of $x_3$ and in $H^1(\omega)$} , \\
&\mathbf{n}_h \rightharpoonup \mathbf{n} \text{ in } H^1(\Omega) && \text{ for $\mathbf{n}$ a unit vector field on $\mathbb{S}^2$ independent of $x_3$ }, \\
& (\boldsymbol{\ell}^{\emph{f}}_{\mathbf{n}_h})^{-1/2} \nabla_h \mathbf{y}_h (\boldsymbol{\ell}^{0}_{\mathbf{n}_0})^{1/2} \rightarrow \mathbf{R} \text{ in } H^1(\Omega) && \text{ for $\mathbf{R}$ a rotation field on $SO(3)$ independent of $x_3$ and in $H^1(\omega)$ } \\
&h^{-1} \partial_3 \mathbf{n}_h \rightharpoonup \boldsymbol{\tau} \text{ in } L^2(\Omega) && \text{ for $\boldsymbol{\tau}$ a vector field on $\mathbb{R}^3$ such that $\boldsymbol{\tau} \cdot \mathbf{n} = 0$ a.e.} 
\end{align*}
The limiting fields $\mathbf{n}, \mathbf{R}, \mathbf{y}$ and $\mathbf{b}$ are also coupled via the identities 
\begin{equation}
\begin{aligned}\label{eq:identsOnPlanarFields}
\mathbf{n} = \sigma \mathbf{R} \begin{pmatrix} \mathbf{n}_0 \\ 0 \end{pmatrix} = \sigma \frac{\nabla \mathbf{y} \mathbf{n}_0}{|\nabla \mathbf{y} \mathbf{n}_0|}, \quad \mathbf{R} = (\boldsymbol{\ell}_{\mathbf{n}}^{\emph{f}})^{-1/2} (\nabla \mathbf{y}, \mathbf{b}) (\boldsymbol{\ell}_{\mathbf{n}_0}^0)^{1/2}, \quad (\nabla \mathbf{y})^T \nabla \mathbf{y} = \mathbf{g}_{\mathbf{n}_0}, \quad \mathbf{b} = \lambda_{\emph{f}}^{-1/4} \lambda_0^{1/4} \boldsymbol{\nu}_{\mathbf{y}} \quad 
\end{aligned}
\end{equation}
 a.e.\;on $\omega$. In addition, there is a sequence of rotation fields $\{ \mathbf{R}_h \colon \omega \rightarrow SO(3)\}$ and a strain measure $\mathbf{S}_h := \frac{1}{h}\big( \mathbf{R}_h^T (\boldsymbol{\ell}^{\emph{f}}_{\mathbf{n}_h})^{-1/2} \nabla_h \mathbf{y}_h (\boldsymbol{\ell}^{0}_{\mathbf{n}_0})^{1/2} - \mathbf{I})$ such that 
 \begin{align*}
& \mathbf{R}_h \rightarrow \mathbf{R} \text{ in } L^2(\omega) &&\text{ for $\mathbf{R}$ the rotation field above} \\
 &\mathbf{S}_h \rightharpoonup \mathbf{S} \text{ in } L^2(\Omega) &&\text{ for $\mathbf{S}$ a tensor field on $\mathbb{R}^{3\times3}$ with $\emph{Tr}(\mathbf{S}) = 0$ a.e.} 
 \end{align*}
\end{prop}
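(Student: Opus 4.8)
The plan is straightforward: Proposition~\ref{compactnessProp} is nothing more than a compilation of the convergences and identities established one at a time in Lemmas~\ref{weakConLemmaN}--\ref{ShLemma}, so the proof consists in invoking each of them in turn along a common subsequence. First I would use the hypothesis $\liminf_{h\to 0} E^h_{\mathbf{n}_0}(\mathbf{y}_h,\mathbf{n}_h)<\infty$ to pass to a subsequence (not relabeled) along which the uniform bound $E^h_{\mathbf{n}_0}(\mathbf{y}_h,\mathbf{n}_h)\le M$ of~(\ref{eq:boundedEnergy}) holds. Every subsequent lemma requires only this bound, and since there are only countably many of them, a nested-subsequence argument keeps all the stated convergences on a single subsequence.

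With the energy bound fixed, the body of the argument is a sequence of citations. Lemma~\ref{weakConLemmaN} furnishes $\mathbf{n}_h\rightharpoonup\mathbf{n}$ in $H^1(\Omega,\mathbb{S}^2)$ with $\mathbf{n}$ independent of $x_3$, the weak limit $h^{-1}\partial_3\mathbf{n}_h\rightharpoonup\boldsymbol{\tau}$ in $L^2$ with $\boldsymbol{\tau}\cdot\mathbf{n}=0$, and the strong convergence $(\boldsymbol{\ell}^{\text{f}}_{\mathbf{n}_h})^{\pm1/2}\to(\boldsymbol{\ell}^{\text{f}}_{\mathbf{n}})^{\pm1/2}$ in $L^2$. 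Lemma~\ref{weakConLemmaY} gives the weak $H^1(\Omega)$ convergence of $\mathbf{y}_h-|\Omega|^{-1}\int_\Omega\mathbf{y}_h\,\dd V$ to an $x_3$-independent $\mathbf{y}$ together with $h^{-1}\partial_3\mathbf{y}_h\rightharpoonup\mathbf{b}$ in $L^2$. Lemmas~\ref{tensorEstsLemma} and~\ref{GhRhLemma} build, via Lemma~\ref{keyCompactLemma} and the pointwise projection $\boldsymbol{\pi}_{\text{ext}}$, the sequence of rotation fields $\{\mathbf{R}_h\}$ with $\mathbf{R}_h\to\mathbf{R}$ in $L^2(\omega)$ and $\mathbf{G}_h\rightharpoonup\mathbf{R}$ in $H^1(\omega)$ for a common $\mathbf{R}\in H^1(\omega,SO(3))$. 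Lemma~\ref{wArgLimLemma} then upgrades this to the strong convergence $(\boldsymbol{\ell}^{\text{f}}_{\mathbf{n}_h})^{-1/2}\nabla_h\mathbf{y}_h(\boldsymbol{\ell}^{0}_{\mathbf{n}_0})^{1/2}\to\mathbf{R}$ in $H^1(\Omega)$ and identifies $\mathbf{R}=(\boldsymbol{\ell}^{\text{f}}_{\mathbf{n}})^{-1/2}(\nabla\mathbf{y},\mathbf{b})(\boldsymbol{\ell}^{0}_{\mathbf{n}_0})^{1/2}$; Lemma~\ref{strengthConLem} promotes the two convergences of Lemma~\ref{weakConLemmaY} to strong ones and yields the extra regularity $\nabla\mathbf{y}\in H^1(\omega)$, hence $\mathbf{y}\in H^2(\omega)$, and $\mathbf{b}\in H^1(\omega)$ independent of $x_3$. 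The lemma establishing~(\ref{eq:anchoringPointwise}) supplies the algebraic couplings $\mathbf{n}=\sigma\mathbf{R}\begin{pmatrix}\mathbf{n}_0\\0\end{pmatrix}=\sigma\nabla\mathbf{y}\mathbf{n}_0/|\nabla\mathbf{y}\mathbf{n}_0|$, $\mathbf{b}=\lambda_{\text{f}}^{-1/4}\lambda_0^{1/4}\boldsymbol{\nu}_{\mathbf{y}}$, and $(\nabla\mathbf{y})^T\nabla\mathbf{y}=\mathbf{g}_{\mathbf{n}_0}$, i.e., all of~(\ref{eq:identsOnPlanarFields}). Finally Lemma~\ref{ShLemma} gives $\mathbf{S}_h\rightharpoonup\mathbf{S}$ in $L^2(\Omega)$ with $\text{Tr}(\mathbf{S})=0$.

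Since there is no genuinely new content at this stage, the only points requiring attention are bookkeeping ones: checking that the constants in the various lemmas depend only on the data listed in the statement, and observing that the harvested facts are precisely the membership $(\mathbf{y},\mathbf{n})\in\mathcal{A}_{\mathbf{n}_0}$ as defined in~(\ref{eq:admissibleSet})---namely $\mathbf{y}$ independent of $x_3$, in $H^2(\omega)$, satisfying the metric constraint, together with $\mathbf{n}=\sigma\nabla\mathbf{y}\mathbf{n}_0/|\nabla\mathbf{y}\mathbf{n}_0|$. The ``main obstacle,'' such as it is, is therefore purely organizational: threading a single subsequence through all the lemmas while matching their (identical) hypotheses. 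If one wishes to single out the subtlest ingredient among the assembled lemmas, it is the zero-trace conclusion $\text{Tr}(\mathbf{S})=0$ of Lemma~\ref{ShLemma}, which hinged on the truncation $\chi_h$ and the Taylor identity~(\ref{eq:detIdent}); but that argument is already complete, so nothing further is needed here.
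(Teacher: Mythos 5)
Your proposal is correct and follows exactly the paper's own route: Proposition~\ref{compactnessProp} is explicitly labelled a ``Summary on compactness,'' and its proof consists precisely in fixing a subsequence with the uniform bound~(\ref{eq:boundedEnergy}) and then invoking Lemmas~\ref{weakConLemmaN}--\ref{ShLemma} in turn. One small caveat worth noting is that Lemma~\ref{wArgLimLemma} establishes the convergence of $(\boldsymbol{\ell}^{\text{f}}_{\mathbf{n}_h})^{-1/2}\nabla_h\mathbf{y}_h(\boldsymbol{\ell}^{0}_{\mathbf{n}_0})^{1/2}$ to $\mathbf{R}$ only in $L^2(\Omega)$, not in $H^1(\Omega)$ as you (and the proposition's own statement, apparently by typographical slip) write; the $H^1(\omega)$ regularity belongs to the limit $\mathbf{R}$ via Lemma~\ref{GhRhLemma}, not to the mode of convergence of the sequence.
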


\begin{rem}
The limiting fields in Proposition \ref{compactnessProp} 
satisfy $(\mathbf{y}, \mathbf{n}) \in \mathcal{A}_{\mathbf{n}_0}$ for $\mathcal{A}_{\mathbf{n}_0}$  in (\ref{eq:admissibleSet}).
\end{rem}
\section{Lowerbound}
 
This section completes the proof of Theorem \ref{MainTheorem} by establishing the lowerbound result. To begin, let $\mathbf{n}_0 \in H^1(\omega, \mathbb{S}^1)$ and let $\{ (\mathbf{y}_h, \mathbf{n}_h)\} \subset H^1(\Omega, \mathbb{R}^3) \times H^1(\Omega, \mathbb{S}^2)$ be a sequence such that $(\mathbf{y}_h , \mathbf{n}_h) \rightharpoonup (\mathbf{y}, \mathbf{n})$ in $H^1(\Omega, \mathbb{R}^3) \times H^1(\Omega, \mathbb{S}^2)$ and $\liminf_{h \rightarrow 0} E_{\mathbf{n}_0}^h(\mathbf{y}_h, \mathbf{n}_h) < \infty$. From the results of the prior section, we have $(\mathbf{y}, \mathbf{n}) \in \mathcal{A}_{\mathbf{n}_0}$. In addition, we extract a subsequence $\{(\mathbf{y}_h, \mathbf{n}_h) \}$ (not relabeled), and a sequence of rotation fields $\{ \mathbf{R}_h \colon \omega \rightarrow SO(3)\}$ and strain fields $\{ \mathbf{S}_h := \frac{1}{h}\big( \mathbf{R}_h^T (\boldsymbol{\ell}^{\emph{f}}_{\mathbf{n}_h})^{-1/2} \nabla_h \mathbf{y}_h (\boldsymbol{\ell}^{0}_{\mathbf{n}_0})^{1/2} - \mathbf{I})\}$ built from this subsequence which satisfy all the convergence properties stated in Proposition \ref{compactnessProp}. We break the rest of the proof up into two subsections. The first uses arguments similar to those in \cite{bhattacharya2016plates, friesecke2002theorem,lewicka2011scaling} to establish an initial lower bound in Lemma \ref{firstLiminfLemma} and identify the limit $\mathbf{S}_h\rightharpoonup \mathbf{S}$ in Lemma \ref{identifySLemma} as a function of other limiting quantities in Proposition \ref{compactnessProp}. The second refines this lower bound by optimizing over the free degrees of freedom contained in $\mathbf{S}$, which in turn completes the proof.

\subsection{Asymptotics and identification of the limiting strain.}

Here we   develop a lowerbound on the limit of the rescaled energy through a Taylor expansion argument, and identify $\mathbf{S}$ through analysis of the finite difference quotient of the sequence $\{ \mathbf{y}_h\}$ along the $x_3$-coordinate. 

\subsubsection{Asymptotics} Our first result concerns an initial estimate on the energy of this subsequence as $h \rightarrow 0$. 
\begin{lem}\label{firstLiminfLemma}
The energy of the subsequence satisfies
\begin{equation}
\begin{aligned}\label{eq:firstLiminf}
\liminf_{h \rightarrow 0} E_{\mathbf{n}_0}^h(\mathbf{y}_h, \mathbf{n}_h) \geq \int_{\Omega}\Big\{\frac{1}{2} Q_3(\mathbf{S}) + \gamma( |\nabla \mathbf{n}|^2 + |\boldsymbol{\tau}|^2) \Big\} \dd V.
\end{aligned}
\end{equation}
\end{lem}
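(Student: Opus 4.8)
The plan is to bound $E_{\mathbf{n}_0}^h$ from below by the sum of its entropic and Frank terms — the incompressibility and anchoring terms being non-negative — and to treat these two contributions separately.

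For the Frank term, decompose $|\nabla_h\mathbf{n}_h|^2 = |\nabla\mathbf{n}_h|^2 + |h^{-1}\partial_3\mathbf{n}_h|^2$ and recall from Proposition~\ref{compactnessProp} that $\nabla\mathbf{n}_h\rightharpoonup\nabla\mathbf{n}$ and $h^{-1}\partial_3\mathbf{n}_h\rightharpoonup\boldsymbol{\tau}$ in $L^2(\Omega)$, while $h^{-2}\gamma_h\to\gamma>0$ by (\ref{eq:gammaH}). Since $\int_\Omega|\nabla\mathbf{n}_h|^2\dd V$ and $\int_\Omega|h^{-1}\partial_3\mathbf{n}_h|^2\dd V$ are non-negative with $\liminf$s at least $\int_\Omega|\nabla\mathbf{n}|^2\dd V$ and $\int_\Omega|\boldsymbol{\tau}|^2\dd V$ respectively (weak lower semicontinuity of the $L^2$-norm squared), one obtains $\liminf_{h\to0}h^{-2}\gamma_h\int_\Omega|\nabla_h\mathbf{n}_h|^2\dd V \geq \gamma\int_\Omega(|\nabla\mathbf{n}|^2+|\boldsymbol{\tau}|^2)\dd V$.

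For the entropic term, I would first rewrite its integrand via the rotation and strain fields from Proposition~\ref{compactnessProp} (with $\mathbf{R}_h$ regarded as independent of $x_3$ on $\Omega$): the definition of $\mathbf{S}_h$ gives $(\boldsymbol{\ell}^{\text{f}}_{\mathbf{n}_h})^{-1/2}\nabla_h\mathbf{y}_h(\boldsymbol{\ell}^{0}_{\mathbf{n}_0})^{1/2} = \mathbf{R}_h(\mathbf{I}+h\mathbf{S}_h)$, so frame indifference (\ref{eq:hyperelastic}) yields $W\big((\boldsymbol{\ell}^{\text{f}}_{\mathbf{n}_h})^{-1/2}\nabla_h\mathbf{y}_h(\boldsymbol{\ell}^{0}_{\mathbf{n}_0})^{1/2}\big) = W(\mathbf{I}+h\mathbf{S}_h)$. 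Because $W\geq0$, restricting the integral to the truncation set $\{\chi_h=1\}$, with $\chi_h$ as in (\ref{eq:chiHDef}), only lowers it; on that set $|h\mathbf{S}_h|\leq h^{1/2}$, so (\ref{eq:TaylorW}) provides a modulus $\varpi$ with $\varpi(t)\to0$ as $t\to0^+$ for which $\big|W(\mathbf{I}+h\mathbf{S}_h)-\tfrac12 h^2 Q_3(\mathbf{S}_h)\big|\leq\varpi(h^{1/2})\,h^2|\mathbf{S}_h|^2$ pointwise there. Hence
\[
\frac{1}{h^2}\int_\Omega W\big((\boldsymbol{\ell}^{\text{f}}_{\mathbf{n}_h})^{-1/2}\nabla_h\mathbf{y}_h(\boldsymbol{\ell}^{0}_{\mathbf{n}_0})^{1/2}\big)\dd V \;\geq\; \int_\Omega\tfrac12\chi_h Q_3(\mathbf{S}_h)\dd V \;-\; \varpi(h^{1/2})\int_\Omega|\mathbf{S}_h|^2\dd V,
\]
and the last term vanishes as $h\to0$ since $\{\mathbf{S}_h\}$ is bounded in $L^2(\Omega)$ by Lemma~\ref{ShLemma}. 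Using $\chi_h^2=\chi_h$ and the quadratic homogeneity of $Q_3$, $\int_\Omega\chi_h Q_3(\mathbf{S}_h)\dd V = \int_\Omega Q_3(\chi_h\mathbf{S}_h)\dd V$; and since $\chi_h\mathbf{S}_h\rightharpoonup\mathbf{S}$ in $L^2(\Omega)$ by (\ref{eq:TaylorWeak}) and $Q_3$ is a convex (positive semidefinite) quadratic form, weak lower semicontinuity gives $\liminf_{h\to0}\tfrac12\int_\Omega Q_3(\chi_h\mathbf{S}_h)\dd V\geq\tfrac12\int_\Omega Q_3(\mathbf{S})\dd V$. Combining this with the Frank bound, and using that the $\liminf$ of a sum dominates the sum of the $\liminf$s, yields (\ref{eq:firstLiminf}).

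I expect the one genuinely delicate step to be the control of the Taylor remainder in the entropic term: because $\mathbf{S}_h$ is bounded only in $L^2(\Omega)$, the quadratic approximation of $W$ near $SO(3)$ cannot be invoked pointwise, and the truncation $\chi_h$ — already constructed in the compactness proof, where $\chi_h\mathbf{S}_h\rightharpoonup\mathbf{S}$ is established — is what confines the estimate to a region where $h\mathbf{S}_h$ is uniformly small while losing nothing in the limit. The remaining ingredients — discarding non-negative terms, frame indifference, convexity and weak lower semicontinuity of $Q_3$, and the arithmetic of $\liminf$ — are routine.
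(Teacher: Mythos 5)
Your proposal is correct and follows essentially the same route as the paper: discard the non-negative incompressibility and anchoring terms, handle the Frank term by weak lower semicontinuity together with $h^{-2}\gamma_h\to\gamma$, and handle the entropic term by frame indifference, the truncation $\chi_h$ from (\ref{eq:chiHDef}), a quantified Taylor remainder on $\{\chi_h=1\}$ where $|h\mathbf{S}_h|\leq h^{1/2}$, the identity $\chi_h Q_3(\mathbf{S}_h)=Q_3(\chi_h\mathbf{S}_h)$, and convexity of $Q_3$ with the weak convergence (\ref{eq:TaylorWeak}). Your rephrasing of the paper's $o(|\mathbf{S}_h|^2)$ notation via an explicit monotone modulus $\varpi(h^{1/2})$ multiplying $|\mathbf{S}_h|^2$, controlled by the uniform $L^2$ bound on $\mathbf{S}_h$, makes the remainder estimate slightly more transparent but is the same idea.
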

\begin{proof}
Let's first observe that 
\begin{equation}
\begin{aligned}\label{eq:discard1}
E_{\mathbf{n}_0}^h(\mathbf{y}_h, \mathbf{n}_h) \geq \int_{\Omega} \Big\{\frac{1}{h^2} W\big( (\boldsymbol{\ell}^{\emph{f}}_{\mathbf{n}_h})^{-1/2} \nabla_h \mathbf{y}_h (\boldsymbol{\ell}^{0}_{\mathbf{n}_0})^{1/2}\big) + \frac{\gamma_h}{h^2} |\nabla_h \mathbf{n}_h|^2 \Big\} \dd V 
\end{aligned}
\end{equation}
after discarding the non-negative incompressibility and director anchoring terms from the energy. Since $h^{-2} \gamma_h \rightarrow \gamma$ and $\nabla_h \mathbf{n}_h \rightharpoonup (\nabla \mathbf{n}, \boldsymbol{\tau})$ in $L^2(\Omega)$, the Frank term in this bound satisfies 
\begin{equation}
\begin{aligned}\label{eq:FrankLiminf}
\liminf_{h \rightarrow 0} \int_{\Omega} \frac{\gamma_h}{h^2} |\nabla_h \mathbf{n}_h|^2 \dd V& \geq \liminf_{h \rightarrow 0} \Big( \int_{\Omega} \gamma |\nabla_h \mathbf{n}_h|^2\dd V + (\frac{\gamma_h}{h^2} - \gamma) \| \nabla_h \mathbf{n}_h\|_{L^2(\Omega)}^2 \Big)\\
&\geq \liminf_{h \rightarrow 0} \int_{\Omega} \gamma |\nabla_h \mathbf{n}_h|^2\dd V \geq \int_{\Omega} \gamma \big( |\nabla \mathbf{n}|^2 + |\boldsymbol{\tau}|^2\big) \dd V.
\end{aligned}
\end{equation}
The entropic energy density meanwhile satisfies 
\begin{equation}
\begin{aligned}\label{eq:WFrameIdent}
W\big( (\boldsymbol{\ell}^{\emph{f}}_{\mathbf{n}_h})^{-1/2} \nabla_h \mathbf{y}_h (\boldsymbol{\ell}^{0}_{\mathbf{n}_0})^{1/2}\big) &= W\big(\mathbf{R}_h^T (\boldsymbol{\ell}^{\emph{f}}_{\mathbf{n}_h})^{-1/2} \nabla_h \mathbf{y}_h (\boldsymbol{\ell}^{0}_{\mathbf{n}_0})^{1/2}\big) = W\big(\mathbf{I} + h\mathbf{S}_h)
\end{aligned}
\end{equation}
a.e.\;on $\Omega$ by the frame indifference of $W$ in (\ref{eq:hyperelastic}) and the definition of $\mathbf{R}_h$ and $\mathbf{S}_h$ and in Proposition \ref{compactnessProp}. 
Next, introduce again the characteristic function $\chi_h \colon \Omega \rightarrow \{ 0,1\}$ from (\ref{eq:chiHDef}) for the purpose of analyzing the entropic energy, and note that the convergence in (\ref{eq:TaylorWeak}) still holds. Since $\chi_h \leq 1$, it follows from (\ref{eq:WFrameIdent}) and the boundedness of $\mathbf{S}_h$ when $\chi_h =1$ that
\begin{align*}
\int_{\Omega} \frac{1}{h^2} W\big( (\boldsymbol{\ell}^{\emph{f}}_{\mathbf{n}_h})^{-1/2} \nabla_h \mathbf{y}_h (\boldsymbol{\ell}^{0}_{\mathbf{n}_0})^{1/2}\big) \dd V \geq \int_{\Omega}\frac{\chi_h}{h^2} W\big(\mathbf{I} + h\mathbf{S}_h) \dd V \geq \int_{\Omega} \chi_h \Big\{\frac{1}{2} Q_3(\mathbf{S}_h) + o(|\mathbf{S}_h|^2) \Big\} \dd V
\end{align*}
using (\ref{eq:TaylorW}) and (\ref{eq:Q3A}). As $\chi_h Q_3(\mathbf{S}_h) = Q_3(\chi_h \mathbf{S}_h) $ and $\int_{\Omega} \chi_h o( |\mathbf{S}_h|^2) \dd V = o(\int_{\Omega} \chi_h|\mathbf{S}_h|^2 \dd V) \rightarrow 0$,
\begin{equation}
\begin{aligned}\label{eq:EntropicLiminf}
&\liminf_{h \rightarrow 0} \int_{\Omega} \frac{1}{h^2} W\big( (\boldsymbol{\ell}^{\emph{f}}_{\mathbf{n}_h})^{-1/2} \nabla_h \mathbf{y}_h (\boldsymbol{\ell}^{0}_{\mathbf{n}_0})^{1/2}\big) \dd V \\
&\qquad \qquad \qquad \geq \liminf_{h \rightarrow 0} \Big\{ \int_{\Omega} \frac{1}{2} Q_3( \chi_h \mathbf{S}_h) \dd V + o\Big(\int_{\Omega} \chi_h|\mathbf{S}_h|^2 \dd V \Big) \Big\} \geq \int_{\Omega} \frac{1}{2} Q_3(\mathbf{S}) \dd V
\end{aligned}
\end{equation}
using the convexity of $Q_3(\mathbf{A})$ and (\ref{eq:TaylorWeak}). The desired estimate in (\ref{eq:firstLiminf}) follows from (\ref{eq:discard1}), (\ref{eq:FrankLiminf}) and (\ref{eq:EntropicLiminf}). 
\end{proof}

\subsubsection{Identification of $\mathbf{S}$}
Our next result identifies components the strain measure $\mathbf{S}$ as explicit functions of $\mathbf{R}, \mathbf{n}, \mathbf{b}$ and $\boldsymbol{\tau}$. 
\begin{lem}\label{identifySLemma}
The first two columns $\mathbf{S}$ satisfy
\begin{equation}
\begin{aligned}\label{eq:S3by2}
\big[ \mathbf{S}(\mathbf{x}, x_3) \big]_{3\times2} &= \big[ \mathbf{S}(\mathbf{x}, 0) \big]_{3\times2} + x_3 (\boldsymbol{\ell}_{\mathbf{n}_0}^{\emph{f}})^{-1/2}(\mathbf{x}) \mathbf{R}^T(\mathbf{x}) \nabla \mathbf{b}(\mathbf{x}) [(\boldsymbol{\ell}^{0}_{\mathbf{n}_0})^{1/2}(\mathbf{x}) ]_{2 \times2} \\
&\qquad - c_{\emph{f}}\;(\boldsymbol{\ell}_{\mathbf{n}_0}^{\emph{f}})^{-1/2}(\mathbf{x}) \Big[\emph{sym} \Big( \int_{0}^{x_3} \mathbf{R}^T(\mathbf{x}) \boldsymbol{\tau}(\mathbf{x},t) \dd t \otimes \mathbf{R}^T(\mathbf{x}) \mathbf{n}(\mathbf{x}) \Big)\Big]_{3\times2} 
\end{aligned}
\end{equation}
a.e.\;on $\Omega$ for some $\big[ \mathbf{S}(\mathbf{x}, 0) \big]_{3\times2} \in L^2(\omega, \mathbb{R}^{3\times2})$, for $c_\emph{f} := 2(\lambda_{\emph{f}}^{1/2} - \lambda_{\emph{f}}^{-1/4})$ and for $(\boldsymbol{\ell}_{\mathbf{v}_0}^{\emph{f}})$ as defined in (\ref{eq:mixedStepLength}). 
\end{lem}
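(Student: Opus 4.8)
The plan is to derive an exact identity for the $x_3$-derivative of $[\mathbf{S}_h]_{3\times2}$, valid in the sense of distributions, and then pass to the limit using Proposition \ref{compactnessProp}. Write $\mathbf{d}_h := h^{-1}\partial_3\mathbf{y}_h$ and $\mathbf{c}_h := h^{-1}\partial_3\mathbf{n}_h$, so that (Proposition \ref{compactnessProp} and Lemma \ref{strengthConLem}) $\mathbf{d}_h \to \mathbf{b}$ strongly in $L^2(\Omega)$ with $\mathbf{b}\in H^1(\omega)$ independent of $x_3$, and $\mathbf{c}_h\rightharpoonup\boldsymbol{\tau}$ weakly in $L^2(\Omega)$. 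Since $(\boldsymbol{\ell}^{0}_{\mathbf{n}_0})^{1/2}$ is block diagonal (upper-left $2\times2$ block $\mathbf{L}_0:=[(\boldsymbol{\ell}^{0}_{\mathbf{n}_0})^{1/2}]_{2\times2}$, lower-right entry $\lambda_0^{-1/4}$, no coupling), the first two columns of the definition of $\mathbf{S}_h$ read $h\,[\mathbf{S}_h]_{3\times2} = \mathbf{R}_h^T(\boldsymbol{\ell}^{\text{f}}_{\mathbf{n}_h})^{-1/2}\nabla\mathbf{y}_h\,\mathbf{L}_0 - [\mathbf{I}]_{3\times2}$ a.e.\;on $\Omega$, where $[\mathbf{I}]_{3\times2}$ denotes the first two columns of the $3\times3$ identity. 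I differentiate this in $x_3$ in $\mathcal{D}'(\Omega)$. Using that $\mathbf{R}_h,\mathbf{n}_0,\mathbf{L}_0$ do not depend on $x_3$, that $\partial_3\nabla\mathbf{y}_h=\nabla\partial_3\mathbf{y}_h=h\,\nabla\mathbf{d}_h$ (mixed partials commute for distributions), and that the spectral form $(\boldsymbol{\ell}^{\text{f}}_{\mathbf{v}})^{-1/2}=\lambda_{\text{f}}^{1/4}\mathbf{I}+(\lambda_{\text{f}}^{-1/2}-\lambda_{\text{f}}^{1/4})\mathbf{v}\otimes\mathbf{v}$ gives $\partial_3(\boldsymbol{\ell}^{\text{f}}_{\mathbf{n}_h})^{-1/2}=2h(\lambda_{\text{f}}^{-1/2}-\lambda_{\text{f}}^{1/4})\,\text{sym}(\mathbf{c}_h\otimes\mathbf{n}_h)$, the factor $h$ cancels and
\[
\partial_3[\mathbf{S}_h]_{3\times2}=2(\lambda_{\text{f}}^{-1/2}-\lambda_{\text{f}}^{1/4})\,\mathbf{R}_h^T\,\text{sym}(\mathbf{c}_h\otimes\mathbf{n}_h)\,\nabla\mathbf{y}_h\,\mathbf{L}_0+\mathbf{R}_h^T(\boldsymbol{\ell}^{\text{f}}_{\mathbf{n}_h})^{-1/2}(\nabla\mathbf{d}_h)\,\mathbf{L}_0=:\mathbf{T}_h \quad\text{in }\mathcal{D}'(\Omega).
\]

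Next I pass to the limit in $\mathbf{T}_h$ tested against $\varphi\in C_c^\infty(\Omega,\mathbb{R}^{3\times2})$. In the first term $\mathbf{c}_h\rightharpoonup\boldsymbol{\tau}$ weakly in $L^2$, while $\mathbf{R}_h\to\mathbf{R}$ and $\mathbf{n}_h\to\mathbf{n}$ are bounded in $L^\infty$ and converge a.e.\;along the subsequence (Lemmas \ref{GhRhLemma} and \ref{weakConLemmaN}) and $\nabla\mathbf{y}_h\to\nabla\mathbf{y}$ strongly in $L^2$ (Lemma \ref{strengthConLem}); the weak--strong pairing thus converges to the integral with $\boldsymbol{\tau}$. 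In the second term $\nabla\mathbf{d}_h$ is only a distribution, so I integrate by parts in the in-plane variables $x_1,x_2$ to move the gradient off $\mathbf{d}_h$ and onto $\varphi$, $\mathbf{R}_h$, and $(\boldsymbol{\ell}^{\text{f}}_{\mathbf{n}_h})^{-1/2}$, whose $\mathbf{x}$-gradients involve only $\nabla\mathbf{R}_h$ and $\nabla\mathbf{n}_h$ and are bounded in $L^2$ (for $\mathbf{R}_h=\boldsymbol{\pi}_{\text{ext}}(\mathbf{G}_h)$ this uses the Lipschitz continuity of $\boldsymbol{\pi}_{\text{ext}}$ together with the bound on $\nabla\mathbf{G}_h$ in Lemma \ref{tensorEstsLemma}, whence $\mathbf{R}_h\rightharpoonup\mathbf{R}$ in $H^1(\omega)$). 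In the resulting expression $\mathbf{d}_h\to\mathbf{b}$ strongly in $L^2$ with $\mathbf{b}\in H^1(\omega)$, $\nabla\mathbf{R}_h\rightharpoonup\nabla\mathbf{R}$ and $\nabla(\boldsymbol{\ell}^{\text{f}}_{\mathbf{n}_h})^{-1/2}\rightharpoonup\nabla(\boldsymbol{\ell}^{\text{f}}_{\mathbf{n}})^{-1/2}$ weakly in $L^2$, and $\mathbf{R}_h\to\mathbf{R}$, $(\boldsymbol{\ell}^{\text{f}}_{\mathbf{n}_h})^{-1/2}\to(\boldsymbol{\ell}^{\text{f}}_{\mathbf{n}})^{-1/2}$ in every $L^p$, $p<\infty$; undoing the integration by parts (legitimate now because $\mathbf{b}\in H^1(\omega)$) identifies the limit of the second term as $\langle\mathbf{R}^T(\boldsymbol{\ell}^{\text{f}}_{\mathbf{n}})^{-1/2}\nabla\mathbf{b}\,\mathbf{L}_0,\varphi\rangle$. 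Hence $\mathbf{T}_h\to\mathbf{T}$ in $\mathcal{D}'(\Omega)$ with $\mathbf{T}=2(\lambda_{\text{f}}^{-1/2}-\lambda_{\text{f}}^{1/4})\mathbf{R}^T\text{sym}(\boldsymbol{\tau}\otimes\mathbf{n})\nabla\mathbf{y}\,\mathbf{L}_0+\mathbf{R}^T(\boldsymbol{\ell}^{\text{f}}_{\mathbf{n}})^{-1/2}\nabla\mathbf{b}\,\mathbf{L}_0$. On the other hand $[\mathbf{S}_h]_{3\times2}\rightharpoonup[\mathbf{S}]_{3\times2}$ in $L^2(\Omega)$ gives $\partial_3[\mathbf{S}_h]_{3\times2}\to\partial_3[\mathbf{S}]_{3\times2}$ in $\mathcal{D}'(\Omega)$, so $\partial_3[\mathbf{S}]_{3\times2}=\mathbf{T}$ in $\mathcal{D}'(\Omega)$.

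It then remains to integrate in $x_3$ and reconcile with the stated form. The summand $\mathbf{R}^T(\boldsymbol{\ell}^{\text{f}}_{\mathbf{n}})^{-1/2}\nabla\mathbf{b}\,\mathbf{L}_0$ of $\mathbf{T}$ is independent of $x_3$, while the $\boldsymbol{\tau}$-summand carries the $x_3$-dependence of $\boldsymbol{\tau}$; thus $[\mathbf{S}]_{3\times2}-x_3\,\mathbf{R}^T(\boldsymbol{\ell}^{\text{f}}_{\mathbf{n}})^{-1/2}\nabla\mathbf{b}\,\mathbf{L}_0-2(\lambda_{\text{f}}^{-1/2}-\lambda_{\text{f}}^{1/4})\int_0^{x_3}\mathbf{R}^T\text{sym}(\boldsymbol{\tau}(\cdot,t)\otimes\mathbf{n})\,\dd t\;\nabla\mathbf{y}\,\mathbf{L}_0$ has vanishing distributional $\partial_3$, hence coincides with an $x_3$-independent field $[\mathbf{S}(\mathbf{x},0)]_{3\times2}$; it lies in $L^2(\omega,\mathbb{R}^{3\times2})$ since each of the three pieces lies in $L^2(\Omega)$ — here using $\mathbf{b}\in H^1(\omega)$, $\boldsymbol{\tau}\in L^2(\Omega)$, and $\nabla\mathbf{y}\in L^\infty(\omega)$ (the last because $(\nabla\mathbf{y})^T\nabla\mathbf{y}=\mathbf{g}_{\mathbf{n}_0}$ has bounded eigenvalues $\lambda_0^{-1}\lambda_{\text{f}}$, $\lambda_{\text{f}}^{-1/2}\lambda_0^{1/2}$). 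Finally, to bring this into the form (\ref{eq:S3by2}) I use $\mathbf{n}=\sigma\mathbf{R}\begin{pmatrix}\mathbf{n}_0\\0\end{pmatrix}$ (hence $\boldsymbol{\ell}^{\text{f}}_{\mathbf{n}}=\mathbf{R}\,\boldsymbol{\ell}^{\text{f}}_{\mathbf{n}_0}\mathbf{R}^T$ with $\boldsymbol{\ell}^{\text{f}}_{\mathbf{n}_0}$ the mixed step-length tensor of (\ref{eq:mixedStepLength}), so $\mathbf{R}^T(\boldsymbol{\ell}^{\text{f}}_{\mathbf{n}})^{\pm1/2}=(\boldsymbol{\ell}^{\text{f}}_{\mathbf{n}_0})^{\pm1/2}\mathbf{R}^T$), the identity $\nabla\mathbf{y}\,\mathbf{L}_0=(\boldsymbol{\ell}^{\text{f}}_{\mathbf{n}})^{1/2}\mathbf{R}[\mathbf{I}]_{3\times2}$ that follows from (\ref{eq:identsOnPlanarFields}), and $\boldsymbol{\tau}\cdot\mathbf{n}=0$; carrying out this algebra — which converts the $\mathbf{R}^T$ factors into $(\boldsymbol{\ell}^{\text{f}}_{\mathbf{n}_0})^{\pm1/2}\mathbf{R}^T$ and pushes the surviving rotations inside the outer product — produces the constant $c_{\text{f}}=2(\lambda_{\text{f}}^{1/2}-\lambda_{\text{f}}^{-1/4})$ and the term $\text{sym}\big(\int_0^{x_3}\mathbf{R}^T\boldsymbol{\tau}\,\dd t\otimes\mathbf{R}^T\mathbf{n}\big)$ appearing in (\ref{eq:S3by2}).

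\emph{The main obstacle} is that $\nabla\mathbf{d}_h=\nabla(h^{-1}\partial_3\mathbf{y}_h)$ is a priori only a distribution: the energy bound controls $\nabla_h\mathbf{y}_h$ in $L^2$ but nothing about $\partial_3\nabla\mathbf{y}_h$, so the $\partial_3$-identity for $[\mathbf{S}_h]_{3\times2}$ cannot be integrated at the level of $L^2$ functions and the entire argument must be run in $\mathcal{D}'(\Omega)$ (equivalently, through finite-difference quotients of $\mathbf{y}_h$ along $x_3$). The way out is the in-plane integration by parts: it moves the uncontrolled derivative onto the smooth test field and onto $\mathbf{R}_h$, $(\boldsymbol{\ell}^{\text{f}}_{\mathbf{n}_h})^{-1/2}$, whose in-plane derivatives \emph{are} controlled, after which the strong convergence $\mathbf{d}_h\to\mathbf{b}$ and the $H^1(\omega)$-regularity of the Cosserat vector $\mathbf{b}$ from Lemma \ref{strengthConLem} let one identify the contribution as $\mathbf{R}^T(\boldsymbol{\ell}^{\text{f}}_{\mathbf{n}})^{-1/2}\nabla\mathbf{b}\,\mathbf{L}_0$. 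A secondary point is verifying that in the $\boldsymbol{\tau}$-term the factors multiplying $\mathbf{c}_h$ converge \emph{strongly} (not merely weakly) in $L^2$, which rests on their uniform $L^\infty$-bounds together with the strong convergence of $\nabla\mathbf{y}_h$; the remaining linear algebra pinning down $c_{\text{f}}$ and the outer-product form is then routine bookkeeping with the step-length tensors and the metric identities of Proposition \ref{compactnessProp}.
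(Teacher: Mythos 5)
Your overall strategy --- differentiate $[\mathbf{S}_h]_{3\times 2}$ in $x_3$, identify the limit, and integrate back --- is morally the same as the paper's, and your final algebraic reconciliation (converting $\mathbf{R}^T(\boldsymbol{\ell}^{\text{f}}_{\mathbf{n}})^{\pm 1/2}$ to $(\boldsymbol{\ell}^{\text{f}}_{\mathbf{n}_0})^{\pm 1/2}\mathbf{R}^T$, using $\nabla\mathbf{y}\,\mathbf{L}_0=(\boldsymbol{\ell}^{\text{f}}_{\mathbf{n}})^{1/2}\mathbf{R}[\mathbf{I}]_{3\times 2}$, pulling rotations inside the outer product) is correct; I checked that the coefficient $2(\lambda_{\text{f}}^{-1/2}-\lambda_{\text{f}}^{1/4})$ indeed organizes into $-c_{\text{f}}(\boldsymbol{\ell}_{\mathbf{n}_0}^{\text{f}})^{-1/2}\,\text{sym}(\mathbf{R}^T\boldsymbol{\tau}\otimes\mathbf{R}^T\mathbf{n})$. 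You also correctly flag the central technical difficulty, that $\partial_3\nabla\mathbf{y}_h$ is only a distribution. But your workaround for that difficulty has a genuine gap.

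The problem is the integration by parts that moves $\nabla$ off $\mathbf{d}_h$ and onto $\mathbf{R}_h$. This requires $\nabla\mathbf{R}_h$ to exist in $L^2(\omega)$ with a uniform bound, and your justification --- Lipschitz continuity of $\boldsymbol{\pi}_{\text{ext}}$ combined with the $H^1$ bound on $\mathbf{G}_h$ --- is false. The map $\boldsymbol{\pi}_{\text{ext}}$ defined in Appendix \ref{sec:linAlgebra} is discontinuous: it agrees with the nearest-point projection $\boldsymbol{\pi}$ on the tubular neighborhood $N_\epsilon(SO(3))$ and jumps to the constant $\mathbf{I}$ outside, so it is certainly not Lipschitz on $\mathbb{R}^{3\times 3}$. (Indeed no continuous retraction of $\mathbb{R}^{3\times 3}$ onto $SO(3)$ can exist, since $SO(3)$ has nontrivial fundamental group, so this cannot be repaired by redefining $\boldsymbol{\pi}_{\text{ext}}$.) The energy bound only gives $\|\text{dist}(\mathbf{G}_h,SO(3))\|_{L^2(\omega)}=O(h)$; it does not imply $\mathbf{G}_h$ lands in $N_\epsilon(SO(3))$ pointwise, and on the (small but generically nonempty) bad set $\mathbf{R}_h$ is constant while across its boundary $\mathbf{R}_h$ may jump. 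So neither $\mathbf{R}_h\in H^1(\omega)$ nor $\mathbf{R}_h\rightharpoonup\mathbf{R}$ in $H^1(\omega)$ is established, and Lemma \ref{GhRhLemma} asserts only $L^2$ convergence of $\mathbf{R}_h$. Without control on $\nabla\mathbf{R}_h$, the integration-by-parts step --- and hence the identification of the limit of the $\nabla\mathbf{d}_h$ term --- does not go through.

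The paper's finite-difference argument in $x_3$ is designed precisely to avoid this: it forms $\mathbf{f}_{h,s}:=\tfrac{1}{hs}(\mathbf{y}_h(\cdot,x_3+s)-\mathbf{y}_h(\cdot,x_3))$ and takes \emph{in-plane} derivatives of that, which equals $\tfrac{1}{hs}(\partial_\alpha\mathbf{y}_h(x_3+s)-\partial_\alpha\mathbf{y}_h(x_3))$ and is manifestly $L^2$. Because $\mathbf{R}_h$ and $(\boldsymbol{\ell}^0_{\mathbf{n}_0})^{-1/2}$ are $x_3$-independent, they factor out of the finite difference with no in-plane derivatives ever being taken; only the finite differences of $(\boldsymbol{\ell}^{\text{f}}_{\mathbf{n}_h})^{1/2}$ and $\mathbf{S}_h$ survive, both of which pass to the limit by weak/strong $L^2$ pairings at fixed $s$. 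Then $s\to 0$ at the end produces $\partial_3\mathbf{S}$. If you want to keep a distributional-derivative phrasing, you should mimic this by first \emph{factoring out} the problematic $x_3$-independent prefactor rather than integrating by parts through it: work with $(\boldsymbol{\ell}^{\text{f}}_{\mathbf{n}_h})^{-1/2}\nabla\mathbf{y}_h\,\mathbf{L}_0=\mathbf{R}_h(\mathbf{I}+h[\mathbf{S}_h]_{3\times 2})$ and recognize that differencing in $x_3$ leaves $\mathbf{R}_h$ untouched, as the paper does --- but then you are back to the finite-quotient argument in disguise.
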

\begin{proof}
Let $\Omega'$ be any compact subset of $\Omega$, let $s$ satisfy $|s| \leq \text{dist}(\Omega', \Omega)$, and define the $H^1(\Omega', \mathbb{R}^3)$ function 
\begin{align*}
\mathbf{f}_{h,s}(\mathbf{x}, x_3) := \frac{1}{hs} \Big( \mathbf{y}_h(\mathbf{x}, x_3 + s) -\mathbf{y}_h(\mathbf{x}, x_3)\Big).
\end{align*}
We first show that $\mathbf{f}_{h,s} \rightarrow \mathbf{b}$ in $L^2(\Omega')$ and that $(\nabla \mathbf{f}_{h,s}, \partial_3 \mathbf{f}_{h,s})$ weakly converges in $L^2(\Omega')$ to a function of $\mathbf{R}, \mathbf{S}, \boldsymbol{\tau}, \mathbf{n}$, thus allowing us to identify $\nabla \mathbf{b}$ on $\Omega'$. For this calculation, we find it useful to track the dependence on $x_3$ while suppressing the $\mathbf{x}$ dependence. Observe that 
\begin{equation}
\begin{aligned}\label{eq:firstConvergeFhs}
&\mathbf{f}_{h,s}(x_3) = \frac{1}{s} \int_{x_3}^{x_3+s}\frac{1}{h} \partial_3 \mathbf{y}_h(t) \dd t \rightarrow \mathbf{b} \quad \text{ in } L^2(\Omega') \\
&\partial_3\mathbf{f}_{h,s}( x_3) = \frac{1}{s}\big( \frac{1}{h} \partial_3 \mathbf{y}_h( x_3 + s) - \frac{1}{h} \partial_3 \mathbf{y}_h(x_3)\big) \rightarrow 0 \quad \text{ in } L^2(\Omega')
\end{aligned}
\end{equation}
since Proposition \ref{compactnessProp} gives $h^{-1} \partial_3 \mathbf{y}_h \rightarrow \mathbf{b}$ in $L^2(\Omega)$ for $\mathbf{b}$ independent of $x_3$. We also have that 
\begin{equation}
\begin{aligned}\label{eq:finiteDifference}
\partial_\alpha \mathbf{f}_{h,s}(x_3) &= \frac{1}{hs} \big( \partial_\alpha \mathbf{y}_h(x_3 + s) - \partial_\alpha \mathbf{y}_h(x_3)\big) \\
&= \frac{1}{hs} \Big( (\boldsymbol{\ell}_{\mathbf{n}_h}^{\text{f}})^{1/2}(x_3+s) \mathbf{R}_h\big( \mathbf{I} + h \mathbf{S}_h(x_3 + s) \big)- (\boldsymbol{\ell}_{\mathbf{n}_h}^{\text{f}})^{1/2}(x_3) \mathbf{R}_h\big( \mathbf{I} + h \mathbf{S}_h(x_3) \big) \Big) (\boldsymbol{\ell}_{\mathbf{n}_0}^0)^{-1/2} \mathbf{e}_{\alpha} \\
&= \Big( \frac{1}{hs}\big(( \boldsymbol{\ell}_{\mathbf{n}_h}^{\text{f}})^{1/2}(x_3 + s) - ( \boldsymbol{\ell}_{\mathbf{n}_h}^{\text{f}})^{1/2}(x_3) \big) \mathbf{R}_h\big( \mathbf{I} + h \mathbf{S}_h(x_3 + s) \big) \Big) (\boldsymbol{\ell}_{\mathbf{n}_0}^0)^{-1/2} \mathbf{e}_{\alpha} \\
&\qquad \qquad + \Big( (\boldsymbol{\ell}_{\mathbf{n}_h}^{\text{f}})^{1/2}(x_3) \mathbf{R}_h \frac{1}{s} \big(\mathbf{S}_h(x_3 + s) - \mathbf{S}_h(x_3) \big) \Big) (\boldsymbol{\ell}_{\mathbf{n}_0}^0)^{-1/2} \mathbf{e}_{\alpha}
\end{aligned}
\end{equation}
on $\Omega'$ for $\alpha =1,2$, since $\mathbf{S}_h := h^{-1}( \mathbf{R}_h^T (\boldsymbol{\ell}^{\text{f}}_{\mathbf{n}_h})^{-1/2} \nabla_h \mathbf{y}_h (\boldsymbol{\ell}^{0}_{\mathbf{n}_0})^{1/2} - \mathbf{I})$ and since $\mathbf{R}_h$, $(\boldsymbol{\ell}^{0}_{\mathbf{n}_0})^{-1/2}$ are independent of $x_3$. We now establish convergence properties for the terms on the right side of (\ref{eq:finiteDifference}). Observe that 
\begin{align*}
 \frac{1}{hs}\big(( \boldsymbol{\ell}_{\mathbf{n}_h}^{\text{f}})^{1/2}(x_3 + s) - ( \boldsymbol{\ell}_{\mathbf{n}_h}^{\text{f}})^{1/2}(x_3)\big) &= \frac{1}{s} \int_{x_3}^{x_3 +s} 2(\lambda_{\text{f}}^{1/2} - \lambda_{\text{f}}^{-1/4}) \text{sym}\big(h^{-1}\partial_3 \mathbf{n}_h(t) \otimes \mathbf{n}_h(t) \big) \dd t \\
 &\rightharpoonup 2(\lambda_{\text{f}}^{1/2} - \lambda_{\text{f}}^{-1/4}) \text{sym} \Big( \frac{1}{s} \int_{x_3}^{x_3+s} \boldsymbol{\tau}(t) \dd t \otimes \mathbf{n}\Big) \quad \text{ in } L^2(\Omega'),
\end{align*}
since $h^{-1} \partial_3 \mathbf{n}_h \rightharpoonup \boldsymbol{\tau}$ in $L^2(\Omega)$ and $\mathbf{n}_h \rightarrow \mathbf{n}$ in $L^2(\Omega)$ with $|\mathbf{n}_h| = 1$ a.e. Next, observe that 
\begin{align*}
 &\mathbf{R}_h\big( \mathbf{I} + h \mathbf{S}_h(x_3 + s) \big) = (\boldsymbol{\ell}^{\emph{f}}_{\mathbf{n}_h})^{-1/2}(x_3+s) \nabla_h \mathbf{y}_h(x_3 + s) (\boldsymbol{\ell}^{0}_{\mathbf{n}_0})^{1/2} \rightarrow \mathbf{R} \quad \text{ in } L^2(\Omega'), 
\end{align*}
by Proposition \ref{compactnessProp}. Next, since $\mathbf{R}_h \rightarrow \mathbf{R}$ in $L^2(\Omega)$ by Proposition \ref{compactnessProp} and $(\boldsymbol{\ell}_{\mathbf{n}_h}^{\text{f}})^{1/2}(x_3) \rightarrow (\boldsymbol{\ell}_{\mathbf{n}}^{\text{f}})^{1/2}$ in $L^2(\Omega)$ by Lemma \ref{weakConLemmaN} and both are bounded in $L^{\infty}$, 
\begin{align*}
(\boldsymbol{\ell}_{\mathbf{n}_h}^{\text{f}})^{1/2}(x_3) \mathbf{R}_h \rightarrow (\boldsymbol{\ell}_{\mathbf{n}}^{\text{f}})^{1/2}\mathbf{R} \quad \text{ in } L^2(\Omega').
\end{align*}
Finally, observe that 
\begin{equation}
\begin{aligned}\label{eq:ShWeakConv}
 \frac{1}{s} \big(\mathbf{S}_h(x_3 + s) - \mathbf{S}_h(x_3) \big) \rightharpoonup \frac{1}{s}\big( \mathbf{S}(x_3 + s) - \mathbf{S}(x_3)\big) \quad \text{ in } L^2(\Omega'). 
\end{aligned}
\end{equation}
since $\mathbf{S}_h \rightharpoonup \mathbf{S}$ in $L^2(\Omega)$ by Proposition \ref{compactnessProp}. It follows from (\ref{eq:finiteDifference}-\ref{eq:ShWeakConv}) that 
\begin{equation}
\begin{aligned}\label{eq:lastConvergeFhs}
\partial_\alpha \mathbf{f}_{h,s}(x_3) \rightharpoonup & \Big[ c_\text{f}\; \text{sym} \Big( \frac{1}{s} \int_{x_3}^{x_3+s} \boldsymbol{\tau}(t) \dd t \otimes \mathbf{n}\Big) \mathbf{R} + (\boldsymbol{\ell}_{\mathbf{n}}^{\text{f}})^{1/2}\mathbf{R}\frac{1}{s}\big( \mathbf{S}(x_3 + s) - \mathbf{S}(x_3)\big) \Big] (\boldsymbol{\ell}^{0}_{\mathbf{n}_0})^{-1/2} \mathbf{e}_{\alpha} \quad \text{ in } L^1(\Omega')
\end{aligned}
\end{equation}\
for $\alpha = 1,2$, where $c_\text{f} := 2(\lambda_{\text{f}}^{1/2} - \lambda_{\text{f}}^{-1/4})$. By (\ref{eq:firstConvergeFhs}) and (\ref{eq:lastConvergeFhs}), we identify $\nabla \mathbf{b}$ as 
\begin{align*}
\nabla \mathbf{b} = \Big[ c_{\text{f}}\; \text{sym} \Big( \frac{1}{s} \int_{x_3}^{x_3+s} \boldsymbol{\tau}(t) \dd t \otimes \mathbf{n}\Big) \mathbf{R} + (\boldsymbol{\ell}_{\mathbf{n}}^{\text{f}})^{1/2} \mathbf{R} \frac{1}{s}\big( \mathbf{S}(x_3 + s) - \mathbf{S}(x_3)\big) \Big] \big[(\boldsymbol{\ell}^{0}_{\mathbf{n}_0})^{-1/2} \big]_{3 \times2} \quad \text{ a.e. on $\Omega'$}.
\end{align*}

We now rearrange this formula to isolate the first two columns of the strain measure $\mathbf{S}$ in the liminf inequality in (\ref{eq:firstLiminf}). First observe that 
\begin{align*}
&\text{sym} \Big( \frac{1}{s} \int_{x_3}^{x_3+s} \boldsymbol{\tau}(t) \dd t \otimes \mathbf{n}\Big) \mathbf{R} = \mathbf{R} \text{sym} \Big( \frac{1}{s} \int_{x_3}^{x_3+s} \mathbf{R}^T \boldsymbol{\tau}(t) \dd t \otimes \mathbf{R}^T \mathbf{n}\Big) \quad \text{ and } \quad (\boldsymbol{\ell}_{\mathbf{n}}^{\text{f}})^{1/2} \mathbf{R} = \mathbf{R} (\boldsymbol{\ell}_{\mathbf{n}_0}^{\text{f}})^{1/2} 
\end{align*}
a.e. on $\Omega'$, where $\boldsymbol{\ell}_{\mathbf{v}_0}^{\text{f}}$ is as defined below (\ref{eq:mixedStepLength}) in Appendix \ref{sec:linAlgebra}. Thus, the identity $\mathbf{A} [(\boldsymbol{\ell}^{0}_{\mathbf{v}_0})^{-1/2} ]_{3 \times2} [(\boldsymbol{\ell}^{0}_{\mathbf{v}_0})^{1/2} ]_{2 \times2} = [ \mathbf{A}]_{3\times2}$ for all $\mathbf{A} \in \mathbb{R}^{3\times3}$ and $\mathbf{v}_0 \in \mathbb{S}^1$ gives that 
\begin{equation}
\begin{aligned}\label{eq:almostDoneWithS}
 \Big[ \frac{1}{s}\big( \mathbf{S}(x_3 + s) - \mathbf{S}(x_3)\big) \Big]_{3\times2} = (\boldsymbol{\ell}_{\mathbf{n}_0}^{\text{f}})^{-1/2} \Big( \mathbf{R}^T \nabla \mathbf{b} [(\boldsymbol{\ell}^{0}_{\mathbf{n}_0})^{1/2} ]_{2 \times2} - c_{\text{f}}\; \Big[\text{sym} \Big( \frac{1}{s} \int_{x_3}^{x_3+s} \mathbf{R}^T \boldsymbol{\tau}(t) \dd t \otimes \mathbf{R}^T \mathbf{n}\Big)\Big]_{3\times2} \Big) 
\end{aligned}
\end{equation}
a.e.\;on $\Omega'$. Note that $\int_{\Omega'} |2(\boldsymbol{\ell}_{\mathbf{n}_0}^{\text{f}})^{-1/2} \text{sym} \big( \mathbf{R}^T \boldsymbol{\tau}(t) \otimes \mathbf{R}^T \mathbf{n} \big)|^2 \dd V \leq \| (\boldsymbol{\ell}_{\mathbf{n}_0}^{\text{f}})^{-1/2}\|_{L^{\infty}(\Omega')}^2 \int_{\Omega'} 2 | \boldsymbol{\tau}(t)|^2 \dd V$ since $\mathbf{R}$ is a rotation field and $\mathbf{n}$ is a unit vector field. As such, $c_{\text{f}}(\boldsymbol{\ell}_{\mathbf{n}_0}^{\text{f}})^{-1/2}\text{sym} \big( \mathbf{R}^T \boldsymbol{\tau}(t) \dd t \otimes \mathbf{R}^T \mathbf{n} \big)$ belongs to $L^2(\Omega')$. Thus, $s^{-1}( \mathbf{S}(x_3 + s) - \mathbf{S}(x_3))$ is uniformly bounded in $L^2(\Omega')$ as $s\to0$ because the right side of (\ref{eq:almostDoneWithS}) is so. It follows that $\partial_3\mathbf{S}$ exists in $L^2(\Omega')$ as the weak limit of a subsequence of $s^{-1}( \mathbf{S}(x_3 + s) - \mathbf{S}(x_3))$, and is given by 
\begin{equation}
\begin{aligned}\label{eq:partial3S}
\big[\partial_3 \mathbf{S}(x_3) \big]_{3\times2} = (\boldsymbol{\ell}_{\mathbf{n}_0}^{\text{f}})^{-1/2} \Big( \mathbf{R}^T \nabla \mathbf{b} [(\boldsymbol{\ell}^{0}_{\mathbf{n}_0})^{1/2} ]_{2 \times2} - c_{\text{f}}\; \Big[\text{sym} \Big( \mathbf{R}^T \boldsymbol{\tau}(x_3) \otimes \mathbf{R}^T \mathbf{n}\Big)\Big]_{3\times2} \Big) 
\end{aligned}
\end{equation}
a.e.\;on $\Omega'$ as a consequence of Lebesgue differentiation theorem. 

As (\ref{eq:partial3S}) holds for any compact subset $\Omega'$ of $\Omega$, it holds on $\Omega$. The identity in (\ref{eq:S3by2}) follows after integrating up the indentity (\ref{eq:partial3S}) in $x_3$. 
\end{proof}

\subsection{Optimization of strain and the proof of Theorem \ref{MainTheorem}} Notice that the strain $\mathbf{S}$ depends on the behavior of the director in the $x_3$ direction through $\boldsymbol{\tau}$. We now depart from the arguments in \cite{bhattacharya2016plates, friesecke2002theorem, lewicka2011scaling} by carefully optimizing $\boldsymbol{\tau}$ to eliminate this $x_3$ dependence. This optimization refines the lowerbound, achieving the desired result in Theorem \ref{MainTheorem}. The main ingredient is an analytical solution to a 1D calculus of variations problem for the even part of the planar component $\boldsymbol{\tau}$ not parallel to $\mathbf{n}_0$ (see Lemma \ref{infimumLemma}) .

\subsubsection{Lower bound in terms of the even part of $\boldsymbol{\tau}$}
First,
we establish a bound on the right side of (\ref{eq:firstLiminf}) in Lemma \ref{firstLiminfLemma}. 
\begin{lem}\label{secondLBIneqLemma}
The limiting fields obey the energy inequality
\begin{equation}
\begin{aligned}\label{eq:boundBelow2}
\int_{\Omega}\Big\{\frac{1}{2} Q_3(\mathbf{S}) + \gamma( |\nabla \mathbf{n}|^2 + |\boldsymbol{\tau}|^2) \Big\} \dd V \geq \int_{\Omega} \Big\{ \frac{1}{2}Q_2\Big( x_3 \mathbf{S}_1 + \mathcal{S}_2 \cdot \int_{0}^{x_3} \boldsymbol{\tau}^{\emph{e}} \dd t \Big) + \gamma( |\nabla \mathbf{n}|^2 + |\boldsymbol{\tau}^{\emph{e}}|^2 ) \Big\} \dd V 
\end{aligned}
\end{equation}
where $Q_2(\mathbf{A}) := 2\mu \Big\{ \big| \emph{sym} \big( [ \mathbf{A}]_{2\times2}\big)|^2 + \emph{Tr} \big( \emph{sym}( [ \mathbf{A}]_{2\times2} ) \big)^2\Big\}$ for any $\mathbf{A} \in \mathbb{R}^{3\times2}$ and 
\begin{equation}
\begin{aligned}\label{eq:strainDefsLB}
&x_3 \mathbf{S}_1(\mathbf{x}) := x_3 (\boldsymbol{\ell}_{\mathbf{n}_0}^{\emph{f}})^{-1/2}(\mathbf{x}) \mathbf{R}^T(\mathbf{x}) \nabla \mathbf{b}(\mathbf{x}) [(\boldsymbol{\ell}^{0}_{\mathbf{n}_0})^{1/2}(\mathbf{x}) ]_{2 \times2}, \\
&\mathcal{S}_2(\mathbf{x}) \cdot \int_{0}^{x_3} \boldsymbol{\tau}^{\emph{e}}(\mathbf{x}, t) \dd t := - c_{\emph{f}}\;(\boldsymbol{\ell}_{\mathbf{n}_0}^{\emph{f}})^{-1/2}(\mathbf{x}) \Big[\emph{sym} \Big( \mathbf{R}^T(\mathbf{x}) \int_{0}^{x_3} \boldsymbol{\tau}^{\emph{e}}(\mathbf{x},t) \dd t \otimes \mathbf{R}^T(\mathbf{x}) \mathbf{n}(\mathbf{x}) \Big)\Big]_{3\times2} \\
&\boldsymbol{\tau}^{\emph{e}}(\mathbf{x}, x_3) := \frac{1}{2}\big( \boldsymbol{\tau}(\mathbf{x}, x_3) + \boldsymbol{\tau}(\mathbf{x}, -x_3)\big).
\end{aligned}
\end{equation}
\end{lem}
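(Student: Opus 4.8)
The plan is to derive (\ref{eq:boundBelow2}) from (\ref{eq:firstLiminf}) through two elementary reductions, each of which only decreases the integrand: a relaxation of the three-dimensional quadratic form $Q_3$ to the plate form $Q_2$ that forgets the transverse column of the strain $\mathbf{S}$, followed by a parity decomposition in the thickness variable $x_3$ that discards the part of the strain and of $\boldsymbol{\tau}$ even in $x_3$.

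First I would establish the pointwise bound $\tfrac12 Q_3(\mathbf{S}) \geq \tfrac12 Q_2\big([\mathbf{S}]_{3\times 2}\big)$ a.e.\ on $\Omega$. Since $\text{Tr}(\mathbf{S}) = 0$ by Proposition \ref{compactnessProp}, the definition (\ref{eq:Q3A}) collapses to $Q_3(\mathbf{S}) = 2\mu\,|\text{sym}\,\mathbf{S}|^2$; expanding $|\text{sym}\,\mathbf{S}|^2$ in $2\times 2$ blocks and using that the $(3,3)$ entry of $\text{sym}\,\mathbf{S}$ equals $-\text{Tr}\big(\text{sym}[\mathbf{S}]_{2\times 2}\big)$ (again the trace constraint) gives $|\text{sym}\,\mathbf{S}|^2 \geq |\text{sym}[\mathbf{S}]_{2\times 2}|^2 + \big(\text{Tr}\,\text{sym}[\mathbf{S}]_{2\times 2}\big)^2 = \tfrac{1}{2\mu}\,Q_2\big([\mathbf{S}]_{3\times 2}\big)$. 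Integrating over $\Omega$ reduces matters to bounding $\int_\Omega \tfrac12 Q_2\big([\mathbf{S}]_{3\times 2}\big)\,\dd V$ from below.

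Next I would exploit the explicit form of $[\mathbf{S}]_{3\times 2}$ supplied by Lemma \ref{identifySLemma}. Decompose $\boldsymbol{\tau} = \boldsymbol{\tau}^{\text{e}} + \boldsymbol{\tau}^{\text{o}}$ into its even and odd parts in $x_3$, both in $L^2(\Omega)$. A change of variables shows $x_3 \mapsto \int_0^{x_3}\boldsymbol{\tau}^{\text{e}}(\mathbf{x},t)\,\dd t$ is odd in $x_3$ while $x_3 \mapsto \int_0^{x_3}\boldsymbol{\tau}^{\text{o}}(\mathbf{x},t)\,\dd t$ is even; since $[\mathbf{S}(\mathbf{x},0)]_{3\times 2}$ is $x_3$-independent and $x_3\mathbf{S}_1$ is odd, the identity (\ref{eq:S3by2}) splits $[\mathbf{S}(\mathbf{x},x_3)]_{3\times 2}$ into the even field $[\mathbf{S}(\mathbf{x},0)]_{3\times 2} + \mathcal{S}_2(\mathbf{x})\cdot\int_0^{x_3}\boldsymbol{\tau}^{\text{o}}\,\dd t$ plus the odd field $x_3\mathbf{S}_1(\mathbf{x}) + \mathcal{S}_2(\mathbf{x})\cdot\int_0^{x_3}\boldsymbol{\tau}^{\text{e}}\,\dd t$. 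Because the product of an even and an odd function of $x_3$ integrates to zero over the symmetric fibre $(-1/2,1/2)$, the mixed terms in the quadratic form $Q_2$ drop out after integrating in the thickness, so $\int_\Omega Q_2\big([\mathbf{S}]_{3\times 2}\big)\,\dd V$ equals the sum of $Q_2$ evaluated on each of the two fields, and discarding the nonnegative even contribution leaves $\int_\Omega Q_2\big([\mathbf{S}]_{3\times 2}\big)\,\dd V \geq \int_\Omega Q_2\big(x_3\mathbf{S}_1 + \mathcal{S}_2\cdot\int_0^{x_3}\boldsymbol{\tau}^{\text{e}}\,\dd t\big)\,\dd V$. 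The same parity orthogonality gives $\int_\Omega|\boldsymbol{\tau}|^2\,\dd V = \int_\Omega|\boldsymbol{\tau}^{\text{e}}|^2\,\dd V + \int_\Omega|\boldsymbol{\tau}^{\text{o}}|^2\,\dd V \geq \int_\Omega|\boldsymbol{\tau}^{\text{e}}|^2\,\dd V$, and $\int_\Omega|\nabla\mathbf{n}|^2\,\dd V$ is untouched since $\mathbf{n}$ is independent of $x_3$. Assembling the three inequalities gives (\ref{eq:boundBelow2}).

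I do not expect any step to be genuinely hard; the only real obstacle is bookkeeping — making sure every field one discards enters through a manifestly nonnegative term and that the even/odd cross terms truly vanish upon integration in $x_3$, which is exactly where the symmetry of $\Omega$ about the midplane $x_3 = 0$ is used. The conceptual upshot is that the limiting strain is free to relax its transverse components and that only the odd-in-$x_3$ part of the director rate $\boldsymbol{\tau}$ couples productively to the bending energy, so both reductions are lossless for the lower bound.
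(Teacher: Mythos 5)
Your proposal is correct and follows essentially the same route as the paper: use the trace-free constraint to drop from $Q_3$ to $Q_2$ (you verify the pointwise bound directly, which is equivalent to the paper's appeal to Lemma~\ref{Q3toQ2Lemma}), then use parity orthogonality in $x_3$ to discard the even-in-$x_3$ part of $[\mathbf{S}]_{3\times 2}$ and the odd-in-$x_3$ part of $\boldsymbol{\tau}$. Your one-step even/odd split of $[\mathbf{S}]_{3\times 2}$ is in fact a mild streamlining of the paper's two-step version (subtracting the thickness average and then decomposing $\boldsymbol{i}_{\boldsymbol{\tau}}$), but the mechanism is identical.
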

\begin{proof}
In Lemma \ref{Q3toQ2Lemma} of Appendix \ref{sec:linAlgebra}, we establish that $Q_3(\mathbf{A}) \geq Q_2( [ \mathbf{A}]_{3\times2})$ for all $\mathbf{A} \in \mathbb{R}^{3\times3}$ that satisfy $\text{Tr}(\mathbf{A}) = 0$. As Proposition \ref{compactnessProp} gives $\text{Tr}(\mathbf{S}) = 0$ a.e., we deduce that 
 \begin{equation}
\begin{aligned}\label{eq:firstQ2Inequality}
\int_{\Omega} \frac{1}{2} Q_3\big(\mathbf{S}(\mathbf{x}, x_3) \big)\dd V &\geq \int_{\Omega} \frac{1}{2} Q_2\big(\big[\mathbf{S}(\mathbf{x}, x_3)\big]_{3\times2} \big)\dd V.
\end{aligned}
\end{equation}
Observe that the expression for the first two columns of $\mathbf{S}$ in (\ref{eq:S3by2}) of Lemma \ref{identifySLemma} is of the form 
\begin{equation}
\begin{aligned}\label{eq:Sform}
\big[\mathbf{S}(\mathbf{x}, x_3) \big]_{3\times2} = \mathbf{S}_0(\mathbf{x}) + x_3 \mathbf{S}_1(\mathbf{x}) + \mathcal{S}_2(\mathbf{x}) \cdot \boldsymbol{i}_{\boldsymbol{\tau}} (\mathbf{x}, x_3), \quad \boldsymbol{i}_{\boldsymbol{\tau}} (\mathbf{x}, x_3) := \int_{0}^{x_3} \boldsymbol{\tau}(\mathbf{x}, t)\dd t
\end{aligned}
\end{equation}
for $\mathbf{S}_{0,1} \in L^2(\omega, \mathbb{R}^{3\times2})$ and a third order tensor $\mathcal{S}_2 \in L^{\infty}(\omega, \mathbb{R}^{3\times2 \times 3})$. We bound the right side of (\ref{eq:firstQ2Inequality}) from below by exploiting the form of $x_3$ dependence in (\ref{eq:Sform}). Let $\overline{\mathbf{S}}(\mathbf{x}) := \int_{-1/2}^{1/2} \big[\mathbf{S}(\mathbf{x}, x_3) \big]_{3\times2} \dd x_3$. Since $\big[\mathbf{S}(\mathbf{x}, x_3)\big]_{3\times2} = \big[\mathbf{S}(\mathbf{x}, x_3)\big]_{3\times2} - \overline{\mathbf{S}}(\mathbf{x}) + \overline{\mathbf{S}}(\mathbf{x})$, it follows that 
\begin{align*}
 \int_{\Omega} \frac{1}{2} Q_2\big(\big[\mathbf{S}(\mathbf{x}, x_3)\big]_{3\times2} \big)\dd V &= \int_{\Omega} \frac{1}{2} \Big\{ Q_2\Big(\big[\mathbf{S}(\mathbf{x}, x_3)\big]_{3\times2} - \overline{\mathbf{S}}(\mathbf{x}) \Big) + Q_2\Big( \overline{\mathbf{S}}(\mathbf{x}) \Big) \Big\} \dd V \\
 &\geq \int_{\Omega} \frac{1}{2} Q_2\Big(\big[\mathbf{S}(\mathbf{x}, x_3)\big]_{3\times2} - \overline{\mathbf{S}}(\mathbf{x}) \Big) \dd V \\
 &= \int_{\Omega} \frac{1}{2} Q_2 \Big( x_3 \mathbf{S}_1(\mathbf{x}) + \mathcal{S}_2(\mathbf{x}) \cdot \big( \boldsymbol{i}_{\boldsymbol{\tau}}(\mathbf{x}, x_3) - \overline{\boldsymbol{i}_{\boldsymbol{\tau}}}(\mathbf{x})\big) \Big) \dd V 
\end{align*}
where $\overline{\boldsymbol{i}_{\boldsymbol{\tau}}}(\mathbf{x}) := \int_{-1/2}^{1/2} \boldsymbol{i}_{\boldsymbol{\tau}}(\mathbf{x},x_3) \dd x_3$. The cross-terms that arise when expanding out the quadratic form are not present in first equality above since they are odd functions of $x_3$ and thus vanish when integrating through the thickness. Now break $\boldsymbol{i}_{\boldsymbol{\tau}}(\mathbf{x},x_3)$ up into its even and odd parts via $\boldsymbol{i}^{\text{o}}_{\boldsymbol{\tau}}(\mathbf{x},x_3) := \frac{1}{2} (\boldsymbol{i}_{\boldsymbol{\tau}}(\mathbf{x},x_3) - \boldsymbol{i}_{\boldsymbol{\tau}}(\mathbf{x},-x_3))$ and $\boldsymbol{i}^{\text{e}}_{\boldsymbol{\tau}}(\mathbf{x},x_3) := \frac{1}{2} (\boldsymbol{i}_{\boldsymbol{\tau}}(\mathbf{x},x_3) + \boldsymbol{i}_{\boldsymbol{\tau}}(\mathbf{x},-x_3))$, respectively. Since $\boldsymbol{i}_{\boldsymbol{\tau}}(\mathbf{x},x_3) = \boldsymbol{i}^{\text{o}}_{\boldsymbol{\tau}}(\mathbf{x},x_3) + \boldsymbol{i}^{\text{e}}_{\boldsymbol{\tau}}(\mathbf{x},x_3)$, 
\begin{equation}
\begin{aligned}\label{eq:lastQ2Inequality}
&\int_{\Omega} \frac{1}{2} Q_2 \Big( x_3 \mathbf{S}_1(\mathbf{x}) + \mathcal{S}_2(\mathbf{x}) \cdot \big( \boldsymbol{i}_{\boldsymbol{\tau}}(\mathbf{x}, x_3) - \overline{\boldsymbol{i}_{\boldsymbol{\tau}}}(\mathbf{x})\big) \Big) \dd V \\
&\qquad = \int_{\Omega} \frac{1}{2} \Big\{ Q_2 \Big( x_3 \mathbf{S}_1(\mathbf{x}) + \mathcal{S}_2(\mathbf{x}) \cdot \boldsymbol{i}^{\text{o}}_{\boldsymbol{\tau}}(\mathbf{x}, x_3) \Big) + Q_2 \Big( \mathcal{S}_2(\mathbf{x}) \cdot \big( \boldsymbol{i}^{\text{e}}_{\boldsymbol{\tau}}(\mathbf{x}, x_3) - \overline{\boldsymbol{i}_{\boldsymbol{\tau}}}(\mathbf{x}) \big) \Big) \Big\} \dd V \\
&\qquad \geq \int_{\Omega} \frac{1}{2} Q_2 \Big( x_3 \mathbf{S}_1(\mathbf{x}) + \mathcal{S}_2(\mathbf{x}) \cdot \boldsymbol{i}^{\text{o}}_{\boldsymbol{\tau}}(\mathbf{x}, x_3) \Big) \dd V,
\end{aligned}
\end{equation}
where again the cross-terms vanish on integration since the product of an even and odd function in $x_3$ is an odd function. Note that $\boldsymbol{i}^{\text{o}}_{\boldsymbol{\tau}}(\mathbf{x}, x_3) = \int_{0}^{x_3} \boldsymbol{\tau}^{\text{e}}(\mathbf{x}, t) \dd t$ for $\boldsymbol{\tau}^{\text{e}}$ the even part of $\boldsymbol{\tau}$ defined in (\ref{eq:strainDefsLB}). Thus, 
\begin{equation}
\begin{aligned}\label{eq:lastQ2}
\int_{\Omega} \frac{1}{2} Q_3\big(\mathbf{S}(\mathbf{x}, x_3) \big) \dd V \geq \int_{\Omega} \frac{1}{2} Q_2 \Big( x_3 \mathbf{S}_1(\mathbf{x}) + \mathcal{S}_2(\mathbf{x}) \cdot \int_0^{x_3} \boldsymbol{\tau}^{\text{e}}(\mathbf{x}, t)\dd t \Big) \dd V
\end{aligned}
\end{equation}
using (\ref{eq:firstQ2Inequality}-\ref{eq:lastQ2Inequality}). The formulas in (\ref{eq:strainDefsLB}) follow from matching the terms in (\ref{eq:Sform}) with that of Lemma \ref{identifySLemma} and applying the various definitions above. To complete the proof, we simply note that
\begin{equation}
\begin{aligned}\label{eq:ineqTauE}
\int_{\Omega} \gamma |\boldsymbol{\tau}(\mathbf{x}, x_3)|^2 \dd V = \int_{\Omega} \gamma \big\{ | \boldsymbol{\tau}^{\text{e}}(\mathbf{x}, x_3)|^2 + | \boldsymbol{\tau}^{\text{o}}(\mathbf{x}, x_3)|^2 \big\} \dd V \geq \int_{\Omega} \gamma | \boldsymbol{\tau}^{\text{e}}(\mathbf{x}, x_3)|^2 \dd V,
\end{aligned}
\end{equation}
 since $\boldsymbol{\tau}(\mathbf{x}, x_3) = \boldsymbol{\tau}^{\text{e}}(\mathbf{x}, x_3) + \boldsymbol{\tau}^{\text{o}}(\mathbf{x},x_3)$ for its even and odd parts and since the cross-terms once again vanish on integration. The inequalities in (\ref{eq:lastQ2}) and (\ref{eq:ineqTauE}) imply (\ref{eq:boundBelow2}).
\end{proof}

We now simplify the expressions for the strain measures from the prior lemma. To do so, we exploit the fact that the quadratic form $Q_2(\mathbf{A})$, $\mathbf{A} \in \mathbb{R}^{3\times2}$, only depends on the symmetric part of its principal $2\times2$ submatrix $\text{sym} ([ \mathbf{A}]_{2\times2})$; we also make use of the fact that $\boldsymbol{\tau}^{\emph{e}}$ is parameterized without loss of generality as 
\begin{equation}
\begin{aligned}\label{eq:tauEparam}
\boldsymbol{\tau}^{\text{e}}(\mathbf{x}, x_3) := \sigma \Big( \tau^{\text{e}}_{\perp}(\mathbf{x}, x_3) \mathbf{R}(\mathbf{x}) \begin{pmatrix} \mathbf{n}_0^{\perp}(\mathbf{x}) \\ 0 \end{pmatrix} + \tau^{\text{e}}_3(\mathbf{x}, x_3) \mathbf{R}(\mathbf{x}) \mathbf{e}_3\Big)
\end{aligned}
\end{equation}
a.e\;on $\Omega$ for some scalar fields $\tau^{\text{e}}_{\perp}, \tau^{\text{e}}_{3} \in L^2(\Omega)$ that are even functions of $x_3$. 
(The latter follows by Proposition \ref{compactnessProp}, namely, because $\boldsymbol{\tau}^{\text{e}}$ and $\mathbf{n}$ satisfy the conditions $\boldsymbol{\tau}^{\text{e}} \cdot \mathbf{n} = 0$ and $\mathbf{n} = \sigma \mathbf{R}\begin{pmatrix} \mathbf{n}_0 \\ 0 \end{pmatrix}$ a.e.\;on $\Omega$.)

\begin{lem}\label{get2x2StrainsLemma}
The expressions in (\ref{eq:strainDefsLB}) satisfy 
\begin{equation}
\begin{aligned}\label{eq:getProjectStrains}
\emph{sym} \Big( \big[ x_3 \mathbf{S}_1 + \mathcal{S}_2 \cdot \int_{0}^{x_3} \boldsymbol{\tau}^e \dd t \big]_{2\times2}\Big) &= x_3 \lambda_{\emph{f}}^{-1/4} \lambda_0^{1/4} \emph{sym} \big([( \boldsymbol{\ell}_{\mathbf{n}_0}^{\emph{f}})^{-1} ( \boldsymbol{\ell}_{\mathbf{n}_0}^{\emph{0}})^{1/2}]_{2\times2} \mathbf{II}_{\mathbf{y}} [( \boldsymbol{\ell}_{\mathbf{n}_0}^{\emph{0}})^{1/2}]_{2\times2}\big) \\
&\qquad + ( \lambda_{\emph{f}}^{-3/4} - \lambda_{\emph{f}}^{3/4}) \Big( \int_0^{x_3} \tau_{\perp}^{\emph{e}} \dd t \Big) \emph{sym} \big( \mathbf{n}_0^{\perp} \otimes \mathbf{n}_0 \big),
\end{aligned} 
\end{equation}
a.e.\;on $\Omega$, where $\mathbf{II}_{\mathbf{y}}$ is the second fundamental form defined in (\ref{eq:secFund}). 
\end{lem}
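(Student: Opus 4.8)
The statement is a pointwise linear-algebra identity, valid a.e.\ on $\Omega$, and its proof rests entirely on the identities gathered in Proposition \ref{compactnessProp} together with the spectral structure of the step length tensors. The plan is to first distill four structural facts about the limiting fields, and then to evaluate the two summands on the left of (\ref{eq:getProjectStrains}) separately.

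\emph{Preliminaries.} Recall that the mixed step length tensor $\boldsymbol{\ell}_{\mathbf{v}_0}^{\text{f}}$ of (\ref{eq:mixedStepLength}) and the reference step length tensor $\boldsymbol{\ell}_{\mathbf{v}_0}^0$ both fix the line $\mathbb{R}\mathbf{e}_3$ and preserve the plane $\mathbb{R}^2 \times \{0\}$, where their real powers act as $\lambda_{\text{f}}^{a}\mathbf{v}_0 \otimes \mathbf{v}_0 + \lambda_{\text{f}}^{-a/2}\mathbf{v}_0^{\perp} \otimes \mathbf{v}_0^{\perp}$ and $\lambda_0^{a}\mathbf{v}_0 \otimes \mathbf{v}_0 + \lambda_0^{-a/2}\mathbf{v}_0^{\perp} \otimes \mathbf{v}_0^{\perp}$, respectively. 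In particular, these tensors and their powers are block diagonal with respect to the splitting $\mathbb{R}^3 = (\mathbb{R}^2\times\{0\}) \oplus \mathbb{R}\mathbf{e}_3$, so truncation commutes past them: $[\boldsymbol{\ell}\,\mathbf{M}]_{2\times2} = [\boldsymbol{\ell}]_{2\times2}[\mathbf{M}]_{2\times2}$ for any such $\boldsymbol{\ell}$ and any $\mathbf{M}\in\mathbb{R}^{3\times2}$. Next, $\mathbf{n} = \sigma\mathbf{R}\begin{pmatrix}\mathbf{n}_0\\0\end{pmatrix}$ from Proposition \ref{compactnessProp} gives $(\boldsymbol{\ell}^{\text{f}}_{\mathbf{n}})^{\pm1/2}\mathbf{R} = \mathbf{R}(\boldsymbol{\ell}^{\text{f}}_{\mathbf{n}_0})^{\pm1/2}$. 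Reading the first two columns and the third column of $\mathbf{R} = (\boldsymbol{\ell}^{\text{f}}_{\mathbf{n}})^{-1/2}(\nabla\mathbf{y},\mathbf{b})(\boldsymbol{\ell}^0_{\mathbf{n}_0})^{1/2}$, using this commutation together with the spectral check $[(\boldsymbol{\ell}^{\text{f}}_{\mathbf{n}_0})^{1/2}(\boldsymbol{\ell}^0_{\mathbf{n}_0})^{-1/2}]_{2\times2} = \mathbf{g}_{\mathbf{n}_0}^{1/2}$, yields $\nabla\mathbf{y} = \mathbf{R}\begin{pmatrix}\mathbf{g}_{\mathbf{n}_0}^{1/2}\\0\end{pmatrix}$ and $\mathbf{b} = \lambda_{\text{f}}^{-1/4}\lambda_0^{1/4}\mathbf{R}\mathbf{e}_3$; combined with $\mathbf{b} = \lambda_{\text{f}}^{-1/4}\lambda_0^{1/4}\boldsymbol{\nu}_{\mathbf{y}}$ this gives $\boldsymbol{\nu}_{\mathbf{y}} = \mathbf{R}\mathbf{e}_3$, hence $\nabla\mathbf{b} = \lambda_{\text{f}}^{-1/4}\lambda_0^{1/4}\nabla\boldsymbol{\nu}_{\mathbf{y}}$. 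Finally, differentiating $|\boldsymbol{\nu}_{\mathbf{y}}|^2 \equiv 1$ shows each $\partial_\alpha\boldsymbol{\nu}_{\mathbf{y}}$ is orthogonal to $\boldsymbol{\nu}_{\mathbf{y}} = \mathbf{R}\mathbf{e}_3$, so $\mathbf{R}^T\nabla\boldsymbol{\nu}_{\mathbf{y}}$ has vanishing third row; since $\mathbf{II}_{\mathbf{y}} = (\nabla\mathbf{y})^T\nabla\boldsymbol{\nu}_{\mathbf{y}} = \mathbf{g}_{\mathbf{n}_0}^{1/2}[\mathbf{R}^T\nabla\boldsymbol{\nu}_{\mathbf{y}}]_{2\times2}$, this gives $[\mathbf{R}^T\nabla\boldsymbol{\nu}_{\mathbf{y}}]_{2\times2} = \mathbf{g}_{\mathbf{n}_0}^{-1/2}\mathbf{II}_{\mathbf{y}}$.

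\emph{The two terms.} For $x_3\mathbf{S}_1$, substitute $\nabla\mathbf{b} = \lambda_{\text{f}}^{-1/4}\lambda_0^{1/4}\nabla\boldsymbol{\nu}_{\mathbf{y}}$ into (\ref{eq:strainDefsLB}), truncate to the principal $2\times2$ block (pushing the truncation past $(\boldsymbol{\ell}^{\text{f}}_{\mathbf{n}_0})^{-1/2}$ by block diagonality), insert $[\mathbf{R}^T\nabla\boldsymbol{\nu}_{\mathbf{y}}]_{2\times2} = \mathbf{g}_{\mathbf{n}_0}^{-1/2}\mathbf{II}_{\mathbf{y}}$, and then apply the spectral identity $[(\boldsymbol{\ell}^{\text{f}}_{\mathbf{n}_0})^{-1/2}]_{2\times2}\,\mathbf{g}_{\mathbf{n}_0}^{-1/2} = [(\boldsymbol{\ell}^{\text{f}}_{\mathbf{n}_0})^{-1}(\boldsymbol{\ell}^0_{\mathbf{n}_0})^{1/2}]_{2\times2}$; taking $\text{sym}(\cdot)$ produces precisely the first term on the right of (\ref{eq:getProjectStrains}). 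For the $\mathcal{S}_2$ term, substitute the parameterization (\ref{eq:tauEparam}) of $\boldsymbol{\tau}^{\text{e}}$ and the identity $\mathbf{R}^T\mathbf{n} = \sigma\begin{pmatrix}\mathbf{n}_0\\0\end{pmatrix}$ into (\ref{eq:strainDefsLB}). Since $\sigma^2=1$, the dyad $\mathbf{R}^T\int_0^{x_3}\boldsymbol{\tau}^{\text{e}}\otimes\mathbf{R}^T\mathbf{n}$ equals $\big(\int_0^{x_3}\tau^{\text{e}}_{\perp}\big)\begin{pmatrix}\mathbf{n}_0^{\perp}\\0\end{pmatrix}\otimes\begin{pmatrix}\mathbf{n}_0\\0\end{pmatrix} + \big(\int_0^{x_3}\tau^{\text{e}}_{3}\big)\,\mathbf{e}_3\otimes\begin{pmatrix}\mathbf{n}_0\\0\end{pmatrix}$; after symmetrizing, truncating to $\mathbb{R}^{3\times2}$, and applying $(\boldsymbol{\ell}^{\text{f}}_{\mathbf{n}_0})^{-1/2}$ (which keeps $\mathbf{e}_3$ an eigenvector), the $\tau^{\text{e}}_3$ piece lies outside the principal $2\times2$ block and drops out, while a short spectral computation gives $\text{sym}\big([(\boldsymbol{\ell}^{\text{f}}_{\mathbf{n}_0})^{-1/2}]_{2\times2}\,\text{sym}(\mathbf{n}_0^{\perp}\otimes\mathbf{n}_0)\big) = \tfrac12(\lambda_{\text{f}}^{1/4}+\lambda_{\text{f}}^{-1/2})\,\text{sym}(\mathbf{n}_0^{\perp}\otimes\mathbf{n}_0)$. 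Multiplying by $-c_{\text{f}} = -2(\lambda_{\text{f}}^{1/2}-\lambda_{\text{f}}^{-1/4})$ and simplifying $-(\lambda_{\text{f}}^{1/2}-\lambda_{\text{f}}^{-1/4})(\lambda_{\text{f}}^{1/4}+\lambda_{\text{f}}^{-1/2}) = \lambda_{\text{f}}^{-3/4}-\lambda_{\text{f}}^{3/4}$ produces the second term on the right of (\ref{eq:getProjectStrains}). Adding the two contributions proves the lemma.

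\emph{On the difficulty.} There is no analytic obstacle: every manipulation is an a.e.\ pointwise identity among fields whose regularity ($\mathbf{R},\boldsymbol{\nu}_{\mathbf{y}} \in H^1(\omega)$, $\nabla\mathbf{y} \in L^{\infty}(\omega)$, $\mathbf{II}_{\mathbf{y}}\in L^2(\omega)$) is already supplied by Proposition \ref{compactnessProp}. The only point demanding care is the bookkeeping around the truncations $[\cdot]_{3\times2}$ and $[\cdot]_{2\times2}$: one must repeatedly invoke the block-diagonal structure of the step length tensors with respect to $\mathbb{R}^3 = (\mathbb{R}^2\times\{0\})\oplus\mathbb{R}\mathbf{e}_3$ so that these truncations may be commuted past them, and keep track of which $\mathbf{e}_3$-row/column data (the third column of $\nabla\mathbf{y}$, and the $\tau^{\text{e}}_3$ component of $\boldsymbol{\tau}^{\text{e}}$) falls outside the relevant $2\times2$ blocks. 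Matching up the powers of $\lambda_{\text{f}}$ and $\lambda_0$ in the two spectral identities is the only place an arithmetic slip could hide.
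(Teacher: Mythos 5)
Your proof is correct and follows essentially the same route as the paper's: identify the first two columns of $\mathbf{S}$ pointwise using the block-diagonal structure of the step-length tensors, the Gram relation built into $\mathbf{R}$, the orthogonality of $\boldsymbol{\nu}_{\mathbf{y}}$ to its own gradient (equivalently, $\mathbf{b}\cdot\nabla\mathbf{b}=0$), and the spectral arithmetic in the $\mathbf{n}_0,\mathbf{n}_0^\perp$ eigenbasis. The only cosmetic difference is that the paper eliminates $\mathbf{R}^T$ at the outset via Remark \ref{lotsOfIdentsRem} and computes $(\nabla\mathbf{y},\mathbf{b})^T\nabla\mathbf{b}$ directly, whereas you keep $\mathbf{R}$ in play, extract $\nabla\mathbf{y}=\mathbf{R}\begin{pmatrix}\mathbf{g}_{\mathbf{n}_0}^{1/2}\\0\end{pmatrix}$ and $\boldsymbol{\nu}_{\mathbf{y}}=\mathbf{R}\mathbf{e}_3$ from its columns, and then work through $[\mathbf{R}^T\nabla\boldsymbol{\nu}_{\mathbf{y}}]_{2\times2}=\mathbf{g}_{\mathbf{n}_0}^{-1/2}\mathbf{II}_{\mathbf{y}}$ — an equivalent repackaging of the same identities.
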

\begin{proof}
All stated identities in this proof hold pointwise a.e.\;on $\Omega$; this notation is suppressed below for conciseness. Per Remark \ref{lotsOfIdentsRem}, $\mathbf{R}^T = (\boldsymbol{\ell}_{\mathbf{n}_0}^{\text{f}})^{-1/2} (\boldsymbol{\ell}_{\mathbf{n}_0}^0)^{1/2} ( \nabla \mathbf{y}, \mathbf{b})^T$, so $\mathbf{S}_1$ in (\ref{eq:strainDefsLB}) satisfies 
\begin{equation}
\begin{aligned}\label{eq:S1FirstManip}
\mathbf{S}_1 = (\boldsymbol{\ell}_{\mathbf{n}_0}^{\text{f}})^{-1} (\boldsymbol{\ell}_{\mathbf{n}_0}^0)^{1/2} ( \nabla \mathbf{y}, \mathbf{b})^T \nabla \mathbf{b} [ (\boldsymbol{\ell}_{\mathbf{n}_0}^0)^{1/2} ]_{2\times2} .
\end{aligned}
\end{equation}
 Now observe that $(\nabla \mathbf{y})^T\nabla \mathbf{b} = \lambda_{\text{f}}^{-1/4} \lambda_0^{1/4} \mathbf{II}_{\mathbf{y}}$ and $\mathbf{b} \cdot \nabla \mathbf{b} = 0$, since $\mathbf{b} = \lambda_{\text{f}}^{-1/4} \lambda_0^{1/4} \boldsymbol{\nu}_{\mathbf{y}}$ for the surface normal define in (\ref{eq:secFund}). It follow that (\ref{eq:S1FirstManip}) can be written as 
 \begin{equation}
 \begin{aligned}\label{eq:S1SecManip}
 \mathbf{S}_1 = (\boldsymbol{\ell}_{\mathbf{n}_0}^{\text{f}})^{-1} (\boldsymbol{\ell}_{\mathbf{n}_0}^0)^{1/2} \begin{pmatrix} \lambda_{\text{f}}^{-1/4} \lambda_0^{1/4} \mathbf{II}_{\mathbf{y}} \\ [\mathbf{0}]_{2\times1} \end{pmatrix} [ (\boldsymbol{\ell}_{\mathbf{n}_0}^0)^{1/2} ]_{2\times2} = \lambda_{\text{f}}^{-1/4} \lambda_0^{1/4} \begin{pmatrix}
[ (\boldsymbol{\ell}_{\mathbf{n}_0}^{\text{f}})^{-1} (\boldsymbol{\ell}_{\mathbf{n}_0}^0)^{1/2} ]_{2\times2} \mathbf{II}_{\mathbf{y}} [ (\boldsymbol{\ell}_{\mathbf{n}_0}^0)^{1/2} ]_{2\times2} \\ [\mathbf{0}]_{2\times1} 
 \end{pmatrix} 
 \end{aligned}
 \end{equation}
 because $\mathbf{n}_0$ is a planar director field. Next we use the parameterization of $\boldsymbol{\tau}^{\text{e}}$ in (\ref{eq:tauEparam}) and that of $\mathbf{n} = \sigma \mathbf{R}\begin{pmatrix} \mathbf{n}_0 \\ 0 \end{pmatrix}$ from Proposition \ref{compactnessProp} to furnish the identities 
 \begin{equation}
 \begin{aligned}\label{eq:S2FirstManip}
 \mathcal{S}_2 \cdot \int_0^{x_3} \boldsymbol{\tau}^{\text{e}} \dd t &= -c_{\text{f}} (\boldsymbol{\ell}_{\mathbf{n}_0}^{\text{f}})^{-1/2} \text{sym}\Big( \begin{pmatrix} \int_{0}^{x_3} \tau^{\text{e}}_{\perp} \dd t \mathbf{n}_0^{\perp} \\ \int_{0}^{x_3} \tau^{\text{e}}_{3}\dd t \end{pmatrix} \otimes \begin{pmatrix} \mathbf{n}_0 \\ 0 \end{pmatrix} \Big) \\
 &= -\frac{c_{\text{f}}}{2}\Big[ \lambda_{\text{f}}^{1/4} \begin{pmatrix} \int_{0}^{x_3} \tau^{\text{e}}_{\perp} \dd t \mathbf{n}_0^{\perp} \\ \int_{0}^{x_3} \tau^{\text{e}}_{3}\dd t \end{pmatrix} \otimes \begin{pmatrix} \mathbf{n}_0 \\ 0 \end{pmatrix} + \lambda_{\text{f}}^{-1/2} \begin{pmatrix} \mathbf{n}_0 \\ 0 \end{pmatrix} \otimes \begin{pmatrix} \int_{0}^{x_3} \tau^{\text{e}}_{\perp} \dd t \mathbf{n}_0^{\perp} \\ \int_{0}^{x_3} \tau^{\text{e}}_{3}\dd t \end{pmatrix} \Big].
 \end{aligned}
 \end{equation}
The result in (\ref{eq:getProjectStrains}) clearly follows from (\ref{eq:S1SecManip}) and (\ref{eq:S2FirstManip}) since $c_{\text{f}} = 2 (\lambda_{\text{f}}^{1/2} - \lambda_{\text{f}}^{-1/4}) $.
\end{proof}

We now bound the energy from below in terms of the scalar field $\tau_{\perp}^{\text{e}}$ in (\ref{eq:tauEparam}), the gradient of the convected director $\nabla \mathbf{y} \mathbf{n}_0$, and the $2 \times2$ symmetric strain tensors
\begin{equation}
\begin{aligned}\label{eq:E1E2Strains}
&\mathbf{E}_{1}(\mathbf{y}, \mathbf{n}_0) := \lambda_{\text{f}}^{-1/4} \lambda_0^{1/4} \text{sym} \big([( \boldsymbol{\ell}_{\mathbf{n}_0}^{\text{f}})^{-1} ( \boldsymbol{\ell}_{\mathbf{n}_0}^{0})^{1/2}]_{2\times2} \mathbf{II}_{\mathbf{y}} [( \boldsymbol{\ell}_{\mathbf{n}_0}^{\emph{0}})^{1/2}]_{2\times2}\big), \\
&\mathbf{E}_{2}(\mathbf{n}_0) := ( \lambda_{\text{f}}^{-3/4} - \lambda_{\text{f}}^{3/4}) \text{sym} \big( \mathbf{n}_0^{\perp} \otimes \mathbf{n}_0 \big).
\end{aligned}
\end{equation}
\begin{lem}\label{thirdLBIneqLemma}
The limiting fields obey the energy inequality
\begin{equation}
\begin{aligned}\label{eq:LBInequalityAlmostFinal}
 &\int_{\Omega} \Big\{ \frac{1}{2}Q_2\Big( x_3 \mathbf{S}_1 + \mathcal{S}_2 \cdot \int_{0}^{x_3} \boldsymbol{\tau}^{\emph{e}} \dd t \Big) + \gamma( |\nabla \mathbf{n}|^2 + |\boldsymbol{\tau}^{\emph{e}}|^2 ) \Big\} \dd V \\
& \quad \geq \int_{\Omega} \Big\{ \mu \Big| x_3 \mathbf{E}_1(\mathbf{y}, \mathbf{n}_0) + \int_0^{x_3} \tau^{\emph{e}}_{\perp} \dd t \mathbf{E}_2(\mathbf{n}_0) \Big|^2 + \gamma (\tau_{\perp}^{\emph{e}})^2 + \mu \emph{Tr}\big( x_3 \mathbf{E}_1(\mathbf{y}, \mathbf{n}_0)\big)^2 + \gamma \lambda_{\emph{f}}^{-1} \lambda_0 \big| \nabla(\nabla \mathbf{y} \mathbf{n}_0) \big|^2 \Big\} \dd V
\end{aligned}
\end{equation}
\end{lem}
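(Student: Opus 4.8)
The plan is to show that, once we substitute the structure extracted in Lemma \ref{get2x2StrainsLemma}, the left-hand side of (\ref{eq:LBInequalityAlmostFinal}) decomposes into exactly the four integrands on the right together with two manifestly nonnegative remainders. First I would use that $Q_2(\mathbf{A})$ depends on $\mathbf{A}\in\mathbb{R}^{3\times2}$ only through $\text{sym}([\mathbf{A}]_{2\times2})$, so that identity (\ref{eq:getProjectStrains}) — together with the symmetry of $\mathbf{E}_1(\mathbf{y},\mathbf{n}_0)$ and $\mathbf{E}_2(\mathbf{n}_0)$ defined in (\ref{eq:E1E2Strains}) — gives
\[
\tfrac12 Q_2\!\Big( x_3 \mathbf{S}_1 + \mathcal{S}_2 \cdot \!\int_0^{x_3}\!\boldsymbol{\tau}^{\text{e}}\,\dd t\Big) = \mu\Big| x_3 \mathbf{E}_1(\mathbf{y},\mathbf{n}_0) + \Big(\!\int_0^{x_3}\!\tau_\perp^{\text{e}}\,\dd t\Big)\mathbf{E}_2(\mathbf{n}_0)\Big|^2 + \mu\,\text{Tr}\!\Big( x_3\mathbf{E}_1(\mathbf{y},\mathbf{n}_0) + \Big(\!\int_0^{x_3}\!\tau_\perp^{\text{e}}\,\dd t\Big)\mathbf{E}_2(\mathbf{n}_0)\Big)^2
\]
a.e.\;on $\Omega$. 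Since $\mathbf{E}_2(\mathbf{n}_0)=(\lambda_{\text{f}}^{-3/4}-\lambda_{\text{f}}^{3/4})\,\text{sym}(\mathbf{n}_0^{\perp}\otimes\mathbf{n}_0)$ and $\mathbf{n}_0\cdot\mathbf{n}_0^{\perp}=0$, the matrix $\mathbf{E}_2(\mathbf{n}_0)$ is trace-free, so the trace above collapses to $\text{Tr}\big(x_3\mathbf{E}_1(\mathbf{y},\mathbf{n}_0)\big)$. This accounts, as an \emph{exact equality}, for the first and third integrands on the right of (\ref{eq:LBInequalityAlmostFinal}).

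It then remains to bound the Frank-elastic contribution $\int_\Omega\gamma\big(|\nabla\mathbf{n}|^2+|\boldsymbol{\tau}^{\text{e}}|^2\big)\,\dd V$ from below by $\int_\Omega\gamma\big((\tau_\perp^{\text{e}})^2+\lambda_{\text{f}}^{-1}\lambda_0\,|\nabla(\nabla\mathbf{y}\mathbf{n}_0)|^2\big)\,\dd V$. For the $\boldsymbol{\tau}^{\text{e}}$ piece I would invoke the parameterization (\ref{eq:tauEparam}): because $\mathbf{R}$ is a rotation field and the vectors $(\mathbf{n}_0^{\perp},0)^T$ and $\mathbf{e}_3$ are orthonormal in $\mathbb{R}^3$, we get $|\boldsymbol{\tau}^{\text{e}}|^2=(\tau_\perp^{\text{e}})^2+(\tau_3^{\text{e}})^2\ge(\tau_\perp^{\text{e}})^2$ a.e. For the $|\nabla\mathbf{n}|^2$ piece I would use the metric constraint from Proposition \ref{compactnessProp}: from $(\nabla\mathbf{y})^T\nabla\mathbf{y}=\mathbf{g}_{\mathbf{n}_0}$ and (\ref{eq:metricDef}) one computes $|\nabla\mathbf{y}\mathbf{n}_0|^2=\mathbf{n}_0\cdot\mathbf{g}_{\mathbf{n}_0}\mathbf{n}_0=\lambda_{\text{f}}\lambda_0^{-1}$, which is \emph{constant}, so the anchoring identity $\mathbf{n}=\sigma\,\nabla\mathbf{y}\mathbf{n}_0/|\nabla\mathbf{y}\mathbf{n}_0|$ becomes $\mathbf{n}=\sigma\lambda_{\text{f}}^{-1/2}\lambda_0^{1/2}\,\nabla\mathbf{y}\mathbf{n}_0$ and hence $|\nabla\mathbf{n}|^2=\lambda_{\text{f}}^{-1}\lambda_0\,|\nabla(\nabla\mathbf{y}\mathbf{n}_0)|^2$ a.e. Adding these two lower bounds to the equality from the previous paragraph and integrating over $\Omega$ gives (\ref{eq:LBInequalityAlmostFinal}).

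I do not anticipate a genuine obstacle: each step is a short pointwise algebraic manipulation resting on the structure already established in Lemma \ref{get2x2StrainsLemma} and Proposition \ref{compactnessProp}. The points requiring a little care are verifying that the $2\times2$ symmetric-part identity of Lemma \ref{get2x2StrainsLemma} turns the $Q_2$ contribution into an equality (not merely a lower bound), that $\mathbf{E}_2(\mathbf{n}_0)$ is genuinely trace-free so the $\mu\,\text{Tr}(\cdot)^2$ term decouples into $\mu\,\text{Tr}(x_3\mathbf{E}_1(\mathbf{y},\mathbf{n}_0))^2$, and that the metric constraint pins $|\nabla\mathbf{y}\mathbf{n}_0|$ to the constant $\lambda_{\text{f}}^{1/2}\lambda_0^{-1/2}$ so that differentiating the anchoring relation and reading off $|\nabla\mathbf{n}|^2$ is clean.
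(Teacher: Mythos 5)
Your proposal is correct and matches the paper's proof essentially step for step: the $Q_2$ definition together with Lemma \ref{get2x2StrainsLemma} and the trace-free property of $\mathbf{E}_2(\mathbf{n}_0)$ produce the first and third integrands as an equality; the orthonormal parameterization (\ref{eq:tauEparam}) gives $|\boldsymbol{\tau}^{\text{e}}|^2\geq(\tau_\perp^{\text{e}})^2$; and the metric constraint fixing $|\nabla\mathbf{y}\mathbf{n}_0|$ to a constant converts $|\nabla\mathbf{n}|^2$ into $\lambda_{\text{f}}^{-1}\lambda_0|\nabla(\nabla\mathbf{y}\mathbf{n}_0)|^2$. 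No gaps.
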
 
\begin{proof}
From Lemma \ref{get2x2StrainsLemma} and the definitions in (\ref{eq:E1E2Strains}), $\text{sym}(\big[x_3 \mathbf{S}_1 + \mathcal{S}_2 \cdot \int_{0}^{x_3} \boldsymbol{\tau}^{\emph{e}} \dd t\big]_{2\times2}) = x_3 \mathbf{E}_1(\mathbf{y}, \mathbf{n}_0) +( \int_{0}^{x_3} \tau_{\perp}^{\text{e}}\dd t ) \mathbf{E}_2( \mathbf{n}_0)$ a.e.\;on $\Omega$. Thus, the definition of $Q_2(\mathbf{A})$ in Lemma \ref{Q3toQ2Lemma} gives that 
\begin{equation}
\begin{aligned}\label{eq:firstLBIneqLemm}
&\int_{\Omega} \frac{1}{2}Q_2\Big( x_3 \mathbf{S}_1 + \mathcal{S}_2 \cdot \int_{0}^{x_3} \boldsymbol{\tau}^{\emph{e}} \dd t \Big) \dd V = \int_{\Omega} \mu \Big\{ \Big| x_3 \mathbf{E}_1(\mathbf{y}, \mathbf{n}_0) + \int_0^{x_3} \tau^{\emph{e}}_{\perp} \dd t \mathbf{E}_2(\mathbf{n}_0) \Big|^2 + \text{Tr}\big( x_3 \mathbf{E}_1(\mathbf{y}, \mathbf{n}_0)\big)^2 \Big\} \dd V
\end{aligned}
\end{equation}
We now consider the other terms in the energy. First observe that the general parameterization in (\ref{eq:tauEparam}) furnishes the inequality 
\begin{align*}
\int_{\Omega} \gamma |\boldsymbol{\tau}^{\text{e}}|^2 \dd V = \int_{\Omega} \gamma \big| \tau^{\text{e}}_{\perp} \begin{pmatrix} \mathbf{n}_0^{\perp} \\ 0 \end{pmatrix} + \tau_3^{\text{e}} \mathbf{e}_3 \big|^2 \dd V = \int_{\Omega} \gamma \big\{ ( \tau^{\text{e}}_{\perp})^2 + ( \tau^{\text{e}}_{3})^2 \big\} \dd V \geq \int_{\Omega} \gamma ( \tau^{\text{e}}_{\perp})^2 \dd V
\end{align*}
since $\mathbf{R}$ is a rotation field and $\mathbf{n}_0^{\perp}$ and $\mathbf{e}_3$ are perpendicular with unit magnitude. For the remaining term in the energy, observe that $|\nabla \mathbf{y} \mathbf{n}_0|^2 = \mathbf{n}_0 \cdot \mathbf{g}_{\mathbf{n}_0} \mathbf{n}_0 = \lambda_{\text{f}} \lambda_0^{-1}$ a.e.\;on $\Omega$ by (\ref{eq:identsOnPlanarFields}). Thus, 
\begin{equation}
\begin{aligned}\label{eq:lastLBIneqLemm}
\int_{\Omega}\gamma |\nabla \mathbf{n}|^2 \dd V = \int_{\Omega} \gamma | \nabla \Big( \frac{\nabla \mathbf{y} \mathbf{n}_0}{|\nabla \mathbf{y} \mathbf{n}_0|} \Big)|^2 \dd V = \int_{\Omega}\gamma \lambda_{\text{f}}^{-1} \lambda_0 \big| \nabla ( \nabla \mathbf{y} \mathbf{n}_0) \big|^2 \dd V 
\end{aligned}
\end{equation}
from (\ref{eq:identsOnPlanarFields}) in Proposition \ref{compactnessProp}. The result in (\ref{eq:LBInequalityAlmostFinal}) follows from (\ref{eq:firstLBIneqLemm}-\ref{eq:lastLBIneqLemm}).
\end{proof}

\subsubsection{Optimization of $\tau^e_\perp$}
The next major step in the lowerbound is to minimize the energy on the right of (\ref{eq:LBInequalityAlmostFinal}) amongst all $\tau_{\perp}^{\text{e}} \in L^2(\Omega)$ that are even functions of $x_3$. As a preliminary, we consider the energy functional 
\begin{align*}
E_0(f) := \int_{-1/2}^{1/2} \big\{ \mu \big| t \mathbf{E}_1 + f(t) \mathbf{E}_2 \big|^2 + \gamma (f'(t))^2 \big\} \dd t , \quad \mathbf{E}_{1,2} \in \mathbb{R}^{2\times2}_{\text{sym}}, \quad \mathbf{E}_2 \neq \mathbf{0},
\end{align*}
in the scalar fields $f \colon (-1/2,1/2) \rightarrow \mathbb{R}$ and seek the infimum of this energy with respect to odd functions 
\begin{align*}
E_0^{\star} = \inf \big\{ E_0(f) \colon f \in H^1((-1/2,1/2)), f(-t) = f(t) \text{ a.e.} \big\}.
\end{align*}
\begin{lem}\label{infimumLemma}
$E_0^{\star}$ satisfies 
\begin{align*}
E_0^{\star} &=\frac{\mu}{12} \Big( |\mathbf{E}_1|^2 - g \Big( \big(\tfrac{\mu}{\gamma}\big)^{1/2} |\mathbf{E}_2| \Big) \frac{( \mathbf{E}_1 \colon \mathbf{E}_2)^2}{|\mathbf{E}_2|^2} \Big)
\end{align*}
for a monotonically increasing function 
\begin{equation}
\begin{aligned}\label{eq:gLambda}
g(\alpha) := 1 - 12\alpha^{-2} + 24 \alpha^{-3} \tanh\Big( \frac{\alpha}{2}\Big) , \quad \alpha >0,
\end{aligned}
\end{equation}
with the limiting properties $\lim_{\alpha \rightarrow 0} g(\alpha) = 0$ and $\lim_{\alpha \rightarrow \infty} g(\alpha) = 1$.
The minimizer $f^{\star}$ to $E_0^{\star}$ is
\begin{equation}
\begin{aligned}\label{eq:fminimizer}
f^{\star}(t) =\Big( \frac{\mathbf{E}_1 \colon \mathbf{E}_2}{|\mathbf{E}_2|^2} \Big)\Big( \frac{\gamma}{\mu}\Big)^{1/2} \bigg[ \frac{\sinh \big(\big(\tfrac{\mu}{\gamma}\big)^{1/2} |\mathbf{E}_2| t\big)}{\cosh \big(\tfrac{1}{2}\big(\tfrac{\mu}{\gamma}\big)^{1/2} |\mathbf{E}_2| \big)} - \big(\tfrac{\mu}{\gamma}\big)^{1/2} |\mathbf{E}_2| t \bigg] 
\end{aligned}
\end{equation}
\end{lem}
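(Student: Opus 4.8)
The plan is to regard $E_0$ as a strictly convex, coercive quadratic functional on the closed subspace $X\subset H^1((-1/2,1/2))$ of odd functions, to solve its Euler--Lagrange equation in closed form, and then to evaluate the functional at the resulting minimizer. First I would expand the integrand,
\[
\mu\big| t\mathbf{E}_1 + f(t)\mathbf{E}_2 \big|^2 = \mu t^2|\mathbf{E}_1|^2 + 2\mu t f(t)\,\mathbf{E}_1\colon\mathbf{E}_2 + \mu f(t)^2 |\mathbf{E}_2|^2,
\]
noting that $\int_{-1/2}^{1/2}\mu t^2|\mathbf{E}_1|^2\,\dd t = \tfrac{\mu}{12}|\mathbf{E}_1|^2$ is a constant independent of $f$. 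Since $\mathbf{E}_2\neq\mathbf{0}$, the quadratic part $\mu|\mathbf{E}_2|^2\int_{-1/2}^{1/2}f^2 + \gamma\int_{-1/2}^{1/2}(f')^2$ controls $\|f\|_{H^1}^2$, and the remaining linear cross term is absorbed by Young's inequality; hence $E_0$ is coercive and weakly lower semicontinuous on $X$, so the direct method of the calculus of variations (or Lax--Milgram for the associated variational equation) produces a unique minimizer $f^\star\in X$, characterized by
\[
\int_{-1/2}^{1/2}\Big\{ \mu\big( t\,\mathbf{E}_1\colon\mathbf{E}_2 + f^\star|\mathbf{E}_2|^2 \big)\varphi + \gamma (f^\star)'\varphi' \Big\}\,\dd t = 0 \quad \text{for every } \varphi\in X.
\]

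The next step is to solve this weak equation. Because the admissible test functions range over $X$ with no constraint at the endpoints $t=\pm 1/2$, an integration by parts shows that $f^\star$ is the unique odd solution of the linear two-point boundary value problem
\[
-\gamma f'' + \mu|\mathbf{E}_2|^2 f = -\mu\,(\mathbf{E}_1\colon\mathbf{E}_2)\,t \quad \text{on } (-\tfrac12,\tfrac12), \qquad f'\big(\pm\tfrac12\big)=0.
\]
Writing $\alpha := (\mu/\gamma)^{1/2}|\mathbf{E}_2|$, the general odd solution is $f(t) = -\tfrac{\mathbf{E}_1\colon\mathbf{E}_2}{|\mathbf{E}_2|^2}\,t + A\sinh(\alpha t)$; the natural condition $f'(1/2)=0$ fixes $A = \tfrac{\mathbf{E}_1\colon\mathbf{E}_2}{|\mathbf{E}_2|^2}\,\alpha^{-1}\sech(\alpha/2)$ and yields the explicit minimizer recorded in (\ref{eq:fminimizer}) (equivalently, $(f^\star)'(t) = \tfrac{\mathbf{E}_1\colon\mathbf{E}_2}{|\mathbf{E}_2|^2}\big(\cosh(\alpha t)/\cosh(\alpha/2) - 1\big)$).

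To obtain $E_0^\star = E_0(f^\star)$, I would test the weak equation against $\varphi = f^\star$, which gives $\mu|\mathbf{E}_2|^2\int_{-1/2}^{1/2}(f^\star)^2 + \gamma\int_{-1/2}^{1/2}((f^\star)')^2 = -\mu\,(\mathbf{E}_1\colon\mathbf{E}_2)\int_{-1/2}^{1/2} t f^\star\,\dd t$. Substituting this into the expansion of $E_0(f^\star)$ collapses all of the $f$-dependence into a single scalar integral,
\[
E_0^\star = \tfrac{\mu}{12}|\mathbf{E}_1|^2 + \mu\,(\mathbf{E}_1\colon\mathbf{E}_2)\int_{-1/2}^{1/2} t\, f^\star(t)\,\dd t .
\]
A single integration by parts of $\int_{-1/2}^{1/2} t\sinh(\alpha t)\,\dd t$, together with $\int_{-1/2}^{1/2} t^2\,\dd t = \tfrac1{12}$, evaluates the last integral to $-\tfrac1{12}\,\tfrac{\mathbf{E}_1\colon\mathbf{E}_2}{|\mathbf{E}_2|^2}\,g(\alpha)$ with $g$ exactly the function in (\ref{eq:gLambda}); this gives the claimed formula for $E_0^\star$. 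Finally, the limits $g(0^+)=0$ and $g(\infty)=1$ follow from the expansion $\tanh(\alpha/2)=\tfrac{\alpha}{2}-\tfrac{\alpha^3}{24}+O(\alpha^5)$ and from $\tanh(\alpha/2)\to1$, respectively, while the monotonicity of $g$ follows from a short sign analysis: after clearing $\alpha^{-4}$ it suffices to show $\phi(\alpha):=2\alpha - 6\tanh(\alpha/2) + \alpha\,\sech^2(\alpha/2) > 0$ for $\alpha>0$, and $\phi(0)=0$ with $\phi'(\alpha) = \tanh(\alpha/2)\big(2\tanh(\alpha/2) - \alpha\,\sech^2(\alpha/2)\big)$, whose inner factor likewise vanishes at $0$ and has derivative $\alpha\,\sech^2(\alpha/2)\tanh(\alpha/2)>0$.

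The argument is standard in structure --- a coercive, strictly convex, one-dimensional variational problem solved via its Euler--Lagrange equation --- so the only real obstacle is computational: carrying $f^\star$ through the integral $\int t f^\star$ and the ensuing hyperbolic-function algebra so as to recognize the combination $1 - 12\alpha^{-2} + 24\alpha^{-3}\tanh(\alpha/2)$, together with the (brief, nested) sign check establishing that $g$ is increasing. No functional-analytic subtleties arise, since everything is posed on a bounded interval.
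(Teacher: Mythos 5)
Your proof is correct and arrives at the same answer as the paper, and the overall architecture --- solve the Euler--Lagrange ODE with natural boundary conditions and plug the closed-form solution back into the energy --- is shared. But your handling of the two phases differs enough to be worth recording. The paper first reduces the problem to $(0,1/2)$ with a Dirichlet condition at the origin, using the fact that the integrand is even when $f$ is odd, and then solves the ODE on the half-interval; you instead stay on $(-1/2,1/2)$ and work directly on the closed subspace $X$ of odd functions, deriving the boundary-value problem by a short localization argument in which the evenness of $(f^\star)'$ turns the two endpoint terms into a single Neumann condition. You also supply the Lax--Milgram / direct-method step that guarantees a unique minimizer, which the paper leaves implicit. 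The more substantive divergence is in evaluating $E_0^\star$: you test the weak Euler--Lagrange relation against $\varphi = f^\star$ to cancel the full quadratic part, collapsing the energy to $\tfrac{\mu}{12}|\mathbf{E}_1|^2 + \mu\,(\mathbf{E}_1\colon\mathbf{E}_2)\int_{-1/2}^{1/2} t f^\star\,\dd t$ and then computing the remaining scalar integral by one integration by parts. The paper instead substitutes $f^\star = |\mathbf{E}_2|^{-2}(\mathbf{E}_1\colon\mathbf{E}_2) f_\alpha$ directly and must recognize that $g(\alpha) = -24\int_0^{1/2}\{2t f_\alpha + f_\alpha^2 + \alpha^{-2}(f_\alpha')^2\}\dd t$, a more laborious hyperbolic identity. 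Your route is cleaner and simply exploits the standard fact that the bilinear and linear parts of a quadratic functional are related at the minimizer. Finally, your nested sign analysis for $g' > 0$ --- reducing to $\phi(\alpha) = 2\alpha - 6\tanh(\alpha/2) + \alpha\,\text{sech}^2(\alpha/2) > 0$, then $\phi'(\alpha) = \tanh(\alpha/2)\big(2\tanh(\alpha/2) - \alpha\,\text{sech}^2(\alpha/2)\big)$, with the inner factor vanishing at $0$ and having positive derivative --- fills in a step that the paper asserts is ``straightforward to verify'' without proof; it is correct and is a genuine addition.
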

\begin{proof}
First observe that $w(t) = \mu \big| t \mathbf{E}_1 + f(t) \mathbf{E}_2 \big|^2 + \gamma (f'(t))^2$ is an even function when $f(t)$ is odd. Thus, for any odd function $f\in H^1(-1/2,1/2)$ we have
\[
{E}_0(f) = 2\int_{0}^{1/2} \{\mu|t \mathbf{E}_1 +f(t) \mathbf{E}_2|^2 +\gamma(f'(t))^2 \} \dd t =: \tilde{E}_0(f).
\]
As a consequence, we have that $f \in H^1(-1/2,1/2) $ minimizes $E_0$ over all odd $H^1(-1/2,1/2)$ functions if and only if $f|_{(0,1/2)} \in H^1(0,1/2)$ minimizes $\tilde{E}_0$ subject to $f(0) = 0$. Furthermore, the minimum value of $E_0^{\star}$ satisfies
\[
E_0^{\star} = \inf \{ \tilde{E}_0(f) \colon f \in H^1(0,1/2), f(0) = 0\}.
\]

The first variation of the latter formulation of the energy furnishes the Euler-Lagrange equation, with Dirichlet and natural boundary conditions. The strong form reads 
\begin{align*}
\begin{cases}
\mu t \mathbf{E}_1\colon \mathbf{E}_2 + \mu f(t) |\mathbf{E}_2|^2 - \gamma f''(t) = 0 \\
f(0) = 0, \; f'(1/2) = 0
\end{cases}
\end{align*}
A particular solution to the Euler-Lagrange equation is $f_{\text{P}}(t) = -t |\mathbf{E}_2|^{-2} \mathbf{E}_1 \colon \mathbf{E}_2$, while the general solution to the homogeneous problem $\mu f(t) |\mathbf{E}_2|^2 - \gamma f''(t) = 0$ is $f_{\text{H}}(t) = A \exp (\big(\tfrac{\mu}{\gamma}\big)^{1/2} |\mathbf{E}_2| t) + B \exp (-\big(\tfrac{\mu}{\gamma}\big)^{1/2} |\mathbf{E}_2| t)$. Supplying the boundary conditions to $f^{\star}(t) = f_{\text{P}}(t) + f_{\text{H}}(t)$ gives the minimizer in (\ref{eq:fminimizer}). 

To calculate the energy, define $\alpha := \big(\tfrac{\mu}{\gamma}\big)^{1/2} |\mathbf{E}_2|$ and $f_{\alpha}(t) := \frac{1}{\alpha} \Big( \frac{\sinh (\alpha t)}{\cosh(\alpha/2)} - \alpha t \Big)$, and observe that the minimizer satisfies $f^{\star}(t) = |\mathbf{E}_2|^{-2} (\mathbf{E}_1 \colon \mathbf{E}_2) f_{\alpha}(t)$. It follows by expanding out $E_0^{\star} = \tilde{E}_0(f^{\star})$ that 
\begin{align*}
E_0^{\star} &= \frac{\mu}{12} |\mathbf{E}_1|^2 + \Big( \int_0^{1/2} \big\{ 2 t f_{\alpha}(t) + \big(f_{\alpha}(t)\big)^2 \big\} \dd t \Big) \Big( 2 \mu \frac{(\mathbf{E}_1 \colon \mathbf{E}_2)^2}{|\mathbf{E}_2|^2} \Big) + \Big( \int_0^{1/2} (f_{\alpha}'(t))^2 \dd t \Big) \Big(2 \gamma \frac{(\mathbf{E}_1 \colon \mathbf{E}_2)^2}{|\mathbf{E}_2|^4} \Big) \\ 
& =\frac{\mu}{12} |\mathbf{E}_1|^2 + \Big( \int_0^{1/2} \big\{ 2 t f_{\alpha}(t) + \big(f_{\alpha}(t)\big)^2 + \alpha^{-2} (f_{\alpha}'(t))^2 \big\} \dd t \Big) \Big(2 \mu \frac{(\mathbf{E}_1 \colon \mathbf{E}_2)^2}{|\mathbf{E}_2|^2} \Big)
\end{align*}
where the second equality combines the two integrals using the fact that $\gamma |\mathbf{E}_2|^{-2} = \mu \alpha^{-2}$. It is not hard to see that $g(\alpha)$ in (\ref{eq:gLambda}) is, in fact, given by the integration $g(\alpha) = - 24 \int_0^{1/2} \big\{ 2 t f_{\alpha}(t) + \big(f_{\alpha}(t)\big)^2 + \alpha^{-2} (f_{\alpha}'(t))^2 \big\} \dd t$. Its stated monotonicity and limit properties are also straightforward to verify. 
\end{proof}

Having quantitatively addressed the minimization problem in $\tau_{\perp}^{\text{e}}$ with the lemma above, we now attain the desired limiting energy as an ansatz-free lowerbound. 

\begin{lem}\label{finalLBIneqLemma}
The limiting fields obey the energy inequality
\begin{equation}
\begin{aligned}\label{eq:theFinalLowerbound}
& \int_{\Omega} \Big\{ \mu \Big| x_3 \mathbf{E}_1(\mathbf{y}, \mathbf{n}_0) + \int_0^{x_3} \tau^{\emph{e}}_{\perp} \dd t \mathbf{E}_2(\mathbf{n}_0) \Big|^2 + \gamma (\tau_{\perp}^{\emph{e}})^2 + \mu \emph{Tr}\big( x_3 \mathbf{E}_1(\mathbf{y}, \mathbf{n}_0)\big)^2 + \gamma \lambda_{\emph{f}}^{-1} \lambda_0 \big| \nabla(\nabla \mathbf{y} \mathbf{n}_0) \big|^2 \Big\} \dd V \\
 &\qquad \geq \int_{\omega} \Big\{ \mathbf{II}_{\mathbf{y}} \colon \mathbb{B}(\mathbf{n}_0) \colon \mathbf{II}_{\mathbf{y}} + \gamma \lambda_{\emph{f}}^{-1} \lambda_0 \big| \nabla(\nabla \mathbf{y} \mathbf{n}_0) \big|^2 \Big\} \dd A
\end{aligned}
\end{equation}
for $\mathbb{B}(\mathbf{v}_0)$ defined in (\ref{eq:Bv0Moduli}).
\end{lem}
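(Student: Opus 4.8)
The plan is to reduce (\ref{eq:theFinalLowerbound}) to a pointwise-in-$\mathbf{x}$, one-dimensional variational inequality that is settled by Lemma \ref{infimumLemma}, and then to verify an algebraic identity relating the resulting density to $\mathbf{II}_{\mathbf{y}} \colon \mathbb{B}(\mathbf{n}_0) \colon \mathbf{II}_{\mathbf{y}}$. First I would observe that the Frank-elastic contribution $\gamma \lambda_{\text{f}}^{-1} \lambda_0 |\nabla(\nabla \mathbf{y} \mathbf{n}_0)|^2$ is independent of $x_3$ and therefore integrates to the same quantity on $\Omega$ and on $\omega$; cancelling it on both sides and applying Fubini's theorem, it suffices to prove that for a.e.\ $\mathbf{x} \in \omega$,
\[
\int_{-1/2}^{1/2} \Big\{ \mu \big| x_3 \mathbf{E}_1 + \big(\textstyle\int_0^{x_3} \tau^{\text{e}}_{\perp}\,\dd t\big) \mathbf{E}_2 \big|^2 + \gamma (\tau_{\perp}^{\text{e}})^2 + \mu \,\text{Tr}(x_3 \mathbf{E}_1)^2 \Big\}\,\dd x_3 \;\geq\; \mathbf{II}_{\mathbf{y}} \colon \mathbb{B}(\mathbf{n}_0) \colon \mathbf{II}_{\mathbf{y}},
\]
where $\mathbf{E}_1 = \mathbf{E}_1(\mathbf{y}(\mathbf{x}), \mathbf{n}_0(\mathbf{x}))$ and $\mathbf{E}_2 = \mathbf{E}_2(\mathbf{n}_0(\mathbf{x}))$ are the $2\times2$ symmetric tensors from (\ref{eq:E1E2Strains}).

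Next, for a.e.\ fixed $\mathbf{x}$ I would set $f(t) := \int_0^t \tau^{\text{e}}_{\perp}(\mathbf{x}, s)\,\dd s$. Since $\tau_\perp^{\text{e}}(\mathbf{x},\cdot) \in L^2(-1/2,1/2)$ is an even function, $f \in H^1(-1/2,1/2)$ is odd with $f' = \tau^{\text{e}}_{\perp}(\mathbf{x},\cdot)$ a.e., so the integrand above coincides with that of $E_0(f)$ plus $\mu \,\text{Tr}(x_3 \mathbf{E}_1)^2$. Using $\int_{-1/2}^{1/2} x_3^2\,\dd x_3 = 1/12$ and Lemma \ref{infimumLemma} (applicable since $\mathbf{E}_2 \neq \mathbf{0}$ when $\lambda_{\text{f}} > 1$, because then $\lambda_{\text{f}}^{-3/4} - \lambda_{\text{f}}^{3/4} \neq 0$ and $\text{sym}(\mathbf{n}_0^\perp \otimes \mathbf{n}_0) \neq \mathbf{0}$), the left side is bounded below by
\[
E_0^{\star} + \tfrac{\mu}{12}\text{Tr}(\mathbf{E}_1)^2 \;=\; \tfrac{\mu}{12}\Big( |\mathbf{E}_1|^2 + \text{Tr}(\mathbf{E}_1)^2 - g(\alpha)\,\tfrac{(\mathbf{E}_1 \colon \mathbf{E}_2)^2}{|\mathbf{E}_2|^2}\Big), \qquad \alpha = \big(\tfrac{\mu}{\gamma}\big)^{1/2} |\mathbf{E}_2|.
\]
(In the degenerate case $\lambda_{\text{f}} = 1$ one has $\mathbf{E}_2 = \mathbf{0}$ and $c_{\text{f}} = 0$, so all $\tau$-dependent terms drop out and the same bound holds with $g$ replaced by $0$.)

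It then remains to establish the pointwise identity
\[
\tfrac{\mu}{12}\Big( |\mathbf{E}_1|^2 + \text{Tr}(\mathbf{E}_1)^2 - g(\alpha)\,\tfrac{(\mathbf{E}_1 \colon \mathbf{E}_2)^2}{|\mathbf{E}_2|^2}\Big) \;=\; \mathbf{II}_{\mathbf{y}} \colon \mathbb{B}(\mathbf{n}_0) \colon \mathbf{II}_{\mathbf{y}}.
\]
I would verify this by expanding both sides in the orthonormal frame $\{\mathbf{n}_0, \mathbf{n}_0^\perp\}$. Writing $\mathbf{II}_{\mathbf{y}} = a\,\mathbf{n}_0 \otimes \mathbf{n}_0 + b\,\mathbf{n}_0^\perp \otimes \mathbf{n}_0^\perp + 2c\,\text{sym}(\mathbf{n}_0 \otimes \mathbf{n}_0^\perp)$ and using that $[(\boldsymbol{\ell}_{\mathbf{n}_0}^{\text{f}})^{-1}(\boldsymbol{\ell}^{0}_{\mathbf{n}_0})^{1/2}]_{2\times2}$ and $[(\boldsymbol{\ell}^{0}_{\mathbf{n}_0})^{1/2}]_{2\times2}$ are simultaneously diagonal in this frame (eigenvectors $\mathbf{n}_0, \mathbf{n}_0^\perp$), the definition (\ref{eq:E1E2Strains}) gives
\[
\mathbf{E}_1 = \lambda_{\text{f}}^{-5/4}\lambda_0^{5/4}\,a\,\mathbf{n}_0 \otimes \mathbf{n}_0 + \lambda_{\text{f}}^{1/4}\lambda_0^{-1/4}\,b\,\mathbf{n}_0^\perp \otimes \mathbf{n}_0^\perp + \lambda_0^{1/2}(\lambda_{\text{f}}^{-5/4}+\lambda_{\text{f}}^{1/4})\,c\,\text{sym}(\mathbf{n}_0 \otimes \mathbf{n}_0^\perp),
\]
while $\mathbf{E}_2 = (\lambda_{\text{f}}^{-3/4}-\lambda_{\text{f}}^{3/4})\,\text{sym}(\mathbf{n}_0 \otimes \mathbf{n}_0^\perp)$. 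With $|\text{sym}(\mathbf{n}_0 \otimes \mathbf{n}_0^\perp)|^2 = 1/2$ one obtains $(\mathbf{E}_1 \colon \mathbf{E}_2)^2/|\mathbf{E}_2|^2 = \tfrac12 \lambda_0 (\lambda_{\text{f}}^{-5/4}+\lambda_{\text{f}}^{1/4})^2 c^2$ (the factor $(\lambda_{\text{f}}^{-3/4}-\lambda_{\text{f}}^{3/4})^2$ cancels) and, after inserting $1 - g(\alpha) = 12(\alpha^{-2} - 2\alpha^{-3}\tanh(\tfrac\alpha2))$ together with $\alpha = (\tfrac{\mu}{2\gamma})^{1/2}(\lambda_{\text{f}}^{3/4}-\lambda_{\text{f}}^{-3/4})$, the left side collapses to $2\mu_1 a^2 + 2\mu_2 b^2 + 2\sqrt{\mu_1 \mu_2}\,ab + \mu_3 c^2$; a direct expansion of (\ref{eq:Bv0Moduli}) shows this equals $\mathbf{II}_{\mathbf{y}} \colon \mathbb{B}(\mathbf{n}_0) \colon \mathbf{II}_{\mathbf{y}}$. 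Integrating the pointwise inequality over $\omega$ and restoring the Frank term yields (\ref{eq:theFinalLowerbound}). I expect the only real obstacle to be this last computation: carrying the half-integer powers of $\lambda_{\text{f}}$ and $\lambda_0$ correctly through the step-length tensor products and confirming that the $g(\alpha)$-coefficient reproduces $\mu_3$. The only other point meriting a remark is the measurability/Fubini justification that $t \mapsto \int_0^t \tau^{\text{e}}_{\perp}(\mathbf{x}, s)\,\dd s$ is an admissible $H^1(-1/2,1/2)$ competitor for a.e.\ $\mathbf{x}$, which is immediate from $\tau_\perp^{\text{e}} \in L^2(\Omega)$.
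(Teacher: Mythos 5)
Your proof is correct and follows essentially the same strategy as the paper's: reduce to the one-dimensional variational problem via Fubini, invoke Lemma~\ref{infimumLemma}, and verify the algebraic identity between the resulting density and $\mathbf{II}_{\mathbf{y}} \colon \mathbb{B}(\mathbf{n}_0) \colon \mathbf{II}_{\mathbf{y}}$ in the $\{\mathbf{n}_0,\mathbf{n}_0^\perp\}$ frame. The one small departure is in how you reach the infimum: the paper first establishes the bound for continuous even $\psi^{\text{e}}$ (where $i_{\psi^{\text{e}}}(\mathbf{x},\cdot)$ is clearly an admissible $H^1$ competitor for every $\mathbf{x}$) and then extends to $\tau_\perp^{\text{e}}\in L^2(\Omega)$ by a density argument, whereas you observe directly that for a.e.\ $\mathbf{x}$ the slice $\tau_\perp^{\text{e}}(\mathbf{x},\cdot)\in L^2(-1/2,1/2)$, so $f(t)=\int_0^t\tau_\perp^{\text{e}}(\mathbf{x},s)\,\dd s$ is odd and in $H^1$, making the pointwise application of Lemma~\ref{infimumLemma} legitimate without any approximation. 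Your route is slightly more direct and avoids having to verify the $L^2$-continuity of the functional $\psi^{\text{e}}\mapsto E_1(\mathbf{y},\mathbf{n}_0,\psi^{\text{e}})$ that the density argument implicitly uses; the paper's route avoids any discussion of measurability of the a.e.-defined infimum, though that is in any case immediate here since $E_0^\star$ is an explicit continuous function of $(\mathbf{E}_1,\mathbf{E}_2)$. Your remark on the degenerate case $\lambda_{\text{f}}=1$, where $\mathbf{E}_2=\mathbf{0}$ and Lemma~\ref{infimumLemma} is not directly applicable, is a genuine (if minor) improvement in completeness over the paper's proof, which tacitly assumes $\mathbf{E}_2\neq\mathbf{0}$.
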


\begin{proof} 

Let $E_1(\mathbf{y}, \mathbf{n}_0, \tau_{\perp}^{\text{e}}) := \int_{\Omega} \big\{ \mu \big| x_3 \mathbf{E}_1(\mathbf{y}, \mathbf{n}_0) + \int_0^{x_3} \tau^{\emph{e}}_{\perp} \dd t \mathbf{E}_2(\mathbf{n}_0) \big|^2 + \gamma (\tau_{\perp}^{\emph{e}})^2 \big\} \dd V$. We observe that for any $\psi^\text{e}\in C^0(\Omega)$ even in $x_3$, the function $i_{\psi^\text{e}}$ defined by $i_{\psi^\text{e}}(\mathbf{x}, x_3) := \int_0^{x_3} \psi^\text{e}(\mathbf{x}, t) \dd t$ is odd in $x_3$.
Hence,
\begin{equation*}
\begin{aligned} 
E_1(\mathbf{y}, \mathbf{n}_0, \psi^{\text{e}})
&= \int_\omega\int_{-1/2}^{1/2} \Big\{ \mu \Big| x_3 \mathbf{E}_1(\mathbf{y}, \mathbf{n}_0) + i_{\psi^{\text{e}}}\mathbf{E}_2(\mathbf{n}_0) \Big|^2 + \gamma |\partial_3 i_{\psi^{\text{e}}}|^2 \Big\} \dd x_3 \dd A\\
&\geq \int_\omega \inf_{ \substack{ f \in H^1((-1/2,1/2)) \\ f \text{ odd in $x_3$} } }\int_{-1/2}^{1/2} \Big\{ \mu \Big| x_3 \mathbf{E}_1(\mathbf{y}, \mathbf{n}_0) + f \mathbf{E}_2(\mathbf{n}_0) \Big|^2 + \gamma |\partial_3 f|^2 \Big\} \dd x_3 \dd A\\
& = \int_{\omega }\frac{\mu}{12} \Big( |\mathbf{E}_1(\mathbf{y}, \mathbf{n}_0) |^2 - g \Big( \big(\tfrac{\mu}{\gamma}\big)^{1/2} |\mathbf{E}_2(\mathbf{n}_0) | \Big) \frac{( \mathbf{E}_1(\mathbf{y}, \mathbf{n}_0) \colon \mathbf{E}_2(\mathbf{n}_0))^2}{|\mathbf{E}_2(\mathbf{n}_0)|^2} \Big) \dd A
\end{aligned}
\end{equation*}
for $g(\alpha)$ in (\ref{eq:gLambda}) due to Lemma \ref{infimumLemma}. By density of $C^0(\Omega)$ functions in $L^2(\Omega)$, the above inequality also holds for $\tau_{\perp}^{\text{e}}$, i.e., 
\begin{equation}\label{eq:getLBinfimum}
E_1(\mathbf{y}, \mathbf{n}_0, \tau_{\perp}^{\text{e}}) \geq \int_{\omega }\frac{\mu}{12} \Big( |\mathbf{E}_1(\mathbf{y}, \mathbf{n}_0) |^2 - g \Big( \big(\tfrac{\mu}{\gamma}\big)^{1/2} |\mathbf{E}_2(\mathbf{n}_0) | \Big) \frac{( \mathbf{E}_1(\mathbf{y}, \mathbf{n}_0) \colon \mathbf{E}_2(\mathbf{n}_0))^2}{|\mathbf{E}_2(\mathbf{n}_0)|^2} \Big) \dd A.
\end{equation}

Expanding out the first term in the energy density on the right using definition of the strain $\mathbf{E}_1(\mathbf{y}, \mathbf{n}_0)$ in (\ref{eq:E1E2Strains}) and the isotropy of $|\cdot|^2$ yields
\begin{equation}
\begin{aligned}\label{eq:dotN01}
 |\mathbf{E}_1(\mathbf{y}, \mathbf{n}_0) |^2 &= ( \mathbf{n}_0 \cdot \mathbf{E}_1(\mathbf{y}, \mathbf{n}_0) \mathbf{n}_0)^2 + ( \mathbf{n}^{\perp}_0 \cdot \mathbf{E}_1(\mathbf{y}, \mathbf{n}_0) \mathbf{n}^{\perp}_0))^2 + 2 ( \mathbf{n}^{\perp}_0 \cdot \mathbf{E}_1(\mathbf{y}, \mathbf{n}_0) \mathbf{n}_0)^2 \\
 &= \lambda_{\text{f}}^{-5/2} \lambda_0^{5/2} \big( \mathbf{n}_0 \cdot \mathbf{II}_{\mathbf{y}} \mathbf{n}_0\big)^2 + \lambda_{\text{f}}^{1/2} \lambda_0^{-1/2} \big( \mathbf{n}^{\perp}_0 \cdot \mathbf{II}_{\mathbf{y}} \mathbf{n}^{\perp}_0\big)^2 + \frac{1}{2} \lambda_0 (\lambda_\text{f}^{-5/4} + \lambda_\text{f}^{1/4} )^2 \big( \mathbf{n}_0 \cdot \mathbf{II}_{\mathbf{y}} \mathbf{n}^{\perp}_0\big)^2
\end{aligned}
\end{equation} 
Likewise, the definition of $\mathbf{E}_2(\mathbf{n}_0)$ in (\ref{eq:E1E2Strains}) yields $|\mathbf{E}_2(\mathbf{n}_0) | = \frac{1}{\sqrt{2}}(\lambda_{\text{f}}^{3/4} - \lambda_{\text{f}}^{-3/4})$. Thus,
\begin{equation}
\begin{aligned}\label{eq:dotN02}
g \Big( \big(\tfrac{\mu}{\gamma}\big)^{1/2} |\mathbf{E}_2(\mathbf{n}_0) | \Big) \frac{( \mathbf{E}_1(\mathbf{y}, \mathbf{n}_0) \colon \mathbf{E}_2(\mathbf{n}_0))^2}{|\mathbf{E}_2(\mathbf{n}_0)|^2} = g \Big(\big(\tfrac{\mu}{2\gamma}\big)^{1/2} (\lambda_{\text{f}}^{3/4} - \lambda_{\text{f}}^{-3/4})\Big) \frac{\lambda_0}{2} (\lambda_\text{f}^{-5/4} + \lambda_\text{f}^{1/4} )^2 ( \mathbf{n}_0 \cdot \mathbf{II}_{\mathbf{y}} \mathbf{n}_0^{\perp})^2.
\end{aligned}
\end{equation}
The remaining term in the energy that depends on $\mathbf{E}_1(\mathbf{y}, \mathbf{n}_0)$ satisfies 
\begin{equation}
\begin{aligned}\label{eq:integrateTrace}
\int_{\Omega} \mu \text{Tr} \big( x_3 \mathbf{E}_1(\mathbf{y}, \mathbf{n}_0)\big)^2 \dd V = \int_{\omega} \frac{\mu}{12} \text{Tr} \big( \mathbf{E}_1(\mathbf{y}, \mathbf{n}_0)\big)^2 \dd A
\end{aligned}
\end{equation} 
Its energy density is 
\begin{equation}
\begin{aligned}\label{eq:dotN03}
\text{Tr} \big( \mathbf{E}_1(\mathbf{y}, \mathbf{n}_0)\big)^2 &= \big( \mathbf{n}_0 \cdot \mathbf{E}_1(\mathbf{y}, \mathbf{n}_0) \mathbf{n}_0 + \mathbf{n}^{\perp}_0 \cdot \mathbf{E}_1(\mathbf{y}, \mathbf{n}_0) \mathbf{n}^{\perp}_0 \big)^2 \\
&= \big( \mathbf{n}_0 \cdot \mathbf{E}_1(\mathbf{y}, \mathbf{n}_0) \mathbf{n}_0\big)^2 + \big( \mathbf{n}^{\perp}_0 \cdot \mathbf{E}_1(\mathbf{y}, \mathbf{n}_0) \mathbf{n}^{\perp}_0\big)^2 + 2 \big( \mathbf{n}_0 \cdot \mathbf{E}_1(\mathbf{y}, \mathbf{n}_0) \mathbf{n}_0 \big)\big( \mathbf{n}^{\perp}_0 \cdot \mathbf{E}_1(\mathbf{y}, \mathbf{n}_0) \mathbf{n}^{\perp}_0 \big) \\
&= \lambda_{\text{f}}^{-5/2} \lambda_0^{5/2} \big( \mathbf{n}_0 \cdot \mathbf{II}_{\mathbf{y}} \mathbf{n}_0\big)^2 + \lambda_{\text{f}}^{1/2} \lambda_0^{-1/2} \big( \mathbf{n}^{\perp}_0 \cdot \mathbf{II}_{\mathbf{y}} \mathbf{n}^{\perp}_0\big)^2 + 2 \lambda_{\text{f}}^{-1} \lambda_0\big( \mathbf{n}_0 \cdot \mathbf{II}_{\mathbf{y}} \mathbf{n}_0\big)\big( \mathbf{n}^{\perp}_0 \cdot \mathbf{II}_{\mathbf{y}} \mathbf{n}^{\perp}_0\big)
\end{aligned}
\end{equation}
since the trace is an isotropic function.
The identities in (\ref{eq:dotN01}), (\ref{eq:dotN02}) and (\ref{eq:dotN03}) then furnish 
\begin{equation}
\begin{aligned}\label{eq:finallyDoneDone}
&\int_{\omega }\frac{\mu}{12} \Big( |\mathbf{E}_1(\mathbf{y}, \mathbf{n}_0) |^2 - g \Big( \big(\tfrac{\mu}{\gamma}\big)^{1/2} |\mathbf{E}_2(\mathbf{n}_0) | \Big) \frac{( \mathbf{E}_1(\mathbf{y}, \mathbf{n}_0) \colon \mathbf{E}_2(\mathbf{n}_0))^2}{|\mathbf{E}_2(\mathbf{n}_0)|^2} + \text{Tr} \big( \mathbf{E}_1(\mathbf{y}, \mathbf{n}_0)\big)^2 \Big) \dd A \\
&\qquad= \int_{\omega} \frac{\mu}{12} \Big\{ 2\lambda_{\text{f}}^{-5/2} \lambda_0^{5/2} \big( \mathbf{n}_0 \cdot \mathbf{II}_{\mathbf{y}} \mathbf{n}_0\big)^2 + 2 \lambda_{\text{f}}^{1/2} \lambda_0^{-1/2} \big( \mathbf{n}^{\perp}_0 \cdot \mathbf{II}_{\mathbf{y}} \mathbf{n}^{\perp}_0\big)^2 + 2 \lambda_{\text{f}}^{-1} \lambda_0\big( \mathbf{n}_0 \cdot \mathbf{II}_{\mathbf{y}} \mathbf{n}_0\big)\big( \mathbf{n}^{\perp}_0 \cdot \mathbf{II}_{\mathbf{y}} \mathbf{n}^{\perp}_0\big) \\
&\qquad \qquad \qquad \quad \Big[ 1 - g \Big(\big(\tfrac{\mu}{2\gamma}\big)^{1/2} (\lambda_{\text{f}}^{3/4} - \lambda_{\text{f}}^{-3/4})\Big) \Big] \frac{\lambda_0}{2} (\lambda_\text{f}^{-5/4} + \lambda_\text{f}^{1/4} )^2 ( \mathbf{n}_0 \cdot \mathbf{II}_{\mathbf{y}} \mathbf{n}_0^{\perp})^2 \Big\} \dd A \\
&\qquad = \int_{\omega} \mathbf{II}_{\mathbf{y}} \colon \mathbb{B}(\mathbf{n}_0) \colon \mathbf{II}_{\mathbf{y}} \dd A 
\end{aligned}
\end{equation}
for $\mathbb{B}(\mathbf{v}_0)$ defined in (\ref{eq:Bv0Moduli}). The desired inequality in (\ref{eq:theFinalLowerbound}) follows from 
(\ref{eq:getLBinfimum}), (\ref{eq:integrateTrace}), and (\ref{eq:finallyDoneDone}). 
\end{proof}

\subsubsection{Summary on the lowerbound} The lemmas above collectively prove the lowerbound in Theorem \ref{MainTheorem}. For clarity, we summarize the result below.
\begin{prop}
For every sequence $\{ (\mathbf{y}_h, \mathbf{n}_h)\} \subset H^1(\Omega, \mathbb{R}^3) \times H^1(\Omega, \mathbb{S}^2)$ such that $(\mathbf{y}_h , \mathbf{n}_h) \rightharpoonup (\mathbf{y}, \mathbf{n})$ in $H^1(\Omega, \mathbb{R}^3) \times H^1(\Omega, \mathbb{S}^2)$, $\liminf_{h \rightarrow 0} E_{\mathbf{n}_0}^h(\mathbf{y}_h, \mathbf{n}_h) \geq E_{\mathbf{n}_0} (\mathbf{y}, \mathbf{n})$. 
\end{prop}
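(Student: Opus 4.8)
The plan is to assemble the lemmas of this section into a single chain of inequalities, as the preamble to the statement already advertises. If $\liminf_{h\to0}E_{\mathbf{n}_0}^h(\mathbf{y}_h,\mathbf{n}_h)=+\infty$ there is nothing to prove, so I would assume the $\liminf$ is finite and pass to a (not relabeled) subsequence along which $E_{\mathbf{n}_0}^h(\mathbf{y}_h,\mathbf{n}_h)$ converges to this finite value. Since the full sequence satisfies $(\mathbf{y}_h,\mathbf{n}_h)\rightharpoonup(\mathbf{y},\mathbf{n})$, so does this subsequence, whence the compactness analysis of Section \ref{sec:Compactness} applies: extracting one further subsequence, Proposition \ref{compactnessProp} supplies limiting fields $\mathbf{R}\in H^1(\omega,SO(3))$, $\mathbf{b}\in H^1(\omega,\mathbb{R}^3)$, $\boldsymbol{\tau}\in L^2(\Omega,\mathbb{R}^3)$, a sequence of rotations $\mathbf{R}_h$, and a strain $\mathbf{S}\in L^2(\Omega,\mathbb{R}^{3\times3})$ with $\text{Tr}(\mathbf{S})=0$, all coupled through (\ref{eq:identsOnPlanarFields}); uniqueness of weak limits identifies these compactness limits with the prescribed $(\mathbf{y},\mathbf{n})$ up to an irrelevant additive constant in $\mathbf{y}$, so in particular $(\mathbf{y},\mathbf{n})\in\mathcal{A}_{\mathbf{n}_0}$.

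Next I would push the $\liminf$ down the ladder of lemmas, all for this fixed subsequence. Lemma \ref{firstLiminfLemma} gives $\liminf_{h\to0}E_{\mathbf{n}_0}^h(\mathbf{y}_h,\mathbf{n}_h)\ge\int_\Omega\{\tfrac12 Q_3(\mathbf{S})+\gamma(|\nabla\mathbf{n}|^2+|\boldsymbol{\tau}|^2)\}\,\dd V$. Lemma \ref{identifySLemma} expresses the first two columns of $\mathbf{S}$ through $\mathbf{R},\mathbf{n},\mathbf{b},\boldsymbol{\tau}$; inserting this into Lemma \ref{secondLBIneqLemma}, which uses $\text{Tr}(\mathbf{S})=0$ via $Q_3\ge Q_2$ on traceless matrices and discards the thickness-averaged and odd-in-$x_3$ pieces, lowers the bound to $\int_\Omega\{\tfrac12 Q_2(x_3\mathbf{S}_1+\mathcal{S}_2\cdot\int_0^{x_3}\boldsymbol{\tau}^{\text{e}}\,\dd t)+\gamma(|\nabla\mathbf{n}|^2+|\boldsymbol{\tau}^{\text{e}}|^2)\}\,\dd V$, retaining only the even part $\boldsymbol{\tau}^{\text{e}}$. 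Lemma \ref{get2x2StrainsLemma} rewrites the symmetric $2\times2$ part of that argument as $x_3\mathbf{E}_1(\mathbf{y},\mathbf{n}_0)+(\int_0^{x_3}\tau_\perp^{\text{e}}\,\dd t)\mathbf{E}_2(\mathbf{n}_0)$, and Lemma \ref{thirdLBIneqLemma} then converts the bound into $\int_\Omega\{\mu|x_3\mathbf{E}_1+(\int_0^{x_3}\tau_\perp^{\text{e}}\,\dd t)\mathbf{E}_2|^2+\gamma(\tau_\perp^{\text{e}})^2+\mu\,\text{Tr}(x_3\mathbf{E}_1)^2+\gamma\lambda_{\text{f}}^{-1}\lambda_0|\nabla(\nabla\mathbf{y}\mathbf{n}_0)|^2\}\,\dd V$, where the metric constraint $|\nabla\mathbf{y}\mathbf{n}_0|^2=\lambda_{\text{f}}\lambda_0^{-1}$ converts the Frank term $\gamma|\nabla\mathbf{n}|^2$ into the director-gradient term of (\ref{eq:plateTheory}).

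Finally I would invoke Lemma \ref{finalLBIneqLemma}: minimizing the $\tau_\perp^{\text{e}}$-dependent part fibrewise over odd $H^1((-1/2,1/2))$ functions — the one-dimensional variational problem solved explicitly in Lemma \ref{infimumLemma} through a linear Euler--Lagrange ODE with $\tanh$-type solution — eliminates the through-thickness freedom and, after expanding $|\mathbf{E}_1|^2$, $\text{Tr}(\mathbf{E}_1)^2$ and $|\mathbf{E}_2|$ in the $\{\mathbf{n}_0,\mathbf{n}_0^{\perp}\}$ frame, yields exactly $\int_\omega\mathbf{II}_{\mathbf{y}}\colon\mathbb{B}(\mathbf{n}_0)\colon\mathbf{II}_{\mathbf{y}}\,\dd A$ with the moduli and coupling $\alpha$ of (\ref{eq:Bv0Moduli}). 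Chaining all of these inequalities gives $\liminf_{h\to0}E_{\mathbf{n}_0}^h(\mathbf{y}_h,\mathbf{n}_h)\ge\int_\omega\{\mathbf{II}_{\mathbf{y}}\colon\mathbb{B}(\mathbf{n}_0)\colon\mathbf{II}_{\mathbf{y}}+\gamma\lambda_{\text{f}}^{-1}\lambda_0|\nabla(\nabla\mathbf{y}\mathbf{n}_0)|^2\}\,\dd A=E_{\mathbf{n}_0}(\mathbf{y},\mathbf{n})$, the last equality because $(\mathbf{y},\mathbf{n})\in\mathcal{A}_{\mathbf{n}_0}$; since this bound refers only to the weak limit and not to the subsequence, it transfers to the original sequence. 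The main obstacle is not any single estimate — each rung is a lemma already in hand — but the bookkeeping: tracking the two nested subsequence extractions (one for the $\liminf$, one for compactness) and verifying that each inequality in the chain lands on precisely the term structure of (\ref{eq:plateTheory}), namely that the $\gamma|\nabla\mathbf{n}|^2$ Frank contribution survives intact through the metric constraint while the quadratic-in-$\mathbf{II}_{\mathbf{y}}$ and quadratic-in-$\boldsymbol{\tau}^{\text{e}}$ pieces recombine into $\mathbb{B}(\mathbf{n}_0)$ after the fibrewise optimization.
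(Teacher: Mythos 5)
Your proposal matches the paper's proof essentially verbatim: both argue by assuming the liminf is finite, passing to a subsequence along which the energy converges and the compactness of Proposition \ref{compactnessProp} holds, and then chaining the inequalities from Lemmas \ref{firstLiminfLemma}, \ref{secondLBIneqLemma}, \ref{thirdLBIneqLemma} and \ref{finalLBIneqLemma}. Your additional remarks — that the compactness limit agrees with the prescribed $(\mathbf{y},\mathbf{n})$ by uniqueness of weak limits (up to the harmless mean subtraction), and that the final bound depends only on the limit and hence lifts from the subsequence to the full sequence — are correct and are the implicit bookkeeping the paper's terse proof relies on.
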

\begin{proof}
Let $\{ (\mathbf{y}_h, \mathbf{n}_h)\} \subset H^1(\Omega, \mathbb{R}^3) \times H^1(\Omega, \mathbb{S}^2)$ such that $(\mathbf{y}_h , \mathbf{n}_h) \rightharpoonup (\mathbf{y}, \mathbf{n})$ in $H^1(\Omega, \mathbb{R}^3) \times H^1(\Omega, \mathbb{S}^2)$. If $\liminf_{h \rightarrow 0} E_{\mathbf{n}_0}^h(\mathbf{y}_h, \mathbf{n}_h) = \infty$, there is nothing to prove. So we assume $M := \liminf_{h \rightarrow 0} E_{\mathbf{n}_0}^h(\mathbf{y}_h, \mathbf{n}_h) < \infty$. Extract a subsequence (not relabeled) such that $\lim_{h \rightarrow 0} E_{\mathbf{n}_0}^h(\mathbf{y}_h, \mathbf{n}_h) = M$ and a further subsequence (also not relabeled) such that Proposition \ref{compactnessProp} holds. The latter subsequence satisfies the hypotheses of Lemma \ref{firstLiminfLemma}. Thus, 
\begin{align*}
M &\geq \int_{\Omega}\Big\{\frac{1}{2} Q_3(\mathbf{S}) + \gamma( |\nabla \mathbf{n}|^2 + |\boldsymbol{\tau}|^2) \Big\} \dd V \geq \int_{\Omega} \Big\{ \frac{1}{2}Q_2\Big( x_3 \mathbf{S}_1 + \mathcal{S}_2 \cdot \int_{0}^{x_3} \boldsymbol{\tau}^{\text{e}} \dd t \Big) + \gamma( |\nabla \mathbf{n}|^2 + |\boldsymbol{\tau}^{\text{e}}|^2 ) \Big\} \dd V \\ & \geq \int_{\Omega} \Big\{ \mu \Big| x_3 \mathbf{E}_1(\mathbf{y}, \mathbf{n}_0) + \int_0^{x_3} \tau^{\text{e}}_{\perp} \dd t \mathbf{E}_2(\mathbf{n}_0) \Big|^2 + \gamma (\tau_{\perp}^{\emph{e}})^2 + \mu \text{Tr}\big( x_3 \mathbf{E}_1(\mathbf{y}, \mathbf{n}_0)\big)^2 + \gamma \lambda_{\text{f}}^{-1} \lambda_0 \big| \nabla(\nabla \mathbf{y} \mathbf{n}_0) \big|^2 \Big\} \dd V \\
&\geq \int_{\omega} \Big\{ \mathbf{II}_{\mathbf{y}} \colon \mathbb{B}(\mathbf{n}_0) \colon \mathbf{II}_{\mathbf{y}} + \gamma \lambda_{\text{f}}^{-1} \lambda_0 \big| \nabla(\nabla \mathbf{y} \mathbf{n}_0) \big|^2 \Big\} \dd A = E_{\mathbf{n}_0}(\mathbf{y}, \mathbf{n})
\end{align*}
by Lemmas \ref{firstLiminfLemma}, \ref{secondLBIneqLemma}, \ref{thirdLBIneqLemma} and \ref{finalLBIneqLemma}.
\end{proof}

\section{Recovery Sequence}\label{sec:Recovery}

This section addresses the topic of recovery sequences, namely, sequences of 3D deformations and director fields whose bulk energy limits to the desired plate energy as $h \rightarrow 0$. We start by focusing on recovery sequences in the smooth setting in order to deal with the many non-trivial algebraic manipulations involved in the dimension reduction without fussing over delicate questions of regularity. We then turn to a proof of Theorem \ref{RecoveryTheorem}, which involves producing a recovery sequence for metric-constrained midplane deformations and director fields with fractional Sobolev regularity.

\subsection{Smooth setting.}\label{ssec:smoothSetting}
Consider any $\mathbf{n}_0 \in C^{\infty}(\overline{\omega}, \mathbb{S}^1)$, $\mathbf{n} \in C^{\infty}(\overline{\omega}, \mathbb{S}^2)$ and $\mathbf{y} \in C^{\infty}(\overline{\omega}, \mathbb{R}^3)$ that satisfy $(\mathbf{y}, \mathbf{n}) \in \mathcal{A}_{\mathbf{n}_0}$, i.e., 
\begin{equation}
\begin{aligned}\label{eq:ConstructUB1}
(\nabla \mathbf{y} ) ^T \nabla \mathbf{y} = \mathbf{g}_{\mathbf{n}_0} \quad \text{ and } \quad \mathbf{n} = \sigma \frac{\nabla \mathbf{y} \mathbf{n}_0}{|\nabla \mathbf{y} \mathbf{n}_0|}
\end{aligned}
\end{equation}
for some constant $\sigma \in \{ -1,1\}$. 
\subsubsection{Construction of recovery sequences in the smooth setting}
We build the full ansatz of our recovery sequence based on $\mathbf{y}$ and $\mathbf{n}_0$ by defining the 3D deformation $\mathbf{y}_h \colon \Omega \rightarrow \mathbb{R}^3$ and director field $\mathbf{n}_h \colon \Omega \rightarrow \mathbb{S}^2$ as
\begin{equation}
\begin{aligned}\label{eq:ConstructUBAnsatz}
&\mathbf{y}_h(\mathbf{x}, x_3) := \mathbf{y}(\mathbf{x}) + h x_3 \mathbf{b}(\mathbf{x}) + h^2 \frac{x_3^2}{2} \mathbf{d}(\mathbf{x}) \quad \text{ and } \quad \mathbf{n}_h(\mathbf{x}, x_3) := \frac{\mathbf{n}(\mathbf{x}) + h \int_0^{x_3} \boldsymbol{\tau}(\mathbf{x}, t)\dd t }{|\mathbf{n}(\mathbf{x}) + h \int_0^{x_3} \boldsymbol{\tau} (\mathbf{x}, t) \dd t|} 
\end{aligned}
\end{equation}
for smooth vector fields $\mathbf{b},\mathbf{d}, \boldsymbol{\tau}$ chosen below to be consistent with the compactness and lowerbound arguments of the previous sections. 

The first of these vectors fields, $\mathbf{b} \in C^{\infty}(\overline{\omega}, \mathbb{R}^3)$, is identified by the compactness result in Proposition \ref{compactnessProp} as being proportional to the surface normal $\boldsymbol{\nu}_{\mathbf{y}}$ in (\ref{eq:secFund}), specifically, via 
\begin{equation}
\begin{aligned}\label{eq:ConstructUB2}
 \mathbf{b} = \lambda_{\text{f}}^{-1/4} \lambda_{0}^{1/4} \boldsymbol{\nu}_{\mathbf{y}}.
\end{aligned}
\end{equation}
The parameterizations in (\ref{eq:ConstructUB1}) and (\ref{eq:ConstructUB2}) then yield a rotation field $\mathbf{R} \in C^{\infty}(\overline{\omega}, SO(3))$ defined by 
\begin{equation}
\begin{aligned}\label{eq:ConstructR}
\mathbf{R} := (\boldsymbol{\ell}_{\mathbf{n}}^{\text{f}})^{-1/2} (\nabla \mathbf{y}, \mathbf{b}) (\boldsymbol{\ell}_{\mathbf{n}_0}^{0})^{1/2},
\end{aligned}
\end{equation}
as noted by the linear algebra supplied in Lemma \ref{linAlgLemma} of the appendix. It happens that $\mathbf{n}$ and $\mathbf{R}$ are also coupled per Lemma \ref{linAlgLemma} via 
\begin{equation}
\begin{aligned}\label{eq:rotNCoupled}
\mathbf{n} = \sigma \mathbf{R} \begin{pmatrix} \mathbf{n}_0 \\ 0 \end{pmatrix},
\end{aligned}
\end{equation}
which proves useful in the derivation below. Next, note that $\boldsymbol{\tau}$ should be perpendicular to $\mathbf{n}$ for it to be asymptotically consistent with the compactness properties of the director field in Proposition \ref{compactnessProp}. In fact,
\begin{equation}
\begin{aligned}\label{eq:tauPerpN}
\mathbf{n}_h = \mathbf{n} + h \int_{0}^{x_3} \boldsymbol{\tau} \dd t + O(h^2) \quad \text{ and } \quad h^{-1} \partial_3 \mathbf{n}_h = \boldsymbol{\tau} + O(h)
\end{aligned}
\end{equation}
for any $\boldsymbol{\tau}$ perpendicular to $\mathbf{n}$ in (\ref{eq:ConstructUBAnsatz}).

\subsubsection{Prescribing $\boldsymbol{\tau}$} A detailed prescription $\boldsymbol{\tau}$ is motivated by the lowerbound argument of Lemma \ref{thirdLBIneqLemma} and \ref{infimumLemma}. We take 
\begin{equation}
\begin{aligned}\label{eq:constructTau}
\boldsymbol{\tau} = \sigma \tau_{\perp} \mathbf{R}\begin{pmatrix} \mathbf{n}_0^{\perp} \\ 0 \end{pmatrix} ,
\end{aligned}
\end{equation}
to hard-encode by (\ref{eq:rotNCoupled}) that $\boldsymbol{\tau}$ is orthogonal to $\mathbf{n}$. Then, we choose the scalar field $\tau_{\perp} \in C^{\infty}(\overline{\Omega}, \mathbb{R})$ as 
\begin{equation}
\begin{aligned}\label{eq:tauPerpFormula}
\tau_{\perp} = \frac{\lambda_0}{2} (\lambda_\text{f}^{-5/4} + \lambda_\text{f}^{1/4} )^2 \bigg[ \frac{\cosh \big(\big(\tfrac{\mu}{2\gamma}\big)^{1/2} (\lambda_{\text{f}}^{3/4} - \lambda_{\text{f}}^{-3/4})x_3 \big)}{\cosh \big(\big(\tfrac{\mu}{2\gamma}\big)^{1/2} (\lambda_{\text{f}}^{3/4} - \lambda_{\text{f}}^{-3/4}) \frac{1}{2} \big)} -1 \bigg] ( \mathbf{n}_0 \cdot \mathbf{II}_{\mathbf{y}} \mathbf{n}_0^{\perp})^2 ,
\end{aligned}
\end{equation}
so that $\boldsymbol{\tau}$ belongs to $C^{\infty}(\overline{\Omega}, \mathbb{R}^3)$ with an $x_3$-dependence that stems from making $\tau_{\perp}$ in (\ref{eq:tauPerpFormula}) consistent with Lemma \ref{infimumLemma}. In particular, $\tau_{\perp}(\mathbf{x}, x_3) = (f^{\star})'(x_3; \mathbf{E}_1(\mathbf{x}), \mathbf{E}_2(\mathbf{x}))$ for $f^{\star}$ in (\ref{eq:fminimizer}) and the strain measures $\mathbf{E}_1 \equiv \mathbf{E}_1(\mathbf{y}, \mathbf{n}_0)$ and $\mathbf{E}_2 \equiv \mathbf{E}_2(\mathbf{n}_0)$ from (\ref{eq:E1E2Strains}). 

\subsubsection{Prescribing $\mathbf{d}$} The choice of the last vector field $\mathbf{d}$ emerges most naturally in the course of the reduction from $3D$ to $2D$. First observe from (\ref{eq:ConstructUBAnsatz}) and (\ref{eq:tauPerpN}) that 
\begin{equation}
\begin{aligned}\label{eq:gradHyHFormula}
&\nabla_h \mathbf{y}_h = (\nabla \mathbf{y}, \mathbf{b}) + h x_3 (\nabla \mathbf{b}, \mathbf{d}) + O(h^2), \\
&(\boldsymbol{\ell}_{\mathbf{n}_h}^{\text{f}})^{-1/2} = (\boldsymbol{\ell}_{\mathbf{n}}^{\text{f}})^{-1/2} + 2 h (\lambda_{\text{f}}^{-1/2} - \lambda_{\text{f}}^{1/4} ) \text{sym} \big( \int_{0}^{x_3} \boldsymbol{\tau} \dd t \otimes \mathbf{n} \Big) + O(h^2).
\end{aligned}
\end{equation}
Thus from (\ref{eq:ConstructR}) the strain measure entering the entropic elastic term is 
\begin{align*}
(\boldsymbol{\ell}^{\text{f}}_{\mathbf{n}_h})^{-1/2}\nabla_h \mathbf{y}_h (\boldsymbol{\ell}_{\mathbf{n}_0}^0)^{1/2} &= \mathbf{R} + h\big( \mathbf{S}_1 + \mathbf{S}_2) + O(h^2)
\end{align*}
for tensor fields $\mathbf{S}_{1,2} \in C^{\infty}(\overline{\Omega}, \mathbb{R}^{3\times3})$ distinguished by their $x_3$-dependence via
\begin{align*}
\mathbf{S}_1:= x_3(\boldsymbol{\ell}_{\mathbf{n}} ^{\text{f}})^{-1/2} (\nabla \mathbf{b} , \mathbf{d} ) (\boldsymbol{\ell}_{\mathbf{n}_0}^0)^{1/2} ,\quad \mathbf{S}_2 := 2 (\lambda_{\text{f}}^{-1/2} - \lambda_{\text{f}}^{1/4} ) \text{sym} \big( \int_{0}^{x_3} \boldsymbol{\tau} \dd t \otimes \mathbf{n} \Big) (\nabla \mathbf{y}, \mathbf{b}) (\boldsymbol{\ell}_{\mathbf{n}_0}^0)^{1/2}.
\end{align*}
Manipulations on these tensor fields reveal the natural choice of $\mathbf{d}$. Specifically, $\mathbf{S}_1$ satisfies
\begin{equation}
\begin{aligned}\label{eq:S1Manip}
\mathbf{S}_1 &= x_3 \mathbf{R} (\boldsymbol{\ell}_{\mathbf{n}_0} ^{\text{f}})^{-1/2} \mathbf{R}^T \Big(\nabla \mathbf{b}[(\boldsymbol{\ell}_{\mathbf{n}_0}^0)^{1/2}]_{2\times2} , \lambda_0^{-1/4} \mathbf{d} \Big) \\
&= x_3 \mathbf{R} \Big( ( \boldsymbol{\ell}_{\mathbf{n}_0}^{\text{f}} )^{-1}( \boldsymbol{\ell}_{\mathbf{n}_0}^{\text{0}} )^{1/2} (\nabla \mathbf{y}, \mathbf{b})^T \nabla \mathbf{b}[(\boldsymbol{\ell}_{\mathbf{n}_0}^0)^{1/2}]_{2\times2}, \lambda_0^{-1/4} (\boldsymbol{\ell}_{\mathbf{n}_0} ^{\text{f}})^{-1/2} \mathbf{R}^T \mathbf{d} \Big) \\
&= x_3 \mathbf{R} \begin{pmatrix} \lambda_{\text{f}}^{-1/4} \lambda_0^{1/4} \big[(\boldsymbol{\ell}_{\mathbf{n}_0}^{\text{f}} )^{-1}( \boldsymbol{\ell}_{\mathbf{n}_0}^{\text{0}} )^{1/2}\big]_{2\times2} \mathbf{II}_{\mathbf{y}} [(\boldsymbol{\ell}_{\mathbf{n}_0}^0)^{1/2}]_{2\times2} & [\tilde{\mathbf{d}}]_{2\times1} \\ [ \mathbf{0}]_{2\times1} & \tilde{\mathbf{d}} \cdot \mathbf{e}_3 \end{pmatrix}
\end{aligned}
\end{equation}
for $\tilde{\mathbf{d}} := \lambda_0^{-1/4} (\boldsymbol{\ell}_{\mathbf{n}_0} ^{\text{f}})^{-1/2} \mathbf{R}^T\mathbf{d}$ by (\ref{eq:rotNCoupled}), Remark \ref{lotsOfIdentsRem}, and the definition of $\mathbf{b}$ in (\ref{eq:ConstructUB2}). $\mathbf{S}_2$ satisfies 
\begin{equation}
\begin{aligned}\label{eq:S2Manip}
\mathbf{S}_2 &= 2 (\lambda_{\text{f}}^{-1/2} - \lambda_{\text{f}}^{1/4} ) \Big( \int_{0}^{x_3} \tau_{\perp} \dd t \Big) \mathbf{R}\; \text{sym} \Big( \begin{pmatrix} \mathbf{n}_0^{\perp} \\ 0 \end{pmatrix} \otimes \begin{pmatrix} \mathbf{n}_0 \\ 0 \end{pmatrix} \Big) \mathbf{R}^T (\nabla \mathbf{y}, \mathbf{b}) (\boldsymbol{\ell}_{\mathbf{n}_0}^0)^{1/2} \\
&= 2 (\lambda_{\text{f}}^{-1/2} - \lambda_{\text{f}}^{1/4} ) \Big( \int_{0}^{x_3} \tau_{\perp} \dd t \Big) \mathbf{R}\; \text{sym} \Big( \begin{pmatrix} \mathbf{n}_0^{\perp} \\ 0 \end{pmatrix} \otimes \begin{pmatrix} \mathbf{n}_0 \\ 0 \end{pmatrix} \Big) ( \boldsymbol{\ell}_{\mathbf{n}_0}^{\text{f}} )^{1/2} \\
&=(\lambda_{\text{f}}^{-1/2} - \lambda_{\text{f}}^{1/4} )\Big( \int_{0}^{x_3} \tau_{\perp} \dd t \Big) \mathbf{R} \Big[ \lambda_{\text{f}}^{1/2} \begin{pmatrix} \mathbf{n}_0^{\perp} \\ 0 \end{pmatrix} \otimes \begin{pmatrix} \mathbf{n}_0 \\ 0 \end{pmatrix} + \lambda_{\text{f}}^{-1/4} \begin{pmatrix} \mathbf{n}_0 \\ 0 \end{pmatrix} \otimes \begin{pmatrix} \mathbf{n}^{\perp}_0 \\ 0 \end{pmatrix} \Big]
\end{aligned}
\end{equation}
given (\ref{eq:rotNCoupled}) and (\ref{eq:tauPerpFormula}) and because $\mathbf{R}^T (\nabla \mathbf{y}, \mathbf{b}) = ( \boldsymbol{\ell}_{\mathbf{n}_0}^{\text{f}} )^{1/2} (\boldsymbol{\ell}_{\mathbf{n}_0}^0)^{-1/2}$ by Remark \ref{lotsOfIdentsRem} and (\ref{eq:FtransposeF}) of Appendix \ref{sec:linAlgebra}. As a final bit of manipulation, the last of the identities in (\ref{eq:S1Manip}) and (\ref{eq:S2Manip}) furnishes 
\begin{align*}
\text{sym} (\mathbf{R}^T \mathbf{S}_1) = x_3 \begin{pmatrix} \mathbf{E}_1(\mathbf{y}, \mathbf{n}_0) & \frac{1}{2} [\tilde{\mathbf{d}}]_{2\times1} \\
 \frac{1}{2} [\tilde{\mathbf{d}}^T]_{1 \times 2} & \tilde{\mathbf{d}} \cdot \mathbf{e}_3 \end{pmatrix} , \quad \text{sym} (\mathbf{R}^T \mathbf{S}_2) = \Big( \int_0^{x_3} \tau_{\perp} \dd t \Big) \begin{pmatrix} \mathbf{E}_2(\mathbf{n}_0) & [\mathbf{0}]_{2\times1} \\ [ \mathbf{0} ]_{1\times2} & 0 \end{pmatrix}
 \end{align*}
for the $2\times2$ symmetric tensors fields $\mathbf{E}_1(\mathbf{y}, \mathbf{n}_0)$ and $ \mathbf{E}_2(\mathbf{n}_0)$ defined in (\ref{eq:E1E2Strains}).
We build on the identities above to derive the leading order expressions for the entropic and determinant energy densities, which in turn motivates our choice of $\tilde{\mathbf{d}}$ and thus $\mathbf{d} = \lambda_0^{1/4} \mathbf{R} (\boldsymbol{\ell}_{\mathbf{n}_0} ^{\text{f}})^{1/2} \tilde{\mathbf{d}}$. The entropic energy density satisfies 
\begin{equation}
\begin{aligned}\label{eq:TaylorWSmooth}
h^{-2} W\big((\boldsymbol{\ell}^{\text{f}}_{\mathbf{n}_h})^{-1/2}\nabla_h \mathbf{y}_h (\boldsymbol{\ell}_{\mathbf{n}_0}^0)^{1/2}\big) &= h^{-2} W\big( \mathbf{R} + h (\mathbf{S}_1 + \mathbf{S}_2) + O(h^2) \big) \\
&= h^{-2} W\big( \mathbf{I} + h \mathbf{R}^T (\mathbf{S}_1 + \mathbf{S}_2) + O(h^2) \big) \\
&= \frac{1}{2} Q_3\Big( \text{sym} \big (\mathbf{R}^T (\mathbf{S}_1 + \mathbf{S}_2) \big) + O(h) \Big) + o(1) \\
&= \frac{1}{2} Q_3 \Big( \begin{pmatrix} x_3 \mathbf{E}_1(\mathbf{y}, \mathbf{n}_0) + \big( \int_0^{x_3} \tau_{\perp} \dd t \big) \mathbf{E}_2(\mathbf{n}_0) & \frac{1}{2} x_3[\tilde{\mathbf{d}}]_{2\times1} \\
\frac{1}{2} x_3 [\tilde{\mathbf{d}}^T]_{1 \times 2} & x_3\tilde{\mathbf{d}} \cdot \mathbf{e}_3 \end{pmatrix} \Big) + o(1).
\end{aligned}
\end{equation}
due to the frame-indifference of $W$, (\ref{eq:Q3A}), and the smoothness of all fields. Along the same lines, the determinant energy density satisfies 
\begin{equation}
\begin{aligned}\label{eq:detManips}
h^{-2} \kappa_h ( \det \nabla_{h} \mathbf{y}_h - 1)^2 &= h^{-2} \kappa_h ( \det \big( \mathbf{R}^T (\boldsymbol{\ell}^{\text{f}}_{\mathbf{n}_h})^{-1/2}\nabla_h \mathbf{y}_h (\boldsymbol{\ell}_{\mathbf{n}_0}^0)^{1/2} \big) - 1)^2 \\
&= h^{-2} \kappa_h ( \det \big( \mathbf{I} + h \mathbf{R}^T (\mathbf{S}_1 + \mathbf{S}_2) + O(h^2) \big) - 1)^2 \\
&= \kappa_h \Big[\text{Tr}\Big( \text{sym}\big( \mathbf{R}^T (\mathbf{S}_1 + \mathbf{S}_2) \big) \Big) + O(h)\Big]^2 \\
&= \kappa_h \Big[ \ \text{Tr}\Big( \begin{pmatrix} x_3 \mathbf{E}_1(\mathbf{y}, \mathbf{n}_0) +\big( \int_0^{x_3} \tau_{\perp} \dd t \big) \mathbf{E}_2(\mathbf{n}_0) & \frac{1}{2} x_3 [\tilde{\mathbf{d}}]_{2\times1} \\
 \frac{1}{2} x_3 [\tilde{\mathbf{d}}^T]_{1 \times 2} & x_3 \tilde{\mathbf{d}} \cdot \mathbf{e}_3 \end{pmatrix} \Big) + O(h) \Big]^2 \\
 &= \kappa_h \Big[ x_3 \text{Tr}\Big( \begin{pmatrix} \mathbf{E}_1(\mathbf{y}, \mathbf{n}_0) & \frac{1}{2} [\tilde{\mathbf{d}}]_{2\times1} \\
 \frac{1}{2} [\tilde{\mathbf{d}}^T]_{1 \times 2} & \tilde{\mathbf{d}} \cdot \mathbf{e}_3 \end{pmatrix} \Big) + O(h) \Big]^2 
\end{aligned}
\end{equation}
using the the identity in (\ref{eq:detIdent}) and the fact that $\det \mathbf{R} = \det\big( (\boldsymbol{\ell}^{\text{f}}_{\mathbf{n}_h})^{-1/2}\big) = \det\big( (\boldsymbol{\ell}_{\mathbf{n}_0}^0)^{1/2} \big) = 1$ and $\text{Tr}(\mathbf{E}_2(\mathbf{n}_0)) = 0$. We therefore choose $\tilde{\mathbf{d}} \in C^{\infty}(\overline{\omega}, \mathbb{R}^3)$ (equivalently $\mathbf{d} \in C^{\infty}(\overline{\omega}, \mathbb{R}^3)$) as
\begin{equation}
\begin{aligned}\label{eq:solveDVec}
\tilde{\mathbf{d}} := -\text{Tr} \big( \mathbf{E}_1(\mathbf{y},\mathbf{n}_0) \big) \mathbf{e}_3 \quad \text{ equivalently } \quad \mathbf{d} := -\text{Tr} \big( \mathbf{E}_1(\mathbf{y},\mathbf{n}_0) \big) \mathbf{R} \mathbf{e}_3
\end{aligned}
\end{equation}
to obtain the result
\begin{equation}
\begin{aligned}\label{eq:comboEnergyDensities}
&\frac{1}{h^2} \Big( W\big((\boldsymbol{\ell}^{\text{f}}_{\mathbf{n}_h})^{-1/2}\nabla_h \mathbf{y}_h (\boldsymbol{\ell}_{\mathbf{n}_0}^0)^{1/2}\big) + \kappa_h ( \det \nabla_{h} \mathbf{y}_h - 1)^2 \Big)\\
&\qquad = \mu \Big\{ \Big| x_3 \mathbf{E}_1(\mathbf{y}, \mathbf{n}_0) + \int_0^{x_3} \tau^{\emph{e}}_{\perp} \dd t \mathbf{E}_2(\mathbf{n}_0) \Big|^2 + \text{Tr}\big( x_3 \mathbf{E}_1(\mathbf{y}, \mathbf{n}_0)\big)^2 \Big\} + o(1) + O(\kappa_h h^2) .
\end{aligned}
\end{equation}

\subsubsection{Asymptotic analysis of the energy} Having prescribed all the fields of our ansatz in (\ref{eq:ConstructUBAnsatz}), it remains only to calculate the limiting energy. Focusing first on the director anchoring term, we have that
\begin{equation}
\begin{aligned}
\int_{\Omega} \frac{\mu_h}{h^{2}} \big| \mathbf{P}_{\mathbf{n}_0} (\nabla_h \mathbf{y}_h)^T \mathbf{n}_h \big|^2 \dd V &= \int_{\Omega} \frac{\mu_h}{h^2} \Big| \mathbf{P}_{\mathbf{n}_0} \Big( \big(\nabla \mathbf{y}, \lambda_0^{1/4} \lambda^{-1/4}_\text{f} \boldsymbol{\nu}_{\mathbf{y}} \big) + O(h)\Big) \Big( \sigma \frac{\nabla \mathbf{y} \mathbf{n}_0}{|\nabla \mathbf{y} \mathbf{n}_0|} + O(h) \Big) \Big|^2 \dd V \\
&= \int_{\Omega} \frac{\mu_h}{h^2} \Big| \mathbf{P}_{\mathbf{n}_0} \Big(\frac{\mathbf{g}_{\mathbf{n}_0} \mathbf{n}_0}{|\nabla \mathbf{y} \mathbf{n}_0|} + O(h)\Big) \Big|^2 \dd V = O(\mu_h).
\end{aligned}
\end{equation} 
by (\ref{eq:ConstructUB1}), (\ref{eq:ConstructUB2}), (\ref{eq:tauPerpN}), (\ref{eq:gradHyHFormula}) and since $\mathbf{g}_{\mathbf{n}_0} \mathbf{n}_0$ is parallel to $\mathbf{n}_0$. The Frank elastic terms satisfies 
\begin{equation}
\begin{aligned}\label{eq:FrankLeadingOrder}
\int_{\Omega} \frac{\gamma_h}{h^2} |\nabla_h \mathbf{n}_h|^2 \dd V &= \int_{\Omega} \frac{\gamma_h}{h^2} \{ |\nabla \mathbf{n} + O(h)|^2 + |\boldsymbol{\tau} + O(h)|^2 \} \dd V \\
&= \int_{\Omega} \frac{\gamma_h}{h^2} \big\{ \lambda_{\text{f}}^{-1} \lambda_0 | \nabla (\nabla \mathbf{y} \mathbf{n}_0)|^2 + (\tau_{\perp})^2 \big\} \dd V + O(\frac{\gamma_h}{h^2} h) \\
&= \int_{\Omega} \gamma \big\{ \lambda_{\text{f}}^{-1} \lambda_0 | \nabla (\nabla \mathbf{y} \mathbf{n}_0)|^2 + (\tau_{\perp})^2 \big\} \dd V + O(\frac{\gamma_h}{h^2} h) + O(|\frac{\gamma_h}{h^2} - \gamma|) 
\end{aligned}
\end{equation}
by (\ref{eq:ConstructUB1}), (\ref{eq:tauPerpN}), (\ref{eq:constructTau}) and because $|\nabla \mathbf{y} \mathbf{n}_0|^2 = \mathbf{n}_0 \cdot \mathbf{g}_{\mathbf{n}_0} \mathbf{n}_0 = \lambda_{\text{f}} \lambda_0^{-1}$. The expressions in (\ref{eq:comboEnergyDensities}-\ref{eq:FrankLeadingOrder}) together yield 
\begin{equation}
\begin{aligned}\label{eq:getToLimit1}
E_{\mathbf{n}_0}^h(\mathbf{y}_h , \mathbf{n}_h) &= \int_{\Omega} \Big\{ \mu \Big| x_3 \mathbf{E}_1(\mathbf{y}, \mathbf{n}_0) + \int_0^{x_3} \tau_{\perp} \dd t \mathbf{E}_2(\mathbf{n}_0) \Big|^2 + \gamma (\tau_{\perp})^2 + \mu \text{Tr}\big( x_3 \mathbf{E}_1(\mathbf{y}, \mathbf{n}_0)\big)^2 \Big\} \dd V \\
&\qquad + \int_{\omega} \gamma \lambda_{\text{f}}^{-1} \lambda_0 | \nabla (\nabla \mathbf{y} \mathbf{n}_0)|^2 \dd A + o(1) + O(\kappa_h h^2) + O(\mu_h) + O(\frac{\gamma_h}{h^2} h) + O(|\frac{\gamma_h}{h^2} - \gamma|).
\end{aligned}
\end{equation}
Thus, after invoking the prescription of $\tau_{\perp}$ in (\ref{eq:tauPerpFormula}), the first integral becomes
\begin{align*}
&\int_{\Omega} \Big\{ \mu \Big| x_3 \mathbf{E}_1(\mathbf{y}, \mathbf{n}_0) + \int_0^{x_3} \tau_{\perp} \dd t \mathbf{E}_2(\mathbf{n}_0) \Big|^2 + \gamma (\tau_{\perp})^2 + \mu \text{Tr}\big( x_3 \mathbf{E}_1(\mathbf{y}, \mathbf{n}_0)\big)^2 \Big\} \dd V \\
&\qquad = \int_{\omega }\frac{\mu}{12} \Big( |\mathbf{E}_1(\mathbf{y}, \mathbf{n}_0) |^2 - g \Big( \big(\tfrac{\mu}{\gamma}\big)^{1/2} |\mathbf{E}_2(\mathbf{n}_0) | \Big) \frac{( \mathbf{E}_1(\mathbf{y}, \mathbf{n}_0) \colon \mathbf{E}_2(\mathbf{n}_0))^2}{|\mathbf{E}_2(\mathbf{n}_0)|^2} + \text{Tr} \big( \mathbf{E}_1(\mathbf{y}, \mathbf{n}_0)\big)^2 \Big) \dd A.
\end{align*}
It then follows from the same line of reasoning as that of (\ref{eq:finallyDoneDone}) in the proof of Lemma \ref{finalLBIneqLemma} that 
\begin{equation}
\begin{aligned}\label{eq:getToLimit3}
E_{\mathbf{n}_0}^h(\mathbf{y}_h , \mathbf{n}_h)&= \int_{\omega}\big\{ \mathbf{II}_{\mathbf{y}} \colon \mathbb{B}(\mathbf{n}_0) \colon \mathbf{II}_{\mathbf{y}} + \gamma \lambda_{\text{f}}^{-1} \lambda_0 | \nabla (\nabla \mathbf{y} \mathbf{n}_0)|^2 \big\} \dd A \\
&\qquad \quad + o(1) + O(\kappa_h h^2) + O(\mu_h) + O(\frac{\gamma_h}{h^2} h) + O(|\frac{\gamma_h}{h^2} - \gamma|).
\end{aligned}
\end{equation}
Consequently,
\begin{align*}
\lim_{h \rightarrow 0} E_{\mathbf{n}_0}^h(\mathbf{y}_h , \mathbf{n}_h) = \int_{\omega}\big\{ \mathbf{II}_{\mathbf{y}} \colon \mathbb{B}(\mathbf{n}_0) \colon \mathbf{II}_{\mathbf{y}} + \gamma \lambda_{\text{f}}^{-1} \lambda_0 | \nabla (\nabla \mathbf{y} \mathbf{n}_0)|^2 \big\} \dd A
\end{align*}
by the scaling assumptions of the moduli in (\ref{eq:kappaH}), (\ref{eq:muH}) and (\ref{eq:gammaH}). This result completes the proof of a recovery sequence in the case that $(\mathbf{y}, \mathbf{n}) \in \mathcal{A}_{\mathbf{n}_0}$ such that $\mathbf{y}, \mathbf{n}$ and $\mathbf{n}_0$ all belong to $C^{\infty}(\overline{\omega})$.

\subsection{Sobolev-regular setting.}\label{ssec:SobolevSetting} 

This section  proves Theorem \ref{RecoveryTheorem}. In particular, we construct a recovery sequence under the assumption that the limiting deformation has fractional Sobolev regularity. For all estimates in this setting, $C> 0$ refers to a sufficiently large constant independent of $h$ that can change from line to line. We also make ample use of big-$O$ notation applied to non-smooth fields. To clarify, for any $v_h, w_h: \Omega \to V$, we say that $v_h = O(|w_h|)$ if there is a constant $C> 0$  such that $|v_h(\mathbf{x}) | \leq C |w_h(\mathbf{x})|$ for a.e.\;$\mathbf{x} \in \Omega$ for all $h> 0$ sufficiently small. 
\subsubsection{Construction of recovery sequence in the non-smooth setting.}

Let $(\mathbf{y}, \mathbf{n}) \in \mathcal{A}_{\mathbf{n}_0}$ so that (\ref{eq:ConstructUB1}) holds a.e.\;on $\omega$, and assume $\mathbf{II}_{\mathbf{y}} \in L^\infty (\omega, \mathbb{R}^{2\times2}_{\text{sym}}) \cap H^{s}(\omega, \mathbb{R}^{2 \times2}_{\text{sym}})$. We build off the construction in the smooth setting by defining $\mathbf{b} \in H^1(\omega, \mathbb{R}^3)$ as in (\ref{eq:ConstructUB2}) and $\mathbf{R} \in H^1(\omega, SO(3))$ as in (\ref{eq:ConstructR}). The identity in (\ref{eq:rotNCoupled}) then holds a.e.\;on $\omega$ for some fixed $\sigma \in \{ -1,1\}$. Since $\omega$ is Lipschitz, we can extend $\mathbf{n}_0$ to a function in $H^1(\mathbb{R}^2,\mathbb{R}^2)$, $\mathbf{n}$ and $\mathbf{b}$ to functions in $H^1(\mathbb{R}^2, \mathbb{R}^3)$, $\mathbf{y}$ to a function in $H^{2}(\mathbb{R}^2, \mathbb{R}^3)$, $\mathbf{R}$ to a function $H^{1}(\mathbb{R}^2, \mathbb{R}^{3\times3})$, and define $\mathbf{d} \colon \mathbb{R}^2 \rightarrow \mathbb{R}^3$ by the parameterization in (\ref{eq:solveDVec}) and $\boldsymbol{\tau} \colon \mathbb{R}^2 \times (-1/2,1/2) \rightarrow \mathbb{R}^3$ by the parameterizations in (\ref{eq:constructTau}) and (\ref{eq:tauPerpFormula}).
Note that, when restricted to $\omega$, $\mathbf{d}$ belongs to $L^{\infty}(\omega,\mathbb{R}^3) \cap H^s(\omega, \mathbb{R}^3)$ since it is the product of functions that belong to $H^1(\omega) \cap L^{\infty}(\omega)$ and $\mathbf{II}_{\mathbf{y}} \in L^\infty(\omega) \cap H^s(\omega)$. By the same reasoning, $\boldsymbol{\tau}$ belongs to $L^{\infty}(\omega, \mathbb{R}^3) \cap H^s(\omega, \mathbb{R}^3)$ for a.e.\;$x_3 \in (-1/2,1/2)$ and to $C^{\infty}([-1/2,1/2],\mathbb{R}^3)$ for a.e.\;$\mathbf{x} \in \omega$. 

A key challenge for the recovery sequence here as compared to the smooth setting in Section \ref{ssec:smoothSetting} is that we must avoid directly taking the gradients of $\mathbf{d}$ and $\boldsymbol{\tau}$ --- they are now not necessarily weakly differentiable. Instead, we develop smooth approximations of these fields and justify that the approximations contribute negligible energy. Let $m \in C^{\infty}_c(\mathbb{R}^2, \mathbb{R})$ denote a standard mollifier and set $m_{h}(\mathbf{x}) := h^{-2} m(h^{-1} \mathbf{x})$. 
We mollify $\mathbf{d}$ in the standard way via $\mathbf{d}_{h} := \mathbf{d} \ast m_{h}$.
For $\boldsymbol{\tau}$, we instead smooth it in a way that preserves its orthogonality with $\mathbf{n}$ by setting 
\begin{align*}
&\tau_{\perp,h} := \frac{\lambda_0}{2} (\lambda_\text{f}^{-5/4} + \lambda_\text{f}^{1/4} )^2 \bigg[ \frac{\cosh \big(\big(\tfrac{\mu}{2\gamma}\big)^{1/2} (\lambda_{\text{f}}^{3/4} - \lambda_{\text{f}}^{-3/4})x_3 \big)}{\cosh \big(\big(\tfrac{\mu}{2\gamma}\big)^{1/2} (\lambda_{\text{f}}^{3/4} - \lambda_{\text{f}}^{-3/4}) \frac{1}{2} \big)} -1 \bigg] \big( (\mathbf{n}_0 \cdot \mathbf{II}_{\mathbf{y}} \mathbf{n}_0^{\perp})^2 \ast m_{h}\big), \\
&\boldsymbol{\tau}_{h} := \sigma \tau_{\perp,h} \mathbf{R}\begin{pmatrix} \mathbf{n}_0^{\perp} \\ 0 \end{pmatrix} . 
\end{align*}
Observe that $\boldsymbol{\tau}_{h}$ is the same parameterization as that of $\boldsymbol{\tau}$, except the field $(\mathbf{n}_0 \cdot \mathbf{II}_{\mathbf{y}} \mathbf{n}_0^{\perp})^2$ is replaced by its mollification. 
 Note that $\mathbf{d}_{h} \in C^{\infty}(\overline{\omega},\mathbb{R}^3)$ and $\tau_{\perp,h} \in C^{\infty}(\overline{\Omega}, \mathbb{R}^3)$. Furthermore, since the fields $\mathbf{d}$ and $\tau_{\perp}$ belong to $L^{\infty}(\omega)$, their smoothings satisfy the pointwise estimates 
 \begin{equation}
\begin{aligned}\label{eq:pointwiseEstMoll}
 &\|\mathbf{d}_{h}\|_{L^{\infty}(\omega)} + h \|\nabla \mathbf{d}_{h}\|_{L^{\infty}(\omega)} \leq C \|\mathbf{d}\|_{L^{\infty}(\omega)}, \quad 
 &\|\tau_{\perp, h} \|_{L^{\infty}(\Omega)} + h \|\nabla \tau_{\perp,h}\|_{L^{\infty}(\Omega)} \leq C \| \tau_{\perp} \|_{L^{\infty}(\Omega)} .
\end{aligned}
\end{equation}
Since $\mathbf{d}$ and $\tau_{\perp}$ also belong to $H^s(\Omega)$, their smoothings obey the square-integrable estimates (see, for instance, Exercise 6.64 \cite{leoni2023first})
\begin{equation}
\begin{aligned}\label{eq:HSEsts}
 &\|\mathbf{d} _{h} - \mathbf{d}\|^2_{L^2(\omega)} \leq C h^{2s} \| \mathbf{d}\|_{H^{s}(\omega)}^2, && \|\nabla \mathbf{d}_{h} \|_{L^2(\omega)}^2 \leq C h^{2(s-1)} \| \mathbf{d}\|_{H^{s}(\omega)}^2, \\
 &\|\tau_{\perp,h} - \tau_{\perp} \|^2_{L^2(\Omega)} \leq C h^{2s} \| \tau_{\perp} \|_{H^{s}(\Omega)}^2, && \|\nabla \tau_{\perp,h} \|_{L^2(\Omega)}^2 \leq C h^{2(s-1)} \| \tau_{\perp}\|_{H^{s}(\Omega)}^2.
\end{aligned}
\end{equation}

The construction now proceeds as in the smooth setting. For $h \ll 1$, consider an ansatz of the form 
\begin{align*}
\mathbf{y}_{h}(\mathbf{x}, x_3) = \mathbf{y}(\mathbf{x}) + h x_3 \mathbf{b}(\mathbf{x}) + h^2 \frac{x_3^2}{2} \mathbf{d}_{h} (\mathbf{x}) \quad \text{ and } \quad \mathbf{n}_{h} (\mathbf{x}, x_3) := \frac{\mathbf{n}(\mathbf{x}) + h \int_0^{x_3} \boldsymbol{\tau}_{h}(\mathbf{x}, t)\dd t }{|\mathbf{n}(\mathbf{x}) + h \int_0^{x_3} \boldsymbol{\tau}_{h} (\mathbf{x}, t) \dd t|} .
\end{align*}
The denominator of $\mathbf{n}_h$ satisfies $|\mathbf{n} + h \int_0^{x_3}\boldsymbol{\tau}_{h}dt |^{-1} = (1 + h^2 |\int_0^{x_3}\boldsymbol{\tau}_{h}dt|^2)^{-1/2} 
 = 1 + O(h^2)$  and $\nabla \big( |\mathbf{n} + h \int_0^{x_3}\boldsymbol{\tau}_{h}dt |^{-1} \big) = O(h^2 )\int_0^{x_3} \nabla \boldsymbol{\tau}_hdt$,   since $\boldsymbol{\tau}_{h}(\mathbf{x},x_3)$ is perpendicular to the unit vector field $\mathbf{n}(\mathbf{x})$ and since $|\boldsymbol{\tau}_{h}| \leq \|\boldsymbol{\tau}\|_{L^{\infty}(\Omega)} = O(1)$. 
 So it follows that 
 \begin{equation}
\begin{aligned}\label{eq:dirDeltaEpsCompute}
&\mathbf{n}_{h} = \mathbf{n} + h \int_0^{x_3}\boldsymbol{\tau}_{h} dt + O(h^2) = \mathbf{n} + h \int_0^{x_3} \boldsymbol{\tau} dt + O(h^2) + O( h \int_{-1/2}^{1/2} |\tau_{\perp, h} - \tau_{\perp}| dx_3) , \\
&\nabla \mathbf{n}_{h} = ( 1+ O(h^2)) \nabla \mathbf{n} + \big( h+ O(h^2)\big) \int_0^{x_3} \nabla \boldsymbol{\tau}_{h} dt , \\
&h^{-1} \partial_3 \mathbf{n}_{h} = \boldsymbol{\tau}_{h} + O(h) = \boldsymbol{\tau} + O(h) + O( |\tau_{\perp,h} - \tau_{\perp}|) .
\end{aligned}
\end{equation}
We also observe that 
\begin{align*}
&\nabla_h \mathbf{y}_{h} = (\nabla \mathbf{y}, \mathbf{b}) + h x_3 (\nabla \mathbf{b}, \mathbf{d}) + O(h|\mathbf{d}_{h} - \mathbf{d}|) + O(h^2 |\nabla \mathbf{d}_h|) \\
&(\boldsymbol{\ell}_{\mathbf{n}_h}^{\text{f}})^{-1/2} = (\boldsymbol{\ell}_{\mathbf{n}}^{\text{f}})^{-1/2} + 2 h (\lambda_{\text{f}}^{-1/2} - \lambda_{\text{f}}^{1/4} ) \text{sym} \big( \int_{0}^{x_3} \boldsymbol{\tau} \dd t \otimes \mathbf{n} \Big) + O(h^2) + O(h \int_{-1/2}^{1/2} |\tau_{\perp, h} - \tau_{\perp}| dx_3).
\end{align*}

Looking to the strain that enters into the entropic energy, these calculations when combined with the definition of $\mathbf{R}$ yield
\begin{equation}
\begin{aligned}\label{eq:getStrainEntropic}
(\boldsymbol{\ell}^{\text{f}}_{\mathbf{n}_{h}})^{-1/2}\nabla_h \mathbf{y}_{h} (\boldsymbol{\ell}_{\mathbf{n}_0}^0)^{1/2} &= \mathbf{R} + h\big( \mathbf{S}_{1} + \mathbf{S}_{2}) + \boldsymbol{\delta} \mathbf{S}_h . 
\end{aligned}
\end{equation}
The terms $\mathbf{S}_{1,2} \in L^{\infty}(\Omega, \mathbb{R}^{2\times 2})$ are exactly as defined in the recovery sequence in the smooth setting in Section \ref{ssec:smoothSetting}. They satisfy the identities in (\ref{eq:S1Manip}) and (\ref{eq:S2Manip}) a.e.\;on $\Omega$ for $\tilde{\mathbf{d}} := \lambda_0^{-1/4} (\boldsymbol{\ell}_{\mathbf{n}_0}^{\text{f}})^{-1/2} \mathbf{R}^T \mathbf{d}$ defined in (\ref{eq:solveDVec}). This in turn implies that 
\begin{align}\label{eq:symStrain}
 \text{sym}\big[ \mathbf{R}^T(\mathbf{S}_1 + \mathbf{S}_2)\big] = \begin{pmatrix} x_3 \mathbf{E}_1(\mathbf{y}, \mathbf{n}_0) + (\int_0^{x_3} \tau_{\perp} \dd t) \mathbf{E}_2(\mathbf{n}_0) & [\mathbf{0}]_{2\times 1} \\
 [\mathbf{0}]_{1 \times 2} & - x_3 \text{Tr}( \mathbf{E}_1(\mathbf{y}, \mathbf{n}_0)) 
 \end{pmatrix}
\end{align}
a.e.\;on $\Omega$ for $\mathbf{E}_1(\mathbf{y}, \mathbf{n}_0)$ and $ \mathbf{E}_2(\mathbf{n}_0)$ defined in (\ref{eq:E1E2Strains}). $\boldsymbol{\delta} \mathbf{S}_h \in L^{\infty}(\Omega, \mathbb{R}^{2\times 2})$, meanwhile, collects the remainder terms for this non-smooth setting. They are of the form 
\begin{align*}
\boldsymbol{\delta} \mathbf{S}_h = O(h^2) + O(h |\mathbf{d}_h - \mathbf{d}|) + O(h^2|\nabla \mathbf{d}_{h}|) + O(h \int_{-1/2}^{1/2} |\tau_{\perp, h} - \tau_{\perp}| dx_3)
\end{align*}
and satisfy the following estimates 
\begin{align}\label{eq:remainderEsts}
 \|\boldsymbol{\delta} \mathbf{S}_h \|_{L^{\infty}(\Omega)} \leq C h, \quad \| \boldsymbol{\delta} \mathbf{S}_h \|_{L^2(\Omega)} \leq C h^{1+s}.
\end{align}
In particular, the remainder terms are pointwise small, as indicated, because of (\ref{eq:pointwiseEstMoll}) and even smaller in a square-integrable sense because of (\ref{eq:HSEsts}).
Both estimates are crucial in calculating the leading order energy of the sheet, as we now show. 
\subsubsection{Asymptotic analysis of the energy and the proof of Theorem \ref{RecoveryTheorem}}

We start with the entropic energy. A Taylor expansion of $W$ using (\ref{eq:TaylorW}), the fact that all the field in (\ref{eq:getStrainEntropic}) are bounded in $L^{\infty}(\Omega)$, and the manipulations in (\ref{eq:TaylorWSmooth}) gives
\begin{equation}
\begin{aligned}\label{eq:sobolevEnergyEst1}
&\int_{\Omega} h^{-2} W\big((\boldsymbol{\ell}^{\text{f}}_{\mathbf{n}_{h}})^{-1/2}\nabla_h \mathbf{y}_{h} (\boldsymbol{\ell}_{\mathbf{n}_0}^0)^{1/2})\big) \dd V \\
&\qquad = \int_{\Omega} h^{-2}W\Big( \mathbf{I} + h \mathbf{R}^T( \mathbf{S}_{1} + \mathbf{S}_{2} + h^{-1} \boldsymbol{\delta} \mathbf{S}_h) \Big) \dd V \\
 & \qquad = \int_{\Omega} \Big\{ \frac{1}{2} Q_3 \Big( \text{sym}\big[ \mathbf{R}^T(\mathbf{S}_1 + \mathbf{S}_2 + h^{-1} \boldsymbol{\delta} \mathbf{S}_h) \big] \Big) + o( |h^{-1} \boldsymbol{\delta} \mathbf{S}_h|^2) + o( 1 ) \Big \} \dd V \\
 &\qquad = \int_{\Omega} \Big\{ \frac{1}{2} Q_3 \Big( \text{sym}\big[ \mathbf{R}^T(\mathbf{S}_1 + \mathbf{S}_2)\big]\Big) + \underbrace{O(|h^{-1}\boldsymbol{\delta} \mathbf{S}_h|) + O(|h^{-1} \boldsymbol{\delta}\mathbf{S}_h|^2) + o(|h^{-1} \boldsymbol{\delta} \mathbf{S}_h|^2)}_{O(h^{-1} |\boldsymbol{\delta} \mathbf{S}_h|)} + o(1) \Big \} 
 \dd V \\ 
 &\qquad = \int_{\Omega} \mu \Big\{ \Big| x_3 \mathbf{E}_1(\mathbf{y}, \mathbf{n}_0) + \int_0^{x_3} \tau_{\perp} \dd t \mathbf{E}_2(\mathbf{n}_0) \Big|^2 + \text{Tr}\big( x_3 \mathbf{E}_1(\mathbf{y}, \mathbf{n}_0)\big)^2 \Big \} \dd V + \underbrace{O\Big( h^{-1} \|\boldsymbol{\delta} \mathbf{S}_h\|_{L^1(\Omega)} \Big)}_{=O(h^s)} + o(1) 
\end{aligned}
\end{equation}
Here, the third equality collects the remainder terms into a term of $O(h^{-1}|\boldsymbol{\delta} \mathbf{S}_h|)$ due to the pointwise estimate in (\ref{eq:remainderEsts}). The fourth uses (\ref{eq:symStrain}) and the definition of $Q_3$ in (\ref{eq:Q3A}), and then uses H\"{o}lder's inequality and the square-integrable estimate (\ref{eq:remainderEsts}) to conclude that $O(h^{-1} \| \boldsymbol{\delta} \mathbf{S}_h\|_{L^1(\Omega)}) = O(h^s)$.

Next, we calculate the incompressibility penalty. Using the first identity in (\ref{eq:detIdent}) and the manipulations in (\ref{eq:detManips}), this term has the form
\begin{equation}
\begin{aligned}\label{eq:sobolveEnergyEst2}
& \int_{\Omega} h^{-2} \kappa_h ( \det \nabla_{h} \mathbf{y}_{h} - 1)^2 \dd V \\
&\qquad = \int_{\Omega} h^{-2} \kappa_h \big( \det \big[ \mathbf{I} + h \mathbf{R}^T (\mathbf{S}_{1} + \mathbf{S}_{2} + h^{-1} \boldsymbol{\delta} \mathbf{S}_h ) \big] - 1) \dd V \\
&\qquad = 
\int_{\Omega} h^{-2}\kappa_h \Big[ h \underbrace{\text{Tr}\big( \text{sym} \big[ \mathbf{R}^T(\mathbf{S}_1 + \mathbf{S}_2) \big] \big)}_{=0} + \underbrace{O(|\boldsymbol{\delta} \mathbf{S}_h| ) + O(h^2) + O(|\boldsymbol{\delta} \mathbf{S}_h|^2) + O(h^3) + O(|\boldsymbol{\delta} \mathbf{S}_h|^3)}_{ = O(|\boldsymbol{\delta} \mathbf{S}_h|) + O(h^2)} \Big]^2 \\
 &\qquad = O \Big( \kappa_h h^{-2} \|\boldsymbol{\delta}\mathbf{S}_h\|_{L^2(\Omega)}^2 \Big) + O (\kappa_h h^2) = O(\kappa_h h^{2s}) 
\end{aligned}
\end{equation}
The second equality here uses the estimates in (\ref{eq:detEsts}) to supply the remainder terms and the pointwise estimates in (\ref{eq:remainderEsts}) to simplify these terms. The third equality then follows because $\text{sym}[ \mathbf{R}^T (\mathbf{S}_1 + \mathbf{S}_2)]$ in (\ref{eq:symStrain}) has zero trace a.e.\;on $\Omega$. The fourth follows from the square-integrable estimates in (\ref{eq:remainderEsts}).

Turning to the director anchoring term, we note that $(\nabla \mathbf{y}, \mathbf{b}) \in L^\infty (\omega)$ and that $\mathbf{P}_{\mathbf{n}_0}(\nabla \mathbf{y}, \mathbf{b})^T \mathbf{n} = \mathbf{0}$ a.e.\;on $\omega$ since the metric of $\mathbf{y}$ is $\mathbf{g}_{\mathbf{n}_0}\in L^{\infty}(\omega)$ and since $\mathbf{b}$ satisfies (\ref{eq:ConstructUB2}). It follows that
\begin{equation}
\begin{aligned}\label{eq:sobolevEnergyEst4}
 &\int_{\Omega} \frac{\mu_h}{h^2} \big| \mathbf{P}_{\mathbf{n}_0} ( \nabla_h \mathbf{y}_{h})^T \mathbf{n}_{h}|^2 \dd V \\
 &\qquad = \int_{\Omega} \frac{\mu_h}{h^2} \big| \mathbf{P}_{\mathbf{n}_0} \big( h x_3 (\nabla \mathbf{b}, \mathbf{d}_{h})^T + \frac{h^2 x_3^2}{2} ( \nabla \mathbf{d}_{h} , \mathbf{0} )^T\big) \mathbf{n}_{h} + \mathbf{P}_{\mathbf{n}_0} (\nabla \mathbf{y}, \mathbf{b})^T (\mathbf{n}_{h} - \mathbf{n}) \Big|^2 \dd V \\
 &\qquad = O \Big( \mu_h \int_{\Omega} |(\nabla \mathbf{b}, \mathbf{d}_{h})|^2 + h^2 |\nabla \mathbf{d}_{h}|^2 + h^{-2} | \mathbf{n}_{h} - \mathbf{n}|^2 \dd V \Big) = O(\mu_h) + O(\mu_h h^{2s} ) = O(\mu_h), 
\end{aligned}
\end{equation}
where the last line uses (\ref{eq:HSEsts}) for the $\nabla \mathbf{d}_h$ term and that $|\mathbf{d}_h| \leq C \| \mathbf{d}\|_{L^{\infty}(\omega)}$ and $|\mathbf{n}_{h} - \mathbf{n}| = O(h)$, which follow from (\ref{eq:pointwiseEstMoll}) and (\ref{eq:dirDeltaEpsCompute}). 

Finally, for the Frank elastic term, we invoke both (\ref{eq:HSEsts}) and (\ref{eq:dirDeltaEpsCompute}) to obtain the estimate 
\begin{align*}
 \|\nabla_h \mathbf{n}_h - (\nabla \mathbf{n}, \boldsymbol{\tau}) \|_{L^2(\Omega)} &\leq C( h^2 \| \nabla \mathbf{n}\|_{L^{2}(\omega)} + h \| \nabla \boldsymbol{\tau}_h \|_{L^2(\Omega)} + \|\tau_{\perp,h} - \tau_{\perp} \|_{L^2(\Omega)} + h) \leq Ch^{s}.
\end{align*}
 It is therefore straightforward to deduce from standard estimates and H\"{o}lder's inequality that 
\begin{equation}
\begin{aligned}\label{eq:sobolevEnergyEst5}
&\int_{\Omega} \frac{\gamma_h}{h^2} |\nabla_h \mathbf{n}_{h}|^2 \dd V \\
&\qquad = \int_{\Omega} \frac{\gamma_h}{h^2} |(\nabla \mathbf{n}, \boldsymbol{\tau})|^2 \dd V + O\Big( \frac{\gamma_h}{h^2} \|(\nabla \mathbf{n}, \boldsymbol{\tau})\|_{L^2(\Omega)} \| \nabla_h \mathbf{n}_h - (\nabla \mathbf{n}, \boldsymbol{\tau}) \|_{L^2(\Omega)}\Big) + O\Big( \frac{\gamma_{h}}{h^2} \| \nabla_h \mathbf{n}_h - (\nabla \mathbf{n}, \boldsymbol{\tau})\|_{L^2(\Omega)}^2 \Big) \\
&\qquad =\int_{\Omega} \gamma \big\{ \lambda_{\text{f}}^{-1} \lambda_0 | \nabla (\nabla \mathbf{y} \mathbf{n}_0)|^2 + (\tau_{\perp})^2 \big\} \dd V + O\Big( |\frac{\gamma_h}{h^2} - \gamma | \Big) + O(\frac{\gamma_h}{h^2}h^s). 
\end{aligned}
\end{equation}

We are now ready to complete the proof. Observe that the expressions in (\ref{eq:sobolevEnergyEst1}-\ref{eq:sobolevEnergyEst5}) together yield 
\begin{align*}
 E_{\mathbf{n}_0}^h(\mathbf{y}_h, \mathbf{n}_h) &= \int_{\Omega} \Big\{ \mu \Big| x_3 \mathbf{E}_1(\mathbf{y}, \mathbf{n}_0) + \int_0^{x_3} \tau_{\perp} \dd t \mathbf{E}_2(\mathbf{n}_0) \Big|^2 + \gamma (\tau_{\perp})^2 + \mu \text{Tr}\big( x_3 \mathbf{E}_1(\mathbf{y}, \mathbf{n}_0)\big)^2 \Big\} \dd V \\
&\qquad + \int_{\omega} \gamma \lambda_{\text{f}}^{-1} \lambda_0 | \nabla (\nabla \mathbf{y} \mathbf{n}_0)|^2 \dd A + o(1) + O(\kappa_h h^{2s}) + O(\mu_h) + O(\frac{\gamma_h}{h^2} h^s) + O(|\frac{\gamma_h}{h^2} - \gamma|). 
\end{align*}
The first integral then becomes $\int_{\omega} \mathbf{II}_{\mathbf{y}} \colon \mathbb{B}(\mathbf{n}_0) \colon \mathbf{II}_{\mathbf{y}} \dd V$ after employing the definition of $\tau_{\perp}$ (see (\ref{eq:getToLimit1}-\ref{eq:getToLimit3})). Meanwhile, the remainder terms vanish as $h\rightarrow 0$ by the scaling assumptions on $\kappa_h, \mu_h$ and $\gamma_h$ in (\ref{eq:kappaH}), (\ref{eq:muH}) and (\ref{eq:gammaH}). Consequently, $\lim_{h \rightarrow 0} E_{\mathbf{n}_0}^h (\mathbf{y}_h, \mathbf{n}_h) = E_{\mathbf{n}_0}(\mathbf{y},\mathbf{n})$ as desired.  It is also clear  that $(\mathbf{y}_h, \mathbf{n}_h) \rightarrow (\mathbf{y}, \mathbf{n})$ in $H^1(\Omega, \mathbb{R}^3) \times H^1(\Omega, \mathbb{S}^2)$.

\section{Examples of the theory}\label{sec:Examples}
We end the paper by illustrating the rich design space covered by our plate theory, specifically, through a series of canonical examples from the literature. In particular, we add to the literature by now computing and discussing the bending energy of these examples. We assume throughout that the ratio $\lambda_{\text{a}} := \lambda_\text{f}/\lambda_0$ is $< 1$, so that the actuation corresponds to heating the LCE sheet.
\subsection{Surfaces of revolution from smooth directors} The first set of examples come from Aharoni \textit{et al.}\;\cite{aharoni2014geometry}, which produced families of smooth director profiles and deformations that solved the metric constraint, leading to surfaces of revolution on actuation. 

The basic idea from that work is as follows. Start with a general parameterization of a surface of revolution in isothermal coordinates 
\begin{align}\label{eq:surfOfRevParam}
\mathbf{w}(\xi ,\eta) = \begin{pmatrix} R(\xi) \cos(\eta) \\
R(\xi) \sin(\eta) \\ z(\xi) \end{pmatrix} \quad \text{ subject to } \quad 
R'(\xi)^2 + z'(\xi)^2 = R(\xi)^2 
\end{align}
and general 2D director profile in those coordinates 
\begin{align*}
 \mathbf{m}_0( \xi,\eta) =\begin{pmatrix} \cos \theta(\xi, \eta) \\ \sin \theta(\xi, \eta) \end{pmatrix}.
\end{align*}
The goal is to seek out the right change of coordinates $\mathbf{x} = (x_1, x_2) = \boldsymbol{\Phi}(\eta, \xi)$ to the physical domain such that
\begin{align}\label{eq:designProblem}
\mathbf{y} \circ \boldsymbol{\Phi} := \mathbf{w}, \quad \mathbf{n}_0 \circ \boldsymbol{\Phi} := \mathbf{m}_0 \quad \text{ solves } \quad \nabla \mathbf{y}^T \nabla \mathbf{y} = \mathbf{g}_{\mathbf{n}_0}. 
\end{align}
Leaving aside the details, Aharoni\;\textit{et al.}\;showed that the coordinate transformation
\begin{align*}
 &\boldsymbol{\Phi}(\xi, \eta):= \begin{pmatrix} \lambda_{\text{a}}^{-1/4} \int_0^{\xi} R^2(\tilde{\xi}) \dd\tilde{\xi} \\[6pt]
 \eta + \lambda_{\text{a}}^{-1/4} \int_0^{\xi} \sqrt{(\lambda_{\text{a}} - R^2(\tilde{\xi}))(R^2(\tilde{\xi}) - \lambda_{\text{a}}^{-1/2})} \dd\tilde{\xi} \end{pmatrix} 
\end{align*}
along with the family of director profiles 
\begin{align*}
&\theta( \xi, \eta) = \arctan |\phi(\xi)| , \quad \phi(\xi) = \sqrt{ \frac{R^2(\xi) - \lambda^{-1/2 }_\text{a}}{\lambda_{\text{a}} - R^2(\xi)}} 
\end{align*}
provides a large class of solutions to (\ref{eq:designProblem}). 

\begin{figure}[t!]\hspace*{-.4cm}
 \includegraphics[width=1\textwidth]{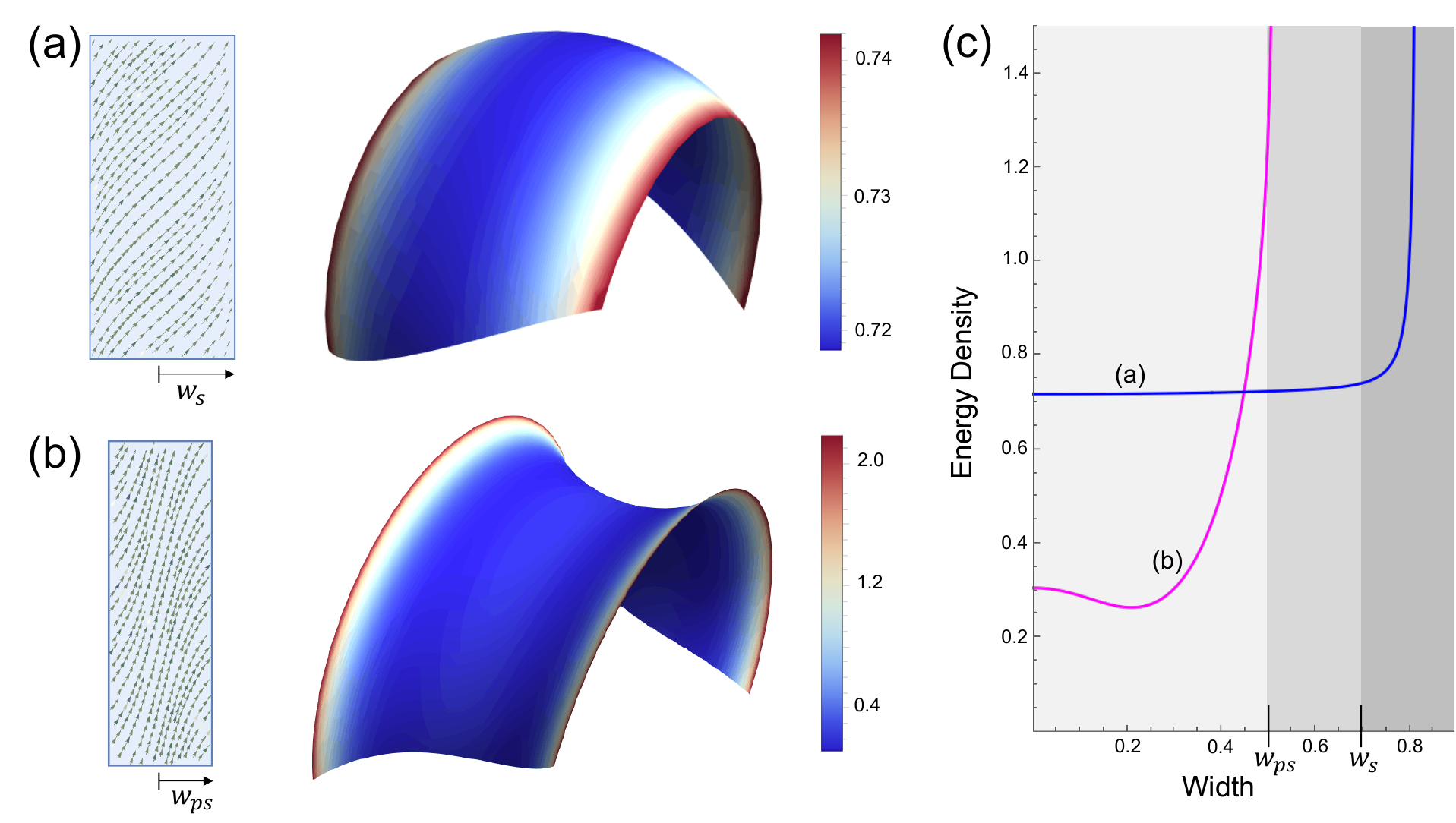} 
 \caption{ Metric-constrained surfaces of revolution. (a) Reference and deformed configuration of the spherical actuation of Gauss curvature $K = 1$. (b) Reference and deformed configuration of the pseudospherical actuation of Gauss curvature $K = -1.8$. 
 Both examples have parameters $\lambda_{\text{f}} = 1.5, \lambda_0 = 3, \mu = 1,$ and $ \gamma = 1/100$, with their plate energy densities overlaid onto the deformed configuration. (c) Plot of the plate energy densities as a function of $x_1$; the densities are independent of $x_2$. The different shadings denote the extent of each sheet domain: $w_s = 0.7$ for (a) and $w_{ps} = 0.5$ for (b).} \label{fig:aharoni-examples}
\end{figure}

Here we highlight two such examples from that work, obtained through appropriate choices of $R(\xi)$ and $z(\xi)$ subject to
 the ODE constraint in (\ref{eq:surfOfRevParam}). The first is a spherical actuation of constant positive Gauss curvature $K >0$, specified by the choices 
\begin{align*}
 R(\xi) = \frac{1}{\sqrt{K}} \sech (\xi), \quad z(\xi) = \frac{1}{\sqrt{K}} \tanh(\xi) .
\end{align*}
The second is the pseudospherical actuation of constant negative Gauss curvature $K < 0$, specified by 
\begin{align*}
 R(\xi) = \frac{1}{\sqrt{-K}} \sec (\xi), \quad z(\xi) = \frac{1}{\sqrt{-K}} \int_0^{\xi} \sqrt{ \sec^2(\tilde{\xi}) (1- \tan^2(\tilde{\xi})) } d \tilde{\xi} .
\end{align*}
Substituting these choices for $R(\xi)$ and $z(\xi)$ into (\ref{eq:surfOfRevParam}) and (\ref{eq:designProblem}) yields explicit characterizations of the metric-constrained pair $(\mathbf{y}, \mathbf{n}_0)$ for these two examples. All quantities of interest in our plate theory --- the director profiles and actuations as a function of $\mathbf{x}$ and energy densities --- can be determined from this characterization through standard manipulations. For brevity, we suppress the details of these calculations and simply illustrate the results using Fig.\;\ref{fig:aharoni-examples}.

Fig.\;\ref{fig:aharoni-examples}(a) shows the director profile and actuation in the spherical case; Fig.\;\ref{fig:aharoni-examples}(b) shows the same in the pseudospherical case. Notice that the director varies only along the horizontal $x_1$ coordinate in both cases. In particular, the director in the spherical case starts more to the horizontal at $x_1=0$ and tilts vertically as $x_1$ increases. This profile causes the sheet to expand on actuation in the $x_2$ direction at $x_1 = 0$ and contract in said direction at $x_1 = \pm w_s$, resulting in the 3D spherical shape shown. In contrast, the director in the pseudspherical case starts more vertically at $x_1 =0$ and tilts horizontally as $x_1$ increases, causing the sheet to contract in the $x_2$ direction at $x_1 = 0$ and expand at $x_2 = \pm w_{ps}$, thus leading to the negative Gauss curvature shape shown. In both cases, the energy density is independent of $x_2$ and is plotted in Fig.\;\ref{fig:aharoni-examples}(c). Observe that the density in the spherical case remains relatively constant but would quickly start to blow up if we were to extend the width a bit beyond $w_s = 0.7$. The density in the pseudospherical case is more varied, but exhibits the same type of blow up should the width be extended beyond $w_{ps} = 0.5$.

\subsection{Rotationally symmetric director fields }
We now turn our attention to examples of metric-constrained actuation involving rotationally symmetric director fields, including the conical actuation of Modes \textit{et al.}\;\cite{modes2011gaussian} and some of its log-spiral generalizations by Mostejeran and Warner \cite{mostajeran2016encoding, warner2018nematic}. 

These types of examples emerge from an ansatz of the form 
\begin{align*}
 \mathbf{n}_0(\mathbf{x}) = \begin{pmatrix} \cos\left(\theta + \frac{1}{2} \arccos \beta(r)\right) \\[3pt] \sin\left(\theta + \frac{1}{2} \arccos \beta(r)\right) \end{pmatrix}, \quad \mathbf{y}(\mathbf{x}) = \begin{pmatrix} d(r) \cos (\theta + \gamma(r)) \\ d(r) \sin(\theta + \gamma(r)) \\ p(r) \end{pmatrix} \quad \text{for} \quad \mathbf{x} = r \mathbf{e}_r(\theta),
\end{align*}
where $\mathbf{e}_r(\theta) \in \mathbb{R}^2$ is the radial basis vector in Polar coordinates. In particular, the necessary and sufficient conditions for the metric constraint $\nabla \mathbf{y}^T \nabla \mathbf{y} = \mathbf{g}_{\mathbf{n}_0}$ to hold under this ansatz are that $\beta(r), \gamma(r), d(r)$ and $p(r)$ solve
\begin{align}\label{eq:RadialExamples}
 \begin{cases}
 d'(r)^2 + p'(r)^2+ (d(r) \gamma'(r) )^2 = c_1 - c_2 \beta(r) \\
   d(r)^2 \gamma'(r) = c_2 r\sqrt{1- \beta(r)^2} \\
     d(r)^2 = r^2 (c_1 + c_2\beta(r) ) 
    \end{cases}
\end{align}
where $c_1 = \tfrac{1}{2} (\lambda_{\text{a}} + \lambda_{\text{a}}^{-1/2})$ and $c_2 = \tfrac{1}{2} (\lambda_{\text{a}}^{-1/2} - \lambda_{\text{a}})$. Note that $\nabla \mathbf{n}_0$ is singular as $r \rightarrow 0$. We therefore consider examples on a punctured disc $\omega := \{ r \mathbf{e}_r(\theta) \in \mathbb{R}^2 \colon r \in (r_i, r_0), \theta \in [0,2\pi)\}$ for some $0 < r_i < r_o$ to maintain that $\nabla \mathbf{n}_0 \in L^2(\omega)$, as demanded by our plate theory. We can then solve the middle equation in (\ref{eq:RadialExamples}) by choosing 
\begin{align*}
\gamma(r) = \int_{r_i}^r \frac{c_2 \sqrt{1 - \beta(v)^2}}{r(c_1+c_2\beta(v))} dv.
\end{align*}
The remaining two equations in (\ref{eq:RadialExamples}) can be easily solved to produce large families of examples,

\begin{figure}[t!]
\centering
\hspace*{-.4cm}
\includegraphics[width=1\textwidth]{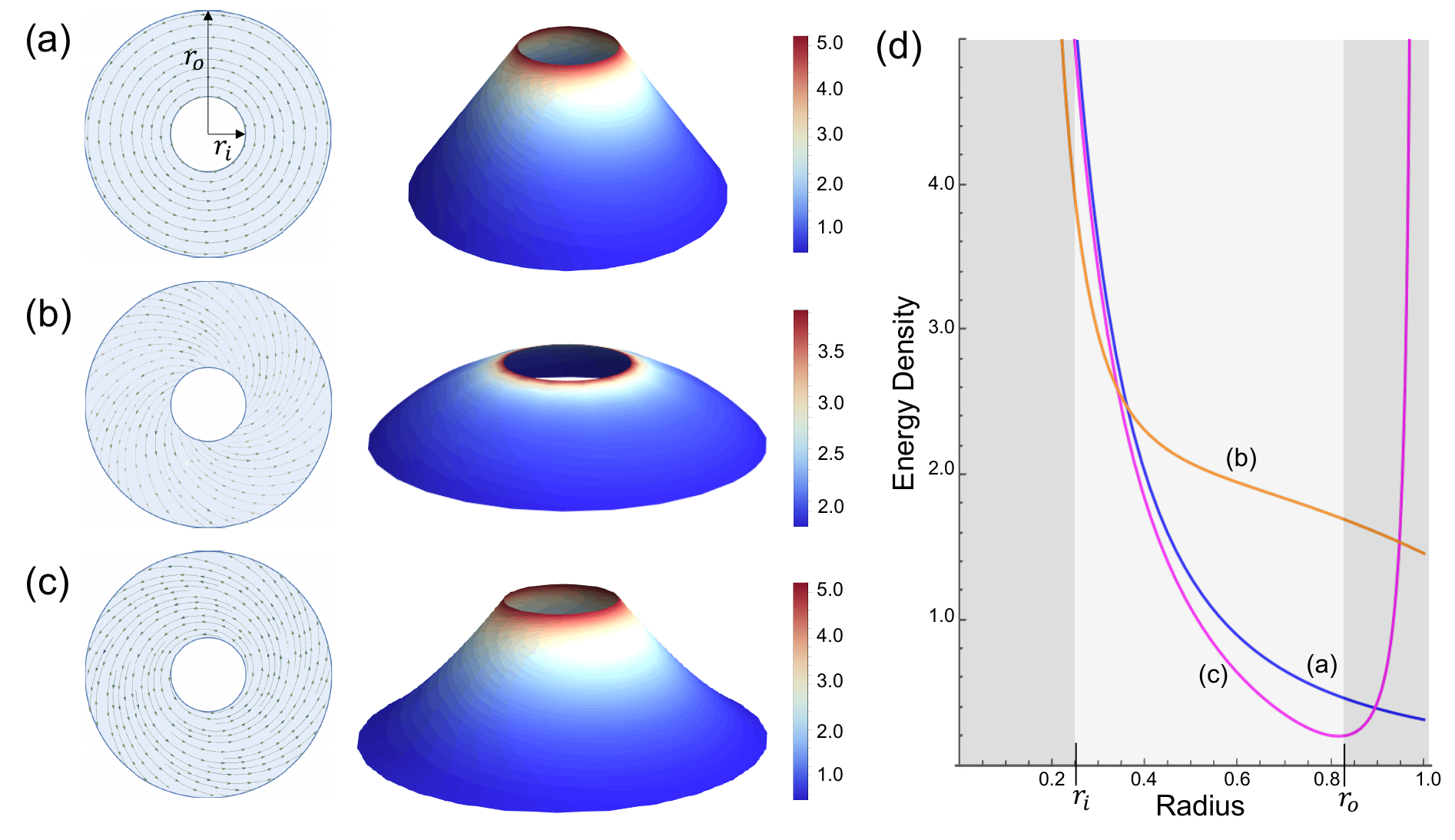} 
\caption{Metric-constrained actuations with rotationally symmetric director fields. Reference and deformed configuration of (a) the classical cone with zero Gauss curvature; (b) the spherical cap with Gauss curvature $K = 1$; (c) the hyperbolic cone with Gauss curvature $K = - 1$.
 All examples have parameters $r_i = 0.25$, $r_o = 0.825$, $\lambda_{\text{f}} = 1.5, \lambda_0 = 3, \mu = 1,$ and $ \gamma = 1/100$. Their plate energy densities are overlaid onto the deformed configuration. (d) Plot of the plate energy densities as a function of the radius $r$; the densities are independent of $\theta$. The grey shadings denote the extent of the annuli domains.}
\label{fig:LogSpiralExample}
\end{figure}

We highlight three specific examples of metric-constrained actuation on the punctured disk, namely, the cone, spherical cap, and hyperbolic cone. The classical cone example comes from \cite{modes2011gaussian}. It has zero Gauss curvature and is specified by 
\begin{align*}
 \beta(r) = 0, \quad d(r)= \lambda_{\text{a}}^{1/2} r, \quad p(r) = \sqrt{\lambda_{\text{a}}^{1/2} - \lambda_{\text{a}}}\, r.
\end{align*}
The spherical cap and hyperbolic cone belong to the family of log-spiral designs from \cite{mostajeran2016encoding, warner2018nematic}. The spherical cap, in particular, has constant positive Gauss curvature $K >0$ and is specified by 
\begin{align*}
 \beta(r) = 1-\frac{2}{1+\lambda_{a}^{3 / 4}}-c_{K} r^{2}, \quad d(r) = \frac{1}{\sqrt{K}}\sin(\sqrt{K} s(r)), \quad p(r) = \frac{1}{\sqrt{K}}\sin(\sqrt{K} s(r)),
\end{align*}
where $c_K := \frac{K}{2(\lambda_{a}^{-1}-\lambda_{a}^{1 / 2})}$ and $s(r) :=\int_0^r \frac{\lambda_{\text{a}}^{1/4}}{\sqrt{c_1+c_2\beta(v)}} dv$. The hyperbolic cone has constant negative Gauss curvature $K <0$ and is specified by
\begin{align*}
 \beta(r) = -1 -c_{K} r^{2}, \quad d(r) = \rho \sinh\left(\sqrt{|K|} s(r)\right), \quad p(r) = \int_0^{s(r)\sqrt{|K|}}& \sqrt{\frac{1}{|K|} - \rho^2\cosh^2(v)}dv,
\end{align*}
where $c_K$ is the same as the spherical cap case, $\rho := \frac{\lambda_{\text{a}}}{2\sqrt{|c_2c_K|}}$ and $s(r) := \tfrac{2}{\sqrt{|K|}}\text{arcsinh}(\sqrt{\tfrac{|c_2c_K|}{\lambda_{\text{a}} }}r)$. With these prescriptions, all quantities of interest in our plate theory can be determined via standard manipulations. As before, we do not run through these tedious but straightforward details and instead simply illustrate the results using Fig.\;\ref{fig:LogSpiralExample}. 

Fig.\;\ref{fig:LogSpiralExample}(a) shows the director profile and actuation of the classical cone, Fig.\;\ref{fig:LogSpiralExample}(b) shows the same for the spherical cap, and Fig.\;\ref{fig:LogSpiralExample}(c) shows that of the hyperbolic cone. Notice that the director in the classical cone is purely azimuthal and does not vary along $r$. Thus, $d(r)/r$ in this case describes a constant circumferential contraction  $= \lambda_\text{a}^{1/2}$, which produces the conical shape shown. In the log-spiral cases, the director varies as a function of $r$,  so the circumferential contraction $d(r)/r <1$ also varies in $r$. This apparently  dictates  the Gauss curvature in each case. Specifically, 
 $d(r)/r$ is monotonically decreasing for the spherical cap and monotonically increasing for the hyperbolic cone. Intuitively, the latter implies an actuation that starts conical but must flatten out as $r$ increases. These features leads naturally to the hyperbolic cone on display. In contrast, the former must result in an actuation that starts more flat in $r$ and then gradually increases its curvature as $r$ increases, hence spherical cap actuation. 
 
 Notice from Fig.\;\ref{fig:LogSpiralExample}(d) that all of these examples exhibit a blow up in their energy densities as $r \rightarrow 0$. This is fully expected since the director field is singular at the origin. Interestingly, the energy density also blows up for the hyperbolic case as $r$ increases beyond $0.825.$ There are probably analogies to be made with growth or swelling processes that conform to a hyperbolic metric \cite{klein2007shaping}. In this setting, one often 
 experimentally observes a waviness at the outer boundaries of the sample, a phenomenon that has been attributed mathematically to non-smooth $H^2$ isometric immersions \cite{gemmer2016isometric,yamamoto2021nature}. Let us recall that we did not quite settle the question of $\Gamma$-convergence for negatively-curved surfaces. The presence of unbounded energy at the extent of the domain and possible non-smooth competitors is perhaps at that heart of the issue. We leave this topic an open question for future research.

\section*{Acknowledgments}
 \noindent P.P. acknowledges support from the National Science Foundation (CMMI-CAREER-2237243) and Army Research Office (ARO-W911NF2310137).  L.B.'s acknowledges support from the NSF through the Center for Nonlinear Analysis, and D.P.G. was supported in part by the
Zuckerman STEM Leadership Program.

\appendix

\section{Linear algebra and inequalities}\label{sec:linAlgebra}

Here we collect various pointwise estimates and linear algebra identities used throughout the work. 

\begin{lem}\label{AppendLemma2}
There are constants $c_1, c_2> 0$ such that
\begin{equation}\label{eq:firstAppendProof}
\emph{dist}^2(\mathbf{F}, SO(3)) \geq c_1|\mathbf{F}|^2 - c_2 \quad \text{ for all } \mathbf{F} \in \mathbb{R}^{3\times3}.
\end{equation}
\end{lem}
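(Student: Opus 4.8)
The plan is to prove the quantitative coercivity estimate $\mathrm{dist}^2(\mathbf{F}, SO(3)) \geq c_1 |\mathbf{F}|^2 - c_2$ by a compactness-plus-scaling argument, which is the standard route for inequalities of this flavour. First I would observe that the inequality is trivial whenever $|\mathbf{F}|$ is bounded (one can take $c_2$ large enough to absorb everything on a bounded set), so the only content is the behaviour as $|\mathbf{F}| \to \infty$; equivalently, it suffices to show that $\mathrm{dist}^2(\mathbf{F},SO(3))/|\mathbf{F}|^2$ stays bounded away from $0$ outside a large ball.

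The key step is the following homogeneity-and-limit observation. Writing $\mathbf{F} = t \mathbf{E}$ with $t = |\mathbf{F}|$ and $\mathbf{E} \in \mathbb{S}^{8}$ (the unit sphere in $\mathbb{R}^{3\times3}$), one has $\mathrm{dist}(t\mathbf{E}, SO(3)) \geq t\,\mathrm{dist}(\mathbf{E}, SO(3)) - \mathrm{dist}(\mathbf{E},SO(3)) \cdot (\text{something})$ — more cleanly, I would argue directly by contradiction: if no such $c_1$ works, there is a sequence $\mathbf{F}_k$ with $|\mathbf{F}_k| \to \infty$ and $\mathrm{dist}^2(\mathbf{F}_k, SO(3))/|\mathbf{F}_k|^2 \to 0$. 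Set $\mathbf{E}_k := \mathbf{F}_k/|\mathbf{F}_k|$; by compactness of $\mathbb{S}^8$ a subsequence converges to some $\mathbf{E}_\infty$ with $|\mathbf{E}_\infty| = 1$. Pick $\mathbf{R}_k \in SO(3)$ with $|\mathbf{F}_k - \mathbf{R}_k| = \mathrm{dist}(\mathbf{F}_k, SO(3))$; then $|\mathbf{E}_k - \mathbf{R}_k/|\mathbf{F}_k|| \to 0$, so $\mathbf{R}_k/|\mathbf{F}_k| \to \mathbf{E}_\infty$. But $|\mathbf{R}_k/|\mathbf{F}_k|| = \sqrt{3}/|\mathbf{F}_k| \to 0$, forcing $\mathbf{E}_\infty = \mathbf{0}$, which contradicts $|\mathbf{E}_\infty| = 1$. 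Hence the ratio is bounded below by some $c_1 > 0$ for $|\mathbf{F}|$ large, and adjusting $c_2$ handles the bounded region.

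There is actually an even more elementary argument I would prefer to record, avoiding compactness entirely: for any $\mathbf{R} \in SO(3)$ we have $|\mathbf{R}| = \sqrt{3}$, so by the triangle inequality $|\mathbf{F} - \mathbf{R}| \geq |\mathbf{F}| - \sqrt{3}$, whence $\mathrm{dist}(\mathbf{F}, SO(3)) \geq |\mathbf{F}| - \sqrt{3}$. Squaring and using $(a-b)^2 \geq \tfrac12 a^2 - b^2$ gives $\mathrm{dist}^2(\mathbf{F},SO(3)) \geq \tfrac12 |\mathbf{F}|^2 - 3$, so $c_1 = \tfrac12$ and $c_2 = 3$ work. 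I expect this one-line estimate is all that is needed, and the only "obstacle" is simply making sure to state the elementary inequality $(a-b)^2 \ge \tfrac12 a^2 - b^2$ (valid for all real $a,b$ by expanding $\tfrac12(a-2b)^2 \ge 0$, i.e. $\tfrac12 a^2 - 2ab + 2b^2 \ge 0$, combined with $-2ab \le \tfrac12 a^2 + 2b^2$... ) — in fact the cleanest is: for $|\mathbf{F}| \ge 2\sqrt{3}$, $\mathrm{dist}(\mathbf{F},SO(3)) \ge |\mathbf{F}| - \sqrt 3 \ge \tfrac12 |\mathbf{F}|$, so $\mathrm{dist}^2 \ge \tfrac14 |\mathbf{F}|^2$; for $|\mathbf{F}| < 2\sqrt 3$, $\tfrac14|\mathbf{F}|^2 - 3 < 0 \le \mathrm{dist}^2$. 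Thus $c_1 = \tfrac14$, $c_2 = 3$ suffice, and the proof is complete.
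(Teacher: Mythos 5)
Your final, preferred argument is essentially the paper's: both reduce to the reverse triangle inequality $\mathrm{dist}(\mathbf{F},SO(3)) \geq |\mathbf{F}| - \sqrt{3}$ (the paper derives the equivalent bound $\mathrm{dist}^2(\mathbf{F},SO(3)) \geq |\mathbf{F}|^2 - 2\sqrt{3}|\mathbf{F}| + 3 = (|\mathbf{F}|-\sqrt{3})^2$ via Cauchy--Schwarz on $\mathrm{Tr}(\mathbf{F}^T\mathbf{Q})$) and then split into the regimes $|\mathbf{F}|$ small and $|\mathbf{F}|$ large. The compactness-by-contradiction alternative you sketch first is also correct but is more machinery than the statement warrants, as you yourself note.
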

\begin{proof} First observe that 
\begin{align*}
\text{dist}^2(\mathbf{F}, SO(3)) = \inf_{\mathbf{Q} \in SO(3)} |\mathbf{F}|^2 - 2 \text{Tr}(\mathbf{F}^T\mathbf{Q}) + 3 \geq |\mathbf{F}|^2 - 2 \sqrt{3} |\mathbf{F}| + 3 
\end{align*}
since $\text{Tr}(\mathbf{F}^T\mathbf{Q}) \leq |\mathbf{F}| |\mathbf{Q}| \leq \sqrt{3} |\mathbf{F}|$ for all $\mathbf{Q} \in SO(3)$. At least one of the following is true: (i) $ 2 \sqrt{3}|\mathbf{F}| \leq \frac{1}{2} |\mathbf{F}|^2$, (ii) $2 \sqrt{3}|\mathbf{F}| \geq \frac{1}{2} |\mathbf{F}|^2$. If (ii), then $|\mathbf{F}|$ is $\leq4\sqrt{3}$ and thus $|\mathbf{F}|^2 - 2 \sqrt{3} |\mathbf{F}| + 3 \geq |\mathbf{F}|^2 - 21 \geq | \mathbf{F}|^2 -21$. Otherwise, (i) holds and thus $|\mathbf{F}|^2 - 2 \sqrt{3} |\mathbf{F}| + 3 \geq \frac{1}{2} |\mathbf{F}|^2 + 3 \geq \frac{1}{2}| \mathbf{F}|^2 -21$. Evidently then (\ref{eq:firstAppendProof}) holds with $c_1= \frac{1}{2}$ and $c_2 = 21$. 
\end{proof}
\begin{lem}\label{AppendLemma22}
There are constants $c_1( \lambda_\emph{f},\lambda_0), c_2> 0$ such that
\begin{align*}
\emph{dist}^2((\boldsymbol{\ell}_{\mathbf{v}}^{\emph{f}})^{-1/2} \mathbf{F} (\boldsymbol{\ell}_{\mathbf{v}_0}^{0})^{1/2}, SO(3)) \geq c_1(\lambda_0, \lambda_{\emph{f}}) |\mathbf{F}|^2 - c_2, \quad \text{ for all } \mathbf{F} \in \mathbb{R}^{3\times3}, \mathbf{v} \in \mathbb{S}^2, \mathbf{v}_0 \in \mathbb{S}^1. 
\end{align*}
\end{lem}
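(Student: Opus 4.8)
The strategy is to reduce the claim to Lemma \ref{AppendLemma2} by controlling how the step-length tensors distort norms and distances. The key observation is that $(\boldsymbol{\ell}_{\mathbf{v}}^{\text{f}})^{1/2}$ and $(\boldsymbol{\ell}_{\mathbf{v}_0}^{0})^{\pm1/2}$ are symmetric positive-definite matrices whose eigenvalues lie in the fixed compact interval $[\min\{\lambda_{\text{f}}^{-1/4}, \lambda_0^{-1/4}\}, \max\{\lambda_{\text{f}}^{1/2}, \lambda_0^{1/2}\}]$, independent of the unit vectors $\mathbf{v} \in \mathbb{S}^2$ and $\mathbf{v}_0 \in \mathbb{S}^1$. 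Consequently there are constants $0 < m \le M < \infty$ depending only on $\lambda_{\text{f}}, \lambda_0$ such that $m|\mathbf{A}| \le |(\boldsymbol{\ell}_{\mathbf{v}}^{\text{f}})^{\pm1/2}\mathbf{A}| \le M|\mathbf{A}|$ and similarly for multiplication on the right by $(\boldsymbol{\ell}_{\mathbf{v}_0}^{0})^{\pm1/2}$.

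First I would set $\mathbf{G} := (\boldsymbol{\ell}_{\mathbf{v}}^{\text{f}})^{-1/2} \mathbf{F} (\boldsymbol{\ell}_{\mathbf{v}_0}^{0})^{1/2}$ and apply Lemma \ref{AppendLemma2} to $\mathbf{G}$, obtaining $\text{dist}^2(\mathbf{G}, SO(3)) \ge c_1 |\mathbf{G}|^2 - c_2$ with the \emph{universal} constants $c_1 = 1/2$, $c_2 = 21$ from that lemma. The remaining task is purely to bound $|\mathbf{G}|^2$ from below by a constant multiple of $|\mathbf{F}|^2$. Writing $\mathbf{F} = (\boldsymbol{\ell}_{\mathbf{v}}^{\text{f}})^{1/2} \mathbf{G} (\boldsymbol{\ell}_{\mathbf{v}_0}^{0})^{-1/2}$ and using the two-sided operator bounds above (applied with the \emph{upper} bound $M$ for the factors acting on $\mathbf{G}$), one gets $|\mathbf{F}| \le M^2 |\mathbf{G}|$, hence $|\mathbf{G}|^2 \ge M^{-4} |\mathbf{F}|^2$. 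Combining, $\text{dist}^2((\boldsymbol{\ell}_{\mathbf{v}}^{\text{f}})^{-1/2} \mathbf{F} (\boldsymbol{\ell}_{\mathbf{v}_0}^{0})^{1/2}, SO(3)) \ge \tfrac12 M^{-4}|\mathbf{F}|^2 - 21$, so the claim holds with $c_1(\lambda_{\text{f}},\lambda_0) := \tfrac12 M^{-4}$ and $c_2 := 21$.

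The one point requiring a little care — and the closest thing to an obstacle — is verifying that the eigenvalue bounds $m, M$ are genuinely uniform in $\mathbf{v}$ and $\mathbf{v}_0$. This follows immediately from the explicit spectral description given just below (\ref{eq:stepLengthsDef}): for every $\mathbf{v} \in \mathbb{S}^2$, the tensor $\boldsymbol{\ell}_{\mathbf{v}}^{\text{f}}$ has eigenvalue $\lambda_{\text{f}}$ on the line $\text{span}(\mathbf{v})$ and $\lambda_{\text{f}}^{-1/2}$ on the orthogonal complement, so $(\boldsymbol{\ell}_{\mathbf{v}}^{\text{f}})^{\pm1/2}$ has eigenvalues in $\{\lambda_{\text{f}}^{\pm1/2}, \lambda_{\text{f}}^{\mp1/4}\}$ regardless of $\mathbf{v}$; the analogous statement holds for $\boldsymbol{\ell}_{\mathbf{v}_0}^{0}$ (note $\big(\begin{smallmatrix}\mathbf{v}_0\\0\end{smallmatrix}\big)$ is a unit vector in $\mathbb{R}^3$ for $\mathbf{v}_0 \in \mathbb{S}^1$). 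Since $\lambda_{\text{f}}, \lambda_0 \ge 1$ are fixed, $M := \max\{\lambda_{\text{f}}^{1/2}, \lambda_0^{1/2}\}$ works as a uniform operator-norm bound for each of the four factors $(\boldsymbol{\ell}_{\mathbf{v}}^{\text{f}})^{\pm1/2}, (\boldsymbol{\ell}_{\mathbf{v}_0}^{0})^{\pm1/2}$, and the argument closes. No density, compactness, or PDE input is needed — this is an elementary linear-algebra estimate built on top of Lemma \ref{AppendLemma2}.
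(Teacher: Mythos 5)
Your proof is correct and uses essentially the same approach as the paper: reduce to Lemma \ref{AppendLemma2} and then bound $|(\boldsymbol{\ell}_{\mathbf{v}}^{\text{f}})^{-1/2}\mathbf{F}(\boldsymbol{\ell}_{\mathbf{v}_0}^{0})^{1/2}|^2 \geq c(\lambda_{\text{f}},\lambda_0)|\mathbf{F}|^2$ via the uniform spectral bounds on the step-length tensors. The paper obtains the same inequality directly from minimum singular values of the two factors, arriving at $|\mathbf{G}| \geq \lambda_{\text{f}}^{-1/2}\lambda_0^{-1/4}|\mathbf{F}|$, whereas you invert the relation and apply operator-norm upper bounds to $\mathbf{F} = (\boldsymbol{\ell}_{\mathbf{v}}^{\text{f}})^{1/2}\mathbf{G}(\boldsymbol{\ell}_{\mathbf{v}_0}^{0})^{-1/2}$, which gives a slightly cruder but equally valid constant; these are two phrasings of the same linear-algebra fact.
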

\begin{proof}
In view of Lemma \ref{AppendLemma2}, we only need to show that $|(\boldsymbol{\ell}_{\mathbf{v}}^{\emph{f}})^{-1/2} \mathbf{F} (\boldsymbol{\ell}_{\mathbf{v}_0}^{0})^{1/2}|^2 \geq c(\lambda_\text{f}, \lambda_{0}) |\mathbf{F}|^2$ for some $c(\lambda_\text{f}, \lambda_{0})>0$. This result follows because 
\begin{align*}
|(\boldsymbol{\ell}_{\mathbf{v}}^{\text{f}})^{-1/2} \mathbf{F} (\boldsymbol{\ell}_{\mathbf{v}_0}^{0})^{1/2}| \geq \sigma_{\text{min}}\big(( \boldsymbol{\ell}_{\mathbf{v}}^{\text{f}})^{-1/2}\big) \sigma_{\text{min}}\big(( \boldsymbol{\ell}_{\mathbf{v}_0}^{\text{0}})^{1/2}\big) | \mathbf{F}| \geq \lambda_{\text{f}}^{-1/2} \lambda_0^{-1/4} |\mathbf{F}|,
\end{align*}
for $\sigma_{\text{min}}(\mathbf{A})$ the minimum singular value of the tensor $\mathbf{A}$. 
\end{proof}

For the next lemma, observe that since $SO(3)$ is a smooth and compact manifold of $\mathbb{R}^{3\times3}$, there is an $\epsilon >0$ and a tubular neighborhood $N_{\epsilon}(SO(3)) := \{ \mathbf{F} \in \mathbb{R}^{3\times3} \colon \text{dist}( \mathbf{F}, SO(3)) < \epsilon \}$ such that the nearest point projection from $N_{\epsilon}(SO(3))$ to $SO(3)$ is defined uniquely for each point in $N_{\epsilon}(SO(3))$ and given by a smooth function $\boldsymbol{\pi} \colon N_{\epsilon}(SO(3)) \rightarrow SO(3)$. Let
\begin{align*}
\boldsymbol{\pi}_{\text{ext}}(\mathbf{F}) := \begin{cases}
\boldsymbol{\pi}(\mathbf{F}) & \text{ if } \mathbf{F} \in N_{\epsilon}(SO(3)) \\
\mathbf{I} & \text{ otherwise on $\mathbb{R}^{3\times3}$}.
\end{cases}
\end{align*}
\begin{lem}\label{projectionLemma}
There is a $C(\epsilon) > 0$ such that
\begin{equation}
\begin{aligned}\label{eq:distFunEst}
|\boldsymbol{\pi}_{\emph{ext}}(\mathbf{F}) - \mathbf{F}|^2 \leq C(\epsilon) \emph{dist}^2( \mathbf{F}, SO(3)) \quad \text{ for all } \mathbf{F} \in \mathbb{R}^{3\times3}.
\end{aligned}
\end{equation}
\end{lem}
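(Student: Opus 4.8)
The claim is a standard fact about tubular neighborhoods, so the plan is to split $\mathbb{R}^{3\times 3}$ into two regions and verify the bound on each. First I would treat the region far from $SO(3)$, namely $\mathbf{F} \notin N_{\epsilon/2}(SO(3))$, where $\boldsymbol{\pi}_{\mathrm{ext}}$ is not necessarily the nearest-point projection. Here $\mathrm{dist}(\mathbf{F}, SO(3)) \geq \epsilon/2$, so the right-hand side of \eqref{eq:distFunEst} is bounded below by $C(\epsilon)(\epsilon/2)^2$, which can be made as large as we like by choosing $C(\epsilon)$ large; it remains only to bound $|\boldsymbol{\pi}_{\mathrm{ext}}(\mathbf{F}) - \mathbf{F}|^2$ from above in terms of $\mathrm{dist}^2(\mathbf{F}, SO(3))$. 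When $\mathbf{F} \notin N_{\epsilon}(SO(3))$ we have $\boldsymbol{\pi}_{\mathrm{ext}}(\mathbf{F}) = \mathbf{I}$, and since $\mathbf{I} \in SO(3)$, the triangle inequality gives $|\mathbf{I} - \mathbf{F}| \leq |\mathbf{I} - \mathbf{Q}| + |\mathbf{Q} - \mathbf{F}|$ for the nearest rotation $\mathbf{Q}$; using $|\mathbf{I} - \mathbf{Q}| \leq \mathrm{diam}(SO(3)) \leq 2\sqrt{3}$ (a fixed constant) and $|\mathbf{Q} - \mathbf{F}| = \mathrm{dist}(\mathbf{F}, SO(3)) \geq \epsilon$, one gets $|\mathbf{I} - \mathbf{F}| \leq (1 + 2\sqrt{3}/\epsilon)\,\mathrm{dist}(\mathbf{F}, SO(3))$, which is of the desired form. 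The intermediate shell $\epsilon/2 \leq \mathrm{dist}(\mathbf{F}, SO(3)) < \epsilon$, where $\boldsymbol{\pi}_{\mathrm{ext}} = \boldsymbol{\pi}$, is handled the same way: $|\boldsymbol{\pi}(\mathbf{F}) - \mathbf{F}| = \mathrm{dist}(\mathbf{F}, SO(3))$ exactly, so the bound holds with constant $1$.

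Second, and this is the only part requiring the smoothness of $\boldsymbol{\pi}$, I would treat $\mathbf{F} \in N_{\epsilon/2}(SO(3))$. On this slightly smaller neighborhood (whose closure is compact in $N_\epsilon(SO(3))$), $\boldsymbol{\pi}$ is the genuine nearest-point projection, so $|\boldsymbol{\pi}(\mathbf{F}) - \mathbf{F}| = \mathrm{dist}(\mathbf{F}, SO(3))$ holds \emph{identically}, and \eqref{eq:distFunEst} is trivially satisfied with constant $1$. In fact, since $\boldsymbol{\pi}_{\mathrm{ext}} = \boldsymbol{\pi}$ throughout $N_\epsilon(SO(3))$ by definition, the equality $|\boldsymbol{\pi}_{\mathrm{ext}}(\mathbf{F}) - \mathbf{F}| = \mathrm{dist}(\mathbf{F}, SO(3))$ holds for \emph{all} $\mathbf{F} \in N_\epsilon(SO(3))$ — the nearest-point projection realizes the distance by construction. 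So the entire content of the lemma is really in the region $\mathbf{F} \notin N_\epsilon(SO(3))$, where $\boldsymbol{\pi}_{\mathrm{ext}}(\mathbf{F}) = \mathbf{I}$, and the estimate there follows from the boundedness (compactness) of $SO(3)$ as sketched above.

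\textbf{Main obstacle.} There is no real obstacle here; the lemma is essentially bookkeeping once one observes the dichotomy "$\mathbf{F}$ inside $N_\epsilon(SO(3))$, where the distance is realized exactly" versus "$\mathbf{F}$ outside, where $\mathrm{dist}(\mathbf{F},SO(3)) \geq \epsilon$ is bounded away from zero and $SO(3)$ has bounded diameter." The only point to be careful about is choosing the constant $C(\epsilon)$ to simultaneously dominate the (constant, equal to $1$) contribution from inside the tubular neighborhood and the contribution $(1 + 2\sqrt 3/\epsilon)^2$ from outside; taking $C(\epsilon) := \max\{1, (1 + 2\sqrt{3}/\epsilon)^2\}$ suffices. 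I would write this up in one short paragraph: establish $|\boldsymbol{\pi}_{\mathrm{ext}}(\mathbf{F}) - \mathbf{F}| = \mathrm{dist}(\mathbf{F}, SO(3))$ on $N_\epsilon(SO(3))$ directly from the definition of nearest-point projection, then handle the complement using the diameter bound on $SO(3)$ and the lower bound $\mathrm{dist}(\mathbf{F}, SO(3)) \geq \epsilon$.
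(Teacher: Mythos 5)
Your proof is correct and takes essentially the same approach as the paper: split on whether $\mathbf{F}$ lies in $N_\epsilon(SO(3))$ (where $\boldsymbol{\pi}_{\mathrm{ext}}=\boldsymbol{\pi}$ is the nearest-point projection, so the bound holds with constant $1$) or outside (where $\boldsymbol{\pi}_{\mathrm{ext}}(\mathbf{F})=\mathbf{I}$, and the triangle inequality with the bounded diameter of $SO(3)$ plus the lower bound $\mathrm{dist}(\mathbf{F},SO(3))\geq\epsilon$ gives the estimate). The paper works at the level of squared norms ($|\mathbf{I}-\mathbf{Q}|^2\leq 12$, yielding $C(\epsilon)=24\epsilon^{-2}+2$) while you use the triangle inequality directly ($|\mathbf{I}-\mathbf{Q}|\leq 2\sqrt{3}$, yielding $C(\epsilon)=(1+2\sqrt{3}/\epsilon)^2$) — same argument, slightly different bookkeeping. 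The initial $\epsilon/2$ decomposition you set up is unnecessary, as you yourself observe at the end; the clean two-region dichotomy is all that is needed.
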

\begin{proof}
If $\mathbf{F}$ is in $N_{\epsilon}(SO(3))$, then $|\boldsymbol{\pi}_{\text{ext}}(\mathbf{F}) - \mathbf{F}| = \text{dist}( \mathbf{F}, SO(3))$. So (\ref{eq:distFunEst}) holds trivially on $N_{\epsilon}(SO(3))$. On the exceptional set, $\text{dist}(\mathbf{F}, SO(3)) \geq \epsilon$ and 
\begin{align*}
|\boldsymbol{\pi}_{\text{ext}}(\mathbf{F}) - \mathbf{F}|^2 \leq 2\big(|\mathbf{I} - \mathbf{Q}|^2 + |\mathbf{Q} - \mathbf{F}|^2\big) \leq 2 \big( 2 (|\mathbf{I}|^2 + |\mathbf{Q}|^2) + |\mathbf{Q} - \mathbf{F}|^2\big) \leq 24 + 2 |\mathbf{Q} - \mathbf{F}|^2
\end{align*}
for any choice of $\mathbf{Q} \in SO(3)$. It follows after taking the infimum over such rotations that, when $\mathbf{F}$ is in $\mathbb{R}^{3\times3} \setminus N_{\epsilon}(SO(3))$, 
\begin{align*}
|\boldsymbol{\pi}_{\text{ext}}(\mathbf{F}) - \mathbf{F}|^2 \leq 24 + 2\text{dist}^2( \mathbf{F}, SO(3)) \leq (24 \epsilon^{-2} + 2) \text{dist}^2( \mathbf{F}, SO(3)),
\end{align*}
which completes the proof. 
\end{proof}

In our next result, it is useful to introduce a "mixed" step length tensor defined by 
\begin{equation}
\begin{aligned}\label{eq:mixedStepLength}
\boldsymbol{\ell}_{\mathbf{v}_0}^{\text{f}} := \lambda_{\text{f}} \begin{pmatrix} \mathbf{v}_0 \\ \mathbf{0} \end{pmatrix} \otimes \begin{pmatrix} \mathbf{v}_0 \\ \mathbf{0} \end{pmatrix} + \lambda_\text{f}^{-1/2} \left( \mathbf{I} - \begin{pmatrix} \mathbf{v}_0 \\ \mathbf{0} \end{pmatrix} \otimes \begin{pmatrix} \mathbf{v}_0 \\ \mathbf{0} \end{pmatrix} \right), \quad \mathbf{v}_0 \in \mathbb{S}^1,
\end{aligned}
\end{equation}
for the purpose of manipulating various identities on the deformation gradient and director field that arise at the bending energy scale. 
\begin{lem}\label{linAlgLemma}
Let $\mathbf{v} \in \mathbb{S}^2$, $\mathbf{v}_0 \in \mathbb{S}^1$, $\mathbf{F} \in \mathbb{R}^{3\times3}$, $\mathbf{Q} \in SO(3)$ and $\sigma \in \{ \pm1\}$. The identities 
\begin{equation}
\begin{aligned}\label{eq:firstLinIdents}
\mathbf{v} = \sigma \mathbf{Q} \begin{pmatrix} \mathbf{v}_0 \\ 0 \end{pmatrix} \quad \text{ and } \quad (\boldsymbol{\ell}_{\mathbf{v}}^{\emph{f}})^{-1/2} \mathbf{F} (\boldsymbol{\ell}_{\mathbf{v}_0}^0)^{1/2} = \mathbf{Q}
\end{aligned}
\end{equation}
are equivalent to 
\begin{equation}
\begin{aligned}\label{eq:secLinIdents}
\mathbf{v} = \sigma \frac{[\mathbf{F}]_{3\times2} \mathbf{v}_0}{|[\mathbf{F}]_{3\times2} \mathbf{v}_0|}, \quad [ \mathbf{F}^T \mathbf{F}]_{2\times2} = \mathbf{g}_{\mathbf{v}_0} 
\quad \text{ and } \quad \mathbf{F} \mathbf{e}_3 = \lambda_0^{1/4} \lambda_{\emph{f}}^{-1/4} \frac{\mathbf{F} \mathbf{e}_1 \times \mathbf{F}\mathbf{e}_2}{|\mathbf{F} \mathbf{e}_1 \times \mathbf{F}\mathbf{e}_2|}
\end{aligned}
\end{equation}
for $\mathbf{g}_{\mathbf{v}_0}$ given in (\ref{eq:metricDef}).
\end{lem}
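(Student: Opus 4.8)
The plan is to prove the two implications of the asserted equivalence separately, each by a direct computation that exploits the fact that $(\boldsymbol{\ell}_{\mathbf{v}}^{\text{f}})^{\pm 1/2}$, $(\boldsymbol{\ell}_{\mathbf{v}_0}^0)^{\pm 1/2}$ and the mixed tensor $\boldsymbol{\ell}_{\mathbf{v}_0}^{\text{f}}$ of (\ref{eq:mixedStepLength}) are simultaneously diagonalized by explicit orthonormal frames. It is convenient to fix the orthonormal basis $\mathbf{u}_1 := \begin{pmatrix} \mathbf{v}_0 \\ 0 \end{pmatrix}$, $\mathbf{u}_2 := \begin{pmatrix} \mathbf{v}_0^{\perp} \\ 0 \end{pmatrix}$, $\mathbf{u}_3 := \mathbf{e}_3$ of $\mathbb{R}^3$. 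On this frame $(\boldsymbol{\ell}_{\mathbf{v}_0}^0)^{1/2}$ acts as $\mathrm{diag}(\lambda_0^{1/2}, \lambda_0^{-1/4}, \lambda_0^{-1/4})$ and $\boldsymbol{\ell}_{\mathbf{v}_0}^{\text{f}}$ as $\mathrm{diag}(\lambda_{\text{f}}, \lambda_{\text{f}}^{-1/2}, \lambda_{\text{f}}^{-1/2})$, while $(\boldsymbol{\ell}_{\mathbf{v}}^{\text{f}})^{\pm 1/2}$ scales $\mathbf{v}$ by $\lambda_{\text{f}}^{\mp 1/2}$ and the orthogonal complement $\{\mathbf{v}\}^{\perp}$ by $\lambda_{\text{f}}^{\pm 1/4}$; all four tensors have unit determinant.

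For the forward implication, assume (\ref{eq:firstLinIdents}). The second identity there reads $\mathbf{F} = (\boldsymbol{\ell}_{\mathbf{v}}^{\text{f}})^{1/2}\mathbf{Q}(\boldsymbol{\ell}_{\mathbf{v}_0}^0)^{-1/2}$, and the first gives $\mathbf{Q}^T\mathbf{v} = \sigma\mathbf{u}_1$. Since $\mathbf{Q}$ is a rotation, conjugation yields $\mathbf{Q}^T\boldsymbol{\ell}_{\mathbf{v}}^{\text{f}}\mathbf{Q} = \boldsymbol{\ell}_{\mathbf{v}_0}^{\text{f}}$, hence $\mathbf{F}^T\mathbf{F} = (\boldsymbol{\ell}_{\mathbf{v}_0}^0)^{-1/2}\boldsymbol{\ell}_{\mathbf{v}_0}^{\text{f}}(\boldsymbol{\ell}_{\mathbf{v}_0}^0)^{-1/2}$, whose principal $2\times 2$ block is $\mathbf{g}_{\mathbf{v}_0}$ by comparison with (\ref{eq:metricDef}) --- this is the second identity of (\ref{eq:secLinIdents}). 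Evaluating $[\mathbf{F}]_{3\times 2}\mathbf{v}_0 = \mathbf{F}\mathbf{u}_1 = \sigma\lambda_0^{-1/2}\lambda_{\text{f}}^{1/2}\mathbf{v}$ and normalizing gives the first identity of (\ref{eq:secLinIdents}). For the third, I will use the cofactor relation $\mathbf{F}\mathbf{e}_1 \times \mathbf{F}\mathbf{e}_2 = (\det \mathbf{F})\,\mathbf{F}^{-T}\mathbf{e}_3$ with $\det\mathbf{F} = 1$, so that $\mathbf{F}\mathbf{e}_1 \times \mathbf{F}\mathbf{e}_2 = (\boldsymbol{\ell}_{\mathbf{v}}^{\text{f}})^{-1/2}\mathbf{Q}(\boldsymbol{\ell}_{\mathbf{v}_0}^0)^{1/2}\mathbf{e}_3 = \lambda_0^{-1/4}\lambda_{\text{f}}^{1/4}\mathbf{Q}\mathbf{e}_3$ (using $\mathbf{Q}\mathbf{e}_3 \perp \mathbf{Q}\mathbf{u}_1 = \sigma\mathbf{v}$), while $\mathbf{F}\mathbf{e}_3 = \lambda_0^{1/4}\lambda_{\text{f}}^{-1/4}\mathbf{Q}\mathbf{e}_3$; since $\lambda_0, \lambda_{\text{f}} \geq 1$ both prefactors are positive, and comparing the two expressions produces the claimed normal-vector identity.

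For the converse, assume (\ref{eq:secLinIdents}) and set $\mathbf{Q} := (\boldsymbol{\ell}_{\mathbf{v}}^{\text{f}})^{-1/2}\mathbf{F}(\boldsymbol{\ell}_{\mathbf{v}_0}^0)^{1/2}$, so the second identity of (\ref{eq:firstLinIdents}) holds by construction and it remains to show $\mathbf{Q} \in SO(3)$ together with $\mathbf{v} = \sigma\mathbf{Q}\mathbf{u}_1$. From $[\mathbf{F}^T\mathbf{F}]_{2\times 2} = \mathbf{g}_{\mathbf{v}_0}$ one reads off $|\mathbf{F}\mathbf{u}_1|^2 = \lambda_{\text{f}}\lambda_0^{-1}$, $|\mathbf{F}\mathbf{u}_2|^2 = \lambda_{\text{f}}^{-1/2}\lambda_0^{1/2}$ and $\mathbf{F}\mathbf{u}_1\cdot\mathbf{F}\mathbf{u}_2 = 0$; the first identity of (\ref{eq:secLinIdents}) identifies $\mathbf{F}\mathbf{u}_1 = \sigma\lambda_{\text{f}}^{1/2}\lambda_0^{-1/2}\mathbf{v}$, whence $\mathbf{F}\mathbf{u}_2 \perp \mathbf{v}$; and the third identity forces $\mathbf{F}\mathbf{u}_3 = \mathbf{F}\mathbf{e}_3$ to be a positive multiple of $\mathbf{F}\mathbf{e}_1\times\mathbf{F}\mathbf{e}_2$, hence orthogonal to $\mathbf{F}\mathbf{u}_1$ and $\mathbf{F}\mathbf{u}_2$ with $|\mathbf{F}\mathbf{u}_3| = \lambda_0^{1/4}\lambda_{\text{f}}^{-1/4}$. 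Applying $(\boldsymbol{\ell}_{\mathbf{v}}^{\text{f}})^{-1/2}$ on the left and $(\boldsymbol{\ell}_{\mathbf{v}_0}^0)^{1/2}$ on the right, a short exponent count shows $\mathbf{Q}\mathbf{u}_1 = \sigma\mathbf{v}$ and that $\{\mathbf{Q}\mathbf{u}_1, \mathbf{Q}\mathbf{u}_2, \mathbf{Q}\mathbf{u}_3\}$ is orthonormal, so $\mathbf{Q} \in O(3)$; finally $\det\mathbf{Q} = \det\mathbf{F} = (\mathbf{F}\mathbf{e}_1\times\mathbf{F}\mathbf{e}_2)\cdot\mathbf{F}\mathbf{e}_3 = \lambda_0^{1/4}\lambda_{\text{f}}^{-1/4}|\mathbf{F}\mathbf{e}_1\times\mathbf{F}\mathbf{e}_2| > 0$ by the third identity, so $\mathbf{Q} \in SO(3)$.

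The one genuine subtlety --- and the step I expect to demand the most care --- is the orientation bookkeeping in the converse: the metric condition $[\mathbf{F}^T\mathbf{F}]_{2\times 2} = \mathbf{g}_{\mathbf{v}_0}$ by itself only pins $\mathbf{Q}$ down to $O(3)$, and it is precisely the normal-vector identity $\mathbf{F}\mathbf{e}_3 = \lambda_0^{1/4}\lambda_{\text{f}}^{-1/4}(\mathbf{F}\mathbf{e}_1\times\mathbf{F}\mathbf{e}_2)/|\mathbf{F}\mathbf{e}_1\times\mathbf{F}\mathbf{e}_2|$ that fixes $\det\mathbf{Q} = +1$; symmetrically, in the forward direction this same identity is the one requiring the cofactor argument rather than a one-line substitution. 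Everything else is a matter of carrying the fractional powers of $\lambda_0$ and $\lambda_{\text{f}}$ through consistently, which I will organize around the fixed frame $\{\mathbf{u}_1,\mathbf{u}_2,\mathbf{u}_3\}$ to keep the argument transparent, recording along the way the auxiliary relations obtained by rearranging (\ref{eq:firstLinIdents}) that are used repeatedly later in the paper.
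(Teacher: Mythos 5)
Your proof is correct, and while the overall strategy (prove each implication by direct computation with the step-length tensors) matches the paper, the tactics differ in two places worth flagging. For the normal-vector identity in the forward direction, you invoke the cofactor relation $\mathbf{F}\mathbf{e}_1 \times \mathbf{F}\mathbf{e}_2 = (\det\mathbf{F})\mathbf{F}^{-T}\mathbf{e}_3$ and compute both $\mathbf{F}\mathbf{e}_3$ and $\mathbf{F}\mathbf{e}_1\times\mathbf{F}\mathbf{e}_2$ explicitly as positive multiples of $\mathbf{Q}\mathbf{e}_3$; the paper instead writes $\mathbf{F}\mathbf{e}_3 = \alpha\,\mathbf{F}\mathbf{e}_1 + \beta\,\mathbf{F}\mathbf{e}_2 + \gamma\,\mathbf{F}\mathbf{e}_1\times\mathbf{F}\mathbf{e}_2$ and reads the coefficients off $\mathbf{F}^T\mathbf{F}$. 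Your route is cleaner in one respect: the metric alone fixes only $|\gamma|$, and the sign of $\gamma$ comes from $\det\mathbf{F}=1$, which your cofactor argument makes explicit rather than implicit. In the converse, you verify orthonormality of $\{\mathbf{Q}\mathbf{u}_1,\mathbf{Q}\mathbf{u}_2,\mathbf{Q}\mathbf{u}_3\}$ column by column, whereas the paper first derives the intertwining identity $(\boldsymbol{\ell}_{\mathbf{v}}^{\text{f}})^{-1/2}\mathbf{F} = \mathbf{F}(\boldsymbol{\ell}_{\mathbf{v}_0}^{\text{f}})^{-1/2}$ (its Eq.\;(\ref{eq:interestingCommute})) and then computes $\mathbf{Q}^T\mathbf{Q}=\mathbf{I}$ in one stroke. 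Your column-wise check is more elementary and self-contained, but note that the paper's commutation identity is not a throwaway step --- it is reused verbatim in the recovery-sequence manipulations of Section \ref{sec:Recovery} (e.g.\;in deriving (\ref{eq:S2Manip})), which is why the paper goes out of its way to record it. If you aim for your version to serve the rest of the paper, you would want to extract that identity separately as you allude to doing in your closing sentence; it does not come for free from the orthonormal-frame verification.
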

\begin{rem}\label{lotsOfIdentsRem}
Under the assumption in (\ref{eq:firstLinIdents}), it also holds that 
\begin{align*}
\mathbf{Q}^T =( \boldsymbol{\ell}_{\mathbf{v}_0}^{\emph{f}} )^{-1/2}( \boldsymbol{\ell}_{\mathbf{v}_0}^{\emph{0}} )^{1/2} \mathbf{F}^T
\end{align*}
\end{rem}
\begin{proof}
First we show (\ref{eq:firstLinIdents}) $\Rightarrow$ (\ref{eq:secLinIdents}). Substituting the second identity into the first in (\ref{eq:firstLinIdents}) gives 
\begin{align*}
\mathbf{v} = \sigma (\boldsymbol{\ell}_{\mathbf{v}}^{\text{f}})^{-1/2} \mathbf{F} (\boldsymbol{\ell}_{\mathbf{v}_0}^0)^{1/2} \begin{pmatrix} \mathbf{v}_0 \\ 0 \end{pmatrix} = \sigma \lambda_0^{1/2} (\boldsymbol{\ell}_{\mathbf{v}}^{\text{f}})^{-1/2} [\mathbf{F}]_{3\times2} \mathbf{v}_0.
\end{align*}
Premultiplying this equation by $(\boldsymbol{\ell}_{\mathbf{v}}^{\text{f}})^{1/2}$ then leads to $\lambda_{\text{f}}^{1/2} \mathbf{v} = \sigma \lambda_0^{1/2} [\mathbf{F}]_{3\times2} \mathbf{v}_0$. The first identity in (\ref{eq:secLinIdents}) is follows because $\mathbf{v} \cdot \mathbf{v} = 1$. 
Next, observe that the first identity in (\ref{eq:firstLinIdents}) gives $(\boldsymbol{\ell}^{\text{f}}_{\mathbf{v}}) = \mathbf{Q} ( \boldsymbol{\ell}_{\mathbf{v}_0}^{\text{f}} )\mathbf{Q}^T$.
The second can then be manipulated as $\mathbf{Q} = \mathbf{Q} ( \boldsymbol{\ell}_{\mathbf{v}_0}^{\text{f}} )^{-1/2} \mathbf{Q}^T \mathbf{F} (\boldsymbol{\ell}_{\mathbf{v}_0}^{0})^{1/2} \Leftrightarrow \mathbf{F} =\mathbf{Q}( \boldsymbol{\ell}_{\mathbf{v}_0}^{\text{f}} )^{1/2}(\boldsymbol{\ell}_{\mathbf{v}_0}^0)^{-1/2}$. Thus,
\begin{equation}
\begin{aligned}\label{eq:FtransposeF}
\mathbf{F}^T \mathbf{F} = (\boldsymbol{\ell}_{\mathbf{v}_0}^0)^{-1/2} ( \boldsymbol{\ell}_{\mathbf{v}_0}^{\text{f}} ) (\boldsymbol{\ell}_{\mathbf{v}_0}^0)^{-1/2} = \begin{pmatrix} \mathbf{g}_{\mathbf{v}_0} & \mathbf{0} \\ \mathbf{0}^T & \lambda_{\text{f}}^{-1/2} \lambda_0^{1/2} \end{pmatrix} .
\end{aligned}
\end{equation} 
It follows that $[ \mathbf{F}^T \mathbf{F}]_{2\times2} = \mathbf{g}_{\mathbf{v}_0}$ as desired. For the final identity, note that $\mathbf{F}\mathbf{e}_1$ and $\mathbf{F} \mathbf{e}_2$ are linearly independent since $\det \mathbf{g}_{\mathbf{v}_0} \neq 0$. We therefore parameterize $\mathbf{F}\mathbf{e}_3$ generically as $\mathbf{F}\mathbf{e}_3 = \alpha \mathbf{F} \mathbf{e}_1 + \beta \mathbf{F} \mathbf{e}_2 + \gamma \mathbf{F} \mathbf{e}_1 \times \mathbf{F} \mathbf{e}_2$, and it follows from (\ref{eq:FtransposeF}) that $\alpha = \beta = 0$ and $\gamma = \lambda_0^{1/4} \lambda_{\text{f}}^{-1/4}|\mathbf{F} \mathbf{e}_1 \times \mathbf{F}\mathbf{e}_2|^{-1}$. Thus, (\ref{eq:firstLinIdents}) $\Rightarrow$ (\ref{eq:secLinIdents}).

Now we show (\ref{eq:secLinIdents}) $\Rightarrow$ (\ref{eq:firstLinIdents}). To start observe from (\ref{eq:secLinIdents}) that $|[\mathbf{F}]_{3\times2} \mathbf{v}_0|^2 = \mathbf{v}_0 \cdot \mathbf{g}_{\mathbf{v}_0} \mathbf{v}_0 = \lambda_{\text{f}} \lambda_0^{-1}$, that 
\begin{align*}
(\boldsymbol{\ell}_{\mathbf{v}}^{\text{f}})^{-1/2} = \lambda_{\text{f}}^{-3/2} \lambda_0\mathbf{F} \begin{pmatrix} \mathbf{v}_0 \\ 0 \end{pmatrix} \otimes \begin{pmatrix} \mathbf{v}_0 \\ 0 \end{pmatrix} \mathbf{F}^T + \lambda_{\text{f}}^{1/4} \Big( \mathbf{I} - \lambda_{\text{f}}^{-1} \lambda_0 \mathbf{F} \begin{pmatrix} \mathbf{v}_0 \\ 0 \end{pmatrix} \otimes \begin{pmatrix} \mathbf{v}_0 \\ 0 \end{pmatrix} \mathbf{F}^T \Big),
\end{align*}
that (\ref{eq:FtransposeF}) holds, and that 
\begin{equation}
\begin{aligned}\label{eq:interestingCommute}
(\boldsymbol{\ell}_{\mathbf{v}}^{\text{f}})^{-1/2} \mathbf{F} = \mathbf{F} \Big[ \lambda_{\text{f}}^{-1/2} \begin{pmatrix} \mathbf{v}_0 \\ 0 \end{pmatrix} \otimes \begin{pmatrix} \mathbf{v}_0 \\ 0 \end{pmatrix} + \lambda_{\text{f}}^{1/4} \Big( \mathbf{I} - \begin{pmatrix} \mathbf{v}_0 \\ 0 \end{pmatrix} \otimes \begin{pmatrix} \mathbf{v}_0 \\ 0 \end{pmatrix} \Big) \Big] = \mathbf{F} (\boldsymbol{\ell}_{\mathbf{v}_0}^{\text{f}})^{-1/2}.
\end{aligned}
\end{equation}
Thus, from (\ref{eq:FtransposeF}) and (\ref{eq:interestingCommute}),
\begin{align*}
\big[(\boldsymbol{\ell}_{\mathbf{v}}^{\text{f}})^{-1/2} \mathbf{F} (\boldsymbol{\ell}_{\mathbf{v}_0}^0)^{1/2}\big]^T (\boldsymbol{\ell}_{\mathbf{v}}^{\text{f}})^{-1/2} \mathbf{F} (\boldsymbol{\ell}_{\mathbf{v}_0}^0)^{1/2} &= \big[ \mathbf{F} (\boldsymbol{\ell}_{\mathbf{v}_0}^{\text{f}})^{-1/2} (\boldsymbol{\ell}_{\mathbf{v}_0}^0)^{1/2}\big]^T ( \mathbf{F} \boldsymbol{\ell}_{\mathbf{v}_0}^{\text{f}})^{-1/2}(\boldsymbol{\ell}_{\mathbf{v}_0}^0)^{1/2} = \mathbf{I}.
\end{align*}
We therefore conclude that $(\boldsymbol{\ell}_{\mathbf{v}}^{\text{f}})^{-1/2} \mathbf{F} (\boldsymbol{\ell}_{\mathbf{v}_0}^0)^{1/2}$ is a rotation since $\det \big[ (\boldsymbol{\ell}_{\mathbf{v}}^{\text{f}})^{-1/2} \mathbf{F} (\boldsymbol{\ell}_{\mathbf{v}_0}^0)^{1/2}\big] = \det \big[ \mathbf{F} \big]$ and since $\det \big[ \mathbf{F} \big]$ is clearly positive because $\mathbf{F} \mathbf{e}_3 = \lambda_0^{1/4} \lambda_{\text{f}}^{-1/4} \frac{\mathbf{F} \mathbf{e}_1 \times \mathbf{F}\mathbf{e}_2}{|\mathbf{F} \mathbf{e}_1 \times \mathbf{F}\mathbf{e}_2|}$. Next observe that 
\begin{align*}
\sigma (\boldsymbol{\ell}_{\mathbf{v}}^{\text{f}})^{-1/2} \mathbf{F} (\boldsymbol{\ell}_{\mathbf{v}_0}^0)^{1/2} \begin{pmatrix} \mathbf{v}_0 \\ 0 \end{pmatrix} = \sigma \mathbf{F} (\boldsymbol{\ell}_{\mathbf{v}_0}^{\text{f}})^{-1/2}(\boldsymbol{\ell}_{\mathbf{v}_0}^0)^{1/2} \begin{pmatrix} \mathbf{v}_0 \\ 0 \end{pmatrix} = \sigma \lambda_{\text{f}}^{-1/2} \lambda_0^{1/2} [\mathbf{F}]_{3\times2} \mathbf{v}_0 = \sigma \frac{[\mathbf{F}]_{3\times2} \mathbf{v}_0}{|[\mathbf{F}]_{3\times2} \mathbf{v}_0|} = \mathbf{v}
\end{align*}
since $|[\mathbf{F}]_{3\times2} \mathbf{v}_0| = \lambda_{\text{f}}^{1/2} \lambda_0^{-1/2}$. Thus, (\ref{eq:secLinIdents}) $\Rightarrow$ (\ref{eq:firstLinIdents}).
\end{proof}

Now consider the quadratic form relevant to incompressible plates 
\begin{align*}
Q_2(\mathbf{A}) := 2 \mu \Big\{ \big|\text{sym}([ \mathbf{A} ]_{2\times2}) \big|^2 + \text{Tr}\big( \text{sym}( [ \mathbf{A} ]_{2\times2})\big)^2\Big\} , \quad \mathbf{A} \in \mathbb{R}^{3\times2}. 
\end{align*}
\begin{lem}\label{Q3toQ2Lemma}
$Q_2$ and $Q_3$ are related via 
\begin{equation}
\begin{aligned}\label{eq:Q3toQ2}
Q_2(\mathbf{A}) = \min_{\mathbf{d} \in \mathbb{R}^{3}} \{ Q_3\big((\mathbf{A}, \mathbf{d}) \big) \colon \emph{Tr}\big( ( \mathbf{A}, \mathbf{d}) \big) = 0 \Big\} , \quad \mathbf{A} \in \mathbb{R}^{3\times2}.
\end{aligned}
\end{equation}
In particular, $\mathbf{Q}_3( \mathbf{A}) \geq Q_2( [\mathbf{A}]_{3\times2})$ for all $\mathbf{A} \in \mathbb{R}^{3\times3}$ such that $\emph{Tr}(\mathbf{A}) = 0$. 
\end{lem}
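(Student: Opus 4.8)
\textbf{Proof proposal for Lemma \ref{Q3toQ2Lemma}.}

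The plan is to compute the minimization in (\ref{eq:Q3toQ2}) explicitly by exploiting the block structure of $Q_3$ together with frame indifference already built into the quadratic form. First I would write a generic competitor as $(\mathbf{A},\mathbf{d})$ with $\mathbf{A}\in\mathbb{R}^{3\times2}$ and $\mathbf{d}\in\mathbb{R}^3$, and decompose $\mathbf{d}=(\mathbf{d}',d_3)$ with $\mathbf{d}'\in\mathbb{R}^2$. The symmetric part of the $3\times3$ matrix $(\mathbf{A},\mathbf{d})$ then has principal $2\times2$ block $\mathrm{sym}([\mathbf{A}]_{2\times2})$, off-diagonal column/row built from $\tfrac12([\mathbf{A}]_{3}^{\mathrm{row}})^T + \tfrac12\mathbf{d}'$ (the third row of $\mathbf{A}$ and the first two entries of $\mathbf{d}$), and $(3,3)$-entry $d_3$. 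Writing $Q_3(\mathbf{A}) = 2\mu|\mathrm{sym}\mathbf{A}|^2 + \lambda(\mathrm{Tr}\,\mathrm{sym}\mathbf{A})^2$ and imposing the constraint $\mathrm{Tr}(\mathbf{A},\mathbf{d}) = \mathrm{Tr}([\mathbf{A}]_{2\times2}) + d_3 = 0$, i.e. $d_3 = -\mathrm{Tr}([\mathbf{A}]_{2\times2})$, the $\lambda$-term vanishes identically on the constraint set. So the minimization reduces to minimizing $2\mu|\mathrm{sym}(\mathbf{A},\mathbf{d})|^2$ over the remaining free parameters, namely the off-diagonal entries (controlled by $\mathbf{d}'$) and nothing else, since $d_3$ is now pinned.

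The second step is the elementary observation that $|\mathrm{sym}(\mathbf{A},\mathbf{d})|^2$ splits as a sum of the squared norm of the principal $2\times2$ symmetric block, plus $2$ times the squared norm of the off-diagonal $2\times1$ piece, plus $d_3^2$. Only the off-diagonal piece depends on $\mathbf{d}'$, and it can be driven to zero by choosing $\mathbf{d}'$ to cancel the third row of $\mathbf{A}$. Hence the minimum over $\mathbf{d}$ is attained at $\mathbf{d}' = -([\mathbf{A}]_{3}^{\mathrm{row}})^T$ (the choice killing the off-diagonal block) and $d_3 = -\mathrm{Tr}([\mathbf{A}]_{2\times2})$, giving minimum value $2\mu\big(|\mathrm{sym}([\mathbf{A}]_{2\times2})|^2 + \mathrm{Tr}([\mathbf{A}]_{2\times2})^2\big)$, which is exactly $Q_2(\mathbf{A})$. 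This proves (\ref{eq:Q3toQ2}).

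For the second assertion, let $\mathbf{A}\in\mathbb{R}^{3\times3}$ with $\mathrm{Tr}(\mathbf{A})=0$. Then $\mathbf{A} = ([\mathbf{A}]_{3\times2}, \mathbf{A}\mathbf{e}_3)$ is itself an admissible competitor in the minimization defining $Q_2([\mathbf{A}]_{3\times2})$, since its trace vanishes. Therefore $Q_3(\mathbf{A}) \geq \min_{\mathbf{d}} \{Q_3(([\mathbf{A}]_{3\times2},\mathbf{d})) : \mathrm{Tr} = 0\} = Q_2([\mathbf{A}]_{3\times2})$, as claimed. I do not anticipate a genuine obstacle here; the only point requiring a little care is bookkeeping the index conventions — which block of the symmetric part picks up a factor of $2$ in the Frobenius norm, and verifying the $\lambda$-term really does drop out under the trace constraint — but this is routine linear algebra rather than a conceptual difficulty.
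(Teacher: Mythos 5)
Your proof is correct and takes essentially the same route as the paper: expand $\mathrm{sym}(\mathbf{A},\mathbf{d})$ in block form, use the trace constraint to pin $d_3 = -\mathrm{Tr}([\mathbf{A}]_{2\times2})$ (which kills the $\lambda$-term and produces the $\mathrm{Tr}^2$ contribution), and drive the remaining off-diagonal cross-terms to zero by choosing $\mathbf{d}'$ to cancel the third row of $\mathbf{A}$. The treatment of the second assertion — inserting the third column of $\mathbf{A}$ itself as the admissible competitor — also matches the paper's argument.
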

\begin{proof}
Notice that $Q_3\big((\mathbf{A}, \mathbf{d}) \big) = 2\mu \Big( \big| \text{sym} ( [\mathbf{A}]_{2\times2} ) \big|^2 + (\mathbf{d} \cdot \mathbf{e}_3)^2 + \frac{1}{2}( \mathbf{d} \cdot \mathbf{e}_1 + \mathbf{e}_3 \cdot \mathbf{A} \mathbf{e}_1)^2 + \frac{1}{2}( \mathbf{d} \cdot \mathbf{e}_2 + \mathbf{e}_3 \cdot \mathbf{A} \mathbf{e}_2)^2 \Big) + \lambda \text{Tr}\big( \text{sym}(\mathbf{A}, \mathbf{d}) \big)^2 $. Thus, for any $\mathbf{d} \in \mathbb{R}^3$ such that $\text{Tr}(( \mathbf{A}, \mathbf{d})) = 0$, we have $\lambda \text{Tr}\big( \text{sym}(\mathbf{A}, \mathbf{d}) \big)^2 = 0$ and $(\mathbf{d} \cdot \mathbf{e}_3)^2 = \text{Tr}\big( \text{sym}( [ \mathbf{A} ]_{2\times2})\big)^2$. So, 
\begin{align*}
Q_3\big((\mathbf{A}, \mathbf{d}) \big)= &2 \mu \Big\{ \big|\text{sym}([ \mathbf{A} ]_{2\times2}) \big|^2 + \text{Tr}\big( \text{sym}( [ \mathbf{A} ]_{2\times2})\big)^2\Big\}
\\
&+\frac{1}{2}\big(( \mathbf{d} \cdot \mathbf{e}_1 + \mathbf{e}_3 \cdot \mathbf{A} \mathbf{e}_1)^2
+ ( \mathbf{d} \cdot \mathbf{e}_2 + \mathbf{e}_3 \cdot \mathbf{A} \mathbf{e}_2)^2 \big) \quad \text{ if } \text{Tr}(( \mathbf{A}, \mathbf{d})) = 0.
\end{align*}
Minimizing out the unconstrained components of $\mathbf{d}$, i.e., $\mathbf{d} \cdot \mathbf{e}_{1}$ and $\mathbf{d} \cdot \mathbf{e}_2$, gives the result in (\ref{eq:Q3toQ2}). The inequality stated in the lemma is a trivial consequence of this result. 
\end{proof}

\bibliographystyle{plain}
\bibliography{references}

\end{document}